\definecolor{ao(english)}{rgb}{0.0, 0.5, 0.0}
\newtheorem{theorem}{Theorem}[section]  
\newtheorem{corollary}[theorem]{Corollary}  
\newtheorem{proposition}[theorem]{Proposition}  
\newtheorem{lemma}[theorem]{Lemma}  
\newtheorem{definition}[theorem]{Definition}  
\newtheorem{remark}[theorem]{Remark}  
\newtheorem{example}[theorem]{Example}  
\newcommand{\N}{\mathbb{N}}  
\newcommand{\Z}{\mathbb{Z}}  
\newcommand{\R}{\mathbb{R}}  
\newcommand{\C}{\mathbb{C}}  
\newcommand{\smallerrelaux}[2]{\vcenter{\hbox{{\scalebox{.75}{$#1#2$}}}}}  
\newcommand{\smallerrel}[1]{\mathrel{\mathpalette\smallerrelaux{#1}}}
\newcommand{\smallin}{\smallerrel{\in}}
\titleformat{\section}[block]{\bf\filcenter}{\thesection.}{5pt}{}  
\titleformat{\subsection}[block]{\bf}{\thesubsection.}{5pt}{}  
\titleformat{\subsubsection}[block]{\bf}{\thesubsubsection.}{5pt}{}  
\g@addto@macro\bfseries{\boldmath}
\title{INSTABILITY OF QUADRATIC BAND DEGENERACIES \\ AND THE EMERGENCE OF DIRAC POINTS}
\author{J. CHABAN\thanks{Department of Applied Physics and Applied Mathematics, Columbia University} \hspace{5pt} AND \hspace{5pt} M. I. WEINSTEIN\thanks{Department of Applied Physics and Applied Mathematics and Department of Mathematics, Columbia University}}
\date{\today}
\def\@maketitle{
    \begin{center}
    {\large \bf \@title} \\
    \vspace{\baselineskip}
    \let\footnote\thanks
    {\@author} \\
    \vspace{1.5\baselineskip}
    {\@date}
    \end{center}}
\begin{document}

\maketitle

\begin{abstract}
\noindent Consider the Schr\"{o}dinger operator ${H = -\Delta + V}$, where the potential $V$ is real, $\Z^2$-periodic, and additionally invariant under the symmetry group of the square. We show that, under typical small linear deformations of $V$, the quadratic band degeneracy points occurring over the high-symmetry quasimomentum ${\bm M}$ (see \cite{keller2018spectral, keller2020erratum}) each split into two separated degeneracies over perturbed quasimomenta ${\bm D}^+$ and ${\bm D}^-$, and that these degeneracies are Dirac points. The local character of the degenerate dispersion surfaces about the emergent Dirac points are tilted, elliptical cones. Correspondingly, the dynamics of wavepackets spectrally localized near either ${\bm D}^+$ or ${\bm D}^-$ are governed by a system of Dirac equations with an advection term. Symmetry-breaking perturbations and induced band topology are also discussed.
\end{abstract}

\section{Introduction}
\label{sec:intro}

Wave propagation in energy-conserving, periodic media is determined by the {\it band structure} of the relevant self-adjoint Hamiltonian operator. Degenerate points within the band structure are energy-quasimomentum pairs at which two consecutive dispersion surfaces touch; novel wave dynamics behavior may arise from such degeneracies. For example, the corresponding Floquet-Bloch eigenstates may be multivalued with respect to variations in quasimomentum in a neighborhood of the degeneracy, contributing to a {\it Berry phase} in the dynamics of semiclassical wavepackets; see \cite{niu1996, carles-sparber2012, WLW2017}. Further, band structure degeneracies may seed topological behavior; opening a band gap via time-reversal symmetry-breaking perturbations can lead to nonzero {\it Chern numbers} associated with the emergent isolated bands. For a system arising by the insertion of an unbounded line defect, or {\it edge}, the spectral gap may be populated by energy eigenvalues whose corresponding eigenstates are {\it edge states} which propagate along the edge, but are localized transverse to it; see, e.g., \cite{HR08, FLW16, FLW16a, FLW17-mem, LWZ19, drouotweinstein2020, drouot2021}. These are quantified, in a sense, by a {\it spectral flow} equal to the difference of bulk Chern numbers via the bulk-edge correspondence; see, e.g., \cite{H93, HR08, drouot2021} and references cited therein.

Well-known examples of band structure degeneracies include {\it Dirac points}. These are conical degeneracies in the band structures of Hamiltonians corresponding to periodic media with honeycomb symmetry. Examples include the Schr\"{o}dinger equation \cite{fefferman2012honeycomb} (or its tight-binding approximation \cite{W1947, FLW17}) governing the material graphene, a hexagonal arrangement of carbon atoms in the plane, and  classical waves (governed, e.g., by Maxwell's equations \cite{LWZ19, CW21}). In such media, the band structure exhibits conical touchings of dispersion surfaces over high-symmetry quasimomenta ${\bm K}$ and ${\bm K}^\prime$, situated at vertices of the hexagonal Brillouin zone; see also \cite{grushin2009, berkolaiko2014symmetry}. The envelope of wavepackets spectrally localized about these Dirac points evolves according to an effective (homogenized) two-dimensional Dirac equation \cite{FW14}. Leveraging the presence of band structure degeneracies is a central idea in the exploration and application of naturally occurring and engineered materials; see, e.g., \cite{top-photonics19}.

In \cite{keller2018spectral, keller2020erratum}, the presence of quadratic touchings of dispersion surfaces, or {\it quadratic band degeneracies}, over the high-symmetry quasimomentum ${{\bm M} = [\pi, \, \pi]^\mathsf{T}}$, located at a vertex of the square Brillouin zone (see Figure \ref{fig:kmow-2}), was studied for continuum Schr\"{o}dinger operators with real, $\Z^2$-periodic potentials which are additionally invariant under the symmetries of the square (i.e., dihedral group of order 8); see Definition \ref{def:sql-pot}.

{\it In this paper, we explore periodic Schr\"odinger operators arising from a linear deformation of the square period cell. We prove that, under small, non-dilational linear deformations, a quadratic band degeneracy point over ${\bm M}$ splits into two separate degeneracies over perturbed quasimomenta ${\bm D}^+$ and ${\bm D}^-$ and that these emergent degeneracies are Dirac points. The local character of the degenerate dispersion surfaces over ${\bm D}^+$ and ${\bm D}^-$ are tilted, elliptical cones, corresponding to effective two-dimensional Dirac Hamiltonians with an advection term. Our main results are stated in Theorems \ref{thm:M-srf} and \ref{thm:M-dgn}, and in the discussion of Section \ref{sec:break-PC}.}

\begin{figure}[t]
\centering
\begin{subfigure}{0.33\textwidth}
    \subcaption{$\quad$}
    \vspace{0.1cm}
    \includegraphics[height = 4.3cm]{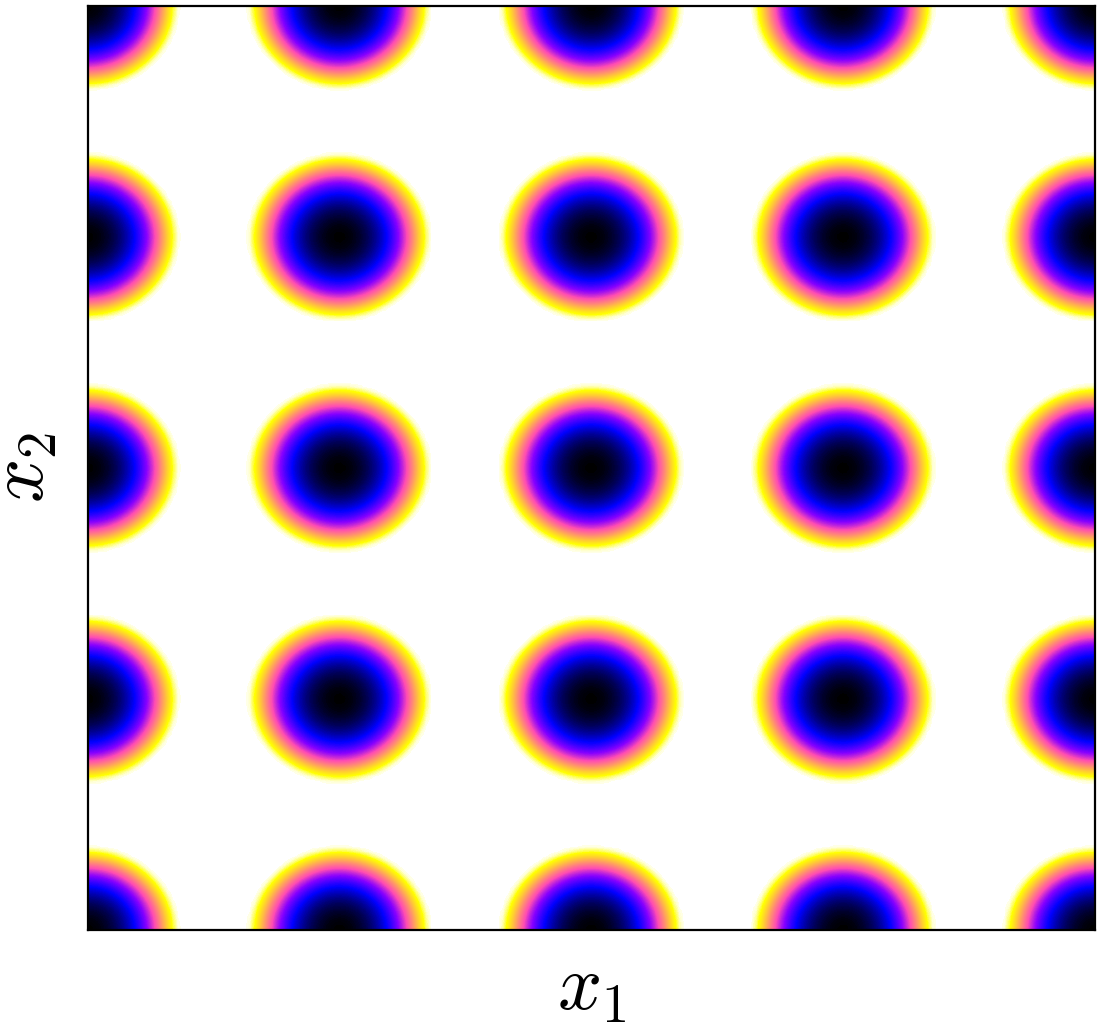}
    \label{fig:intro-1a}
\end{subfigure}
\hspace{0.2cm}
\begin{subfigure}{0.42\textwidth}
    \subcaption{$\qquad$}
    \vspace{0.1cm}
    \includegraphics[height = 4.3cm]{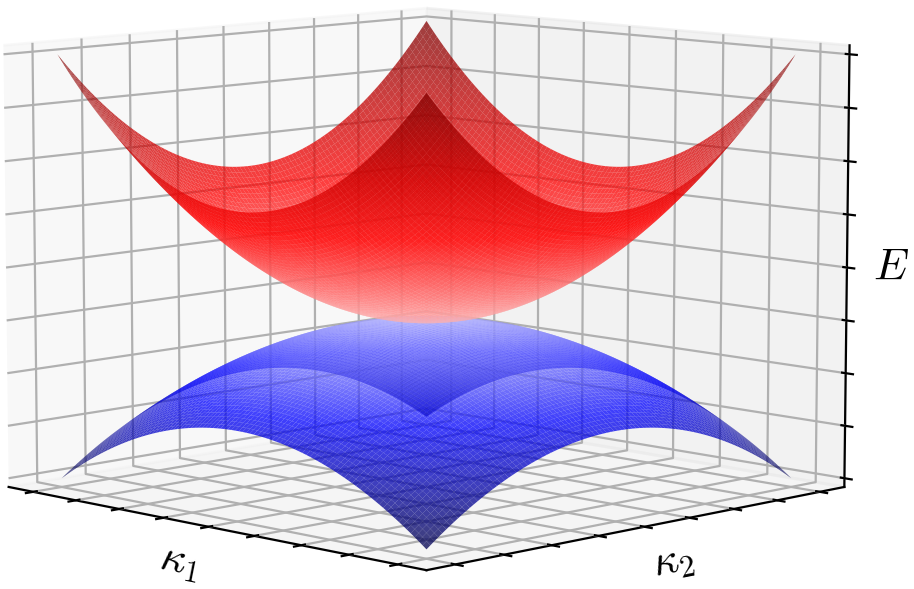}
    \label{fig:intro-1b}
\end{subfigure} \\
\vspace{0.2cm}
\begin{subfigure}{0.33\textwidth}
    \subcaption{$\quad$}
    \vspace{0.1cm}
    \includegraphics[height = 4.3cm]{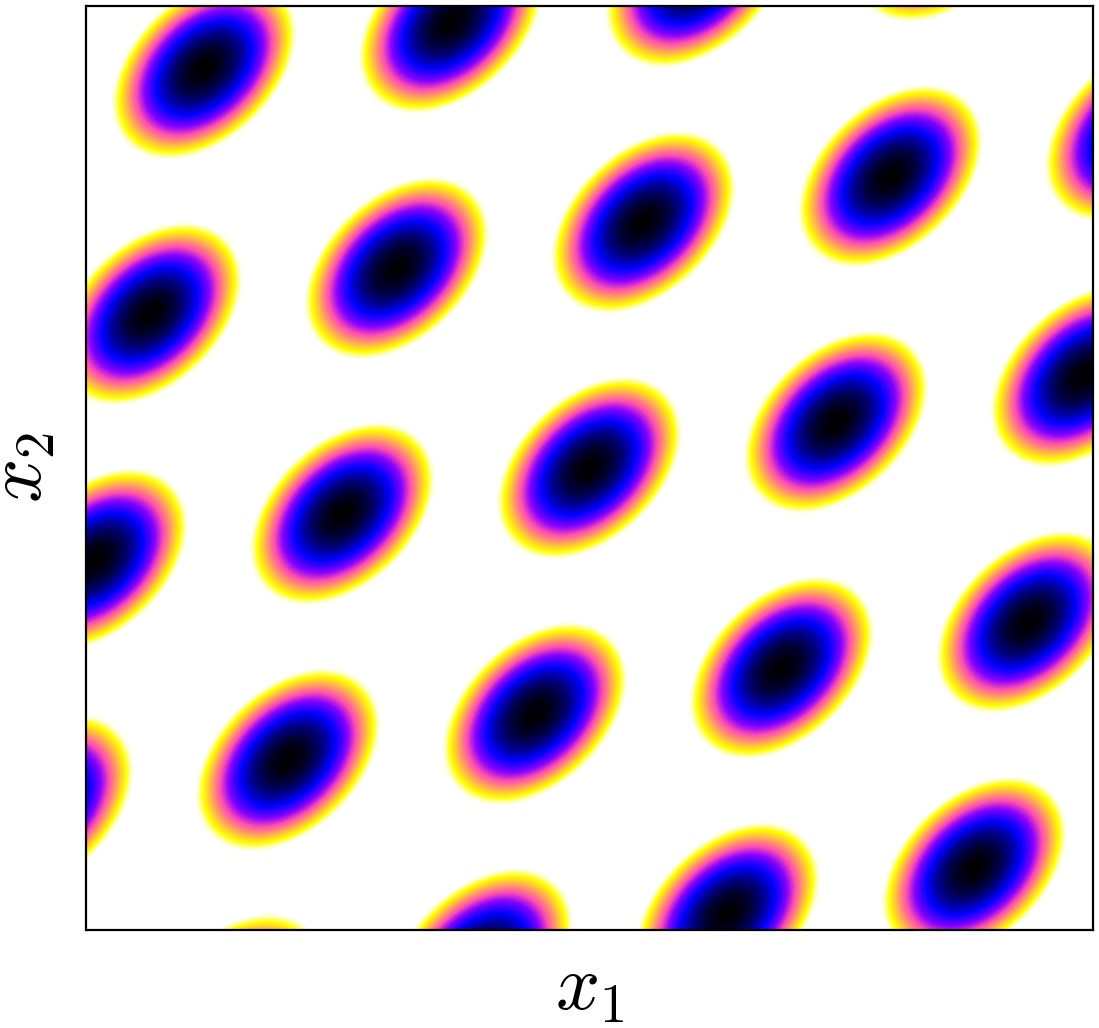}
    \label{fig:intro-1c}
\end{subfigure}
\hspace{0.2cm}
\begin{subfigure}{0.42\textwidth}
    \subcaption{$\qquad$}
    \vspace{0.1cm}
    \includegraphics[height = 4.3cm]{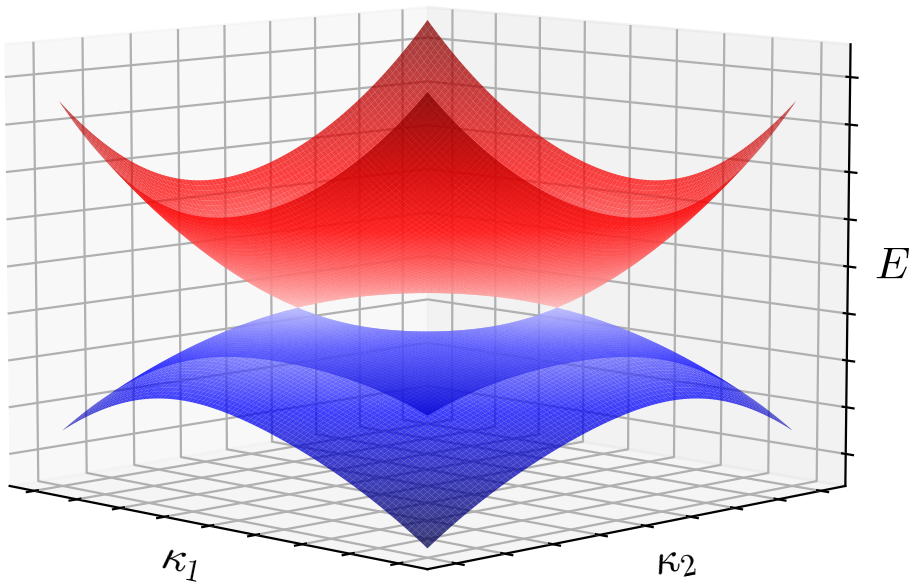}
    \label{fig:intro-1d}
\end{subfigure}
\vspace{0.3cm}
\caption{{\bf (a)} Square lattice potential constructed as the sum of translates of a compactly-supported potential. {\bf (b)} Band structure of the Schr\"{o}dinger operator ${H_V = -\Delta + V}$, where $V$ is a square lattice potential, near a quadratic band degeneracy point ${(E_S, \, {\bm M})}$. The dispersion surfaces are approximately described by the eigenvalue mappings of ${H^{\bm M}_{\rm eff}(\bm\kappa)}$ in \eqref{eq:kmow-Heff}, where ${{\bm \kappa} = {\bm k} - {\bm M}}$; see Theorem \ref{thm:kmow} and Remark \ref{rmk:kmow-eff}. {\bf (c)} Periodic potential obtained by a linear deformation $T$ of the square lattice potential in panel (\subref{fig:intro-1a}). {\bf (d)} Band structure of the ``pushforward'' deformed Schr\"odinger operator ${T_* H_V = -\nabla \cdot (T^\mathsf{T} T)^{-1} \nabla + V}$ near the former quadratic band degeneracy at ${(E_S, \, {\bm M})}$. The dispersion surfaces are approximately described by ${H^{\bm M}_{\rm eff}({\bm \kappa}; \tau_0, {\bm \tau})}$ in \eqref{eq:modelHeff}, where again ${{\bm \kappa} = {\bm k} - {\bm M}}$ and $\tau_0$, ${\bm \tau}$ are parameters, defined in \eqref{eq:def-tau}, deriving from $T$; see Theorem \ref{thm:M-srf} and Remark \ref{rmk:M-srf-eff}. When ${|{\bm \tau}| \neq 0}$, the dispersion surfaces of ${H^{\bm M}_{\rm eff}({\bm \kappa}; \tau_0, {\bm \tau})}$ contain a pair of tilted, elliptical Dirac points; see Theorem \ref{thm:M-dgn}, as well as Figure \ref{fig:intro-2}.}
\label{fig:intro-1}
\end{figure}

Differing local character of the band structure near its degenerate points gives rise to contrasting dynamics of wavepackets spectrally localized about these degeneracies. The evolution of such wavepackets is governed, on large but finite timescales, by an {\it effective (homogenized) ${2 \times 2}$ matrix Hamiltonian} whose two dispersion relations represent a magnification of the band structure near the degeneracy. For example, the envelope of a wavepacket spectrally supported near a quadratic band degeneracy point evolves according to a Schr\"{o}dinger-type effective Hamiltonian. In contrast, the envelope of a wavepacket spectrally localized about a conical degeneracy will, in general, be governed by a Dirac Hamiltonian with an advection term.

\subsection{Summary of results}
\label{sec:summary}

Let $V$ be a real-valued, $\Z^2$-periodic potential such that ${V(R^\mathsf{T} {\bm x}) = V({\bm x})}$, where $R$ is the $\pi/2$ rotation matrix; see \eqref{eq:def-rot}. Note that, since ${R^2 = -I}$, this implies that ${V(-{\bm x}) = V({\bm x})}$; see Remark \ref{rmk:min-sym}. Additionally, assume ${V(\sigma_1 {\bm x}) = V({\bm x})}$, where $\sigma_1$ is the first Pauli matrix, representing a reflection about the line ${x_1 = x_2}$; see \eqref{eq:def-pauli}. We call such  potentials {\it square lattice potentials}; see Definition \ref{def:sql-pot} and the example in Figure \ref{fig:intro-1}, panel (\subref{fig:intro-1a}). In \cite{keller2018spectral}, it was shown that the band structure of the Schr\"{o}dinger operator
\begin{equation}
H_V = -\Delta + V
\end{equation}
contains quadratic band degeneracy points: energy-quasimomentum pairs ${(E_S, \, {\bm M})}$ at which two consecutive dispersion surfaces touch quadratically. More specifically, the band structure of $H_V$ in the vicinity of ${(E_S, \, {\bm M})}$ is approximated by the two dispersion relations of an effective Hamiltonian $H^{\bm M}_{\rm eff}$ with Fourier symbol
\begin{align}
\label{eq:kmow-Heff}
H^{\bm M}_{\rm eff}({\bm \kappa}) & = (1 - \alpha_0) (\kappa_1^2 + \kappa_2^2) \, I - 2 \alpha_1 \kappa_1 \kappa_2 \, \sigma_1 - \alpha_2 (\kappa_1^2 - \kappa_2^2) \, \sigma_2 \\
& = (1 - \alpha_0) |{\bm \kappa}|^2 \, I - \alpha_1 ({\bm \kappa} \cdot \sigma_1 {\bm \kappa}) \, \sigma_1 - \alpha_2 ({\bm \kappa} \cdot \sigma_3 {\bm \kappa}) \, \sigma_2. \nonumber
\end{align}
Here, ${{\bm \kappa} = {\bm k} - {\bm M}}$ denotes the quasimomentum displacement from ${\bm M}$, and $\sigma_1$, $\sigma_2$, and $\sigma_3$ are the standard Pauli matrices; see Section \ref{sec:notation}. The real parameters $\alpha_0$, $\alpha_1$, and $\alpha_2$ are defined in terms of the two Floquet-Bloch eigenstates spanning the two-dimensional eigenspace associated with ${(E_S, \, {\bm M})}$. The two dispersion surfaces of $H^{\bm M}_{\rm eff}$ (i.e., graphs of the two eigenvalue mappings ${{\bm \kappa} \mapsto \varepsilon_\pm({\bm \kappa})}$ of $H^{\bm M}_{\rm eff}(\bm\kappa)$) are plotted in Figure \ref{fig:intro-1}, panel (\subref{fig:intro-1b}). Together, they provide a magnification (or blow-up) of a small neighborhood of the band structure of $H_V$ near ${(E_S, \, {\bm M})}$.

\begin{figure}[t]
\centering
\begin{subfigure}{0.42\textwidth}
    \subcaption{$\qquad$}
    \vspace{0.1cm}
    \includegraphics[height = 4.3cm]{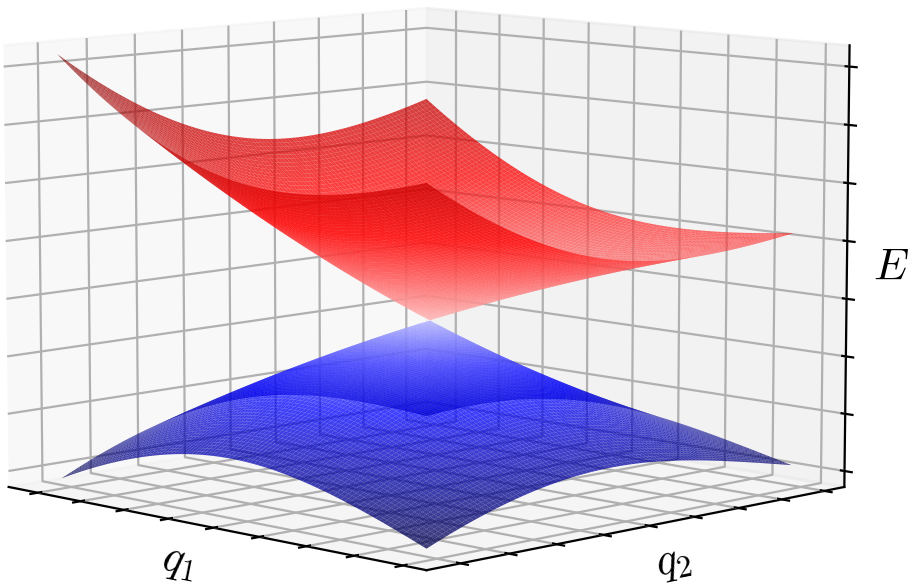}
    \label{fig:intro-2a}
\end{subfigure}
\hspace{0.4cm}
\begin{subfigure}{0.42\textwidth}
    \subcaption{$\qquad$}
    \vspace{0.1cm}
    \includegraphics[height = 4.3cm]{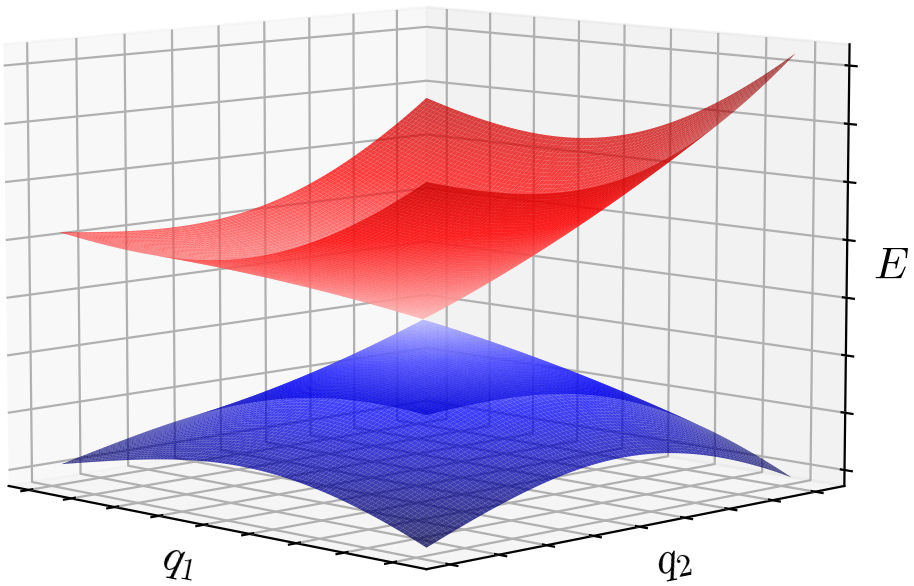}
    \label{fig:intro-2b}
\end{subfigure}
\vspace{0.3cm}
\caption{Local band structure of the ``pushforward'' deformed Schr\"{o}dinger operator ${T_* H_V}$ near the two emergent Dirac points {\bf (a)} ${(E_D(\tau_0, {\bm \tau}), \, {\bm D}^-(\tau_0, {\bm \tau}))}$ and {\bf (b)} ${(E_D(\tau_0, {\bm \tau}), \, {\bm D}^+(\tau_0, {\bm \tau}))}$ pictured in Figure \ref{fig:intro-1}, panel (\subref{fig:intro-1d}). Their existence is the assertion of Theorem \ref{thm:M-dgn}. The dispersion surfaces are approximately described by the eigenvalue mappings of {\bf (a)} ${H^{{\bm D}^-}_{\rm eff}({\bm q})}$ in \eqref{eq:Dirac-hams}, where ${{\bm q} = {\bm k} - {\bm D}^-(\tau_0, {\bm \tau})}$, and {\bf (b)} $H^{{\bm D}^+}_{\rm eff}({\bm q})$ in \eqref{eq:Dirac-hams}, where instead ${{\bm q} = {\bm k} - {\bm D}^+(\tau_0, {\bm \tau})}$; see Theorem \ref{thm:dir-srf} and Remark \ref{rmk:dir-srf-eff}, which provide a general characterization of Dirac points.}
\label{fig:intro-2}
\end{figure}

Our point of departure is a Schr\"odinger operator $H_V$, where $V$ is a square lattice potential, which has a quadratic band degeneracy point at ${(E_S, \, {\bm M})}$. We consider the $T\Z^2$-periodic deformed Schr\"{o}dinger operator
\begin{equation}
H_{V \circ \, T^{-1}} = -\Delta + V \circ \, T^{-1}
\end{equation}
corresponding to a periodic medium with  fundamental cell distorted by an invertible linear transformation $T$; see Figure \ref{fig:intro-1}, panel (\subref{fig:intro-1c}). The analysis proceeds by first transforming, by the change of variables ${{\bm x} \to T^{-1}{\bm x}}$, the  Schr\"odinger operator $H_{V \circ \, T^{-1}}$ into the  {\it ``pushforward''} Schr\"{o}dinger operator ${T_* H_V \equiv \nabla \cdot (T^\mathsf{T} T)^{-1} \nabla + V}$, which is  again $\Z^2$-periodic. In particular, the fundamental cell of ${T_* H_V}$ is independent of $T$.

To determine the effect of a small deformation on the local band structure near ${(E_S, \, {\bm M})}$, we implement a Schur complement/Lyapunov-Schmidt reduction strategy, which identifies the local band structure near $(E_S, \, {\bm M})$ with an approximate effective Hamiltonian. In terms of ${\bm \kappa}$, the quasimomentum displacement from ${\bm M}$, the  effective Hamiltonian has the same general structure of \eqref{eq:kmow-Heff}:
\begin{equation}
\label{eq:modelHeff}
H^{\bm M}_{\rm eff}({\bm \kappa}; \tau_0, {\bm \tau}) = \bigl( \beta_0 \tau_0 + (1 - \alpha_0) |{\bm \kappa}|^2 \bigr) \, I + (\beta_1 \tau_1 - \alpha_1 {\bm \kappa} \cdot \sigma_1 {\bm \kappa}) \, \sigma_1 + (\beta_2 \tau_3 - \alpha_2 {\bm \kappa} \cdot \sigma_3 {\bm \kappa}) \, \sigma_2.
\end{equation}
The parameters $\tau_0$, ${{\bm \tau} = (\tau_1, \, \tau_3)}$ are coefficients of ${I - (T^\mathsf{T} T)^{-1}}$ with respect to the basis of real Pauli matrices. The real parameters $\beta_0$, $\beta_1$, and $\beta_2$ are again defined in terms of the Floquet-Bloch eigenstates spanning the eigenspace of $H_V$ at the quadratic band degeneracy point ${(E_S, \, {\bm M})}$. There is no $\sigma_3$ contribution to \eqref{eq:modelHeff} since the linear deformation preserves symmetry under spatial inversion and complex conjugation. Theorem \ref{thm:M-srf} implies that, under suitable nondegeneracy assumptions on $H_V$, the band structure of ${T_* H_V}$ near $(E_S, \, {\bm M})$ consists of two dispersion surfaces which are approximately given by the graphs of the two eigenvalue maps ${{\bm \kappa} \mapsto \varepsilon_\pm({\bm \kappa}; \tau_0, {\bm \tau})}$ of $H^{\bm M}_{\rm eff}({\bm \kappa}; \tau_0, {\bm \tau})$. Hence, except in the case of a pure dilation (corresponding to ${|{\bm \tau}| = 0}$), the band structure perturbs to one which, for ${\bm \kappa}$ sufficiently small, has a local energy gap about $E_S$; see Figure \ref{fig:intro-1}, panel (\subref{fig:intro-1d}).

Further, the expression for $H^{\bm M}_{\rm eff}({\bm \kappa}; \tau_0, {\bm \tau})$ in \eqref{eq:modelHeff} suggests (and Theorem \ref{thm:M-dgn} establishes) that, for $\tau_0$, $|{\bm \tau}|$ small and ${|{\bm \tau}| \neq 0}$, the quadratic band degeneracy point at ${(E_S, \, {\bm M})}$ splits into two nearby degeneracies ${(E_D(\tau_0, {\bm \tau}), \, {\bm D}^\pm(\tau_0, {\bm \tau}))}$  with quasimomenta ${\bm D}^\pm(\tau_0, {\bm \tau})$ approximately determined by the vanishing of off-diagonal entries of $H^{\bm M}_{\rm eff}({\bm \kappa}; \tau_0, {\bm \tau})$. In general, ${\bm D}^-(\tau_0, {\bm \tau})$ is located at the inversion of ${\bm D}^+(\tau_0, {\bm \tau})$ with respect to ${\bm M}$: ${{\bm D}^-(\tau_0, {\bm \tau}) = 2 {\bm M} - {\bm D}^+(\tau_0, {\bm \tau})}$.

Notably, the emergent degeneracies ${(E_D(\tau_0, {\bm \tau}), \, {\bm D}^\pm(\tau_0, {\bm \tau}))}$ are Dirac points; see Section \ref{sec:dir-pt}. By Theorem \ref{thm:dir-srf}, the generic local character of the band structure about each twofold degeneracy is a  tilted elliptical cone with upper and lower branches; see Figure \ref{fig:intro-2}. In terms of the quasimomentum displacement ${{\bm q} = {\bm k} - {\bm D}^\pm}$ from a Dirac point, the effective Hamiltonians governing the dynamics of wavepackets spectrally supported near ${(E_D, \, {\bm D}^\pm)}$ are Dirac Hamiltonians with Fourier symbols
\begin{equation}
\label{eq:Dirac-hams}
H^{{\bm D}^+}_{\rm eff}({\bm q}) = ({\bm \gamma}^+_0 \cdot {\bm q}) \, I + ({\bm \gamma}^+_1 \cdot {\bm q}) \, \sigma_1 + ({\bm \gamma}^+_2 \cdot {\bm q}) \, \sigma_2 \quad \text{and} \quad H^{{\bm D}^-}_{\rm eff}({\bm q}) = - H^{{\bm D}^+}_{\rm eff}({\bm q}).
\end{equation}
For special classes of deformations (e.g., deforming the fundamental cell from a square to a rhombus, or compressing it along a coordinate axis), the motion of the Dirac point quasimomenta during the deformation ${(\tau_0, {\bm \tau}) \mapsto {\bm D}^\pm(\tau_0, {\bm \tau})}$ is constrained to directions of preserved symmetry; see Section \ref{sec:special-def}.
 
In Section \ref{sec:break-PC}, we further consider how the band structure near each type of degeneracy (i.e. quadratic band degeneracies and Dirac points) changes under a small (i.e., size $\delta$) breaking of symmetries: (i)  spatial inversion (parity) symmetry, or (ii) complex conjugation (time-reversal) symmetry. This leads to more general effective Hamiltonians depending on $\delta$, which now include contributions from $\sigma_3$. 

\subsection{Relation to previous work and further discussion}
\label{sec:previous}

Haldane and Raghu \cite{HR08} theoretically demonstrated, in the context of the two-dimensional Maxwell equations, that unidirectional edge states, which are robust against localized (even large) perturbations, may be realized at an interface between two periodic media when time-reversal symmetry ($\mathcal{C}$) is broken. In this work, the two bulk media are honeycomb structures which are known to have conical degeneracies (Dirac points) in their band structures when $\mathcal{C}$ is unbroken. Here, two energy bands touch over the two independent high-symmetry quasimomenta ${\bm K}$ and ${\bm K}^\prime$. Breaking $\mathcal{C}$ leads to both media opening a common spectral gap about the ``Dirac energy''. While the union of spectrum of the bulk structures is contained in the spectrum of the interfaced structure, remarkably, the spectrum of the interfaced structure is filled with ``edge (or interface) spectra'' corresponding to solutions of the spectral problem which are localized about the interface and propagate (i.e., are plane wave-like) parallel to the edge. These edge states are robust against localized (even large) perturbations of the interface. 
 
Underlying this phenomenon is a topological explanation of the type which explains a central phenomenon in condensed matter physics: the integer quantum Hall effect \cite{TKNN1982}, observed for the motion of electrons in a two-dimensional quantum material in the presence of a perpendicular magnetic field. Due to the $\mathcal{C}$-breaking, gap-opening perturbation, the (now isolated) bands acquire integer-valued topological indices (i.e., Chern numbers). Since, for the system considered, the Chern numbers of the two structures differ, edge state eigenvalues populate the bulk spectral gap. For energies in this range, an algebraic count (spectral flow) is given by the difference of the total Chern numbers associated with all bands below a fixed energy in the gap (bulk-edge correspondence principle); see, e.g., \cite{drouot2019, drouot2019a, drouotweinstein2020, drouot2021}.
 
These remarkable developments have inspired a huge amount of experimental activity aimed at realizing analogous phenomena in many wave physics settings. The first such experimental work, in an electromagnetic setting, was that of Wang et al. in \cite{wang2008-prl}. In this work, the point of departure was a square lattice medium possessing quadratic band degeneracies over the high-symmetry point ${\bm M}$. Time-reversal symmetry ($\mathcal{C}$) was then broken via the gyromagnetic effect, for which the medium's dielectric tensor is Hermitian but not real-valued, and robust unidirectional edge states were observed experimentally and in energy ranges anticipated by theory.

Motivated by the application of quadratic band degeneracies in \cite{wang2008-prl}, Chong et al. \cite{chong2008effective} used group representation theory considerations to formally obtain the effective Hamiltonians which arise under various symmetry-breaking perturbations. Together, the work on quadratic band degeneracies in \cite{keller2018spectral, keller2020erratum} and the present work on deformations provide a rigorous foundation. 

In \cite{drouot2021ubiq}, Drouot studied the general question of eigenvalue degeneracies of ${N \times N}$ Hermitian matrices which depend on parameters, arising  via homogenization or ``low-energy" approximations of  tight-binding models, as Fourier symbols of effective Hamiltonians. He proved that, generically, the local band structure about a degeneracy is, locally, a tilted cone. Since families of Hermitian matrices with degenerate eigenvalues arise via a spectral localization (Schur complement/Lyapunov-Schmidt reduction) of continuum periodic Schr\"{o}dinger operators about degeneracies (Sections \ref{sec:set-up} and \ref{sec:break-PC} implement this approach), we expect the degeneracies of generic continuum Schr\"{o}dinger operators to be conical; see  \cite[Conjecture 1]{drouot2021ubiq}. Our results on the splitting-off of Dirac points from a symmetry-induced quadratic band degeneracy point under linear deformation of the potential are a manifestation of this phenomenon. 

Further, we remark that \cite{BZ2023} studied an effective Hamiltonian arising for twisted bilayer graphene \cite{WKML2023} in the presence of an in-plane magnetic field. For special directions of the magnetic field, Dirac points were shown to merge into and depart from a quadratic band degeneracy point as the ``twist angle'' varies. Finally, it is noted in \cite{kuchment16} that non-isotropic (here referred to as tilted) Dirac cones arise in structures which lack honeycomb symmmetry, such as a class of periodic structures known as {\it graphynes}.

\begin{figure}[t]
\centering
\begin{subfigure}{0.36\textwidth}
    \subcaption{$\quad$}
    \vspace{0.1cm}
    \includegraphics[height = 4.5cm]{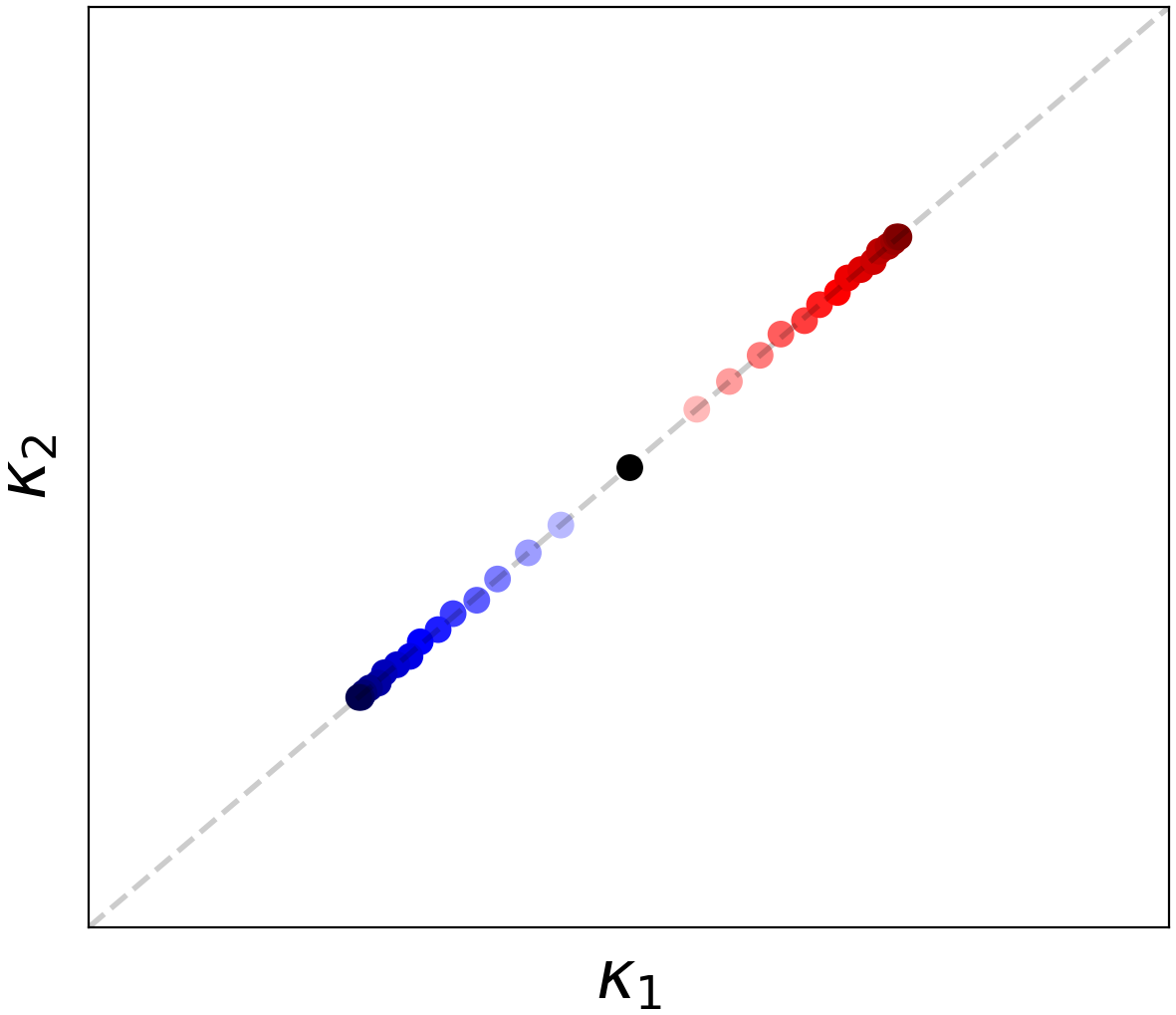}
    \label{fig:intro-3a}
\end{subfigure}
\hspace{0.2cm}
\begin{subfigure}{0.36\textwidth}
    \subcaption{$\quad$}
    \vspace{0.1cm}
    \includegraphics[height = 4.5cm]{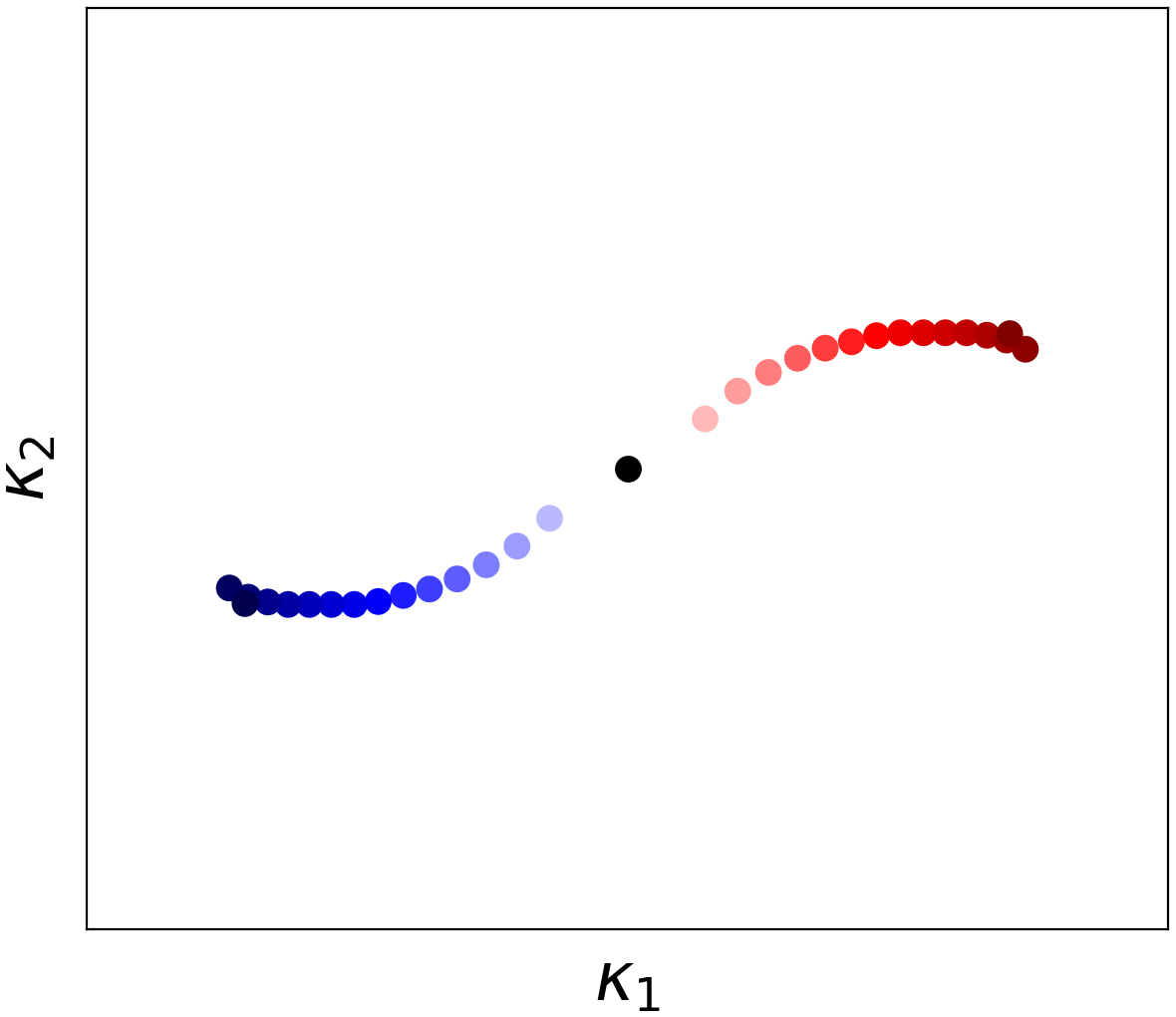}
    \label{fig:intro-3b}
\end{subfigure}
\vspace{0.3cm}
\caption{Trajectories of emergent Dirac point quasimomenta ${{\bm D}^+(\tau_0, {\bm \tau})}$ (red) and ${{\bm D}^-(\tau_0, {\bm \tau})}$ (blue) (see Figure \ref{fig:intro-2}) during a deformation  as $|{\bm \tau}|$ increases (with $\tau_0$ fixed). ${{\bm D}^-(\tau_0, {\bm \tau})}$ is the inversion 
 of ${\bm D}^+(\tau_0, {\bm \tau})$ with respect to ${\bm M}$ (black dot); ${\bm D}^-(\tau_0, {\bm \tau})=2{\bm M}-{\bm D}^+(\tau_0, {\bm \tau})$. Trajectories (${{\bm \kappa} = {\bm k} - {\bm M}}$ the displacement from ${\bm M}$) plotted over ${{\bm M} + \mathcal{B}}$, with ${\mathcal{B} = [-\pi, \, \pi] \times [-\pi, \, \pi]}$ the fixed Brillouin zone of ${T_* H_V}$. The deformation corresponding to {\bf (a)} preserves diagonal reflection symmetry, such as that in panel (\subref{fig:intro-1c}) of Figure \ref{fig:intro-1}, resulting in a symmetric splitting of Dirac points constrained to the diagonal ${\kappa_1 = \kappa_2}$; see Theorem \ref{thm:M-dgn-ref-1}. The deformation in {\bf (b)} breaks all symmetries, resulting in a curved splitting trajectories.}
\label{fig:intro-3}
\end{figure}

\subsection{Generalizations}
\label{sec:general}

We believe the results of this paper can be generalized to deformations of spectral problems of the type
\begin{equation}
( -\nabla_{\bm x} \cdot A({\bm x}) \nabla_{\bm x} + V({\bm x})) \, \psi({\bm x}) = E \, \rho(\bm x) \, \psi({\bm x}),
\end{equation}
for scalar functions $A$, $V$, and $\rho$  having the symmetries of our admissible class of $\Z^2$-periodic potentials. 
We may also allow $A$ to be matrix-valued, where the symmetry ${A(R^\mathsf{T} {\bm x}) = A({\bm x})}$ is instead replaced by ${A(R^\mathsf{T} {\bm x}) = R^\mathsf{T} A({\bm x}) R}$ \cite[Theorem 1]{LWZ19}. Further generalizations to vectorial problems, such as Maxwell's equations in slab geometry (with planar translation invariance), are also possible; see, e.g., \cite{guglielmon2021landau, barsukova-etal-2024}.

\subsection{Large deformations}
\label{sec:large}

It is of interest to understand the possible motions of Dirac points under large deformations of a periodic structure. We have numerically investigated the trajectories followed by Dirac point quasimomenta emanating from a quadratic band degeneracy when the operator is subject to a particular class of deformations; see, e.g., Figure \ref{fig:intro-3}. These simulations indicate that Dirac points persist for fairly large deformations. However, it is generally possible, for example, that Dirac degeneracies collide, annihilate and open a local spectral gap. Some numerical explorations of such phenomena for tight-binding models are presented in \cite{montambaux2009merging, hasegawa2012merging}.

\subsection{Edge states}
\label{sec:edge-states} 

Effective Hamiltonians, which characterize the local band structure near a degeneracy, depend on parameters computable from a basis for the degenerate eigenspace. These determine Chern numbers of the isolated bands, which arise due to symmetry breaking; see the discussion in Section \ref{sec:previous} and \cite{chong2008effective}. In the forthcoming work \cite{CMW-inprogress}, we analytically construct and numerically investigate edge states in media which interpolate, across a domain wall, deformations of media with quadratic band degeneracies, study effective equations governing the localization of edge states, and discuss these results in the context of the bulk-edge correspondence principle; see Section \ref{sec:band-top}.

\subsection{Outline of article}
\label{sec:outline}

The remainder of the article is organized as follows:

\begin{itemize}
\item In Section \ref{sec:fb-thy}, we briefly review Floquet-Bloch theory.

\item In Section \ref{sec:kmow}, we review results of \cite{keller2018spectral, keller2020erratum} on quadratic band degeneracies of Schr\"{o}dinger operators ${H_V = -\Delta + V}$, where $V$ is a square lattice potential. In particular, we state results concerning the existence and local character of quadratic band degeneracies ${(E_S, \, {\bm M})}$ (Section \ref{sec:quad-dgn}).

\item In Section \ref{sec:dir-pt}, we review criteria for the existence of Dirac points in the band structures for a class of second-order periodic, elliptic operators.

\item In Section \ref{sec:deform}, we introduce our model of linearly deformed square lattice media. It is given by the deformed Schr\"{o}dinger operator ${H_{V \circ \, T^{-1}} = - \Delta + V \circ \, T^{-1}}$, where $T$ is a fixed invertible linear transformation. We note the equivalence between the band structures of $H_{V \circ \, T^{-1}}$ and a $\Z^2$-periodic ``pushforward'' operator ${T_* H_V = -\nabla \cdot (T^\mathsf{T} T)^{-1} \nabla + V}$. Next, we discuss the symmetries preserved and broken by $T_* H_V$ (equivalently $H_{V \circ \, T^{-1}}$) relative to the undeformed operator $H_V$ (Section \ref{sec:deform-syms}). Finally, we present examples of deformations (Section \ref{sec:examples}).

\item Section \ref{sec:results} contains the statements of our main results. Theorem \ref{thm:M-srf} presents, for small deformations, the form of dispersion surfaces of $T_* H_V$ in a neighborhood of ${(E_S, \, {\bm M})}$ and the corresponding effective Hamiltonian. The form of the effective Hamiltonian suggests that the quadratic band degeneracy point $(E_S, \, {\bm M})$ of the undeformed Hamiltonian typically splits into two nearby degeneracies. Theorem \ref{thm:M-dgn} shows that $(E_S, \, {\bm M})$ indeed perturbs to two nearby Dirac points ${(E_D, \, {\bm D}^\pm)}$, about each of which the band structure is locally a tilted, elliptical cone; see Theorem \ref{thm:dir-srf}. For special classes of deformations (e.g., deforming the unit cell from a square to a rhombus, or compressing it along a coordinate axis), the displacement of the degenerate quasimomenta ${{\bm M} \to {\bm D}^\pm}$ is more constrained (Section \ref{sec:special-def}).

\item Section \ref{sec:set-up} initiates the set-up for the proofs of our main results. Via a Lyapunov-Schmidt reduction/Schur complement scheme, we find that the band structure of ${T_* H_V}$ near ${(E_S, \, {\bm M})}$ is determined by a self-adjoint ${2 \times 2}$ matrix-valued, analytic function ${\mathcal{M}(\varepsilon; {\bm \kappa}, \tau_0, {\bm \tau})}$. Here, ${(\varepsilon, \, {\bm \kappa})}$ is the energy-quasi- momentum displacement from ${(E_S, \, {\bm M})}$, and $\tau_0$, ${{\bm \tau} = (\tau_1, \, \tau_3)}$ are parameters deriving from $T$.

\item In Sections \ref{sec:pf-M-srf} and \ref{sec:pf-M-dgn}, we prove Theorems \ref{thm:M-srf} and \ref{thm:M-dgn}, respectively,  via a detailed study of 
${\mathcal{M}(\varepsilon; {\bm \kappa}, \tau_0, {\bm \tau})}$ by complex analysis methods and the implicit function theorem.

\item Section \ref{sec:break-PC} summarizes the effect of perturbations which break spatial inversion (parity) or complex conjugation (time-reversal) symmetries; such perturbations typically lift band degeneracies, resulting in a local energy gap. Model perturbations which break parity, but preserve time-reversal symmetry (Section \ref{sec:break-P}), and which preserve parity, but break time-reversal symmetry (Section \ref{sec:break-C}), are separately discussed. We further discuss the implications for band topology, particularly calculations of the Chern numbers associated with nondegenerate bands, which play a central role in the study of edge states in related structures (Section \ref{sec:band-top}).

\item Finally, appendices contain the technical details of several propositions.
\end{itemize}

\subsection{Notation, conventions, and some linear algebra}
\label{sec:notation}

\begin{itemize}
\item Pauli matrices:
\begin{equation}
\label{eq:def-pauli}
\sigma_0 \equiv I =
\begin{bmatrix}
1 & 0 \\
0 & 1
\end{bmatrix} \! , 
\quad \sigma_1 \equiv 
\begin{bmatrix}
0 & 1 \\
1 & 0
\end{bmatrix} \! ,
\quad \sigma_2 \equiv
\begin{bmatrix}
0 & -i \\
i & 0
\end{bmatrix} \! ,
\quad \sigma_3 \equiv
\begin{bmatrix}
1 & 0 \\
0 & -1
\end{bmatrix} \! .
\end{equation}
We sometimes find it convenient to arrange the latter three Pauli matrices into a ``vector'':
\begin{equation}
\label{eq:def-pauli-vec}
{\bm \sigma} \equiv [\sigma_1, \, \sigma_2, \, \sigma_3]^\mathsf{T}.
\end{equation}
For ${{\bm h} \smallin \R^3}$, we will often use the shorthand notation: ${{\bm h} \cdot {\bm \sigma} = h_1 \sigma_1 + h_2 \sigma_2 + h_3 \sigma_3}$.
\item (Clockwise) $\pi/2$-rotation matrix:
\begin{equation}
\label{eq:def-rot}
R \equiv i \sigma_2 =
\begin{bmatrix}
0 & 1 \\
-1 & 0
\end{bmatrix} \! .
\end{equation}
\end{itemize}

\begin{lemma}
\label{lem:2d-ker}
For any ${2 \times 2}$ matrix $A$, ${{\rm dim}({\rm ker}(A)) = 2}$ if and only if $A$ is the zero matrix.
\end{lemma}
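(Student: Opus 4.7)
The statement is a basic linear algebra fact and the proof essentially writes itself, so my plan is to record the two directions cleanly via either a subspace-containment argument or the rank--nullity theorem.

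My first step is the easy direction: if $A = 0$, then every vector in the ambient space is annihilated, so $\ker(A)$ equals the full two-dimensional vector space and therefore has dimension $2$. No computation is required beyond noting $Av = 0$ for all $v$.

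For the converse, I would argue as follows. Suppose $\dim(\ker(A)) = 2$. Since $\ker(A)$ is a subspace of the ambient two-dimensional space $\mathbb{C}^2$ (or $\mathbb{R}^2$) and a subspace of dimension equal to the ambient dimension must be the whole space, we conclude $\ker(A) = \mathbb{C}^2$. Hence $Av = 0$ for every $v$, which forces $A$ to be the zero matrix (applying this to $v = e_1, e_2$ recovers each column of $A$ as $0$). Equivalently, one can invoke the rank--nullity theorem: $\mathrm{rank}(A) + \dim(\ker(A)) = 2$ implies $\mathrm{rank}(A) = 0$, so the image of $A$ is trivial and $A = 0$.

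There is no real obstacle here; the only thing to be careful about is to specify the underlying field and the domain of $A$ (which from context is $\mathbb{C}^2$, consistent with the use of $2\times 2$ matrices elsewhere in the paper) so that the phrase ``dimension $2$'' is unambiguous. Given how short the argument is, I would expect the authors to present it in a few lines without invoking rank--nullity machinery, simply stating that $\ker(A)$ being all of $\mathbb{C}^2$ means $A$ annihilates a basis and hence vanishes.
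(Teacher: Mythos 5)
Your proof is correct and matches the paper's (commented-out) argument, which applies rank--nullity to conclude $\mathrm{rank}(A)=0$ and hence $A=0$, with the converse noted as trivial. Your alternative phrasing via $\ker(A)=\mathbb{C}^2$ annihilating the basis vectors is an equivalent elementary route, but there is no substantive difference in approach.
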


\begin{lemma}
\label{lem:Ahermsym}
Given $a$, ${b \smallin \R}$ and ${c \smallin\ C}$, consider the $2\times2$ Hermitian matrix $A = \begin{bmatrix}
a & c^* \\
c & b
\end{bmatrix}$ with ${c = c_1 + i c_2}$. Then 
\begin{equation}
\label{eq:exp-Aherm}
A = \frac{a + b}{2} I + c_1 \sigma_1 + c_2 \sigma_2 + \frac{a - b}{2} \sigma_3.
\end{equation}
Further, if $A$ is real and symmetric, then ${c_2 = 0}$:
\begin{equation}
\label{eq:exp-Asym}
A = \frac{a + b}{2} I + c_1 \sigma_1 + \frac{a - b}{2} \sigma_3.
\end{equation}
\end{lemma}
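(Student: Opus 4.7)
The proposed proof is a direct entry-by-entry verification, exploiting the fact that the four matrices $\{I, \sigma_1, \sigma_2, \sigma_3\}$ form a real basis for the four-dimensional real vector space of $2 \times 2$ complex Hermitian matrices. My plan is simply to expand the right-hand side of \eqref{eq:exp-Aherm} using the explicit definitions in \eqref{eq:def-pauli} and match it with the given expression for $A$. No inductive argument or analytic tool is required, only bookkeeping.

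Specifically, I would substitute the four Pauli matrices into the right-hand side of \eqref{eq:exp-Aherm} and collect entries. The $(1,1)$ entry receives $\tfrac{a+b}{2}$ from the $I$ term and $\tfrac{a-b}{2}$ from the $\sigma_3$ term, summing to $a$; the $(2,2)$ entry analogously gives $b$. The $(2,1)$ entry picks up $c_1$ from $\sigma_1$ and $ic_2$ from $\sigma_2$, yielding $c_1 + i c_2 = c$, while the $(1,2)$ entry picks up $c_1$ from $\sigma_1$ and $-ic_2$ from $\sigma_2$, yielding $c_1 - i c_2 = c^*$. This exactly reproduces the matrix $A$ in the hypothesis, proving \eqref{eq:exp-Aherm}.

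For the second claim \eqref{eq:exp-Asym}, I would observe that if $A$ is in addition real and symmetric, then the off-diagonal entry $c = c_1 + i c_2$ must be real, forcing $c_2 = 0$. Substituting $c_2 = 0$ into \eqref{eq:exp-Aherm} eliminates the $\sigma_2$ term and yields \eqref{eq:exp-Asym}.

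There is no meaningful obstacle here: the argument is a one-line linear-algebraic identity, included for notational convenience in later sections where decompositions of self-adjoint $2 \times 2$ blocks against the Pauli basis are used (cf. the effective Hamiltonians \eqref{eq:kmow-Heff} and \eqref{eq:modelHeff}). The only point worth emphasizing in the write-up is that reality of a complex number is equivalent to the vanishing of its imaginary part, which is exactly what produces the second identity from the first.
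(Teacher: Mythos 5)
Your verification is correct: expanding the right-hand side against the explicit Pauli matrices in \eqref{eq:def-pauli} and matching the four entries is exactly the intended argument, and the paper in fact omits the proof entirely because it is this immediate. Your handling of \eqref{eq:exp-Asym} — noting that a real off-diagonal entry forces $c_2 = 0$ — is also precisely the right one-line observation.
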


\noindent In fact, the set of matrices ${\{\sigma_j\}_{0 \, \leq \, j \, \leq \, 3}}$ is a basis for the vector space ${({\rm Herm}(2), \, \R)}$, where ${\rm Herm}(2)$ is the set of ${2 \times 2}$ Hermitian matrices. This basis is orthonormal with respect to the inner product
\begin{equation}
\langle A, \, B \rangle = \frac{1}{2} {\rm tr}(A^* B).
\end{equation}

\begin{lemma}
\label{lem:Adegen}
Suppose $A = \begin{bmatrix}
a & c^* \\
c & b
\end{bmatrix}$ has an degenerate eigenvalue of multiplicity two. Then ${A = aI}$, ${a \smallin \R}$.
\end{lemma}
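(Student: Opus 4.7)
The plan is to exploit the Hermitian structure inherited from the preceding Lemma \ref{lem:Ahermsym} together with the elementary fact recorded in Lemma \ref{lem:2d-ker}. Since $A$ is a $2 \times 2$ Hermitian matrix, it is unitarily diagonalizable; thus every eigenvalue has geometric multiplicity equal to its algebraic multiplicity. In particular, if $\lambda$ denotes the doubly degenerate eigenvalue of $A$, then
\begin{equation}
\dim \ker(A - \lambda I) = 2.
\end{equation}
I would then invoke Lemma \ref{lem:2d-ker} applied to the matrix $A - \lambda I$ to conclude that $A - \lambda I$ is the zero matrix, i.e., $A = \lambda I$.

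From this identity the remaining conclusions are immediate. Comparing the $(1,1)$-entries on both sides of $A = \lambda I$ yields $a = \lambda$, and since $A$ is Hermitian its eigenvalues are real, so $a = \lambda \in \R$. As a sanity check, one can also read off $b = \lambda = a$ from the $(2,2)$-entry and $c = 0$ from either off-diagonal entry, which is fully consistent with the expansion \eqref{eq:exp-Aherm}: writing $\lambda I = \frac{a+b}{2} I + c_1 \sigma_1 + c_2 \sigma_2 + \frac{a-b}{2} \sigma_3$ and using the linear independence of $\{\sigma_j\}_{0 \leq j \leq 3}$ in $({\rm Herm}(2), \R)$ forces $c_1 = c_2 = 0$ and $a = b$.

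There is no real obstacle here: the whole argument rests on (i) the diagonalizability of Hermitian matrices, which gives geometric multiplicity $=$ algebraic multiplicity, and (ii) Lemma \ref{lem:2d-ker}, which promotes a two-dimensional kernel in dimension two to being the whole space. The only subtlety worth flagging is that the statement implicitly assumes $A$ is Hermitian (so that $a,b \in \R$ and the off-diagonal entries are complex conjugates); this is the natural reading in light of the placement of the lemma immediately after Lemma \ref{lem:Ahermsym} and the use of the notation $c^{*}$ for the $(1,2)$-entry.
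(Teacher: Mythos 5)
Your proof is correct and follows essentially the same route as the paper's: identify the two-dimensional kernel of $A - \lambda I$, apply Lemma \ref{lem:2d-ker} to conclude $A = \lambda I$, and read off the entries. Your explicit justification that algebraic multiplicity two gives a two-dimensional kernel (via Hermitian diagonalizability) is a welcome clarification of a step the paper leaves implicit.
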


\begin{lemma}
\label{lem:chain}
{\rm (Chain rule.)}
Let $f({\bm x})$ and ${\bm g}({\bm x})$ be differentiable scalar- and vector-valued functions of ${{\bm x} \smallin \R^n}$, respectively. For $A$ such that ${A^\mathsf{T} A = I}$, define
\begin{equation}
\mathscr{M}_A[f]({\bm x}) \equiv f(A{\bm x}) \quad \text{and} \quad \mathscr{M}_A[{\bm g}]({\bm x}) \equiv {\bm g}(A{\bm x}).
\end{equation}
Then
\begin{equation}
\nabla \mathscr{M}_A[f]({\bm x}) = A^\mathsf{T} \mathscr{M}_A[\nabla f]({\bm x}), \quad \text{equivalently} \quad \mathscr{M}_A[\nabla f]({\bm x}) = A \nabla \mathscr{M}_A[f]({\bm x}),
\end{equation}
and
\begin{equation}
\nabla \cdot \mathscr{M}_A[{\bm g}]({\bm x}) = \mathscr{M}_A[\nabla \cdot A {\bm g}]({\bm x}), \quad \text{equivalently} \quad \mathscr{M}_A[\nabla \cdot {\bm g}]({\bm x}) = \nabla \cdot A^\mathsf{T} \mathscr{M}_A[{\bm g}]({\bm x}).
\end{equation}
\end{lemma}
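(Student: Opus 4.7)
The proof is a direct application of the multivariable chain rule, with the extra input that the orthogonality $A^\mathsf{T} A = I$ (and hence $AA^\mathsf{T} = I$ for square $A$) lets one freely convert between the two equivalent forms of each identity. I would attack the four displayed equalities in pairs: first the two gradient identities, then the two divergence identities.

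For the gradient, I would expand the $i$-th component of $\nabla\mathscr{M}_A[f]$ using the chain rule:
\[
\partial_i \mathscr{M}_A[f](\bm x) \;=\; \partial_i\bigl(f(A\bm x)\bigr) \;=\; \sum_j (\partial_j f)(A\bm x)\, A_{ji} \;=\; \bigl(A^\mathsf{T} (\nabla f)(A\bm x)\bigr)_i.
\]
Assembling components gives $\nabla \mathscr{M}_A[f](\bm x) = A^\mathsf{T} \mathscr{M}_A[\nabla f](\bm x)$, which is the first identity. The equivalent form is obtained by left-multiplying by $A$ and invoking $A A^\mathsf{T} = I$.

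For the divergence, I would compute similarly, again by the chain rule,
\[
\nabla \cdot \mathscr{M}_A[\bm g](\bm x) \;=\; \sum_i \partial_i \bigl(g_i(A\bm x)\bigr) \;=\; \sum_{i,j} A_{ji}\, (\partial_j g_i)(A\bm x),
\]
and I would then recognize the right-hand side as $\sum_j \partial_j \bigl(\sum_i A_{ji} g_i\bigr)$ evaluated at $A\bm x$, namely $(\nabla_{\bm y} \cdot (A \bm g))(\bm y)\big|_{\bm y = A\bm x} = \mathscr{M}_A[\nabla \cdot (A\bm g)](\bm x)$. This gives the first divergence identity. The equivalent form follows by substituting $\bm g \mapsto A^\mathsf{T} \bm g$ and using $A A^\mathsf{T} = I$ to cancel the leading $A$ inside the divergence, converting $\nabla \cdot (A A^\mathsf{T} \bm g) = \nabla \cdot \bm g$.

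There is essentially no genuine obstacle here: the entire content is index bookkeeping together with the orthogonality relation for $A$. The only point requiring mild care is to recall that $A^\mathsf{T} A = I$ implies $A A^\mathsf{T} = I$ for a square $A$ — which is the case in all applications of the lemma in this paper, since $A$ will be either the rotation $R$ of \eqref{eq:def-rot} or the reflection $\sigma_1$ of \eqref{eq:def-pauli}, each a $2 \times 2$ orthogonal matrix.
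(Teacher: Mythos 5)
Your gradient-identity argument coincides with the paper's own: apply the chain rule component by component, reshuffle indices, and read off $A^\mathsf{T}$. For the divergence identity, however, you take a genuinely different route. The paper's proof argues in weak form: it pairs $\nabla \cdot \mathscr{M}_A[{\bm g}]$ against a test function $f = \mathscr{M}_A[\tilde f]$, integrates by parts, applies the already-proved gradient identity, moves $A^\mathsf{T}$ across the inner product, distributes $\mathscr{M}_A$ over the pointwise product, unwinds via the measure-preserving change of variables ${\bm y} = A{\bm x}$ (using $|\det A| = 1$, again from orthogonality), and integrates by parts a second time to reconstruct the claimed right-hand side. Your proof is instead a direct index computation: expand $\partial_i\bigl(g_i(A{\bm x})\bigr)$, sum over $i$, and recognize $\sum_{i,j} A_{ji}\,(\partial_j g_i)(A{\bm x})$ as $\mathscr{M}_A[\nabla \cdot (A{\bm g})]({\bm x})$. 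Both are correct; yours is more elementary and, incidentally, exposes that the first divergence identity $\nabla \cdot \mathscr{M}_A[{\bm g}] = \mathscr{M}_A[\nabla \cdot A{\bm g}]$ holds for any linear $A$, with orthogonality entering only when converting to the ``equivalent'' form via ${\bm g} \mapsto A^\mathsf{T}{\bm g}$ and $A A^\mathsf{T} = I$, whereas the paper's weak-form argument uses orthogonality more pervasively (through both the gradient identity and the change of variables). Your observation that $A^\mathsf{T} A = I$ implies $A A^\mathsf{T} = I$ for square $A$ is correct and is the single nontrivial point required to pass between the two forms of each identity.
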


\bigskip

\section*{Acknowledgements}

The authors wish to thank Jeremy Marzuola for many stimulating discussions. We also thank  an anonymous referee for helpful comments. MIW and JC were supported in part by NSF grant DMS-1908657, DMS-1937254, and Simons Foundation Math + X Investigator Award \# 376319 (MIW). Part of this research was completed during the 2023-24 academic year, when M. I. Weinstein was a Visiting Member in the School of Mathematics - Institute of Advanced Study, Princeton, supported by the Charles Simonyi Endowment, and a Visiting Fellow in the Department of Mathematics at Princeton University.

\bigskip

\section{Floquet-Bloch theory}
\label{sec:fb-thy}

\setcounter{equation}{0}
\setcounter{figure}{0}

We briefly review the spectral theory of periodic, elliptic operators; see \cite{kuchment16, RS4-1978} and references cited therein. We focus, in particular, on the self-adjoint Schr\"{o}dinger operator ${H_V = -\Delta + V}$, where $V$ is real-valued and periodic with respect to a two-dimensional (Bravais) lattice $\Lambda$. We later introduce the class of periodic {\it square lattice potentials}; Schr\"{o}dinger operators with such potentials are the main object of study in this article.

Let ${\bm v}_1$, ${{\bm v}_2 \smallin \R^2}$ be linearly independent and define the {\it (Bravais) lattice} ${\Lambda \equiv \Z {\bm v}_1 \oplus \Z {\bm v}_2 \subset \R^2}$. The dual lattice is ${\Lambda^* \equiv \Z {\bm k}_1 \oplus \Z {\bm k}_2 \subset (\R^2)^*}$, where ${\bm k}_1$, ${{\bm k}_2 \smallin (\R^2)^*}$ satisfy ${{\bm k}_j \cdot {\bm v}_k = 2 \pi \delta_{j, k}}$ for $j$, ${k \smallin \{ 1, \, 2 \}}$. Introduce a choice of fundamental cell for ${(\R^2)^* / \Lambda^*}$, the Brillouin zone $\mathcal{B}$, consisting of all elements of $(\R^2)^*$ which are closer to ${{\bm 0} \smallin \Lambda^*}$ than to any other element of $\Lambda^*$. Then, $(\R^2)^*$ is tiled by the set of all $\Lambda^*$-translates of $\mathcal{B}$.

For each ${{\bm k} \smallin (\R^2)^*}$, let $L^2_{\bm k}$ denote the subspace of ${\bm k}$-pseudoperiodic functions: 
\begin{equation}
L^2_{\bm k} = L^2_{\bm k}(\R^2/\Lambda) \equiv \{ f \smallin L^2_{\rm loc}(\R^2) : f({\bm x} + {\bm v}) = e^{i{\bm k}\cdot{\bm v}} f({\bm x}) \ {\rm a.e.} \ {\bm x} \smallin \R^2, \ {\bm v} \smallin \Lambda\}.
\end{equation}
Note that ${L^2_{\bm k} = L^2_{{\bm k} \, + \, \tilde{\bm k}}}$ for any ${\tilde{\bm k} \smallin \Lambda^*}$. The space $L^2(\R^2)$ has a decomposition into fibers: ${L^2(\R^2) = \int_\mathcal{B}^\oplus L^2_{\bm k} \, {\rm d}{\bm k}}$.

Since $V$ is $\Lambda$-periodic, the operator $H_V$ commutes with translation by elements of $\Lambda$. Thus, $H_V$ acts in each $L^2_{\bm k}$ space, mapping a dense subspace of $L^2_{\bm k}$ to itself; we denote this operator by $H_{\bm k}$. Hence, the spectral properties of $H_V$ acting in $L^2(\R^2)$ can be reduced to a study of the family of spectral properties for $H_{\bm k}$, where ${{\bm k} \smallin \mathcal{B}}$. 

Thus, consider the {\it Floquet-Bloch eigenvalue problem}
\begin{gather}
\label{eq:floquet-bloch-evp_general}
H_{\bm k} \Phi = E \Phi, \quad \Phi \smallin L^2_{\bm k}, \quad {\bm k} \smallin \mathcal{B}.
\end{gather}
Equivalently, we may set ${\Phi({\bm x}) = e^{i {\bm k} \cdot {\bm x}} \phi({\bm x})}$, where ${\phi \smallin L^2(\R^2 / \Lambda) = L^2_{\bm 0}}$, and consider the eigenvalue problem
\begin{gather}
\label{eq:floquet-bloch-evp_general_alt}
H({\bm k}) \phi = E \phi, \quad \phi \smallin L^2(\R^2 / \Lambda), \quad {\bm k} \smallin \mathcal{B}.
\end{gather}
Note that, for each ${\bm k}$, ${H({\bm k}) \equiv e^{-i{\bm k}\cdot{\bm x}} \, H_{\bm k} \, e^{i{\bm k}\cdot{\bm x}} = -(\nabla + i{\bm k})^2 + V}$ acts on $L^2(\R^2/\Lambda)$. 

For each ${\bm k} \smallin \mathcal{B}$, $H_{\bm k}$ acting in $L^2_{\bm k}$ (equivalently, $H(\bm k)$ acting in $L^2(\R^2/\Z^2)$) is self-adjoint and has compact resolvent. Therefore, the eigenvalue problems \eqref{eq:floquet-bloch-evp_general} and \eqref{eq:floquet-bloch-evp_general_alt} have a discrete sequence of eigenvalues
\begin{equation}
E_1({\bm k}) \leq E_2({\bm k}) \leq \cdots \leq E_b({\bm k}) \leq \cdots
\end{equation}
of finite multiplicity and tending to infinity. The corresponding eigenstates for \eqref{eq:floquet-bloch-evp_general} ({\it Floquet-Bloch states}) are  $\Phi_b(\bm x,\bm k)$ and those for \eqref{eq:floquet-bloch-evp_general_alt} are $\phi_b(\bm x,\bm k)=e^{-i\bm k\cdot \bm x}\Phi_b(\bm x,\bm k)$. The eigenvalue maps ${{\bm k} \to E_b({\bm k})}$, ${b \geq 1}$, are Lipschitz continuous and called {\it dispersion relations}; their graphs over $\mathcal{B}$ are called {\it dispersion surfaces}. 

As ${\bm k}$ varies over $\mathcal{B}$, each $E_b({\bm k})$ sweeps out a real subinterval of $\R$. The spectrum of $H_V$ acting on $L^2(\R^2)$ is the union of all such subintervals:
\begin{equation}
{\rm spec}(H) = \bigcup_{b \, \geq \, 1} E_b(\mathcal{B}).
\end{equation}
The collection of eigenvalue maps/dispersion relations and corresponding eigenstates ${\{(E_b({\bm k}), \, \Phi_b({\bm x}, {\bm k}) \}_{b \, \geq \, 1}}$ is called the {\it band structure} of $H_V$.

\bigskip

\section{Square lattice media and quadratic band degeneracies}
\label{sec:kmow}

\setcounter{equation}{0}
\setcounter{figure}{0}

In this section, we review results in \cite{keller2018spectral} on the presence of quadratic band degeneracies in the band structures of Schr\"{o}dinger operators for a class of periodic {\it square lattice potentials}.

\subsection{Square lattice potentials}
\label{sec:sql-pot}

The  {\it square lattice} in $\R^2$ is given by
\begin{equation}
\label{eq:def_sql}
\Z^2 = \Z {\bm v}_1 \oplus \Z {\bm v}_2, \quad \text{where} \quad {\bm v}_1 \equiv [1, 0]^\mathsf{T}, \quad {\bm v}_2 \equiv [0, 1]^\mathsf{T}.
\end{equation}
Our choice of fundamental cell is the unit square ${\Omega \equiv [-1/2, \, 1/2] \times [-1/2, \, 1/2]}$. The dual lattice is
\begin{equation}
\label{eq:def-sql-dual}
(\Z^2)^* = \Z {\bm k}_1 \oplus \Z {\bm k}_2, \quad \text{where} \quad {\bm k}_1 \equiv 2 \pi \, [1, 0]^\mathsf{T}, \quad {\bm k}_2 \equiv 2 \pi \, [0, 1]^\mathsf{T},
\end{equation}
and the Brillouin zone is ${\mathcal{B} \equiv [-\pi, \, \pi] \times [-\pi, \, \pi]}$.

\begin{figure}[t]
\centering
\begin{subfigure}{0.38\textwidth}
    \subcaption{$\ $}
    \label{fig:kmow-1a}
    \vspace{0.1cm}
    \begin{tikzpicture}[x = 1.75cm, y = 1.75cm]
    \draw[white] (0, 0) rectangle (3.5, 3.5);
    \draw[black, thick, ->, opacity = 0.5] (0.25, 1.75) -- (3.25, 1.75) node[anchor = north] {$x_1$};
    \draw[black, thick, ->, opacity = 0.5] (1.75, 0.25) -- (1.75, 3.25) node[anchor = east] {$x_2 \,$};
    \fill[black] (0.75, 0.75) circle (0.1cm);
    \fill[black] (0.75, 1.75) circle (0.1cm) node[anchor = north east, opacity = 0.5] {$-1$};
    \fill[black] (0.75, 2.75) circle (0.1cm);
    \fill[black] (1.75, 0.75) circle (0.1cm) node[anchor = north east, opacity = 0.5] {$-1$};
    \fill[black] (1.75, 1.75) circle (0.1cm) node[anchor = north east, opacity = 0.5] {$0$};
    \fill[black] (1.75, 2.75) circle (0.1cm) node[anchor = north east, opacity = 0.5] {$1$};
    \fill[black] (2.75, 0.75) circle (0.1cm);
    \fill[black] (2.75, 1.75) circle (0.1cm) node[anchor = north east, opacity = 0.5] {$1$};
    \fill[black] (2.75, 2.75) circle (0.1cm); 
    \draw[red!20!blue, thick, opacity = 0.5] (1.25, 1.25) rectangle (2.25, 2.25);
    \fill[red!20!blue, opacity = 0.3] (1.25, 1.25) rectangle (2.25, 2.25);
    \draw[red, very thick, ->] (1.75, 1.75) -- (2.75, 1.75) node[anchor = south east] {${\bm v}_1$};
    \draw[red, very thick, ->] (1.75, 1.75) -- (1.75, 2.75) node[anchor = north west] {${\bm v}_2 \,$};
    \end{tikzpicture}
\end{subfigure}
\hspace{0.4cm}
\begin{subfigure}{0.38\textwidth}
    \subcaption{$\ $}
    \label{fig:kmow-1b}
    \vspace{0.1cm}
    \begin{tikzpicture}[x = 1.75cm, y = 1.75cm]
    \draw[white] (0, 0) rectangle (3.5, 3.5);
    \draw[black, thick, ->, opacity = 0.5] (0.25, 1.75) -- (3.25, 1.75) node[anchor = north] {$k_1$};
    \draw[black, thick, ->, opacity = 0.5] (1.75, 0.25) -- (1.75, 3.25) node[anchor = east] {$k_2 \,$};
    \fill[black] (0.75, 0.75) circle (0.1cm);
    \fill[black] (0.75, 1.75) circle (0.1cm) node[anchor = north east, opacity = 0.5] {$-2\pi$};
    \fill[black] (0.75, 2.75) circle (0.1cm);
    \fill[black] (1.75, 0.75) circle (0.1cm) node[anchor = north east, opacity = 0.5] {$-2\pi$};
    \fill[black] (1.75, 1.75) circle (0.1cm) node[anchor = north east, opacity = 0.5] {$0$};
    \fill[black] (1.75, 2.75) circle (0.1cm) node[anchor = north east, opacity = 0.5] {$2\pi$};
    \fill[black] (2.75, 0.75) circle (0.1cm);
    \fill[black] (2.75, 1.75) circle (0.1cm) node[anchor = north east, opacity = 0.5] {$2\pi$};
    \fill[black] (2.75, 2.75) circle (0.1cm);
    \draw[red!60!yellow, thick, opacity = 0.5] (1.25, 1.25) rectangle (2.25, 2.25);
    \fill[red!60!yellow, opacity = 0.3] (1.25, 1.25) rectangle (2.25, 2.25);
    \draw[red, very thick, ->] (1.75, 1.75) -- (2.75, 1.75) node[anchor = south east] {${\bm k}_1$};
    \draw[red, very thick, ->] (1.75, 1.75) -- (1.75, 2.75) node[anchor = north west] {${\bm k}_2 \,$};
    \end{tikzpicture}
\end{subfigure}
\vspace{0.3cm}
\caption{{\bf (a)} The square lattice $\Z^2$ with lattice vectors ${{\bm v}_1 = [1, 0]^\mathsf{T}}$ and ${{\bm v}_2 = [0, 1]^\mathsf{T}}$ (red). The fundamental cell ${\Omega = [-1/2, \, 1/2] \times [-1/2, \, 1/2]}$ (shaded purple) is indicated. {\bf (b)} The dual lattice $(\Z^2)^*$ with dual lattice vectors ${{\bm k}_1 = 2\pi \, [1, 0]^\mathsf{T}}$ and ${{\bm k}_2 = 2\pi \, [0, 1]^\mathsf{T}}$ (red). The fundamental cell, or first Brillouin zone ${\mathcal{B} = [-\pi, \, \pi] \times [-\pi, \, \pi]}$ (shaded orange) is indicated; see also Figure \ref{fig:kmow-2}.}
\label{fig:kmow-1}
\end{figure}

Our class of square lattice potentials is defined in terms of several symmetry operators:
\begin{definition}
\label{def:syms} 
{\rm (Symmetry operators.)}
For any function $f$, we define:
\begin{align}
\mathcal{P}[f]({\bm x}) & \equiv f(-{\bm x}) & \text{(spatial inversion/parity with respect to ${{\bm x} = {\bm 0}}$)}, \\
\mathcal{C}[f]({\bm x}) & \equiv f({\bm x})^* & \text{(complex conjugation/time-reversal)}, \\
\mathcal{R}[f]({\bm x}) & \equiv f(R^\mathsf{T}{\bm x}) & \text{($\pi/2$ rotation about ${{\bm x} = {\bm 0}}$)}, \\
\Sigma_1[f]({\bm x}) & \equiv f(\sigma_1 {\bm x}) & \text{(reflection about ${x_1 = x_2}$)}. 
\end{align}
The matrices $R$ and $\sigma_1$ are defined in \eqref{eq:def-rot} and \eqref{eq:def-pauli}.
\end{definition}

\begin{definition}
\label{def:sql-pot}
{\rm (Square lattice potential.)}
We say that a smooth potential $V$ is a {\it square lattice potential} if it is $\Z^2$-periodic and invariant under the symmetries introduced in Definition \ref{def:syms}.
\end{definition}

\begin{remark}
\label{rmk:min-sym}
Since ${\mathcal{R}^2 = \mathcal{P}}$, the full group of symmetries of square lattice potentials is actually generated by a proper subset of the symmetry operators introduced above.
\end{remark}

\begin{remark}
Note that $\mathcal{R}$ and $\Sigma_1$ can be composed to generate a new symmetry operator:
\begin{equation}
\Sigma_3[f]({\bm x}) \equiv {\mathcal{R} \Sigma_1}[f]({\bm x}) = f(\sigma_3 {\bm x}).
\end{equation}
The matrix $\sigma_3$ is defined in \eqref{eq:def-pauli}. Hence, square lattice potentials are also invariant under $\Sigma_3$.
\end{remark}

\begin{remark}
We believe it is possible, without much difficulty, to considerably relax the assumption that square lattice potentials are smooth.
\end{remark}

\noindent A simple, nontrivial example of a square lattice potential (in the sense of Definition \ref{def:sql-pot}) is a sum of $\Z^2$-translates of a fixed, rapidly-decaying, radially symmetric function; see Figure \ref{fig:intro-1}. Other examples are discussed in \cite{keller2018spectral}.

\begin{remark}
\label{rmk:sql-pot-ref}
Note that $\Sigma_1$ invariance is not completely necessary; analogs of both Theorem \ref{thm:kmow} (ahead) and our main results (see Section \ref{sec:results}) hold without it. However, under this additional symmetry, the form of the effective Hamiltonian governing a neighborhood of the degeneracy is simpler. Further, as we shall see in Theorems \ref{thm:M-dgn-ref-1} and \ref{thm:M-dgn-ref-2}, deformations which are constrained by additional symmetries lead to more constrained "unfolding" of the quadratic degeneracy.
\end{remark}

\noindent In Appendix \ref{apx:fourier}, a characterization of the Fourier series of such potentials is discussed.

\subsection{Analysis of Schr\"{o}dinger operators with square lattice potentials}
\label{sec:sql-op}

We consider Schr\"{o}dinger operators
\begin{equation}
H_V \equiv -\Delta + V,
\end{equation}
where $V$ is a square lattice potential in the sense of Definition \ref{def:sql-pot}.

\begin{proposition}
{\rm (Symmetries of $H_V$.)}
Let $V$ denote a square lattice potential; see Definition \ref{def:sql-pot}. The operator ${H_V = -\Delta+V}$ acting on $L^2(\R^2)$, with dense domain $C^2_0(\R^2)$, commutes with $\Z^2$-translations, as well as with $\mathcal{P}$, $\mathcal{C}$, $\mathcal{R}$, $\Sigma_1$, and $\Sigma_3$ symmetries.
\end{proposition}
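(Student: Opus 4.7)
The plan is to verify commutation with each symmetry separately, reducing each case to two ingredients: (i) the multiplication operator by $V$ commutes with the symmetry, which is immediate from the invariance properties in Definition \ref{def:sql-pot} (together with the observation that $\Sigma_3 = \mathcal{R}\Sigma_1$); and (ii) the Laplacian $-\Delta$ commutes with the symmetry. For the geometric symmetries $\mathcal{P},\mathcal{R},\Sigma_1,\Sigma_3$ ingredient (ii) is a single chain-rule computation using Lemma \ref{lem:chain}, and for $\mathcal{C}$ it reduces to the fact that $-\Delta$ has real coefficients. The $\Z^2$-translation case is standard.

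For the geometric symmetries, each has the form $\mathscr{M}_A[f]({\bm x}) = f(A{\bm x})$ with
\begin{equation*}
A \in \{-I,\, R,\, \sigma_1,\, \sigma_3\},
\end{equation*}
all of which satisfy $A^\mathsf{T} A = I$ (and hence $A A^\mathsf{T} = I$, by dimension). Applying the two identities of Lemma \ref{lem:chain} in succession,
\begin{equation*}
-\Delta\,\mathscr{M}_A[f] \;=\; -\nabla\cdot\bigl(A^\mathsf{T}\mathscr{M}_A[\nabla f]\bigr) \;=\; -\mathscr{M}_A\bigl[\nabla\cdot(A A^\mathsf{T}\nabla f)\bigr] \;=\; -\mathscr{M}_A[\Delta f].
\end{equation*}
Combined with $V\cdot\mathscr{M}_A[f] = \mathscr{M}_A[V\cdot f]$, which is exactly invariance of $V$ under the corresponding symmetry (for $\mathcal{P},\mathcal{R},\Sigma_1$ this is Definition \ref{def:sql-pot}; for $\Sigma_3$ it follows from the composition $\mathcal{R}\Sigma_1$), this gives $H_V\mathscr{M}_A = \mathscr{M}_A H_V$ on $C_0^2(\R^2)$.

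For the antilinear symmetry $\mathcal{C}$, commutation follows since both the coefficients of $-\Delta$ and the function $V$ are real-valued: for any $f\in C_0^2(\R^2)$,
\begin{equation*}
\mathcal{C}\bigl[H_V f\bigr]({\bm x}) = \overline{-\Delta f({\bm x}) + V({\bm x}) f({\bm x})} = -\Delta \overline{f}({\bm x}) + V({\bm x})\overline{f}({\bm x}) = H_V\bigl[\mathcal{C} f\bigr]({\bm x}).
\end{equation*}
Finally, for ${\bm v}\in\Z^2$, write $\tau_{\bm v} f({\bm x}) = f({\bm x}+{\bm v})$; translation invariance of $-\Delta$ and $\Z^2$-periodicity of $V$ yield $\tau_{\bm v} H_V = H_V \tau_{\bm v}$ at once. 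There is no substantial obstacle — the main point to be careful about is the dimensional fact $A^\mathsf{T} A = I \Rightarrow A A^\mathsf{T} = I$ that makes the chain rule close up nicely — everything else is bookkeeping over the finite list of symmetries.
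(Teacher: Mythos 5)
Your proof is correct, and it is exactly the standard verification the paper leaves implicit (the proposition is stated without proof there): commutation of $-\Delta$ with each orthogonal change of variables via Lemma \ref{lem:chain}, commutation of multiplication by $V$ via the invariances in Definition \ref{def:sql-pot} (with $\Sigma_3=\mathcal{R}\Sigma_1$), realness for $\mathcal{C}$, and periodicity for translations. No gaps.
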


The notion of a {\it high-symmetry quasimomentum} plays an important role in this study. For us, a high symmetry quasimomentum, ${{\bm k}_\star \smallin \mathcal{B}}$ modulo $\Lambda^*$, is one for which ${\mathcal{R} L^2_{{\bm k}_\star} = L^2_{{\bm k}_\star}}$. One checks easily that ${\bm k}_\star$ is a high symmetry quasimomentum exactly when ${\bm k}_\star$ is one of the following two quasimomenta:
\begin{equation}
{\bm \Gamma} = [0, \, 0]^\mathsf{T} \quad \text{or} \quad {\bm M} = [\pi, \, \pi]^\mathsf{T}.
\end{equation}

\begin{figure}[t]
\centering
\begin{tikzpicture}[x = 1.25cm, y = 1.25cm]
    \draw[white] (0, 0) rectangle (4, 4);
    \draw[black, thick, ->, opacity = 0.5] (0.25, 2) -- (3.85, 2) node[anchor = north] {$k_1$};
    \draw[black, thick, ->, opacity = 0.5] (2, 0.25) -- (2, 3.85) node[anchor = east] {$k_2 \,$};
    \draw[black, opacity = 0.5] (0.5, 1.9) -- (0.5, 2) node[anchor = north east] {$-\pi$} -- (0.5, 2.1);
    \draw[black, opacity = 0.5] (2, 2) node[anchor = north east] {$0$};
    \draw[black, opacity = 0.5] (3.5, 1.9) -- (3.5, 2) node[anchor = north east] {$\pi$} -- (3.5, 2.1);
    \draw[black, opacity = 0.5] (1.9, 0.5) -- (2, 0.5) node[anchor = north east] {$-\pi$} -- (2.1, 0.5);
    \draw[black, opacity = 0.5] (1.9, 3.5) -- (2, 3.5) node[anchor = north east] {$\pi$} -- (2.1, 3.5);    
    \draw[red!60!yellow, thick, opacity = 0.5] (0.5, 0.5) rectangle (3.5, 3.5);
    \fill[red!60!yellow, opacity = 0.3] (0.5, 0.5) rectangle (3.5, 3.5);
    \fill[black] (2, 2) circle (0.1cm) node[anchor = south west] {${\bm \Gamma}$};
    \fill[black] (3.5, 3.5) circle (0.1cm) node[anchor = south west] {${\bm M}$};
    \fill[black] (0.5, 3.5) circle (0.1cm) node[anchor = south east] {$R^3 {\bm M} = -R {\bm M}$};
    \fill[black] (0.5, 0.5) circle (0.1cm) node[anchor = north east] {$R^2 {\bm M} = -{\bm M}$};
    \fill[black] (3.5, 0.5) circle (0.1cm) node[anchor = north west] {$R {\bm M}$};
\end{tikzpicture}
\vspace{0.3cm}
\caption{The Brillouin zone ${\mathcal{B} = [-\pi, \, \pi] \times [-\pi, \, \pi]}$; see also Figure \ref{fig:kmow-1}, panel (\subref{fig:kmow-1b}). The high-symmetry points ${{\bm \Gamma} = [0, 0]^\mathsf{T}}$ and ${{\bm M} = [\pi, \pi]^\mathsf{T}}$ are identified, along with the rotations of ${\bm M}$, equivalent to ${\bm M}$ modulo $(\Z^2)^*$.}
\label{fig:kmow-2}
\end{figure}

We shall focus on the band structure in a neighborhood of the high-symmetry point ${\bm M}$. Since ${\mathcal{R} L^2_{\bm M} = L^2_{\bm M}}$ and $\mathcal{R}$ is a normal operator, $L^2_{\bm M}$ decomposes into a direct sum of eigenspaces of $\mathcal{R}$. We make this decomposition explicit: Since ${\mathcal{R}^4 = 1}$, its eigenvalues $\varsigma$ satisfy ${\varsigma^4 = 1}$ and are given by $\{ 1, \, i, \, -1, \, -i\}$. Thus, we have the orthogonal decomposition
\begin{align}
\label{eq:L2M-dsum}
L^2_{\bm M} & = L^2_{{\bm M}, \, 1} \oplus L^2_{{\bm M}, \, i} \oplus L^2_{{\bm M}, \, -1} \oplus L^2_{{\bm M}, \, -i}, \quad \text{where} \\
\label{eq:def-L2M-rot}
L^2_{{\bm M}, \, \varsigma} & \equiv \{ f \smallin L^2_{\bm M} : \mathcal{R}[f]({\bm x}) = \varsigma f({\bm x}) \ {\rm a.e.} \ {\bm x} \smallin \R^2 \} .
\end{align}
Note that $\mathcal{PC}$ maps $L^2_{{\bm M}, i}$ to $L^2_{{\bm M}, -i}$.

\subsection{Quadratic band degeneracies}
\label{sec:quad-dgn}

A {\it quadratic band degeneracy point}, or simply a {\it quadratic band degeneracy}, is an energy-quasimomentum pair ${(E_S, \, {\bm k}_S)}$ at which exactly two consecutive dispersion surfaces touch quadratically. In \cite{keller2018spectral}, it is proven that quadratic band degeneracies occur in the band structures of Schr\"{o}dinger operators ${H_V = -\Delta + V}$, with $V$ a square lattice potential, at the high-symmetry quasimomentum ${{\bm k}_S = {\bm M}}$; see Theorem \ref{thm:kmow} below.

The locally quadratic character follows from the existence of an energy-quasimomentum pair ${(E_S, \, {\bm M})}$ for which the following structure of the corresponding Floquet-Bloch eigenspace holds:

\begin{enumerate}
\renewcommand{\theenumi}{Q\arabic{enumi}}
\item \label{itm:quad-dgn-1} $E_S$ is a multiplicity two $L^2_{\bm M}$ eigenvalue of $H_V$,
\item \label{itm:quad-dgn-2} $E_S$ is a simple $L^2_{{\bm M},+i}$ eigenvalue of $H_V$ with normalized eigenstate $\Phi_1$,
\item \label{itm:quad-dgn-3} $E_S$ is a simple $L^2_{{\bm M},-i}$ eigenvalue of $H_V$ with normalized eigenstate ${\Phi_2 = \mathcal{PC}[\Phi_1]}$,
\item \label{itm:quad-dgn-4} $E_S$ is neither an $L^2_{{\bm M},+1}$ eigenvalue nor an $L^2_{{\bm M},-1}$ eigenvalue of $H_V$.
\end{enumerate}
Introduce orthogonal projections ${\Pi^\parallel = \Phi_1 \langle \Phi_1, \, \cdot \rangle + \Phi_2 \langle \Phi_2, \, \cdot \rangle }$ and ${\Pi^\perp = 1 - \Pi^\parallel}$, and define the resolvent ${\mathscr{R}(E_S) = \Pi^\perp (H_V - E_S)^{-1} \Pi^\perp}$, which maps ${{\ker}_{L^2_{\bm M}}(H_V - E_S)^\perp \to H^2_{\bm M}}$. Further, define the parameters
\begin{equation}
\label{eq:def-a-par}
\begin{aligned}
\alpha_0 & \equiv \langle \partial_{x_1} \Phi_1 , \, \mathscr{R}(E_S) \partial_{x_1} \Phi_1 \rangle \smallin \R, \\
\alpha_1 & \equiv \langle \partial_{x_1} \Phi_1, \, \mathscr{R}(E_S) \partial_{x_2} \Phi_2 \rangle \smallin \R, \\ 
\alpha_2 & \equiv i \langle \partial_{x_1} \Phi_1 , \, \mathscr{R}(E_S) \partial_{x_1} \Phi_2 \rangle \smallin \R.
\end{aligned}
\end{equation}

\begin{enumerate}
\renewcommand{\theenumi}{Q\arabic{enumi}}
\setcounter{enumi}{4}
\item \label{itm:quad-dgn-5} (Nondegeneracy condition.) ${\alpha_1 \neq 0}$ and ${\alpha_2 \neq 0}$.
\end{enumerate}

\begin{remark}
\label{rmk:a-par-small}
In \cite[Appendix C]{keller2018spectral}, the nondegeneracy condition \ref{itm:quad-dgn-5} is explicitly verified for small amplitude potentials, under a generically satisfied nondegeneracy condition on its Fourier coefficients.
\end{remark}

\noindent The following result was proved in  \cite{keller2018spectral}.
\begin{theorem}
\label{thm:kmow}
{\rm (Quadratic band degeneracies in square lattice media.)}
Let $V$ denote a square lattice potential (Definition \ref{def:sql-pot}) and  ${H_V = -\Delta + V}$ the associated Schr\"{o}dinger operator. Then, the following hold:
\begin{enumerate}
\item \label{itm:kmow-exist} The band structure of $H_V$ generically\footnote{The precise meaning of the term ``generic'' is stated in Theorem 6.1 of \cite{keller2018spectral} and summarized in Remark \ref{rmk:kmow-generic} of Appendix \ref{apx:fourier}.} admits energy-quasimomentum pairs ${(E_S, \, {\bm M})}$ satisfying properties \ref{itm:quad-dgn-1} -- \ref{itm:quad-dgn-5}.
\item \label{itm:kmow-srf} Suppose ${(E_S, \, {\bm M})}$ satisfies properties \ref{itm:quad-dgn-1} -- \ref{itm:quad-dgn-5}. Then, there exists ${\kappa^\star > 0}$ such that the dispersion surfaces of $H_V$ containing ${(E_S, \, {\bm M})}$ are described, for ${|{\bm \kappa}| < \kappa^\star}$, by
\begin{align}
\label{eq:kmow}
E_\pm({\bm M} + {\bm \kappa}) - E_S = (1 - \alpha_0) |{\bm \kappa}|^2 + Q_4({\bm \kappa}) \pm \sqrt{(\alpha_1 {\bm \kappa} \cdot \sigma_1 {\bm \kappa})^2 + (\alpha_2 {\bm \kappa} \cdot \sigma_3 {\bm \kappa})^2 + Q_6({\bm \kappa})}.
\end{align}
The functions $\mathcal{Q}_n({\bm \kappa})$, $n= 4,6$, are analytic in a neighborhood of ${{\bm \kappa} = {\bm 0}}$ and satisfy: 
\begin{equation}
\label{eq:kmow-Q-bd}
Q_n({\bm \kappa}) = O(|{\bm \kappa}|^n) \ \ \text{as} \ \ |{\bm \kappa}| \to 0.
\end{equation}
Moreover, they possess the following symmetries:
\begin{equation}
\label{eq:kmow-Q-sym}
\begin{aligned}
Q_n({\bm \kappa}) = Q_n(R^\mathsf{T} {\bm \kappa}), \quad Q_n({\bm \kappa}) = Q_n(\sigma_1 {\bm \kappa}), \quad \text{and} \quad Q_n({\bm \kappa}) = Q_n(\sigma_3 {\bm \kappa}).
\end{aligned}
\end{equation}
Here, $R$ denotes the $\pi/2$-rotation matrix and $\sigma_1$, $\sigma_3$ are Pauli matrices; see Section \ref{sec:notation}.
\end{enumerate}
\end{theorem}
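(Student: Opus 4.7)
The plan is to establish the two parts separately. For Part (1), I would argue perturbatively from the free Laplacian case $V \equiv 0$, where at quasimomentum ${\bm M}$ the lowest $L^2_{\bm M}$-eigenvalue $|{\bm M}|^2 = 2\pi^2$ has multiplicity four, spanned by the plane waves $e^{i({\bm M}+2\pi {\bm n})\cdot{\bm x}}$ with ${\bm n} \in \{(0,0),(-1,0),(0,-1),(-1,-1)\}$. Under $\mathcal{R}$, this four-dimensional subspace decomposes into one-dimensional eigenspaces with eigenvalues $+1, -1, +i, -i$; the $+i$ and $-i$ pieces combine, via the action of $\Sigma_1$ (equivalently $\mathcal{P}\mathcal{C}$), into a single two-dimensional irreducible representation of the full point group. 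Because $H_V$ commutes with every symmetry operator, degenerate perturbation theory in the amplitude of $V$ respects this decomposition, so for generic small $V$ the two-dimensional irreducible piece yields an eigenvalue $E_S$ satisfying properties \ref{itm:quad-dgn-1}--\ref{itm:quad-dgn-4}. The nondegeneracy condition \ref{itm:quad-dgn-5}, which concerns specific matrix elements of $\nabla\Phi_1$ and $\nabla\Phi_2$, is verified for generic Fourier coefficients of $V$ as in Remark \ref{rmk:a-par-small}.

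For Part (2), I would reduce the eigenvalue problem to a $2\times 2$ matrix problem via a Lyapunov--Schmidt / Schur complement scheme. Passing to the pseudoperiodic gauge, the spectral problem becomes $(H_V + A({\bm \kappa}))\phi = (E_S + \varepsilon)\phi$ on $L^2_{\bm M}$, with $A({\bm \kappa}) = -2i{\bm \kappa}\cdot(\nabla + i{\bm M}) + |{\bm \kappa}|^2$ a polynomial perturbation of degree two in ${\bm \kappa}$. Decomposing $\phi = \phi^\parallel + \phi^\perp$ relative to $\Pi^\parallel$, $\Pi^\perp$, the perpendicular equation is solvable for $\phi^\perp$ as a function of $\phi^\parallel$ via the resolvent $\mathscr{R}(E_S)$ once $(\varepsilon, {\bm \kappa})$ is sufficiently small. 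Substituting back into the parallel equation yields a self-adjoint $2\times 2$ matrix equation $\mathcal{M}(\varepsilon; {\bm \kappa})\vec c = 0$ for the coordinates $\vec c$ of $\phi^\parallel$ in the basis $(\Phi_1, \Phi_2)$, where $\mathcal{M}$ is analytic in $(\varepsilon, {\bm \kappa})$ near $(0, {\bm 0})$ with $\mathcal{M}(0;{\bm 0}) = 0$.

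The crux is to Taylor expand $\mathcal{M}$ and invoke the symmetry group to pin down its form. Since $\Phi_1$ and $\Phi_2$ carry distinct $\mathcal{R}$-eigenvalues $+i$ and $-i$, conjugation by $\mathcal{R}$ constrains every matrix element of $\partial_{\bm \kappa}^\alpha A$ and of $A\,\mathscr{R}(E_S)\,A$; combined with the exchange $\Phi_2 = \mathcal{P}\mathcal{C}[\Phi_1]$, this forces every entry of $\mathcal{M}$ linear in ${\bm \kappa}$ to vanish, so the lowest nontrivial ${\bm \kappa}$-dependence is quadratic. Explicit computation of the quadratic entries, using \eqref{eq:def-a-par} and Lemma \ref{lem:Ahermsym}, identifies the diagonal part as $-\varepsilon I + (1-\alpha_0)|{\bm \kappa}|^2 I$ and the off-diagonal part as $-\alpha_1({\bm \kappa}\cdot\sigma_1 {\bm \kappa})\sigma_1 - \alpha_2({\bm \kappa}\cdot\sigma_3{\bm \kappa})\sigma_2$; crucially, $\Sigma_1$-invariance forbids any $\sigma_3$-contribution at quadratic order, which is what keeps the degeneracy genuinely quadratic. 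Under \ref{itm:quad-dgn-5} the characteristic equation is nondegenerate and the implicit function theorem yields two analytic dispersion branches $\varepsilon_\pm({\bm \kappa})$ of the form \eqref{eq:kmow}; the symmetries \eqref{eq:kmow-Q-sym} of the analytic remainders $Q_4, Q_6$ descend from those of $H_V$. The principal obstacle is precisely this symmetry bookkeeping: one must track how $\mathcal{R}$, $\Sigma_1$, $\mathcal{P}$, and $\mathcal{C}$ act on every matrix element (including on $\mathscr{R}(E_S)$ and on every ${\bm \kappa}$-derivative of $A$) and exclude all stray terms, most importantly the quadratic $\sigma_3$-contribution whose presence would already split the degeneracy at second order.
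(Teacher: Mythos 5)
Your overall framework is the right one: this is exactly the Schur complement/Lyapunov--Schmidt reduction to a self-adjoint $2\times 2$ matrix $\mathcal{M}(\varepsilon;\bm\kappa)$ on the degenerate eigenspace, followed by symmetry bookkeeping to pin down its Pauli decomposition. The paper does not prove Theorem~\ref{thm:kmow} itself (it cites \cite{keller2018spectral}), but the strategy is the same one the paper implements in Sections~\ref{sec:set-up}--\ref{sec:pf-M-srf} for the deformed operator, which reduces to the undeformed case when $\tau_0=0$, $\bm\tau=\bm 0$. That said, there are two concrete issues.

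\textbf{Misattribution of the $\sigma_3$-vanishing.} You credit the absence of a $\sigma_3$-contribution in $\mathcal{M}$ to $\Sigma_1$-invariance, but that is not the operative symmetry. What kills the $\sigma_3$ term is $\mathcal{PC}$: since $\Phi_2 = \mathcal{PC}[\Phi_1]$ and $\mathcal{PC}$ commutes with $H(\bm k)$ on each fiber $L^2_{\bm k}$, one has $\mathcal{M}_{1,1}(\varepsilon;\bm\kappa) = \mathcal{M}_{2,2}(\varepsilon;\bm\kappa)$ identically (cf.\ Proposition~\ref{prop:calM-pr}, part~\ref{itm:calM-PC}, and \eqref{eq:h30}). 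The paper stresses in Remark~\ref{rmk:sql-pot-ref} that $\Sigma_1$-invariance is not needed for the theorem to hold at all; it only simplifies the explicit shape of the quadratic forms multiplying $\sigma_1$ and $\sigma_2$. Conjugation by $\Sigma_1$ relates $\mathcal{M}_{1,1}(\varepsilon;\bm\kappa)$ to $\mathcal{M}_{2,2}(\varepsilon;\sigma_1\bm\kappa)$, which is weaker than what you need at general $\bm\kappa$.

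\textbf{The implicit-function-theorem step does not work as stated.} You cannot ``invoke the IFT to get two analytic dispersion branches.'' At $\bm\kappa = \bm 0$ the two roots $\varepsilon_\pm$ coalesce, so $\det\mathcal{M}(\varepsilon;\bm 0)$ has a double zero in $\varepsilon$ at $\varepsilon = 0$, $\partial_\varepsilon\det\mathcal{M}(0;\bm 0) = 0$, and the IFT hypothesis fails. Moreover, the individual branches $E_\pm(\bm M + \bm\kappa)$ are only Lipschitz, not analytic -- the theorem asserts analyticity of the remainders $Q_4$, $Q_6$ only. The correct argument is to pass to the shifted variable $\nu = \varepsilon - (1-\alpha_0)|\bm\kappa|^2$, use Rouch\'e's theorem to locate exactly two zeros of $\mathcal{D}(\nu;\bm\kappa) = \det\mathcal{M}$ in a small disk $|\nu|<\nu^\star(\bm\kappa)$, and then express the analytic symmetric functions $\mathfrak{t}(\bm\kappa) = \nu_+ + \nu_-$ and $\mathfrak{d}(\bm\kappa) = \nu_+\nu_-$ via the argument-principle integrals $\frac{1}{2\pi i}\oint \nu^\ell\,\partial_\nu\mathcal{D}/\mathcal{D}\,d\nu$. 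Expanding those integrals gives $Q_4 = \tfrac12\mathfrak{t}$ and $Q_6 = -\mathfrak{d} + \tfrac14\mathfrak{t}^2 - (\text{explicit leading quartic})$, with the stated $O(|\bm\kappa|^4)$ and $O(|\bm\kappa|^6)$ bounds and the symmetries \eqref{eq:kmow-Q-sym} inherited from the corresponding symmetries of $\mathcal{D}$. This is precisely the content of Section~\ref{sec:pf-M-srf} (specialized to the undeformed case), and it is the step you have glossed over.

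Finally, a small caveat on Part~(1): your perturbation-from-$-\Delta$ argument proves the existence of quadratic degeneracies only in the small-amplitude regime, whereas the ``generic'' statement in \cite{keller2018spectral} (summarized in Remark~\ref{rmk:kmow-generic}) covers $H^\delta = -\Delta + \delta V$ for all but discretely many $\delta\in\R$, which requires the analytic-continuation argument in $\delta$, not just low-order perturbation theory.
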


\begin{remark}
\label{rmk:kmow-eff}
{\rm (Effective Hamiltonian about ${(E_S, \, {\bm M})}$.)}
The dispersion surfaces of $H_V$ which touch at ${(E_S, \, {\bm M})}$ are locally approximated by the dispersion relations of the effective Hamiltonian $H^{\bm M}_{\rm eff}$, whose ${2 \times 2}$ matrix Fourier symbol is:
\begin{equation}
\label{eq:kmow-Heff_2}
H^{\bm M}_{\rm eff}({\bm \kappa}) \equiv (1 - \alpha_0) |{\bm \kappa}|^2 I - (\alpha_1 {\bm \kappa} \cdot \sigma_1 {\bm \kappa}) \sigma_1 - (\alpha_2 {\bm \kappa} \cdot \sigma_3 {\bm \kappa}) \sigma_2.
\end{equation}
The two dispersion relations of $H^{\bm M}_{\rm eff}$ are given by:
\begin{equation}
\label{eq:kmow-Heff-srf}
\varepsilon_\pm({\bm \kappa}) = (1 - \alpha_0) |{\bm \kappa}|^2 \pm \sqrt{(\alpha_1 {\bm \kappa} \cdot \sigma_1 {\bm \kappa})^2 + (\alpha_2 {\bm \kappa} \cdot \sigma_3 {\bm \kappa})^2}.
\end{equation}
These coincide with the expressions \eqref{eq:kmow} of part \ref{itm:kmow-srf} of Theorem \ref{thm:kmow}, omitting higher order terms.
\end{remark}

\noindent Theorem \ref{thm:kmow} was proved in \cite{keller2018spectral}. The form of the effective Hamiltonian was argued by group representation considerations in \cite{chong2008effective}.

\bigskip

\section{Dirac points for second order periodic elliptic operators}
\label{sec:dir-pt}

\setcounter{equation}{0}
\setcounter{figure}{0}

Our main results (see Section \ref{sec:results}) establish that quadratic band degeneracy points perturb, under typical small deformations, to a pair of nearby Dirac (conical) points. In this section, we discuss sufficient conditions  for the existence of Dirac points in the band structure of an self-adjoint elliptic operator of the general form:
\begin{equation}
\label{eq:def-L}
L \equiv -\nabla \cdot A \nabla + V, 
\end{equation}
where
(a) $V$ is a sufficiently smooth, real-valued function which is periodic with respect to a two-dimensional lattice $\Lambda$ and inversion symmetric, and 
(b) $A$ is a sufficiently smooth, ${2 \times 2}$ real symmetric matrix-valued function which is $\Lambda$-periodic, inversion symmetric, and uniformly positive-definite. 
Our discussion of $L$ is closely related to that of \cite{fefferman2012honeycomb}; see also \cite{LWZ19}. In contrast to these references, we obtain a (tilted) Dirac cone without any assumptions on rotational invariance of $L$. This is consistent with the genericity of Dirac points, discussed in \cite{drouot2021ubiq}; see also our discussion in the introduction.

The operator $L$ commutes with translation by elements of $\Lambda$. Thus, the Floquet-Bloch theory of Section \ref{sec:fb-thy} applies to $L$, and hence $L$ has a band structure. Further, $L$ commutes with $\mathcal{PC}$ (Definition \ref{def:syms}) and, moreoever, for any ${{\bm k} \smallin \mathcal{B}}$, $\mathcal{PC}$ maps $L^2_{\bm k}$ to itself. Since $\mathcal{PC}$ satisfies ${(\mathcal{PC})^2 = 1}$ its eigenvalues are $\pm 1$. This induces the orthogonal decomposition:
\begin{align}
\label{eq:L2k-PC}
L^2_{\bm k} & = Y_{{\bm k},+1} \oplus Y_{{\bm k},-1}, \quad \text{where} \\
Y_{{\bm k},\varsigma} & \equiv \{ f \smallin L^2_{\bm k} : \mathcal{PC}[f] = \varsigma f \}, \ \ \varsigma = \pm 1.
\end{align}

A {\it Dirac point} is a locally conical touching of consecutive dispersion surfaces. In analogy with the case of quadratic band degeneracy points, Dirac points are a consequence of the following structure of the Floquet-Bloch eigenspace at an energy-quasimomentum pair ${(E_D, \, {\bm D})}$, which we shall assume:
\begin{enumerate}
\renewcommand{\theenumi}{D\arabic{enumi}}
\item \label{itm:dir-pt-1} $E_D$ is a multiplicity two $L^2_{\bm D}$ eigenvalue of $L$, 
\item \label{itm:dir-pt-2} $E_D$ is a simple $Y_{{\bm D}, +1}$ eigenvalue of $L$ with normalized eigenstate $\tilde{\Phi}^{\bm D}_1$,
\item \label{itm:dir-pt-3} $E_D$ is a simple $Y_{{\bm D}, -1}$ eigenvalue of $L$ with normalized eigenstate $\tilde{\Phi}^{\bm D}_2$.
\end{enumerate}
The eigenspace ${{\rm ker}(L - E_D)}$ has an orthonormal basis ${\{ \tilde{\Phi}^{\bm D}_1, \, \tilde{\Phi}^{\bm D}_2 \}}$. We use a more convenient orthonormal basis ${\{ \Phi^{\bm D}_1, \, \Phi^{\bm D}_2 \}}$, chosen so that ${\Phi^{\bm D}_2 = \mathcal{PC}[\Phi^{\bm D}_1]}$. Such a basis can be constructed by setting
\begin{equation}
\Phi^{\bm D}_1 \equiv \frac{1}{\sqrt{2}} (\tilde{\Phi}^{\bm D}_1 + \tilde{\Phi}^{\bm D}_2) \quad \text{and} \quad \Phi^{\bm D}_2 \equiv \frac{1}{\sqrt{2}} (\tilde{\Phi}^{\bm D}_1 - \tilde{\Phi}^{\bm D}_2).
\end{equation}
Further, we define
\begin{equation}
\label{eq:def-gam-par}
\begin{aligned}
{\bm \gamma}_0 & \equiv \langle \Phi^{\bm D}_1, \, -i \nabla \cdot (A \Phi^{\bm D}_1) - i A \nabla \Phi^{\bm D}_1 \rangle, \\
{\bm \gamma}_1 & \equiv {\rm Re} \, \langle \Phi^{\bm D}_1, \, -i \nabla \cdot (A \Phi^{\bm D}_2) - i A \nabla \Phi^{\bm D}_2 \rangle , \\
{\bm \gamma}_2 & \equiv -{\rm Im} \, \langle \Phi^{\bm D}_1, \, -i \nabla \cdot (A \Phi^{\bm D}_2) - i A \nabla \Phi^{\bm D}_2 \rangle .
\end{aligned}
\end{equation}

\begin{enumerate}
\renewcommand{\theenumi}{D\arabic{enumi}}
\setcounter{enumi}{3}
\item \label{itm:dir-pt-4} (Nondegeneracy condition.) ${\bm \gamma}_1$, ${\bm \gamma}_2 \neq {\bm 0}$.
\end{enumerate}

\begin{remark}
Both Dirac points and quadratic band degeneracy points satisfy \ref{itm:dir-pt-1} -- \ref{itm:dir-pt-3}. However, for quadratic points, the parameters ${\bm \gamma}_1$ and ${\bm \gamma}_2$ vanish by symmetry considerations; see \cite[Proposition 4.10]{keller2018spectral}.
\end{remark}

\begin{theorem}
\label{thm:dir-srf}
{\rm (Properties \ref{itm:dir-pt-1} -- \ref{itm:dir-pt-4} imply Dirac points.)}
Consider the periodic, elliptic operator $L$, defined in \eqref{eq:def-L}. Assume that the energy-quasimomentum pair ${(E_D, \, {\bm D})}$ satisfies hypotheses \ref{itm:dir-pt-1} -- \ref{itm:dir-pt-4}. Then, there exists ${q^\star > 0}$ such that the dispersion surfaces of $L$ containing ${(E_D, \, {\bm D})}$ are described, for ${|{\bm q}| < q^\star}$, by:
\begin{equation}
\label{eq:dir-srf}
E_\pm({\bm D} + {\bm q}) - E_D = {\bm \gamma}_0 \cdot {\bm q} + Q_2({\bm q}) \pm \sqrt{({\bm \gamma}_1 \cdot {\bm q})^2 + ({\bm \gamma}_2 \cdot {\bm q})^2 + Q_3({\bm q})}.
\end{equation}
The functions $Q_n({\bm q})$, ${n = 2}$, $3$, are analytic in a neighborhood of ${{\bm q} = {\bm 0}}$ and satisfy:
\begin{equation}
\label{eq:dir-srf-Q-bd}
Q_n({\bm q}) = O(|{\bm q}|^n) \ \ \text{as} \ \ |{\bm q}| \to 0.
\end{equation}
\end{theorem}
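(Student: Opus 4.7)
The plan is to carry out a Lyapunov--Schmidt / Schur-complement reduction analogous to that of Section \ref{sec:set-up}, adapted here to the operator $L$ near the energy-quasimomentum pair ${(E_D, \, {\bm D})}$. Setting ${{\bm k} = {\bm D} + {\bm q}}$ and expanding the fibred operator on $L^2(\R^2/\Lambda)$ as ${L({\bm D} + {\bm q}) = L({\bm D}) + L^{(1)}({\bm q}) + L^{(2)}({\bm q})}$, with $L^{(1)}({\bm q})$ linear and $L^{(2)}({\bm q})$ quadratic in ${\bm q}$, I would project the eigenvalue equation ${L({\bm D}+{\bm q})\,\phi = (E_D + \varepsilon)\phi}$ onto the two-dimensional degenerate eigenspace ${\mathrm{span}\{\Phi^{\bm D}_1, \Phi^{\bm D}_2\}}$ and its orthogonal complement. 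Inverting ${L({\bm D}) - E_D}$ on the complement (permissible since $L({\bm D})$ has compact resolvent and $E_D$ is isolated there) and substituting back yields a self-adjoint, real-analytic $2 \times 2$ matrix ${\mathcal{M}(\varepsilon; {\bm q})}$ with ${\mathcal{M}(0; {\bm 0}) = 0}$, such that $E_D + \varepsilon$ is an $L^2_{{\bm D} + {\bm q}}$-eigenvalue of $L$ if and only if ${\det \mathcal{M}(\varepsilon; {\bm q}) = 0}$.

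Next I would expand ${\mathcal{M}(\varepsilon; {\bm q}) = a_0 I + a_1 \sigma_1 + a_2 \sigma_2 + a_3 \sigma_3}$ in real-analytic coefficients $a_j(\varepsilon, {\bm q})$ vanishing at the origin and identify the linear parts. A direct computation of matrix elements of ${L^{(1)}({\bm q}) = {\bm q} \cdot (-i\nabla \cdot A - i A \nabla)}$ against the basis ${\{\Phi^{\bm D}_1, \Phi^{\bm D}_2\}}$ yields
\begin{equation*}
\langle \Phi^{\bm D}_1, \, L^{(1)}({\bm q}) \, \Phi^{\bm D}_1 \rangle = {\bm \gamma}_0 \cdot {\bm q}, \qquad \langle \Phi^{\bm D}_1, \, L^{(1)}({\bm q}) \, \Phi^{\bm D}_2 \rangle = ({\bm \gamma}_1 - i {\bm \gamma}_2) \cdot {\bm q},
\end{equation*}
exactly the combinations appearing in \eqref{eq:def-gam-par}. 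The key symmetry step is to verify that the antiunitary $\mathcal{PC}$ commutes with ${L({\bm D} + {\bm q})}$ for every ${\bm q}$, an elementary consequence of $A$ and $V$ being real and inversion-symmetric together with ${\mathcal{PC}\,(-i\nabla)\,\mathcal{PC}^{-1} = -i\nabla}$. Combined with the basis choice ${\Phi^{\bm D}_2 = \mathcal{PC}[\Phi^{\bm D}_1]}$, this forces ${\langle \Phi^{\bm D}_2, L^{(1)}({\bm q})\Phi^{\bm D}_2 \rangle = {\bm \gamma}_0 \cdot {\bm q}}$, so the linear-in-${\bm q}$ part of $a_3$ vanishes. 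To leading order,
\begin{equation*}
a_0 = -\varepsilon + {\bm \gamma}_0 \cdot {\bm q} + O\bigl(|(\varepsilon, {\bm q})|^2\bigr), \quad a_1 = {\bm \gamma}_1 \cdot {\bm q} + O\bigl(|(\varepsilon,{\bm q})|^2\bigr), \quad a_2 = {\bm \gamma}_2 \cdot {\bm q} + O\bigl(|(\varepsilon,{\bm q})|^2\bigr), \quad a_3 = O\bigl(|(\varepsilon, {\bm q})|^2\bigr).
\end{equation*}

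Finally I would extract \eqref{eq:dir-srf} from ${\det \mathcal{M} = a_0^2 - a_1^2 - a_2^2 - a_3^2 = 0}$. After the change of variable ${\tilde\varepsilon = \varepsilon - {\bm \gamma}_0 \cdot {\bm q}}$, substitution of the expansions above gives the analytic relation
\begin{equation*}
\tilde\varepsilon^2 = ({\bm \gamma}_1 \cdot {\bm q})^2 + ({\bm \gamma}_2 \cdot {\bm q})^2 + S(\tilde\varepsilon, {\bm q}), \qquad S = O\bigl(|(\tilde\varepsilon, {\bm q})|^3\bigr).
\end{equation*}
A Weierstrass-preparation / analytic implicit function argument (as in the proof of part \ref{itm:kmow-srf} of Theorem \ref{thm:kmow}) applied to this quadratic-in-$\tilde\varepsilon$ polynomial produces two branches ${\tilde\varepsilon_\pm({\bm q}) = \pm\sqrt{({\bm \gamma}_1\cdot{\bm q})^2 + ({\bm \gamma}_2\cdot{\bm q})^2 + Q_3({\bm q})}}$ with an analytic remainder ${Q_3({\bm q}) = O(|{\bm q}|^3)}$. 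Translating back to $\varepsilon$ and absorbing the ${\bm q}$-dependent correction from the quadratic part of $a_0$ into an analytic term ${Q_2({\bm q}) = O(|{\bm q}|^2)}$ yields \eqref{eq:dir-srf}.

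The main obstacle I anticipate is precisely this last Weierstrass step: establishing that the error under the square root can be written as a function of ${\bm q}$ alone that is analytic at the origin, rather than an $(\tilde\varepsilon, {\bm q})$-dependent remainder from the implicit equation. The symmetry-based verifications (antilinear invariance of $L({\bm D}+{\bm q})$ under $\mathcal{PC}$ and the consequent vanishing of the linear part of $a_3$), while conceptually central, are direct once the basis is normalised so that ${\Phi^{\bm D}_2 = \mathcal{PC}\Phi^{\bm D}_1}$; in particular, in contrast to Theorem \ref{thm:kmow}, no additional rotational or reflection symmetries of the remainders $Q_n$ are needed.
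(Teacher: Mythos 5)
Your proposal follows the same route the paper takes: a Lyapunov--Schmidt/Schur reduction to a self-adjoint $2\times 2$ matrix (the paper's $\mathcal{N}(\varepsilon;{\bm q})$ in Appendix~\ref{apx:pf-dir-srf}), identification of the linear-in-${\bm q}$ block with $L^{\bm D}_{\rm eff}$, and $\mathcal{PC}$-symmetry to handle the $\sigma_3$-component. Two minor points of comparison. First, your $a_3$ argument is weaker than necessary: the commutation of $\mathcal{PC}$ with $L({\bm D}+{\bm q})$, once propagated through the resolvent $(1-\mathcal{A})^{-1}$ as in Appendix~\ref{apx:calM-PC}, yields $\mathcal{N}_{1,1}\equiv\mathcal{N}_{2,2}$ exactly (Proposition~\ref{prop:calN-pr}, part~\ref{itm:calN-PC}), hence $a_3\equiv 0$ rather than just $a_3=O(|(\varepsilon,{\bm q})|^2)$; your weaker estimate still closes the argument since $a_3^2$ is then absorbed into the remainder, but the sharper statement is available for free. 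Second, for the step you correctly flag as the main obstacle---expressing the correction under the square root as an analytic function of ${\bm q}$ alone---you invoke Weierstrass preparation, whereas the paper instead follows Section~\ref{sec:pf-M-srf}: Rouch\'e's theorem isolates exactly two zeros of $\det\mathcal{N}$ in a $\nu$-disk, and the argument principle yields analytic expressions for their elementary symmetric functions $\mathfrak{t}({\bm q})$ and $\mathfrak{d}({\bm q})$, from which $Q_2$ and $Q_3$ are built. The two tools are equivalent for a degree-two Weierstrass polynomial; the contour-integral route has the advantage of giving explicit quantitative control on the size of the disk and the remainders, which the paper reuses heavily in the proofs of Theorems~\ref{thm:M-srf} and \ref{thm:M-dgn}.
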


\noindent A sketch of the proof (see also Remark \ref{rmk:dir-srf-eff} below) is given in Appendix \ref{apx:pf-dir-srf}.
 
\begin{remark}
{\rm (Relation to \cite{fefferman2012honeycomb} and the proof of Theorem \ref{thm:dir-srf}.)}
Theorem \ref{thm:dir-srf} extends results in \cite[Theorem 4.1]{fefferman2012honeycomb}, which give sufficient conditions for the existence of  Dirac (conical) points in the band structure of   Schr\"odinger operators with {\it honeycomb lattice potentials}, {\it i.e.} ${L = -\Delta + V}$, where $V$ is periodic with respect to the equilaterial triangular lattice, $\mathcal{PC}$-invariant, and $2\pi/3$-rotationally invariant. In this case, the Dirac points appear at the high-symmetry quasimomenta at the vertices (${\bm K}$ and ${\bm K}^\prime$ points) of the hexagonal Brillouin zone and the local behavior, where the two bands touch, is given by
\begin{equation}
E_\pm({\bm D} + {\bm q}) = \pm |v_D| |{\bm q}| \, (1 + O(|{\bm q}|)) \ \ \text{as} \ \ |{\bm q}| \to 0,
\label{eq:Dcone-honey}
\end{equation}
with ${|v_D| > 0}$. The expression \eqref{eq:Dcone-honey} arises from \eqref{eq:dir-srf} from the observations, in this case, that ${{\bm \gamma}_0 = {\bm 0}}$, and ${\bm \gamma}_1$, ${\bm \gamma}_2$ are orthogonal and of equal length. Further, in \cite[Theorem 9.1]{fefferman2012honeycomb}, the persistence of Dirac points against small $\mathcal{PC}$-invariant perturbations (which may break $2\pi/3$-rotational invariance) is proven.
\end{remark}

\noindent The proof in \cite[Theorem 9.1]{fefferman2012honeycomb} applies with minor modifications in the more general setting of Theorem \ref{thm:dir-srf} and we therefore omit the full details.

\begin{remark}
\label{rmk:dir-srf-eff}
{\rm (Effective Hamiltonian about ${(E_D, \, {\bm D})}$.)}
The dispersion surfaces of $L$ containing ${(E_D, \, {\bm D})}$ are locally approximated by those of an effective Hamiltonian $L^{\bm D}_{\rm eff}$ with Fourier symbol
\begin{equation}
\label{eq:def-Leff}
L^{\bm D}_{\rm eff}({\bm q}) \equiv ({\bm \gamma}_0 \cdot {\bm q}) I + ({\bm \gamma}_1 \cdot {\bm q}) \sigma_1 + ({\bm \gamma}_2 \cdot {\bm q}) \sigma_2.
\end{equation}
The two dispersion relations of $L^{\bm D}_{\rm eff}$ are given by
\begin{equation}
\label{eq:Leff-srf}
\varepsilon_\pm({\bm q}) = {\bm \gamma}_0 \cdot {\bm q} \pm \sqrt{({\bm \gamma}_1 \cdot {\bm q})^2 + ({\bm \gamma}_2 \cdot {\bm q})^2}. 
\end{equation}
\end{remark}

\noindent Generally, tuning the parameters ${\bm \gamma}_1$, ${\bm \gamma}_2$ deforms the circular cross sections of a right, circular cone into ellipses, while tuning the parameter ${\bm \gamma}_0$ tilts the axis.

\bigskip

\section{Deformed square lattice media}
\label{sec:deform}

\setcounter{equation}{0}
\setcounter{figure}{0}

\subsection{Deformed square lattice potentials and Schr\"{o}dinger operators}
\label{sec:deform-op}

Fix a square lattice potential $V$ in the sense of Definition \ref{def:sql-pot}, and consider the associated Schr\"{o}dinger operator ${H_V = - \Delta + V}$. Suppose $T$ is a real, invertible ${2 \times 2}$ matrix. Then, the {\it deformed square lattice potential} ${V \circ \, T^{-1}}$ models a linear deformation of the medium described by $V$. The {\it deformed Schr\"{o}dinger operator}
\begin{equation}
\label{eq:def-HT}
H_{V \circ \, T^{-1}} \equiv -\Delta + V \circ \, T^{-1}
\end{equation}
is periodic with respect to the lattice ${T \Z^2 = \Z T {\bm v}_1 \oplus \Z T {\bm v}_2}$ with fundamental cell $T \Omega$. The dual lattice is ${(T \Z^2)^* = \Z (T^{-1})^\mathsf{T} {\bm k}_1 \oplus \Z (T^{-1})^\mathsf{T} {\bm k}_2}$ with fundamental cell (Brillouin zone) $(T^{-1})^\mathsf{T} \mathcal{B}$, since
\begin{equation}
\label{eq:def-Tk}
(T^{-1})^\mathsf{T}{\bm k}_j \cdot T {\bm v}_k = {\bm k}_j \cdot T^{-1} T {\bm v}_k = {\bm k}_j \cdot {\bm v}_k = 2 \pi \delta_{jk}, \quad j, \, k \smallin \{ 1, \, 2 \}.
\end{equation}
Figure \ref{fig:deform-2} displays the fundamental cell and Brillouin zone for example deformations; see Section \ref{sec:examples}. 

The band structure of $H_{V \circ \, T^{-1}}$ is determined by the family of Floquet-Bloch eigenvalue problems:
\begin{equation}
\label{eq:fb-evp-HT}
H_{V \circ \, T^{-1}} \Phi = E \Phi, \quad \Phi \smallin L^2_{\bm k}, \quad {\bm k} \smallin (T^{-1})^\mathsf{T} \mathcal{B}.
\end{equation}
By a change of coordinates, we see that the band structure of $H_{V \circ \, T^{-1}}$ is obtainable from the band structure of a $\Z^2$-periodic operator $T_*H_V$, given in the following proposition, proved in Appendix \ref{sec:ch-var}:

\begin{proposition}
\label{prop:fb-evp-HT-TH}
Let ${{\bm k} \smallin \mathcal{B}}$. Then ${(\Phi \circ T^{-1}, \, E)}$ is an $L^2_{(T^{-1})^\mathsf{T}{\bm k}}(\R^2/T\Z^2)$ eigenpair of $H_{V \circ \, T^{-1}}$ if and only if ${(\Phi, \, E)}$ is an $L^2_{\bm k}(\R^2/\Z^2)$ eigenpair of 
\begin{equation}
\label{eq:def-TH}
T_* H_V \equiv -\nabla \cdot (T^\mathsf{T} T)^{-1} \nabla + V.
\end{equation}
\end{proposition}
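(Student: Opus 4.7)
The plan is to execute the change of variables ${\bm x} \mapsto T^{-1}{\bm x}$ explicitly and check two things: (i) the substitution carries pseudoperiodicity on $\Z^2$ with quasimomentum ${\bm k}$ into pseudoperiodicity on $T\Z^2$ with quasimomentum $(T^{-1})^\mathsf{T}{\bm k}$, and (ii) the differential operator $-\nabla\cdot(T^\mathsf{T} T)^{-1}\nabla + V$ pushes forward under the substitution to $-\Delta + V\circ T^{-1}$. Since the substitution is a bijection that is reversed by replacing $T$ with $T^{-1}$, the biconditional follows automatically.

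\emph{Step 1 (pseudoperiodicity).} Given $\Phi \in L^2_{\bm k}(\R^2/\Z^2)$, set $\Psi({\bm x}) \equiv \Phi(T^{-1}{\bm x})$. For any ${\bm w} = T{\bm v} \in T\Z^2$ with ${\bm v} \in \Z^2$, I would compute
\[
\Psi({\bm x} + {\bm w}) = \Phi(T^{-1}{\bm x} + {\bm v}) = e^{i{\bm k}\cdot{\bm v}}\,\Phi(T^{-1}{\bm x}) = e^{i\bigl((T^{-1})^\mathsf{T}{\bm k}\bigr)\cdot{\bm w}}\,\Psi({\bm x}),
\]
where the last equality uses the duality identity $((T^{-1})^\mathsf{T}{\bm k})\cdot(T{\bm v}) = {\bm k}\cdot{\bm v}$ from \eqref{eq:def-Tk}. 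Thus $\Psi \in L^2_{(T^{-1})^\mathsf{T}{\bm k}}(\R^2/T\Z^2)$. The converse direction is obtained by the inverse substitution $\Phi({\bm y}) = \Psi(T{\bm y})$.

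\emph{Step 2 (transformation of the operator).} A standard chain-rule calculation (generalizing Lemma \ref{lem:chain} to a general invertible $T$) gives $\nabla_{\bm x}\Psi({\bm x}) = (T^{-1})^\mathsf{T}(\nabla\Phi)(T^{-1}{\bm x})$. Iterating, and using the algebraic identity $(T^{-1})(T^{-1})^\mathsf{T} = (T^\mathsf{T} T)^{-1}$, I would derive
\[
\Delta_{\bm x}\Psi({\bm x}) = \bigl[\nabla_{\bm y}\cdot(T^\mathsf{T} T)^{-1}\nabla_{\bm y}\Phi\bigr](T^{-1}{\bm x}).
\]
Combining this with the pointwise identity $(V\circ T^{-1})({\bm x})\,\Psi({\bm x}) = (V\Phi)(T^{-1}{\bm x})$ yields
\[
H_{V\circ T^{-1}}\Psi({\bm x}) = (T_* H_V\,\Phi)(T^{-1}{\bm x}).
\]
Hence $H_{V\circ T^{-1}}\Psi = E\Psi$ as an identity on $\R^2$ is equivalent to $T_*H_V\,\Phi = E\Phi$ on $\R^2$, establishing the forward implication. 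The reverse implication follows by applying the same argument with $T$ replaced by $T^{-1}$ throughout, or equivalently by noting that every step above is an equivalence.

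\emph{Main obstacle.} I do not anticipate a genuine obstacle; the proof is essentially bookkeeping. The most delicate point is tracking the transposes and the side on which $T^{-1}$ acts, so that the matrix coefficient $(T^\mathsf{T} T)^{-1}$ emerges in the correct place; the identity $(T^{-1})(T^{-1})^\mathsf{T} = (T^\mathsf{T} T)^{-1}$ is the algebraic content of the calculation and is precisely what singles out the ``pushforward metric'' in the definition \eqref{eq:def-TH} of $T_* H_V$. Regularity of $\Phi$ (e.g., membership in $H^2_{\rm loc}$, which is automatic for the Floquet-Bloch eigenstates of the elliptic operator $T_*H_V$) justifies the chain-rule calculation in the almost-everywhere sense required by the $L^2_{\bm k}$ formulation.
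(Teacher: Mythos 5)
Your proposal is correct and follows essentially the same route as the paper's proof in Appendix \ref{sec:ch-var}: a change of variables ${\bm y} = T^{-1}{\bm x}$, the chain-rule identity producing the coefficient matrix $T^{-1}(T^{-1})^{\mathsf{T}} = (T^{\mathsf{T}}T)^{-1}$, and the duality relation \eqref{eq:def-Tk} to convert the pseudoperiodicity condition. The only cosmetic difference is that you run the substitution from the $\Z^2$-periodic side to the $T\Z^2$-periodic side while the paper goes the other way, which is immaterial since each step is an equivalence.
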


\noindent {\bf N.B.} Since $(T^\mathsf{T} T)^{-1}$ is symmetric and positive definite, $T_* H_V$ is elliptic.  For the remainder of this article, we shall study the $\Z^2$-periodic ellipitic operator $T_* H_V$. Moreover, for notational simplicity, we shall frequently write $H$ for $H_V$ and $T_* H$ for $T_* H_V$. \\

We consider the scenario of small deformations (i.e, where ${I - (T^\mathsf{T} T)^{-1}}$ has small matrix norm), and therefore find it useful to express $T_* H$ as a perturbation of $H$:
\begin{equation}
\label{eq:def-TH_2}
T_* H = H + \nabla \cdot (I - (T^\mathsf{T} T)^{-1}) \nabla.
\end{equation}
Since the matrix norm of $I-(T^\mathsf{T} T)^{-1}$ is small, ${\nabla \cdot (I - (T^\mathsf{T} T)^{-1}) \nabla}$ is a bounded linear operator from ${H^2(\R^2) \to L^2(\R^2)}$ of small norm.

By Lemma \ref{lem:Ahermsym}, the real symmetric matrix ${I - (T^\mathsf{T} T)^{-1}}$  has an expansion in real Pauli matrices:
\begin{equation}
\label{eq:def-tau}
I - (T^\mathsf{T} T)^{-1} = \tau_0 I + \tau_1 \sigma_1 + \tau_3 \sigma_3.
\end{equation}
The coefficients $\tau_0$, $\tau_1$, and $\tau_3$ are expressible in terms of the action of $T$ on the square lattice vectors ${\bm v}_1$, ${\bm v}_2$:
\begin{equation}
\label{eq:tau-exp}
\tau_0 = 1 - \frac{1}{2} \frac{|T{\bm v}_1|^2 + |T{\bm v}_2|^2}{{\rm det}(T)^2}, \quad \tau_1 = \frac{T{\bm v}_1 \cdot T{\bm v}_2}{{\rm det}(T)^2}, \quad \tau_3 = \frac{1}{2} \frac{|T{\bm v}_1|^2 - |T{\bm v}_2|^2}{{\rm det}(T)^2}.
\end{equation}
Therefore, 
\begin{equation}
\label{eq:def-H-tau}
T_* H = H^{\tau_0, \tau_1, \tau_3} \equiv H + \tau_0 \Delta + \nabla \cdot (\tau_1 \sigma_1 + \tau_3 \sigma_3) \nabla.
\end{equation}
We use the compressed notation
\begin{equation}
\label{eq:def-H-tau_2}
H^{\tau_0, {\bm \tau}} \equiv H^{\tau_0, \tau_1, \tau_3}, \quad \text{where} \quad {\bm \tau} \equiv [\tau_1, \tau_3]^\mathsf{T}.
\end{equation}
As we shall see, variations with respect to $\tau_0$ alone (isotropic dilation)  preserve the quadratic degeneracy, while those with respect to $\bm \tau$ cause the the quadratic degeneracy to split. Hence, in our parameterization \eqref{eq:def-H-tau_2}, we  have separated out $\tau_0$ from $\bm\tau$. It is convenient to express ${\bm \tau}$ in polar coordinates as
\begin{align}
\label{eq:def-tau-polar}
& {\bm \tau} = |{\bm \tau}|  [\cos(\varphi), \, \sin(\varphi)]^\mathsf{T}, \quad
\text{where} \\
& |{\bm \tau}| \equiv \sqrt{\tau_1^2 + \tau_3^2} \geq 0, \quad \varphi \equiv {\rm Arg}(\tau_1 + i \tau_3) \smallin (-\pi, \, \pi].
\end{align}
Thus, 
\begin{equation}
\label{eq:def-H-phi}
H^{\tau_0, {\bm \tau}} = H + \tau_0 \Delta + |{\bm \tau}| \nabla \cdot \sigma_\varphi \nabla, \quad \text{where} \quad \sigma_\varphi = \cos(\varphi) \sigma_1 + \sin(\varphi) \sigma_3.
\end{equation}

\begin{remark}
\label{rmk:tau-pd}
Since $(T^\mathsf{T} T)^{-1}$ is positive definite, we must have that ${\tau_0 < 1}$ and ${|{\bm \tau}| < 1 - \tau_0}$. Indeed, the eigenvalues of $(T^\mathsf{T} T)^{-1}$, given by
\begin{equation}
\label{eq:tau-eig}
\lambda_\pm(\tau_0, {\bm \tau}) = (1 - \tau_0) \pm |{\bm \tau}|
\end{equation}
must both be positive.
\end{remark}

\begin{figure}[!t]
\centering
\SetTblrInner{hspan = even}
\begin{tblr}{hline{1} = {2-Z}{solid}, hline{2-Z} = {solid}, vline{1} = {2-Z}{solid}, vline{2-Z} = {solid}}
& $\mathcal{P}$ & $\mathcal{C}$ & $\mathcal{R}$ & $\Sigma_1$ & $\Sigma_3$ \\
{\bf a.} $\tau_1 = 0$, $\tau_3 = 0$ & $\checkmark$ & $\checkmark$ & $\checkmark$ & $\checkmark$ & $\checkmark$ \\
{\bf b.} $\tau_1 \neq 0$, $\tau_3 = 0$ & $\checkmark$ & $\checkmark$ & - & $\checkmark$ & - \\
{\bf c.} $\tau_1 = 0$, $\tau_3 \neq 0$ & $\checkmark$ & $\checkmark$ & - & - & $\checkmark$ \\
{\bf d.} $\tau_1 \neq 0$, $\tau_3 \neq 0$ & $\checkmark$ & $\checkmark$ & - & - & - \\
\end{tblr}
\vspace{0.3cm}
\caption{Linear deformations $T$ according to the ${(\tau_0, \, {\bm \tau})}$ parameterization and the symmetries each preserves. (Recall that $\Sigma_3 = \mathcal{R} \Sigma_1$ so that it is not possible for exactly two of these three symmetries to hold.) {\bf (a)} Orthogonal transformation composed with a dilation or contraction. {\bf (b)} Tilt-like deformation of the unit cell. {\bf (c)} Uniaxial dilation or contraction-like deformation. {\bf (d)} General linear deformation.}
\label{fig:deform-1}
\end{figure}

\subsection{Symmetries of deformed Schr\"{o}dinger operators}
\label{sec:deform-syms}

The operator $H^{\tau_0, {\bm \tau}}$ is $\Z^2$-translation invariant and $\mathcal{P}$-, $\mathcal{C}$-, and $\mathcal{PC}$-invariant for all ${(\tau_0, \, {\bm \tau})}$. However, deformations for which ${|{\bm \tau}| \neq 0}$ are not $\mathcal{R}$-invariant. In the special case ${|{\bm \tau}| = 0}$, we have ${T_*H = -(1-\tau_0)\Delta + V}$ (see \eqref{eq:def-tau}), which leads to a spectral problem for a new square lattice potential. A detailed summary of the symmetries of $H^{\tau_0, {\bm \tau}}$ is presented in the table of Figure \ref{fig:deform-1}.
 
\begin{figure}[p]
\centering
\begin{subfigure}{0.33\textwidth}
    \subcaption{$\ $}
    \vspace{0.1cm}
    \begin{tikzpicture}[x = 1.5cm, y = 1.5cm]
        \draw[white] (0, 0) rectangle (3.5, 3.5);
        \draw[black, thick, ->, opacity = 0.5] (0.25, 1.75) -- (3.25, 1.75) node[anchor = north] {$x_1$};
        \draw[black, thick, ->, opacity = 0.5] (1.75, 0.25) -- (1.75, 3.25) node[anchor = east] {$x_2 \,$};
        \draw[black, thick, dashed] (1, 1) rectangle (2.5, 2.5);
        \draw[red!20!blue, thick, opacity = 0.5] (0.75, 0.75) -- (2.75, 0.75) -- (2.75, 2.75) -- (0.75, 2.75) -- cycle;
        \fill[red!20!blue, opacity = 0.3] (0.75, 0.75) -- (2.75, 0.75) -- (2.75, 2.75) -- (0.75, 2.75) -- cycle;
    \end{tikzpicture}
\end{subfigure}
\hspace{0.2cm}
\begin{subfigure}{0.33\textwidth}
    \subcaption{$\ $}
    \vspace{0.1cm}
    \begin{tikzpicture}[x = 1.5cm, y = 1.5cm]
        \draw[white] (0, 0) rectangle (3.5, 3.5);
        \draw[black, thick, ->, opacity = 0.5] (0.25, 1.75) -- (3.25, 1.75) node[anchor = north] {$k_1$};
        \draw[black, thick, ->, opacity = 0.5] (1.75, 0.25) -- (1.75, 3.25) node[anchor = east] {$k_2 \,$};
        \draw[black, thick, dashed] (1, 1) rectangle (2.5, 2.5);
        \draw[red!60!yellow, thick, opacity = 0.5] (1.19, 1.19) -- (2.31, 1.19) -- (2.31, 2.31) -- (1.19, 2.31) -- cycle;
        \fill[red!60!yellow, opacity = 0.3] (1.19, 1.19) -- (2.31, 1.19) -- (2.31, 2.31) -- (1.19, 2.31) -- cycle;
    \end{tikzpicture}
\end{subfigure} \\
\vspace{0.2cm}
\begin{subfigure}{0.33\textwidth}
    \subcaption{$\ $}
    \vspace{0.1cm}
    \begin{tikzpicture}[x = 1.5cm, y = 1.5cm]
        \draw[white] (0, 0) rectangle (3.5, 3.5);
        \draw[black, thick, ->, opacity = 0.5] (0.25, 1.75) -- (3.25, 1.75) node[anchor = north] {$x_1$};
        \draw[black, thick, ->, opacity = 0.5] (1.75, 0.25) -- (1.75, 3.25) node[anchor = east] {$x_2 \,$};
        \draw[black, thick, dashed] (1, 1) rectangle (2.5, 2.5);
        \draw[red!20!blue, thick, opacity = 0.5] (0.87, 0.87) -- (2.34, 1.116) -- (2.63, 2.63) -- (1.16, 2.34) -- cycle;
        \fill[red!20!blue, opacity = 0.3] (0.87, 0.87) -- (2.34, 1.116) -- (2.63, 2.63) -- (1.16, 2.34) -- cycle;
    \end{tikzpicture}
\end{subfigure}
\hspace{0.2cm}
\begin{subfigure}{0.33\textwidth}
    \subcaption{$\ $}
    \vspace{0.1cm}
    \begin{tikzpicture}[x = 1.5cm, y = 1.5cm]
        \draw[white] (0, 0) rectangle (3.5, 3.5);
        \draw[black, thick, ->, opacity = 0.5] (0.25, 1.75) -- (3.25, 1.75) node[anchor = north] {$k_1$};
        \draw[black, thick, ->, opacity = 0.5] (1.75, 0.25) -- (1.75, 3.25) node[anchor = east] {$k_2 \,$};
        \draw[black, thick, dashed] (1, 1) rectangle (2.5, 2.5);
        \draw[red!60!yellow, thick, opacity = 0.5] (1.11, 1.11) -- (2.71, 0.79) -- (2.39, 2.39) -- (0.79, 2.71) -- cycle;
        \fill[red!60!yellow, opacity = 0.3] (1.11, 1.11) -- (2.71, 0.79) -- (2.39, 2.39) -- (0.79, 2.71) -- cycle;
    \end{tikzpicture}
\end{subfigure} \\
\vspace{0.2cm}
\begin{subfigure}{0.33\textwidth}
    \subcaption{$\ $}
    \vspace{0.1cm}
    \begin{tikzpicture}[x = 1.5cm, y = 1.5cm]
        \draw[white] (0, 0) rectangle (3.5, 3.5);
        \draw[black, thick, ->, opacity = 0.5] (0.25, 1.75) -- (3.25, 1.75) node[anchor = north] {$x_1$};
        \draw[black, thick, ->, opacity = 0.5] (1.75, 0.25) -- (1.75, 3.25) node[anchor = east] {$x_2 \,$};
        \draw[black, thick, dashed] (1, 1) rectangle (2.5, 2.5);
        \draw[red!20!blue, thick, opacity = 0.5] (1, 0.75) -- (2.5, 0.75) -- (2.5, 2.75) -- (1, 2.75) -- cycle;
        \fill[red!20!blue, opacity = 0.3] (1, 0.75) -- (2.5, 0.75) -- (2.5, 2.75) -- (1, 2.75) -- cycle;
    \end{tikzpicture}
\end{subfigure}
\hspace{0.2cm}
\begin{subfigure}{0.33\textwidth}
    \subcaption{$\ $}
    \vspace{0.1cm}
    \begin{tikzpicture}[x = 1.5cm, y = 1.5cm]
        \draw[white] (0, 0) rectangle (3.5, 3.5);
        \draw[black, thick, ->, opacity = 0.5] (0.25, 1.75) -- (3.25, 1.75) node[anchor = north] {$k_1$};
        \draw[black, thick, ->, opacity = 0.5] (1.75, 0.25) -- (1.75, 3.25) node[anchor = east] {$k_2 \,$};
        \draw[black, thick, dashed] (1, 1) rectangle (2.5, 2.5);
        \draw[red!60!yellow, thick, opacity = 0.5] (1, 1.19) -- (2.5, 1.19) -- (2.5, 2.31) -- (1, 2.31) -- cycle;
        \fill[red!60!yellow, opacity = 0.3] (1, 1.19) -- (2.5, 1.19) -- (2.5, 2.31) -- (1, 2.31) -- cycle;
    \end{tikzpicture}
\end{subfigure}
\vspace{0.3cm}
\caption{Deformed fundamental cells $T \Omega$ (left column, shaded purple) and deformed Brillouin zones $(T^{-1})^\mathsf{T} \mathcal{B}$ (right column, shaded orange) corresponding to the example deformations of Section \ref{sec:examples}. The dashed square indicates the boundary of the undeformed fundamental cell $\Omega$ (left column) and the undeformed Brillouin zone $\mathcal{B}$ (right column). {\bf (a, b)} An isotropic dilation with ${\zeta = 1.33}$; see Example \ref{ex:volum}. {\bf (c, d)} A tilt deformation with ${\xi = 0.20}$ (radians); see Example \ref{ex:tilt}. {\bf (e, f)} A unixaial dilation (along ${x_1 = 0}$) with ${\eta = 1.33}$; see Example \ref{ex:unax}.}
\label{fig:deform-2}
\end{figure}

Our analytical results concern, for ${(\tau_0, \, {\bm \tau})}$ in a neighborhood of ${(0, \, {\bm 0})}$, the Floquet-Bloch eigenvalue problems for the deformed Schr\"{o}dinger operator $H^{\tau_0, {\bm \tau}}$:
\begin{equation}
H^{\tau_0, {\bm \tau}} \Phi = E \Phi, \quad \Phi \smallin L^2_{\bm k}, \quad {\bm k} \smallin \mathcal{B}.
\end{equation}
We conclude this section with an assertion regarding a symmetry of the band structure of $H_V$, for square lattice potentials, with the respect to the high symmetry quasimomentum $\bm M$.

\begin{proposition}
\label{prop:gl-P-sym}
Let $V$ denote a square lattice potential. Then, if ${(E, \, \Phi)}$ is a $L^2_{{\bm M} + {\bm \kappa}}$ eigenpair of $H^{\tau_0, {\bm \tau}}$, both $(E, \, \mathcal{P}[\Phi])$ and $(E, \, \mathcal{C}[\Phi])$ are $L^2_{{\bm M} - {\bm \kappa}}$ eigenpairs of $H^{\tau_0, {\bm \tau}}$.
\end{proposition}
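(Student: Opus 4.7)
The plan is to verify two facts: first, that both symmetry operators $\mathcal{P}$ and $\mathcal{C}$ commute with the deformed operator $H^{\tau_0,{\bm\tau}}$, and second, that each operator maps the pseudoperiodic space $L^2_{{\bm M}+{\bm\kappa}}$ into $L^2_{{\bm M}-{\bm\kappa}}$. Once these are established, the conclusion is immediate: applying $\mathcal{P}$ (respectively $\mathcal{C}$) to $H^{\tau_0,{\bm\tau}}\Phi = E\Phi$ yields $H^{\tau_0,{\bm\tau}}\mathcal{P}[\Phi] = E\,\mathcal{P}[\Phi]$ with $\mathcal{P}[\Phi]\smallin L^2_{{\bm M}-{\bm\kappa}}$ (and similarly for $\mathcal{C}$).

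For commutation, I would use the explicit expression \eqref{eq:def-H-tau} for $H^{\tau_0,{\bm\tau}} = -\Delta + V + \tau_0\Delta + \nabla\cdot(\tau_1\sigma_1 + \tau_3\sigma_3)\nabla$. Each constant-coefficient second-order differential operator (the Laplacian and $\nabla\cdot(\tau_1\sigma_1+\tau_3\sigma_3)\nabla$) is invariant under ${\bm x}\mapsto-{\bm x}$ because two derivatives absorb the sign, and $V(-{\bm x})=V({\bm x})$ follows from $\mathcal{R}^2=\mathcal{P}$ invariance of square lattice potentials (Remark \ref{rmk:min-sym}); hence $[\mathcal{P},H^{\tau_0,{\bm\tau}}]=0$. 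Similarly, all the coefficients are real and $V$ is real-valued, so $[\mathcal{C},H^{\tau_0,{\bm\tau}}]=0$.

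For the pseudoperiodicity shift, the main observation is that ${\bm M}\cdot{\bm v} = \pi(v_1+v_2)\smallin \pi\Z$ for every ${\bm v}\smallin\Z^2$, so $e^{-i{\bm M}\cdot{\bm v}} = e^{i{\bm M}\cdot{\bm v}}$, i.e.\ ${\bm M}\equiv -{\bm M}\pmod{(\Z^2)^*}$. Given $\Phi\smallin L^2_{{\bm M}+{\bm\kappa}}$ and ${\bm v}\smallin\Z^2$, I compute
\begin{equation*}
\mathcal{P}[\Phi]({\bm x}+{\bm v}) = \Phi(-{\bm x}-{\bm v}) = e^{-i({\bm M}+{\bm\kappa})\cdot{\bm v}}\,\Phi(-{\bm x}) = e^{i({\bm M}-{\bm\kappa})\cdot{\bm v}}\,\mathcal{P}[\Phi]({\bm x}),
\end{equation*}
where in the last step I used $e^{-i{\bm M}\cdot{\bm v}} = e^{i{\bm M}\cdot{\bm v}}$. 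Thus $\mathcal{P}[\Phi]\smallin L^2_{{\bm M}-{\bm\kappa}}$. The argument for $\mathcal{C}$ is analogous: complex conjugating the pseudoperiodicity relation flips the sign of the exponent, and the same identity on ${\bm M}$ converts $e^{-i({\bm M}+{\bm\kappa})\cdot{\bm v}}$ into $e^{i({\bm M}-{\bm\kappa})\cdot{\bm v}}$.

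There is no real obstacle here; the content of the proposition is essentially that the high-symmetry quasimomentum ${\bm M}$ is a fixed point of the quasimomentum involution ${\bm k}\mapsto-{\bm k}$ modulo $(\Z^2)^*$, so the $\mathcal{P}$ and $\mathcal{C}$ symmetries (which always conjugate a pseudoperiodic class to its negative) relate $L^2_{{\bm M}+{\bm\kappa}}$ to $L^2_{{\bm M}-{\bm\kappa}}$. The only point to be careful with is to invoke square-lattice invariance of $V$ only through $\mathcal{P}$ (not the full rotational symmetry), since the $\bm\tau$-dependent terms in $H^{\tau_0,{\bm\tau}}$ generally break $\mathcal{R}$-invariance (see Figure \ref{fig:deform-1}).
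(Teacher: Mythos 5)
Your proposal is correct and takes essentially the same approach as the paper: check that $\mathcal{P}$ and $\mathcal{C}$ commute with $H^{\tau_0,{\bm\tau}}$, then verify the pseudoperiodicity shift using $e^{-i{\bm M}\cdot{\bm v}} = e^{i{\bm M}\cdot{\bm v}}$. The only difference is that you spell out the commutation in slightly more detail, whereas the paper simply asserts that $\mathcal{P}[\Phi]$ and $\mathcal{C}[\Phi]$ satisfy the same equation.
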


\begin{proof}
Note that if $H^{\tau_0, {\bm \tau}}\Phi=E\Phi$, then both $\mathcal{P}[\Phi]$ and 
$\mathcal{C}[\Phi]$ satisfy the same equation.
We investigate the pseudoperiodic boundary condition satisfied by these functions.
 Fix ${\bm \kappa}$. If  $\Phi\in L^2_{{\bm M} + {\bm \kappa}}$, then  note that, for any ${\bm v} \smallin \Z^2$,
\begin{equation}
\mathcal{P}[\Phi]({\bm x} + {\bm v}) = \Phi(-{\bm x} - {\bm v}) = e^{i({\bm M} + {\bm \kappa}) \cdot (-{\bm v})} \Phi(-{\bm x}) = e^{i({\bm M} - {\bm \kappa})\cdot{\bm v}}\mathcal{P}[\Phi]({\bm x})
\end{equation}
since ${{\bm M} = [\pi, \, \pi]^\mathsf{T}}$ implies ${e^{-i {\bm M} \cdot {\bm v}} = e^{i {\bm M} \cdot {\bm v}}}$. Hence, ${(\mathcal{P}[\Phi], \, E)}$ is an $L^2_{{\bm M} - {\bm \kappa}}$ eigenpair of $H^{\tau_0, {\bm \tau}}$. An analogous calculation shows that ${\mathcal{C}[\Phi] \smallin L^2_{{\bm M} - {\bm \kappa}}}$.
\end{proof}

\subsection{Examples}
\label{sec:examples}

In this section, we present three examples of deformations; see also Figure \ref{fig:deform-2}.

\begin{example}[Isotropic dilation or contraction]
\label{ex:volum}
For ${\zeta > 0}$, consider:
\begin{equation}
\label{eq:def-volum}
T(\zeta) =
\begin{bmatrix}
\zeta & 0 \\
0 & \zeta
\end{bmatrix} \! .
\end{equation}
This one-parameter class of deformations corresponds to deformed Schr\"{o}dinger operators with
\begin{equation}
\label{eq:volum-par}
\tau_0(\zeta) = -\frac{1}{\zeta^2}, \quad \tau_1(\zeta) = 0, \quad \tau_3(\zeta) = 0.
\end{equation}
\end{example}

\begin{example}[Tilt]
\label{ex:tilt}
For ${\xi \smallin (-\pi/4, \, \pi/4)}$, consider:
\begin{equation}
\label{eq:def-tilt}
T(\xi) =
\begin{bmatrix}
\cos(\xi) & \sin(\xi) \\
\sin(\xi) & \cos(\xi)
\end{bmatrix} \! .
\end{equation}
In this case,
\begin{equation}
\label{eq:tilt-par}
\tau_0(\xi) = -\tan^2(2\xi), \quad \tau_1(\xi) = \sec(2\xi) \tan(2\xi), \quad \tau_3(\xi) = 0.
\end{equation}
\end{example}

\begin{example}[Uniaxial dilation or contraction]
\label{ex:unax}
For ${\eta \smallin (0, \, 1)}$, consider:
\begin{equation}
\label{eq:def-unax}
T(\eta) =
\begin{bmatrix}
1 & 0 \\
0 & \eta
\end{bmatrix} \! .
\end{equation}
Here we have
\begin{equation}
\label{eq:unax-par}
\tau_0(\eta) = \frac{1}{2} - \frac{1}{2 \eta^2}, \quad \tau_1(\eta) = 0, \quad \tau_3(\eta) = \frac{1}{2 \eta^2} - \frac{1}{2}.
\end{equation}
\end{example}

\section{Main results}
\label{sec:results}

\setcounter{equation}{0}
\setcounter{figure}{0}

In this section, we state our main analytical results concerning the band structure of the deformed Schr\"{o}dinger operator $H_{V \circ \, T^{-1}}$. By \eqref{eq:def-tau-polar} -- \eqref{eq:def-H-phi}, this is reduced to the study of the $\Z^2$-periodic operator
\begin{equation}
\label{eq:def-H-phi_2}
H^{\tau_0, {\bm \tau}} = H + \tau_0 \Delta + |{\bm \tau}| \nabla \cdot \sigma_\varphi \nabla, \quad \text{where} \quad \sigma_\varphi = \cos(\varphi) \sigma_1 + \sin(\varphi) \sigma_3.
\end{equation}
Thus, we study the family of Floquet-Bloch eigenvalue problems:
\begin{equation}
\label{eq:fb-evp_1}
H^{\tau_0, {\bm \tau}} \, \Phi = E \, \Phi, \quad \Phi \smallin L^2_{\bm k}, \quad \text{where} \quad {\bm k} \smallin \mathcal{B}=[-\pi, \, \pi] \times [-\pi, \, \pi].
\end{equation}

Recall that ${{\bm M} = [\pi, \, \pi]^\mathsf{T}}$ is a high-symmetry quasimomentum of ${H^{0, {\bm 0}} = H}$, and suppose ${(E_S, \, {\bm M})}$ is a quadratic band degeneracy point with eigenspace ${{\rm ker}_{L^2_{\bm M}}(H - E_S)}$ spanned by normalized eigenstates $\{ \Phi_1, \, \Phi_2 \}$, such that hypotheses \ref{itm:quad-dgn-1} -- \ref{itm:quad-dgn-5} are satisfied. To state our main results, we require an additional nondegeneracy hypothesis on the quadratic band degeneracy beyond conditions \ref{itm:quad-dgn-1} -- \ref{itm:quad-dgn-5}, which ensures that the effect of the deformation enters at leading order. Define the parameters
\begin{equation}
\label{eq:def-b-par}
\begin{aligned}
\beta_0 & \equiv - \langle \partial_{x_1} \Phi_1, \, \partial_{x_1} \Phi_1 \rangle = - \lVert \partial_{x_1} \Phi_1 \rVert^2 \smallin \R, \\
\beta_1 & \equiv - \langle \partial_{x_1} \Phi_1, \, \partial_{x_2} \Phi_2 \rangle \smallin \R, \\
\beta_2 & \equiv - i \langle \partial_{x_1} \Phi_1, \, \partial_{x_1} \Phi_2 \rangle \smallin \R.
\end{aligned}
\end{equation}

\noindent In addition to \ref{itm:quad-dgn-1} -- \ref{itm:quad-dgn-5} of Section \ref{sec:kmow}, we assume:

\begin{enumerate}
\renewcommand{\theenumi}{Q\arabic{enumi}}
\setcounter{enumi}{5}
\item \label{itm:quad-dgn-6} (Nondegeneracy condition.) Either ${\beta_1 \neq 0}$ or ${\beta_2 \neq 0}$.
\end{enumerate}

\begin{remark}
\label{rmk:b-par-small}
As noted in Remark \ref{rmk:a-par-small}, the nondegeneracy condition \ref{itm:quad-dgn-5} has been verified for generic small amplitude potentials; see \cite[Appendix C]{keller2018spectral}. In Appendix \ref{apx:b-par-small}, we present analogous computations which verify that hypothesis \ref{itm:quad-dgn-6} holds for small amplitude potentials; see Proposition \ref{prop:b-par-small}. In addition, we provide an example of a potential for which all parameters are nonzero; see Example \ref{ex:a-b-par-small}.
\end{remark}

\begin{remark}
\label{rmk:quad-dgn-6-fail}
If the hypothesis \ref{itm:quad-dgn-6} fails (i.e. ${\beta_1 = 0}$ and ${\beta_2 = 0}$), then we expect the splitting to occur at higher order.
\end{remark}

\subsection{Dispersion surfaces near ${(E_S, \, {\bm M})}$ under small deformation}
\label{sec:M-srf}

Our first result  describes how two dispersion surfaces of ${H^{0, {\bm 0}} = H}$, assumed to contain a quadratic band degeneracy point ${(E_S, \, {\bm M})}$, locally perturb under small deformation; see Figure \ref{fig:intro-1}.

\begin{theorem}
\label{thm:M-srf}
{\rm (Dispersion surfaces near $(E_S, \, {\bm M})$ under small deformation.)}
Suppose ${(E_S, \, {\bm M})}$ satisfies conditions \ref{itm:quad-dgn-1} -- \ref{itm:quad-dgn-5}, and is therefore a quadratic band degeneracy point of ${H^{0, {\bm 0}} = H}$; see Theorem \ref{thm:kmow}. Assume the additional nondegeneracy condition \ref{itm:quad-dgn-6}. Then, there exist $\kappa^\star$, $\tau_0^\star$, ${\tau^\star > 0}$ such that the dispersion surfaces of $H^{\tau_0, {\bm \tau}}$ near ${(E_S, {\bm M})}$ are described, for ${|{\bm \kappa}| < \kappa^\star}$, ${|\tau_0| < \tau_0^\star}$, and ${|{\bm \tau}| < \tau^\star}$, by:
\begin{align}
\label{eq:M-srf}
E_\pm({\bm M} + {\bm \kappa}; \tau_0, {\bm \tau}) - E_S & = \beta_0 \tau_0 + (1 - \alpha_0) |{\bm \kappa}|^2 + Q_4({\bm \kappa}; \tau_0, {\bm \tau}) \\
& \quad \pm \sqrt{(\beta_1 \tau_1 - \alpha_1 {\bm \kappa} \cdot \sigma_1 {\bm \kappa})^2 + (\beta_2 \tau_3 - \alpha_2 {\bm \kappa} \cdot \sigma_3 {\bm \kappa})^2 + \mathcal{Q}_6({\bm \kappa}; \tau_0, {\bm \tau})}. \nonumber
\end{align}
The functions $\mathcal{Q}_n({\bm \kappa}; \tau_0, {\bm \tau})$, ${n = 4}$, $6$, are analytic in a neighborhood of ${({\bm \kappa}; \tau_0, {\bm \tau}) = ({\bm 0}; 0, {\bm 0})}$ and satisfy:
\begin{align}
\label{eq:M-srf-Q-sym}
\mathcal{Q}_n({\bm \kappa}; \tau_0, {\bm \tau}) & = \mathcal{Q}_n(-{\bm \kappa}; \tau_0, {\bm \tau}), \\
\label{eq:M-srf-Q-bd}
\mathcal{Q}_n({\bm \kappa}; \tau_0, {\bm \tau}) & = O(|{\bm \kappa}|^n + |\tau_0|^{n/2} + |{\bm \tau}|^{n/2}) \ \ \text{as} \ \ |{\bm \kappa}|, \, |\tau_0|, \, |{\bm \tau}| \to 0.
\end{align}
\end{theorem}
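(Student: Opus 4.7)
The plan is to perform a Lyapunov--Schmidt reduction / Schur complement at the two-fold degenerate eigenvalue $E_S$, reducing the Floquet--Bloch eigenvalue problem for $H^{\tau_0,\bm\tau}$ at quasimomentum $\bm M + \bm\kappa$ to a $2\times 2$ analytic matrix equation. Writing the deformation \eqref{eq:def-H-tau} as an $H^2\to L^2$ bounded perturbation of $H$ of small operator norm and decomposing $L^2_{\bm M+\bm\kappa}$ into the span of the degenerate eigenstates $\{\Phi_1,\Phi_2\}$ and its orthogonal complement, the Schur complement reduces the spectral equation to $\det\mathcal{M}(\varepsilon;\bm\kappa,\tau_0,\bm\tau)=0$, where $\mathcal{M}$ is a Hermitian $2\times 2$ matrix, jointly analytic in $(\varepsilon,\bm\kappa,\tau_0,\bm\tau)$ near the origin, with $\varepsilon = E - E_S$. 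This setup is already indicated in Section~\ref{sec:set-up}.

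The heart of the argument is the Taylor expansion of $\mathcal{M}$. Decompose $\mathcal{M} = m_0 I + m_1\sigma_1 + m_2\sigma_2 + m_3\sigma_3$ with each $m_j$ real-analytic. The $\bm\kappa$-only part reproduces Theorem~\ref{thm:kmow}: $m_0$ contains $(1-\alpha_0)|\bm\kappa|^2$, $m_1$ contains $-\alpha_1\bm\kappa\cdot\sigma_1\bm\kappa$, $m_2$ contains $-\alpha_2\bm\kappa\cdot\sigma_3\bm\kappa$, and $m_3$ vanishes modulo terms of order $|\bm\kappa|^3$. For the $\tau$-linear corrections at $\bm\kappa=\bm 0$, I would compute matrix elements of $\tau_0\Delta + \tau_1\nabla\cdot\sigma_1\nabla + \tau_3\nabla\cdot\sigma_3\nabla$ against $\{\Phi_1,\Phi_2\}$; integration by parts and the identification $\Phi_2 = \mathcal{PC}[\Phi_1]$ yield precisely the parameters $\beta_0,\beta_1,\beta_2$ of \eqref{eq:def-b-par}. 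A symmetry argument using the $\mathcal{R}$-eigenspace decomposition \eqref{eq:def-L2M-rot} (since $\Phi_1,\Phi_2$ lie in the inequivalent $\pm i$ eigenspaces of $\mathcal{R}$) together with the $\mathcal{PC}$ relation pins down which Pauli channel each of $\tau_0,\tau_1,\tau_3$ enters: $\tau_0$ on the diagonal ($m_0$), $\tau_1$ in $m_1$, $\tau_3$ in $m_2$, and nothing of order $\tau$ in $m_3$.

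With these leading parts identified, the characteristic equation reads $m_0^2 = m_1^2 + m_2^2 + m_3^2$. Since $\partial_\varepsilon m_0 = -1 + O(|\bm\kappa|+|\tau_0|+|\bm\tau|)$, the analytic implicit function theorem solves for the two branches
\[
\varepsilon_\pm = \beta_0\tau_0 + (1-\alpha_0)|\bm\kappa|^2 + \mathcal{Q}_4 \pm \sqrt{(\beta_1\tau_1 - \alpha_1\bm\kappa\cdot\sigma_1\bm\kappa)^2 + (\beta_2\tau_3 - \alpha_2\bm\kappa\cdot\sigma_3\bm\kappa)^2 + \mathcal{Q}_6},
\]
where all higher-order contributions (mixed $\bm\kappa\tau$ terms, cubic-and-above $\bm\kappa$ terms, quadratic-and-above $(\tau_0,\bm\tau)$ terms) are collected into the analytic remainders $\mathcal{Q}_4$ (outside the radical) and $\mathcal{Q}_6$ (inside it). The order bounds \eqref{eq:M-srf-Q-bd} follow by a scaling count: a term of order $|\bm\kappa|^a|\tau|^b$ that appears outside the radical contributes with weight $a + 2b$, matching the $|\tau|^{n/2}$ threshold, and terms inside the radical pick up an extra factor via the square. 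The even symmetry $\mathcal{Q}_n(\bm\kappa;\tau_0,\bm\tau) = \mathcal{Q}_n(-\bm\kappa;\tau_0,\bm\tau)$ is a consequence of Proposition~\ref{prop:gl-P-sym}, which yields $E_\pm(\bm M + \bm\kappa;\tau_0,\bm\tau) = E_\pm(\bm M - \bm\kappa;\tau_0,\bm\tau)$ throughout a neighborhood of $\bm M$.

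The main obstacle is the selection-rule bookkeeping in the second step. One must verify, using the $\mathcal{R}$ and $\mathcal{PC}$ invariance of the unperturbed operator $H$ together with the $\mathcal{P}$ and $\mathcal{C}$ invariance retained by the deformation, that no symmetry-forbidden Taylor coefficients of $\mathcal{M}$ survive --- in particular, that no $\sigma_3$ contribution and no odd-in-$\bm\kappa$ term appear at the orders on which the splitting depends. Once this is in hand, the quadratic formula and the implicit function theorem assemble \eqref{eq:M-srf} directly.
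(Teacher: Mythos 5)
Your reduction to a Hermitian $2\times2$ analytic matrix $\mathcal{M}(\varepsilon;\bm\kappa,\tau_0,\bm\tau)$, the symmetry-based identification of the leading Pauli coefficients $\beta_0\tau_0$, $\beta_1\tau_1$, $\beta_2\tau_3$, and the use of Proposition~\ref{prop:gl-P-sym} for the evenness of $\mathcal{Q}_n$ all match the paper's set-up and are sound. The gap is at the step where you claim that ``the analytic implicit function theorem solves for the two branches'' with analytic remainders $\mathcal{Q}_4$, $\mathcal{Q}_6$. The implicit function theorem cannot apply to $\det\mathcal{M}(\varepsilon;\cdot)=0$ at a point where the two roots in $\varepsilon$ are confluent: there $\partial_\varepsilon\det\mathcal{M}=0$. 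Nor can it apply to the individual eigenvalue equations $m_0(\varepsilon;\cdot)\mp\sqrt{m_1^2+m_2^2+m_3^2}=0$, because the square root is not analytic where its argument vanishes, which is precisely the degeneracy set. Indeed, the dispersion branches $E_\pm$ themselves are \emph{not} analytic functions of $(\bm\kappa,\tau_0,\bm\tau)$ --- they have square-root singularities across the locus of degeneracies --- so no version of the implicit function theorem can deliver them directly.

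What is actually needed is the analyticity of the \emph{symmetric functions} of the two roots of $\nu\mapsto\det\mathcal{M}(\nu+\beta_0\tau_0+(1-\alpha_0)|\bm\kappa|^2;\cdot)$, from which $\mathcal{Q}_4=\tfrac12(\nu_++\nu_-)$ and $\mathcal{Q}_6=\tfrac14(\nu_++\nu_-)^2-\nu_+\nu_--|\omega|^2$ are built; one then recovers the roots from the quadratic formula. The paper obtains this in Section~\ref{sec:pf-M-srf} by (i) using Rouch\'e's theorem on the disk $|\nu|<\nu^\star$ to isolate exactly two roots of $\mathcal{D}(\nu;\cdot)$, and (ii) representing $\nu_+^\ell+\nu_-^\ell$ via the argument-principle contour integral \eqref{eq:zs-arg}, whose dependence on the parameters is manifestly analytic (this is in effect the Weierstrass preparation theorem). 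The order bounds \eqref{eq:M-srf-Q-bd}, which you attribute to ``a scaling count,'' in fact require the careful lower bound \eqref{eq:calD-0-c-bd} on $|\mathcal{D}^{(0)}|$ on the Rouch\'e circle together with the upper bounds of Lemma~\ref{lem:calD-1-c-bd} to control the contour integrals. Without some version of this complex-analytic (or Weierstrass-preparation) step, the proposed argument does not establish the analyticity of $\mathcal{Q}_4,\mathcal{Q}_6$ nor the quantitative bounds.
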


\noindent Note that the expression \eqref{eq:M-srf} reduces to \eqref{eq:kmow} for the case where the linear deformation $T$ is an orthogonal transformation, corresponding to ${(\tau_0, \, {\bm \tau}) = (0, \, {\bm 0})}$.

\begin{remark}
\label{rmk:M-srf-eff}
{\rm (Effective Hamiltonian about ${(E_S, \, {\bm M})}$ under small deformation.)}
The dispersion surfaces of $H^{\tau_0, {\bm \tau}}$ containing ${(E_S, \, {\bm M})}$ are locally approximated by those of an effective Hamiltonian $H^{\bm M}_{\rm eff}(\tau_0, {\bm \tau})$ with Fourier symbol
\begin{equation}
\label{eq:M-eff}
H^{\bm M}_{\rm eff}({\bm \kappa}; \tau_0, {\bm \tau}) \equiv \bigl( \beta_0 \tau_0 + (1 - \alpha_0)|{\bm \kappa}|^2 \bigr) \, I + (\beta_1 \tau_1 - \alpha_1 {\bm \kappa} \cdot \sigma_1 {\bm \kappa}) \, \sigma_1 + (\beta_2 \tau_3 - \alpha_2 {\bm \kappa} \cdot \sigma_3 {\bm \kappa}) \, \sigma_2.
\end{equation}
The two dispersion relations of $H^{\bm M}_{\rm eff}(\tau_0, {\bm \tau})$ are given by
\begin{equation}
\label{eq:M-srf-eff}
\varepsilon_\pm({\bm \kappa}; \tau_0, {\bm \tau}) = \beta_0 \tau_0 + (1 - \alpha_0) |{\bm \kappa}|^2 \pm \sqrt{(\beta_1 \tau_1 - \alpha_1 {\bm \kappa} \cdot \sigma_1 {\bm \kappa})^2 + (\beta_2 \tau_3 - \alpha_2 {\bm \kappa} \cdot \sigma_3 {\bm \kappa})^2}. 
\end{equation}
These coincide with the expressions \eqref{eq:M-srf} of Theorem \ref{thm:M-srf}, omitting higher order terms.
\end{remark}

\subsection{Quadratic degeneracies split into Dirac points}
\label{sec:M-dgn}

We first observe that the approximate effective Hamiltonian \eqref{eq:M-eff} anticipates that, for ${|{\bm \tau}| \neq 0}$, a pair of twofold degeneracies perturb from the quadratic band degeneracy point ${(E_S, \, {\bm M})}$ of ${H^{0, {\bm 0}} = H}$:

A degenerate energy-quasimomentum pair ${(\varepsilon, \, {\bm \kappa}) = (\varepsilon_\star, \, {\bm \kappa}_\star)}$ occurs at ${{\bm \kappa} = {\bm \kappa}_\star}$ for which ${\varepsilon_+({\bm \kappa}_\star) = \varepsilon_-({\bm \kappa}_\star)}$. Thus, from \eqref{eq:M-srf-eff}, ${{\bm \kappa} = {\bm \kappa}_\star}$ satisfies the system of two equations:
\begin{equation}
\label{eq:M-eff-dgn}
\begin{aligned}
\beta_1 \tau_1 - \alpha_1 {\bm \kappa} \cdot \sigma_1 {\bm \kappa} & = 0, \\
\beta_2 \tau_3 - \alpha_2 {\bm \kappa} \cdot \sigma_3 {\bm \kappa} & = 0.
\end{aligned}
\end{equation}
Given any solution of \eqref{eq:M-eff-dgn}, the corresponding degenerate energy is given by
\begin{equation}
\varepsilon_\star = \beta_0 \tau_0 + (1 - \alpha_0) |{\bm \kappa}_\star|^2.
\end{equation}
The system \eqref{eq:M-eff-dgn} has two families of solutions. In polar coordinates ${\tau_1 = |{\bm \tau}| \cos(\varphi)}$ and ${\tau_3 = |{\bm \tau}| \sin(\varphi)}$:
\begin{enumerate}
\item If ${|{\bm \tau}| = 0}$, then
\begin{equation}
\label{eq:M-eff-dgn-sol-1}
(\varepsilon_\star, \, {\bm \kappa}_\star) = (\beta_0 \tau_0, \, {\bm 0}),
\end{equation}
corresponding to a persistent quadratic band degeneracy at $\bm M$ with perturbed, multiplicity two $L^2_{\bm M}$ eigenvalue ${E_M(\tau_0) = E_S + \beta_0 \tau_0}$.
\item If ${|{\bm \tau}| \neq 0}$, then 
\begin{equation}
\label{eq:M-eff-dgn-sol-2}
(\varepsilon_\star, \, {\bm \kappa}_\star) = (\beta_0 \tau_0 + (1 - \alpha_0) |{\bm \kappa}_D^{(0)}(\varphi)|^2 |{\bm \tau}|, \, \pm \sqrt{|{\bm \tau}|} {\bm \kappa}_D^{(0)}(\varphi)),
\end{equation}
where ${\bm \kappa}_D^{(0)}(\varphi)$ is defined as:
\begin{align}
\label{eq:def-kap-D-0-1}
{\bm \kappa}_D^{(0)}(\varphi) & \equiv |{\bm \kappa}_D^{(0)}(\varphi)| \, [\cos(\theta^{(0)}_D(\varphi)), \, \sin(\theta^{(0)}_D(\varphi))]^\mathsf{T}, \quad \text{where} \\
\label{eq:def-kap-D-0-2}
|{\bm \kappa}_D^{(0)}(\varphi)| & \equiv \sqrt[4]{\Bigl( \frac{\beta_2}{\alpha_2} \sin(\varphi) \Bigr)^2 + \Bigl( \frac{\beta_1}{\alpha_1} \cos(\varphi) \Bigr)^2}, \\ 
\label{eq:def-kap-D-0-3}
\theta^{(0)}_D(\varphi) & \equiv \frac{1}{2} {\rm Arg} \Bigl( \frac{\beta_2}{\alpha_2} \sin(\varphi) + i \frac{\beta_1}{\alpha_1} \cos(\varphi) \Bigr).
\end{align}
Note that the expressions \eqref{eq:def-kap-D-0-1} -- \eqref{eq:def-kap-D-0-3} are defined by hypothesis \ref{itm:quad-dgn-5} and \ref{itm:quad-dgn-6}.
\end{enumerate}

\noindent That these distinct twofold degeneracies of $H^{\bm M}_{\rm eff}(\tau_0, {\bm \tau})$ correspond to true degeneracies of $H^{\tau_0, {\bm \tau}}$ (and hence of $H_{V \circ \, T^{-1}}$), requires a much more refined analysis, and is the content of the following theorem.

\begin{theorem}
\label{thm:M-dgn}
{\rm (Quadratic degeneracies split into Dirac points.)}
Let ${(E_S, \, {\bm M})}$ denote a quadratic band degeneracy point of ${H^{0, {\bm 0}} = H}$ which satisfies \ref{itm:quad-dgn-1} -- \ref{itm:quad-dgn-5} (see Theorem \ref{thm:kmow}) and the further nondegeneracy condition \ref{itm:quad-dgn-6}. Then, the following hold:
\begin{enumerate}
\item \label{itm:M-dgn-1} (Pure dilation.) Let ${|{\bm \tau}| = 0}$. Then, there exists ${\tau_0^{\star \star} > 0}$ and an analytic function $E_M(\tau_0)$, defined for ${|\tau_0| < \tau_0^{\star \star}}$, such that ${(E_M(\tau_0), \, {\bm M})}$ is a twofold degenerate energy-quasimomentum pair of $H^{\tau_0, {\bm 0}}$. Here, 
\begin{align}
\label{eq:def-EM}
E_M(\tau_0) & \equiv E_S + \beta_0 \tau_0 + \varepsilon_M^{(1)}(\tau_0), \quad \text{where}\quad \varepsilon_M^{(1)}(\tau_0)  = O(|\tau_0|^2) \ \ \text{as} \ \ |\tau_0| \to 0.
\end{align}
In this case, ${(E_M(\tau_0), \, {\bm M})}$ satisfies conditions \ref{itm:quad-dgn-1} -- \ref{itm:quad-dgn-5} and is therefore a quadratic band degeneracy of $H^{\tau_0, {\bm 0}}$ in the sense of Theorem \ref{thm:kmow}.

\item \label{itm:M-dgn-2} Let ${|{\bm \tau}| \neq 0}$. Then, there exist $\tau_0^{\star \star}$, ${\tau^{\star \star} > 0}$ and analytic functions $E_D(\tau_0, s; \varphi)$, ${\bm D}^\pm(\tau_0, s; \varphi)$, defined for ${|\tau_0| < \tau_0^{\star \star}}$, ${|s| < \tau^{\star \star}}$, such that ${(E_D(\tau_0, |{\bm \tau}|; \varphi), \, {\bm D}^\pm(\tau_0, |{\bm \tau}|; \varphi))}$ are each twofold degenerate energy-quasimomentum pairs of $H^{\tau_0, {\bm \tau}}$. Further, we have the expansions
\begin{align}
\label{eq:def-ED}
E_D(\tau_0, |{\bm \tau}|; \varphi) & \equiv E_S + \beta_0 \tau_0 + (1 - \alpha_0) |{\bm \kappa}_D^{(0)}(\varphi)| |{\bm \tau}| + \varepsilon_D^{(1)}(\tau_0, |{\bm \tau}|; \varphi), \quad \text{where} \\
\label{eq:epsD-1-bd}
\varepsilon_D^{(1)}(\tau_0, |{\bm \tau}|; \varphi) & = O(|\tau_0|^2 + |{\bm \tau}|^2) \ \ \text{as} \ \ |\tau_0|, \, |{\bm \tau}| \to 0,
\end{align}
and
\begin{align}
\label{eq:def-Dpm}
{\bm D}^\pm(\tau_0, |{\bm \tau}|; \varphi) & \equiv {\bm M} \pm \sqrt{|{\bm \tau}|} \bigl( {\bm \kappa}_D^{(0)}(\varphi) + {\bm \kappa}_D^{(1)}(\tau_0, |{\bm \tau}|; \varphi) \bigr), \quad \text{where} \\
\label{eq:kapD-1-bd}
|{\bm \kappa}_D^{(1)}(\tau_0, |{\bm \tau}|; \varphi)| & = O(|\tau_0| + |{\bm \tau}|) \ \ \text{as} \ \ |\tau_0|, \, |{\bm \tau}| \to 0.
\end{align}
Here, ${\bm \kappa}_D^{(0)}(\varphi)$ is defined in \eqref{eq:def-kap-D-0-1}. Each of ${(E_D(\tau_0, |{\bm \tau}|; \varphi), \, {\bm D}^\pm(\tau_0, |{\bm \tau}|; \varphi))}$ satisfy conditions \ref{itm:dir-pt-1} -- \ref{itm:dir-pt-4} and are therefore (tilted) Dirac points of $H^{\tau_0, {\bm \tau}}$ in the sense of Theorem \ref{thm:dir-srf}.
\end{enumerate}
\end{theorem}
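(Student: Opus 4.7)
The plan is to reduce the degeneracy problem to finding zeros of the $2 \times 2$ Hermitian matrix $\mathcal{M}(\varepsilon; \bm\kappa, \tau_0, \bm\tau)$ produced by the Lyapunov-Schmidt/Schur-complement reduction of Section~\ref{sec:set-up}: the dispersion surfaces of $T_* H_V$ near $(E_S, \bm M)$ form the locus $\det \mathcal{M} = 0$ and, by Lemma~\ref{lem:Adegen}, twofold degeneracies correspond to $\mathcal{M}$ being the zero matrix. Because $H^{\tau_0, \bm\tau}$ commutes with $\mathcal{P}$ and $\mathcal{C}$ for every admissible $(\tau_0, \bm\tau)$, the same representation-theoretic argument that eliminates the $\sigma_3$-term in \eqref{eq:kmow-Heff} forces $\mathcal{M}$ to have no $\sigma_3$-component identically:
\begin{equation*}
\mathcal{M}(\varepsilon; \bm\kappa, \tau_0, \bm\tau) = a\, I + c_1\, \sigma_1 + c_2\, \sigma_2,
\end{equation*}
where $a, c_1, c_2$ are real-analytic scalars whose leading expansions are read off from Theorem~\ref{thm:M-srf}:
\begin{equation*}
a = -\varepsilon + \beta_0 \tau_0 + (1 - \alpha_0)|\bm\kappa|^2 + \text{(higher order)},
\end{equation*}
\begin{equation*}
c_1 = \beta_1 \tau_1 - \alpha_1\, \bm\kappa \cdot \sigma_1 \bm\kappa + \text{(h.o.)}, \qquad c_2 = \beta_2 \tau_3 - \alpha_2\, \bm\kappa \cdot \sigma_3 \bm\kappa + \text{(h.o.)}.
\end{equation*}
The degeneracy condition is therefore a system of three real-analytic equations $a = c_1 = c_2 = 0$ in the three real unknowns $(\varepsilon, \kappa_1, \kappa_2)$—a determined system amenable to the analytic implicit function theorem (IFT).

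For part~\ref{itm:M-dgn-1}, when $\bm\tau = \bm 0$ the operator $H^{\tau_0, \bm 0} = -(1 - \tau_0)\Delta + V$ is a uniform rescaling of a square-lattice Schr\"odinger operator and hence retains the full symmetry group of $H_V$. The $\mathcal{R}$-symmetry argument underlying \eqref{eq:kmow-Heff} then forces $c_1(\varepsilon; \bm 0, \tau_0, \bm 0) \equiv c_2(\varepsilon; \bm 0, \tau_0, \bm 0) \equiv 0$ in $(\varepsilon, \tau_0)$, since $\Phi_1, \Phi_2$ lie in distinct $\mathcal{R}$-eigenspaces. The degeneracy condition at $\bm\kappa = \bm 0$ reduces to the single scalar $a(\varepsilon; \bm 0, \tau_0, \bm 0) = 0$; because $\partial_\varepsilon a|_{(0;\bm 0, 0, \bm 0)} = -1$, the analytic IFT yields a unique analytic $\varepsilon_M(\tau_0) = \beta_0 \tau_0 + O(\tau_0^2)$, whence $E_M(\tau_0) = E_S + \varepsilon_M(\tau_0)$. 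Openness of the nondegeneracy conditions \ref{itm:quad-dgn-5} together with analytic persistence of the eigenspace in $\tau_0$ gives that $(E_M(\tau_0), \bm M)$ remains a quadratic band degeneracy.

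For part~\ref{itm:M-dgn-2}, fix $\varphi$ and set $s = |\bm\tau|$. Motivated by the balance $\bm\kappa \cdot \sigma_j \bm\kappa \sim \tau_j \sim s$, I rescale
\begin{equation*}
\bm\kappa = \sqrt{s}\, \bm w, \qquad \varepsilon = \varepsilon_M(\tau_0) + s\, (1 - \alpha_0)|\bm w|^2 + s\, \tilde\varepsilon,
\end{equation*}
centering on $\varepsilon_M(\tau_0)$ from part~\ref{itm:M-dgn-1} so that $\tau_0^2$-remainders in $a$ do not obstruct the IFT. Substituting and dividing $a$, $c_1$, $c_2$ each by $s$ converts the degeneracy system into an analytic system $F(\tilde\varepsilon, \bm w; \tau_0, s, \varphi) = 0$ (treating $\sqrt{s}$ as a regular Puiseux coordinate) whose specialization at $(\tau_0, s) = (0, 0)$ reads
\begin{equation*}
\tilde\varepsilon = 0, \qquad \beta_1 \cos\varphi = \alpha_1\, \bm w \cdot \sigma_1 \bm w, \qquad \beta_2 \sin\varphi = \alpha_2\, \bm w \cdot \sigma_3 \bm w.
\end{equation*}
By \ref{itm:quad-dgn-5}--\ref{itm:quad-dgn-6} this has the pair of solutions $\tilde\varepsilon = 0$, $\bm w = \pm \bm\kappa_D^{(0)}(\varphi)$ of \eqref{eq:def-kap-D-0-1}--\eqref{eq:def-kap-D-0-3}, provided $|\bm\kappa_D^{(0)}(\varphi)| > 0$ (which fails only on a discrete set of exceptional $\varphi$). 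A direct computation shows
\begin{equation*}
\det \frac{\partial F}{\partial(\tilde\varepsilon, w_1, w_2)}\bigg|_{(\tau_0, s) = (0, 0),\ \bm w = \pm \bm\kappa_D^{(0)}(\varphi)} = \mp 4 \alpha_1 \alpha_2\, |\bm\kappa_D^{(0)}(\varphi)|^2 \neq 0,
\end{equation*}
using \ref{itm:quad-dgn-5} and \ref{itm:quad-dgn-6}. The analytic IFT then produces unique analytic branches $(\tilde\varepsilon, \bm w^\pm)(\tau_0, s, \varphi)$ realizing \eqref{eq:def-ED}--\eqref{eq:kapD-1-bd}; the exact relation $\bm w^- = -\bm w^+$, hence $\bm D^-(\tau_0, s; \varphi) = 2\bm M - \bm D^+(\tau_0, s; \varphi)$, is enforced by $\mathcal{P}$-symmetry via Proposition~\ref{prop:gl-P-sym}. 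Conditions \ref{itm:dir-pt-1}--\ref{itm:dir-pt-3} follow because $\mathcal{M} = 0$ at the degeneracy produces a two-dimensional $L^2_{\bm D^\pm}$ eigenspace on which the decomposition \eqref{eq:L2k-PC} yields the required $\mathcal{PC}$-adapted orthonormal pair, while \ref{itm:dir-pt-4} holds because the Dirac velocities $\bm\gamma_1^\pm, \bm\gamma_2^\pm$ of \eqref{eq:def-gam-par} coincide, to leading order in $\sqrt{s}$, with the $\bm w$-gradients of $c_1/s$ and $c_2/s$ at the solution—linearly independent precisely by the nonvanishing of the Jacobian above. Thus $(E_D, \bm D^\pm)$ are Dirac points in the sense of Theorem~\ref{thm:dir-srf}.

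The main technical obstacle is the Lyapunov-Schmidt construction of $\mathcal{M}$ itself, and in particular the verification that the $\sigma_3$-coefficient of $\mathcal{M}$ vanishes identically in all variables (not merely at the leading order captured by Theorem~\ref{thm:M-srf}); this is what reduces the four scalar equations coming from $\mathcal{M} = 0$ to three, matching the number of unknowns. Carrying this out requires tracking the action of $\mathcal{P}$ and $\mathcal{C}$ through the Schur complement to all orders in $(\bm\kappa, \tau_0, \bm\tau)$. A secondary subtlety is the Puiseux-type scaling $\bm\kappa = \sqrt{s}\, \bm w$ together with the centering on $\varepsilon_M(\tau_0)$, both needed so that higher-order $\tau_0$- and $|\bm\tau|$-remainders do not disrupt the IFT.
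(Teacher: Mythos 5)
Your proposal follows essentially the same route as the paper: Lyapunov--Schmidt reduction to the condition $\mathcal{M}=0$, elimination of the $\sigma_3$-component via $\mathcal{PC}$ symmetry so that the degeneracy condition becomes three real-analytic equations in $(\varepsilon,\kappa_1,\kappa_2)$, the rescaling ${\bm \kappa}=\sqrt{s}\,{\bm w}$, and the implicit function theorem at $(\varepsilon,{\bm w})=(0,\pm{\bm \kappa}_D^{(0)}(\varphi))$ with the same nonvanishing Jacobian $\propto\alpha_1\alpha_2|{\bm \kappa}_D^{(0)}(\varphi)|^2$. Your extra centering of $\varepsilon$ on $\varepsilon_M(\tau_0)+s(1-\alpha_0)|{\bm w}|^2$ and division of the first equation by $s$ is a cosmetic variant of the paper's treatment (which keeps that equation undivided), and your verification of \ref{itm:dir-pt-1}--\ref{itm:dir-pt-4} is a compressed version of what the paper carries out explicitly in its appendices.
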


\noindent Note that ${\bm D}^-(\tau_0; |{\bm \tau}|)$ is the inversion with respect to ${\bm M}$ of ${\bm D}^+(\tau_0; |{\bm \tau}|)$: ${{\bm D}^-(\tau_0, {\bm \tau}) = 2 {\bm M} - {\bm D}^+(\tau_0, {\bm \tau})}$.

Conclusion \ref{itm:M-dgn-2} of Theorem \ref{thm:M-dgn} shows the bifurcation of two distinct Dirac points from a quadratic band degeneracy point in the scenario of small, non-dilational deformations. The locally tilted conical behavior of these is discussed in Theorem \ref{thm:dir-srf}. We note that this description applies to any Dirac points of $T_* H$, not just those emerging from small deformations; see Section \ref{sec:large} for a discussion of the analysis of large deformations.

Proposition \ref{prop:dir-pair} considers the scenario of Dirac points related by symmetry, and presents a resulting relationship between corresponding effective Hamiltonians.

\subsection{Special deformations}
\label{sec:special-def}

In this section, we remark on two particular classes of deformations for which the movement of Dirac points is constrained to be along a line. These are deformations where only one of the parameters $\tau_1$ or $\tau_3$ is nonzero. Throughout this section, we assume the hypotheses of Theorem \ref{thm:M-dgn}.

\begin{theorem}
\label{thm:M-dgn-ref-1}
{\rm (${\tau_1 \neq 0}$, ${\tau_3 = 0}$; Figure \ref{fig:M-dgn-ref}, panels (\subref{fig:M-dgn-ref-1a}, \subref{fig:M-dgn-ref-1b}).)}
Suppose ${\tau_1 \neq 0}$, ${\tau_3 = 0}$. Then, there exist $\tau_0^{\star \star}$, ${\tau_1^{\star\star} > 0}$ and analytic functions $E_D(\tau_0, \tau_1)$, ${\bm D}^\pm(\tau_0, \tau_1)$, defined for ${|\tau_0| < \tau_0^{\star \star}}$, ${|\tau_1| < \tau_1^{\star \star}}$, such that ${(E_D(\tau_0, \tau_1), \, {\bm D}^\pm(\tau_0, \tau_1))}$ are each twofold degenerate energy-quasimomentum pairs of $H^{\tau_0, \, (\tau_1, \, 0)}$. We have the expansions
\begin{align}
E_D(\tau_0, \tau_1) & = E_S + \beta_0 \tau_0 + (1 - \alpha_0) \tilde{\kappa}_D^{(0)}{}^2 |\tau_1| + \varepsilon_D^{(1)}(\tau_0, \tau_1), \quad \text{where} \\
\varepsilon_D^{(1)}(\tau_0, \tau_1) & = O(|\tau_0|^2 + |\tau_1|^2) \ \ \text{as} \ \ |\tau_0|, \, |\tau_1| \to 0,
\end{align}
and
\begin{align}
{\bm D}^\pm(\tau_0, \tau_1) & = {\bm M} \pm \sqrt{|\tau_1|} (\tilde{\kappa}_D^{(0)} + \tilde{\kappa}_D^{(1)}(\tau_0, \tau_1)) \, {\bm e}_1, \quad \text{where} \\
\tilde{\kappa}_D^{(1)}(\tau_0, \tau_1) & = O(|\tau_0| + |\tau_1|) \ \ \text{as} \ \ |\tau_0|, \, |\tau_1| \to 0.
\end{align}
Here,
\begin{equation}
\tilde{\kappa}_D^{(0)} = \sqrt{\Bigl|\frac{\beta_1}{\alpha_1}\Bigr|}
\end{equation}
and ${\bm e}_1$ is the normalized eigenvector satisfying:
\begin{equation}
\sigma_1 {\bm e}_1 = {\rm sgn}(\beta_1 \alpha_1 \tau_1) {\bm e}_1.
\end{equation}
In particular, the degenerate quasimomenta ${\bm D}^\pm(\tau_0, \tau_1)$ are constrained to one of the two diagonals bisecting ${\bm M}$: either $\kappa_1 = \kappa_2$ or $\kappa_1 = -\kappa_2$.
\end{theorem}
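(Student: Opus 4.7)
The plan is to deduce Theorem \ref{thm:M-dgn-ref-1} as a corollary of Theorem \ref{thm:M-dgn} by exploiting the additional reflection symmetry $\Sigma_1$, which is preserved by $H^{\tau_0, (\tau_1, 0)}$ (see Figure \ref{fig:deform-1}). Since Theorem \ref{thm:M-dgn} already supplies analytic Dirac-point quasimomenta ${\bm D}^\pm(\tau_0, \tau_1)$ bifurcating from ${\bm M}$, essentially the only new content is the diagonal constraint, which will follow from $\Sigma_1$-equivariance once matched against the leading-order direction.

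First I would specialize the formulas \eqref{eq:def-kap-D-0-1}--\eqref{eq:def-kap-D-0-3} to the one-parameter family ${\bm \tau} = (\tau_1, 0)$. Writing ${\bm \tau}$ in polar form gives ${\varphi \smallin \{0, \, \pi\}}$ according to $\mathrm{sgn}(\tau_1)$, and direct substitution yields ${|{\bm \kappa}_D^{(0)}| = \sqrt{|\beta_1/\alpha_1|} \equiv \tilde{\kappa}_D^{(0)}}$ and ${\theta_D^{(0)} = \tfrac{1}{2}\,\mathrm{Arg}\bigl(i\,\mathrm{sgn}(\tau_1)\,\beta_1/\alpha_1\bigr) \smallin \{\pm \pi/4, \pm 3\pi/4\}}$. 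A short computation then produces ${\sigma_1 {\bm \kappa}_D^{(0)} = \mathrm{sgn}(\beta_1 \alpha_1 \tau_1)\,{\bm \kappa}_D^{(0)}}$, so at leading order ${\bm \kappa}_D^{(0)}$ coincides with $\tilde{\kappa}_D^{(0)} {\bm e}_1$, with ${\bm e}_1$ as defined in the statement. This already pins the leading-order displacement to one of the two diagonals $\kappa_1 = \pm \kappa_2$.

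Next I would upgrade this to all orders. Using Lemma \ref{lem:chain} with $A = \sigma_1$ (noting $\sigma_1^\mathsf{T} \sigma_1 = I$), one checks that $\nabla \cdot \sigma_1 \nabla$ commutes with $\Sigma_1$; combined with $\Sigma_1$-invariance of $V$ and of $\Delta$, this gives ${[\Sigma_1, \, H^{\tau_0, (\tau_1, 0)}] = 0}$. Since ${\sigma_1 {\bm M} = {\bm M}}$ and $\sigma_1 \Z^2 = \Z^2$, a pseudoperiodic calculation identical in spirit to the proof of Proposition \ref{prop:gl-P-sym} shows that $\Sigma_1$ maps $L^2_{{\bm M} + {\bm \kappa}}$ into $L^2_{{\bm M} + \sigma_1 {\bm \kappa}}$, so twofold degenerate eigenpairs at ${\bm M} + {\bm q}$ are carried to twofold degenerate eigenpairs at ${\bm M} + \sigma_1 {\bm q}$. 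Applying this to ${\bm q} = {\bm D}^+(\tau_0, \tau_1) - {\bm M}$ and invoking the local uniqueness of the two bifurcating Dirac quasimomenta provided by Theorem \ref{thm:M-dgn}, we must have $\sigma_1 {\bm q} \smallin \{{\bm q}, \, -{\bm q}\}$ (the only alternative is ${\bm M} + \sigma_1 {\bm q} = {\bm D}^-(\tau_0, \tau_1) = 2{\bm M} - {\bm D}^+(\tau_0, \tau_1)$). The leading-order identity from Step~1 then selects $\sigma_1 {\bm q} = \mathrm{sgn}(\beta_1 \alpha_1 \tau_1)\,{\bm q}$, forcing ${\bm q} \smallin \R\,{\bm e}_1$ throughout the neighborhood. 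This yields the claimed representation ${\bm D}^\pm(\tau_0, \tau_1) = {\bm M} \pm \sqrt{|\tau_1|}\bigl(\tilde{\kappa}_D^{(0)} + \tilde{\kappa}_D^{(1)}(\tau_0, \tau_1)\bigr){\bm e}_1$, while analyticity, the expansion of $E_D$, and the error bounds on $\tilde{\kappa}_D^{(1)}$ and $\varepsilon_D^{(1)}$ are inherited verbatim from Theorem \ref{thm:M-dgn}.

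The main obstacle is the combined bookkeeping: (i) confirming cleanly that $\Sigma_1$ commutes with $H^{\tau_0, (\tau_1, 0)}$ and acts correctly on the pseudoperiodic spaces $L^2_{{\bm M} + {\bm \kappa}}$, so that the degenerate-eigenpair symmetry is genuinely transported to ${\bm M} + \sigma_1 {\bm q}$; and (ii) extracting from the implicit-function-theorem construction underlying Theorem \ref{thm:M-dgn} the local uniqueness statement needed to conclude $\sigma_1 {\bm q} \smallin \{\pm {\bm q}\}$. Both are essentially immediate from the ingredients already in place, but they require careful statement to make the symmetry argument airtight.
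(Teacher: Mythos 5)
Your proposal is correct, and it reaches the conclusion by a somewhat different route than the paper. The paper builds the diagonal constraint into the construction \emph{a priori}: its Proposition \ref{prop:M-dgn-ref} shows (by conjugating the entries of the reduction matrix $\mathcal{M}$ with $\Sigma_1$) that $h_2(\varepsilon,{\bm \kappa};\tau_0,{\bm \tau})$ vanishes identically whenever $\sigma_1{\bm \kappa}=\pm{\bm \kappa}$, so the ansatz ${\bm \kappa}\propto{\bm e}_1$ kills one of the three equations in \eqref{eq:hs-0}; the implicit function theorem is then re-run on the reduced two-equation system, the correct eigenvalue $\chi={\rm sgn}(\alpha_1\beta_1\tau_1)$ is selected exactly as in your leading-order computation, and the resulting pair is identified with ${\bm D}^\pm$ of Theorem \ref{thm:M-dgn} by uniqueness. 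You instead deduce the constraint \emph{a posteriori}: take ${\bm D}^\pm$ as already constructed, transport twofold degeneracies by the unitary intertwiner $\Sigma_1: L^2_{{\bm M}+{\bm \kappa}}\to L^2_{{\bm M}+\sigma_1{\bm \kappa}}$ (valid since $\sigma_1{\bm M}={\bm M}$ modulo $(\Z^2)^*$ and $[\Sigma_1, H^{\tau_0,(\tau_1,0)}]=0$), and use local uniqueness of the Dirac pair to force $\sigma_1{\bm q}\in\{{\bm q},-{\bm q}\}$, with the leading-order direction ${\bm \kappa}_D^{(0)}=\tilde\kappa_D^{(0)}{\bm e}_1$ selecting the eigenspace. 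Both arguments rest on the same two ingredients --- the surviving $\Sigma_1$ symmetry and the fact that ${\bm D}^\pm$ are the \emph{only} twofold degeneracies near $(E_S,{\bm M})$ --- and the paper leans on that uniqueness just as implicitly as you do (it is not literally part of the statement of Theorem \ref{thm:M-dgn}, only of its IFT-based proof), so you are not held to a higher standard there. Your version is arguably more economical, since it avoids redoing the Lyapunov--Schmidt/IFT construction on the reduced system; the paper's version has the mild advantage that it never needs to argue that a quasimomentum close to one diagonal and constrained to lie on a diagonal must lie on that diagonal, because the direction is fixed from the outset.
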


\begin{figure}[t]
\centering
\begin{subfigure}{0.35\textwidth}
    \subcaption{$\quad$}
    \vspace{0.1cm}
    \includegraphics[height = 4.3cm]{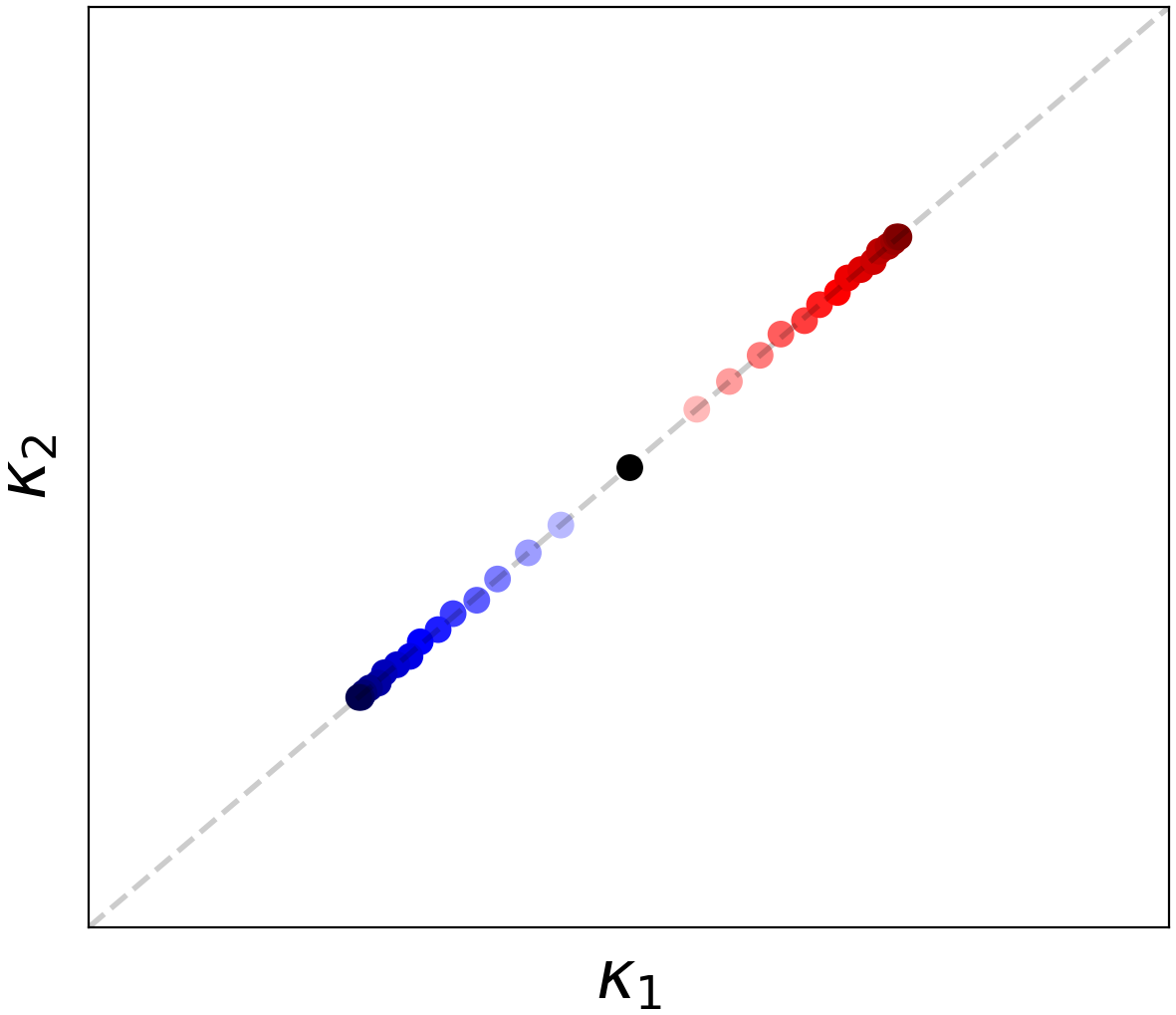}
    \label{fig:M-dgn-ref-1a}
\end{subfigure}
\hspace{0.2cm}
\begin{subfigure}{0.35\textwidth}
    \subcaption{$\quad$}
    \vspace{0.1cm}
    \includegraphics[height = 4.3cm]{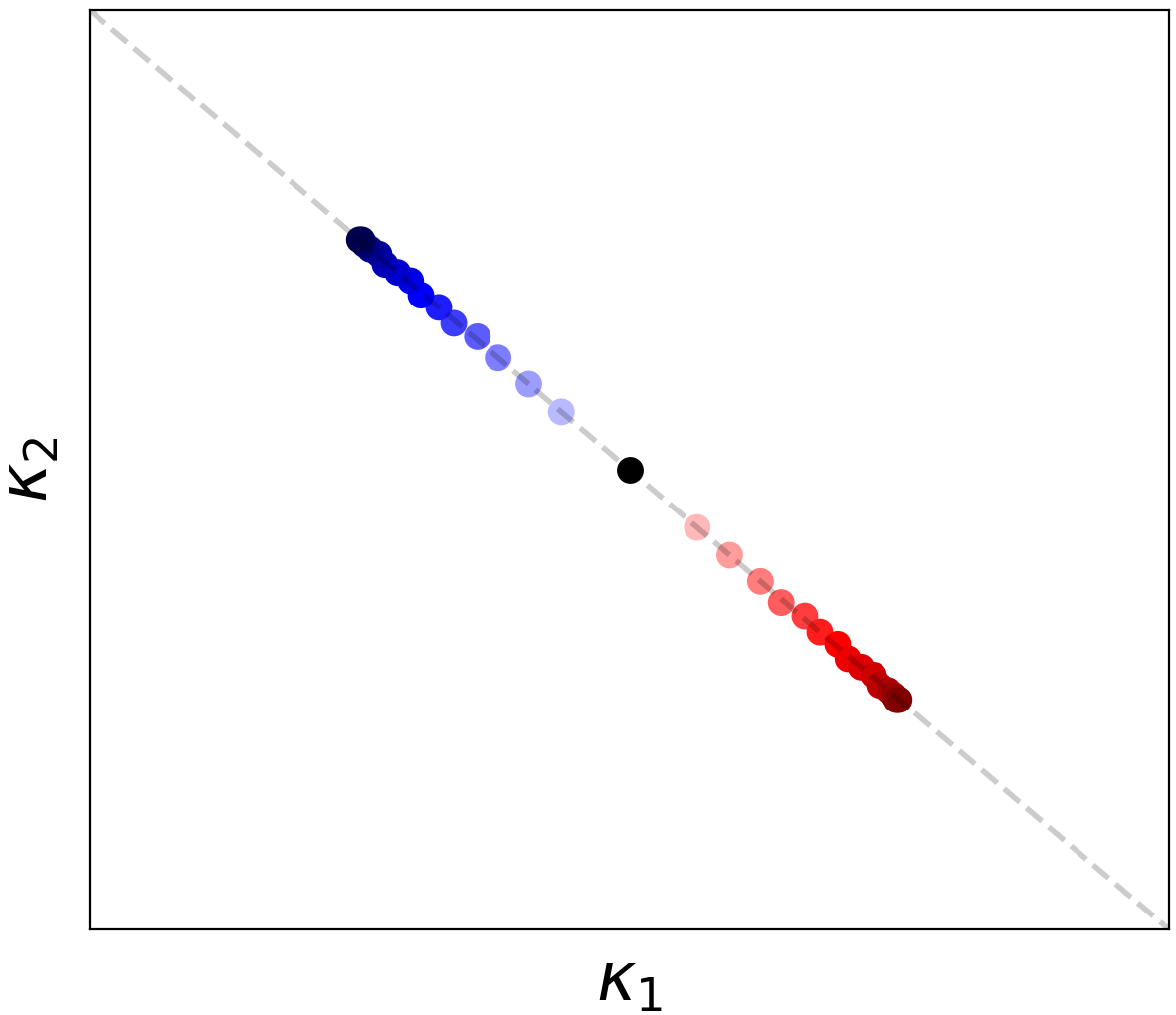}
    \label{fig:M-dgn-ref-1b}
\end{subfigure}  \\
\vspace{0.2cm}
\begin{subfigure}{0.35\textwidth}
    \subcaption{$\quad$}
    \vspace{0.1cm}
    \includegraphics[height = 4.3cm]{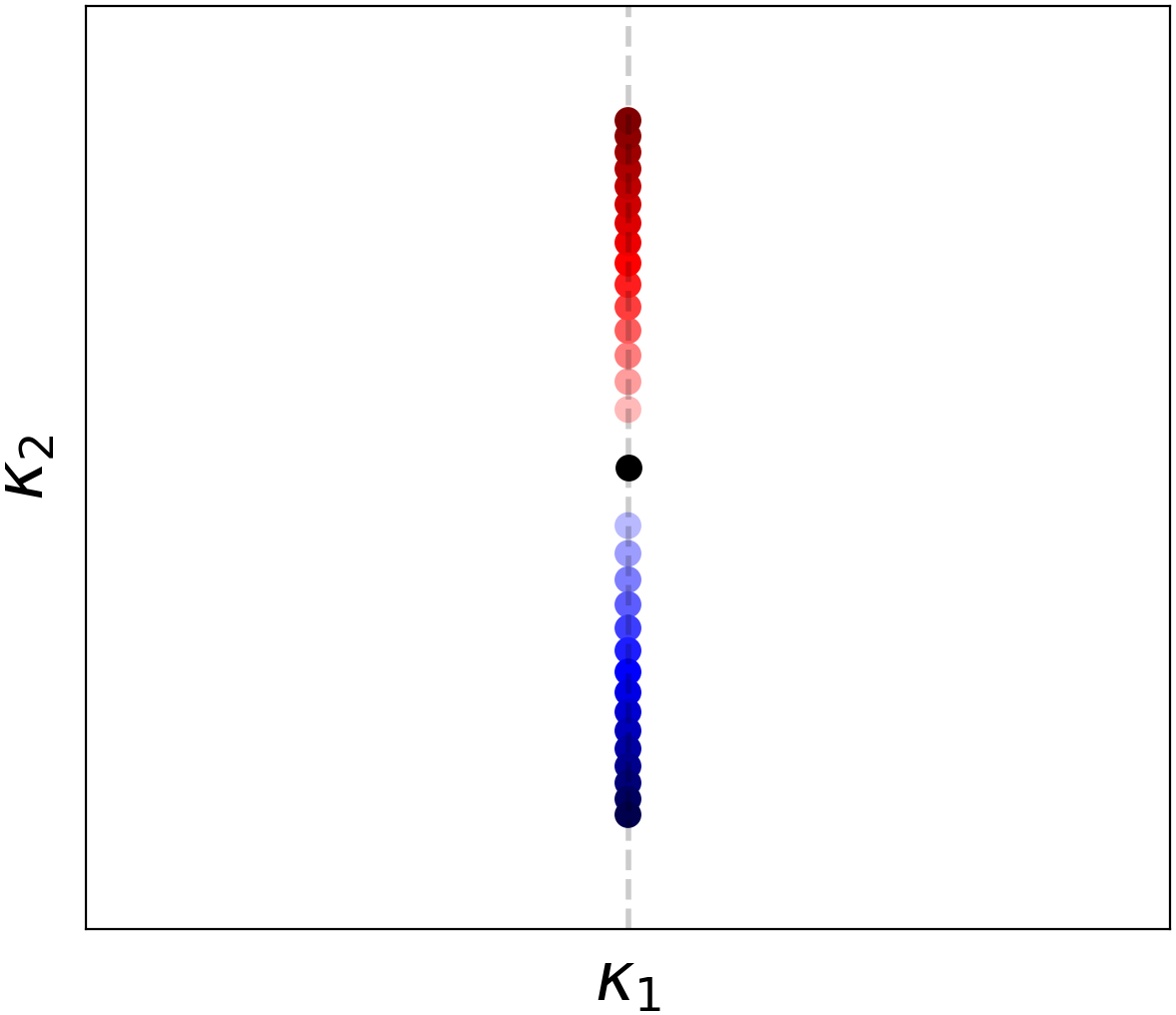}
    \label{fig:M-dgn-ref-2a}
\end{subfigure}
\hspace{0.2cm}
\begin{subfigure}{0.35\textwidth}
    \subcaption{$\quad$}
    \vspace{0.1cm}
    \includegraphics[height = 4.3cm]{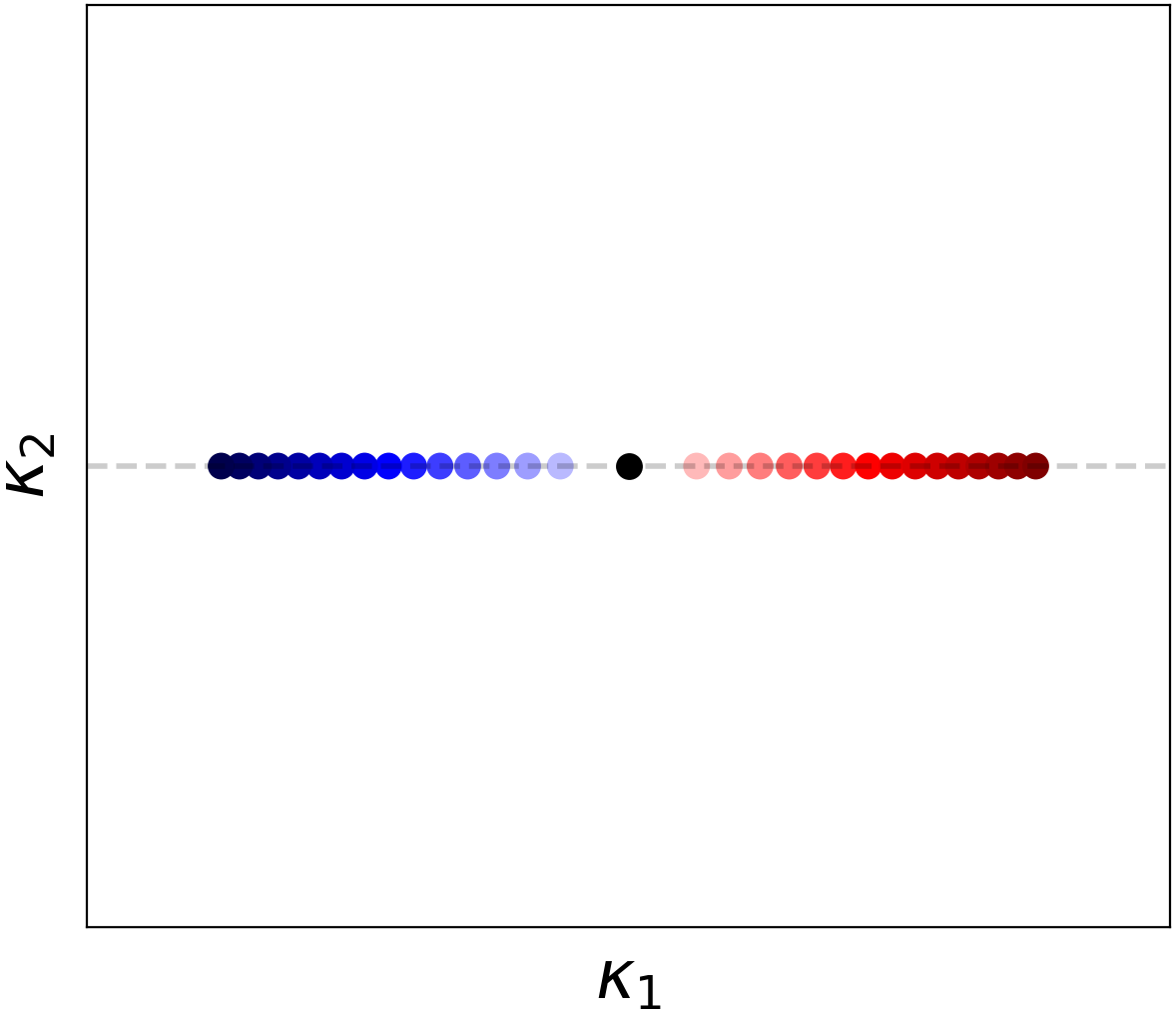}
    \label{fig:M-dgn-ref-2b}
\end{subfigure}
\vspace{0.3cm}
\caption{Trajectories of quasimomenta ${\bm D}^+(\tau_0, {\bm \tau})$ (red) and ${\bm D}^-(\tau_0, {\bm \tau})$ (blue) associated with Dirac points emerging from special (reflection symmetry-preserving) deformations. The top row illustrates Theorem \ref{thm:M-dgn-ref-1} with ${\tau_3 = 0}$ and {\bf (a)} ${{\rm sgn}(\alpha_1 \beta_1 \tau_1) = +1}$, {\bf (b)} ${{\rm sgn}(\alpha_1 \beta_1 \tau_1) = -1}$. The bottom row illustrates Theorem \ref{thm:M-dgn-ref-2} with ${\tau_1 = 0}$ and {\bf (c)} ${{\rm sgn}(\alpha_2 \beta_2 \tau_3) = +1}$, {\bf (d)} ${{\rm sgn}(\alpha_2 \beta_2 \tau_3) = -1}$.}
\label{fig:M-dgn-ref}
\end{figure}

\begin{theorem}
\label{thm:M-dgn-ref-2}
{\rm (${\tau_1 = 0}$, ${\tau_3 \neq 0}$; Figure \ref{fig:M-dgn-ref}, panels (\subref{fig:M-dgn-ref-2a}, \subref{fig:M-dgn-ref-2b}).)}
Suppose ${\tau_1 = 0}$, ${\tau_3 \neq 0}$. Then, there exist $\tau_0^{\star \star}$, ${\tau_3^{\star\star} > 0}$ and analytic functions $E_D(\tau_0, \tau_3)$, ${\bm D}^\pm(\tau_0, \tau_3)$, defined for ${|\tau_0| < \tau_0^{\star \star}}$, ${|\tau_3| < \tau_3^{\star \star}}$, such that $(E_D(\tau_0, \tau_3), \, {\bm D}^\pm(\tau_0, \tau_3))$ are each twofold degenerate energy-quasimomentum pairs of $H^{\tau_0, \, (0, \, \tau_3)}$. We have the expansions
\begin{align}
E_D(\tau_0, \tau_3) & = E_S + \beta_0 \tau_0 + (1 - \alpha_0) \tilde{\kappa}_D^{(0)}{}^2 |\tau_3| + \varepsilon_D^{(1)}(\tau_0, \tau_3), \quad \text{where} \\
\varepsilon_D^{(1)}(\tau_0, \tau_3) & = O(|\tau_0|^2 + |\tau_3|^2) \ \ \text{as} \ \ |\tau_0|, \, |\tau_3| \to 0,
\end{align}
and
\begin{align}
{\bm D}^\pm(\tau_0, \tau_3) & = {\bm M} \pm \sqrt{|\tau_3|} (\tilde{\kappa}_D^{(0)} + \tilde{\kappa}_D^{(1)}(\tau_0, \tau_3)) \, {\bm e}_3, \quad \text{where} \\
\tilde{\kappa}_D^{(1)}(\tau_0, \tau_3) & = O(|\tau_0| + |\tau_3|) \ \ \text{as} \ \ |\tau_0|, \, |\tau_3| \to 0.
\end{align}
Here,
\begin{equation}
\tilde{\kappa}_D^{(0)} \equiv \sqrt{\Bigl|\frac{\beta_2}{\alpha_2}\Bigr|}
\end{equation}
and ${\bm e}_3$ is the normalized eigenvector satisfying:
\begin{equation}
\sigma_3 {\bm e}_3 = {\rm sgn}(\beta_2 \alpha_2 \tau_3) {\bm e}_3.
\end{equation}
In particular, the degenerate quasimomenta ${\bm D}^\pm(\tau_0, \tau_3)$ are constrained to one of the two coordinate axes bisecting ${\bm M}$: either ${\kappa_1 = 0}$ or ${\kappa_2 = 0}$.
\end{theorem}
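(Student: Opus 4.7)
The plan is to derive Theorem~\ref{thm:M-dgn-ref-2} in two stages: first by specializing Theorem~\ref{thm:M-dgn} to the case $\bm{\tau} = (0,\tau_3)$ and computing the leading-order Dirac quasimomentum explicitly, and then by invoking the residual $\Sigma_3$ symmetry of $H^{\tau_0,(0,\tau_3)}$ to rigidify the entire expansion of $\bm{D}^\pm$ onto a single coordinate axis.

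First, with $\tau_1 = 0$ and $\tau_3 \neq 0$, the polar coordinates in \eqref{eq:def-tau-polar} give $|\bm{\tau}| = |\tau_3|$ and $\varphi = \pi/2 \cdot \operatorname{sgn}(\tau_3)$. Substituting these into \eqref{eq:def-kap-D-0-2}--\eqref{eq:def-kap-D-0-3}, the $\cos(\varphi)$ terms vanish and one obtains
\[
|\bm{\kappa}_D^{(0)}(\varphi)| \;=\; \sqrt{|\beta_2/\alpha_2|},
\qquad
\theta_D^{(0)}(\varphi) \;=\; \tfrac{1}{2}\operatorname{Arg}\bigl(\operatorname{sgn}(\tau_3)\,\beta_2/\alpha_2\bigr),
\]
which takes the value $0$ or $\pi/2$ depending on the sign of $\beta_2 \alpha_2 \tau_3$. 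Thus $\bm{\kappa}_D^{(0)} = \sqrt{|\beta_2/\alpha_2|}\,\bm{e}_3$, where $\bm{e}_3$ is the unit eigenvector of $\sigma_3$ with eigenvalue $\operatorname{sgn}(\beta_2\alpha_2\tau_3)$, matching the statement of the theorem. Theorem~\ref{thm:M-dgn} then yields the existence of the analytic Dirac energy $E_D(\tau_0,\tau_3)$ and Dirac quasimomentum curves, together with the stated orders of the remainders $\varepsilon_D^{(1)}$ and the vector correction $\bm{\kappa}_D^{(1)}(\tau_0,|{\bm\tau}|;\varphi)$, and certifies them to be Dirac points via \ref{itm:dir-pt-1}--\ref{itm:dir-pt-4}.

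The nontrivial part is to upgrade the generic vector correction $\bm{\kappa}_D^{(1)}$ of Theorem~\ref{thm:M-dgn} to a purely scalar correction $\tilde{\kappa}_D^{(1)}(\tau_0,\tau_3)$ along $\bm{e}_3$. The key observation is that, in the table of Figure~\ref{fig:deform-1}, the deformation parameters $(\tau_0,(0,\tau_3))$ preserve the $\Sigma_3$ symmetry: since $\sigma_3^{\mathsf T} \sigma_1 \sigma_3 = -\sigma_1$ and $\sigma_3^{\mathsf T}\sigma_3\sigma_3 = \sigma_3$, a direct computation using Lemma~\ref{lem:chain} shows that $H^{\tau_0,(0,\tau_3)}$ commutes with $\Sigma_3$. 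Consequently, if $(E_D,\bm{D}^+)$ is a Dirac point, so is $(E_D,\sigma_3\bm{D}^+)$ (modulo $(\Z^2)^*$). Combined with Proposition~\ref{prop:gl-P-sym}, which gives $\bm{D}^- = 2\bm{M}-\bm{D}^+$, and the uniqueness of the two Dirac points in a neighborhood of $\bm{M}$ (itself a consequence of the implicit function theorem argument underlying Theorem~\ref{thm:M-dgn}), one must have either $\sigma_3(\bm{D}^+ - \bm{M}) = \bm{D}^+ - \bm{M}$ or $\sigma_3(\bm{D}^+ - \bm{M}) = -(\bm{D}^+ - \bm{M})$. The leading-order calculation above selects the eigenspace with eigenvalue $\operatorname{sgn}(\beta_2\alpha_2\tau_3)$, and by continuity of the branch in $(\tau_0,\tau_3)$ this selection persists for all $(\tau_0,\tau_3)$ near $(0,0)$. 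Hence $\bm{D}^+(\tau_0,\tau_3) - \bm{M}$ lies in the $\pm 1$ eigenspace of $\sigma_3$, which is spanned by $\bm{e}_3$, forcing $\bm{\kappa}_D^{(1)} = \tilde{\kappa}_D^{(1)}\bm{e}_3$ with $\tilde{\kappa}_D^{(1)}$ a scalar analytic function vanishing to the stated order.

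The main obstacle is the last symmetry reduction — specifically, verifying that the locally unique Dirac branch produced by the implicit function theorem in the proof of Theorem~\ref{thm:M-dgn} is in fact mapped to itself (rather than to the partner branch) by $\Sigma_3$, so that one may conclude $\bm{D}^+$ is fixed pointwise by $\Sigma_3$. This is settled by the leading-order formula $\bm{\kappa}_D^{(0)} \parallel \bm{e}_3$ together with the fact that $\Sigma_3$ preserves $\bm{M}$ modulo the dual lattice, so a $\Sigma_3$-swap of $\bm{D}^+$ and $\bm{D}^-$ would contradict the inversion pairing $\bm{D}^- = 2\bm{M}-\bm{D}^+$ for small but nonzero $\tau_3$. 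The expansion of $E_D$ in \eqref{eq:def-ED} specializes to the stated form by substituting $|\bm{\kappa}_D^{(0)}|^2 = |\beta_2/\alpha_2|$ into \eqref{eq:def-ED}, yielding the $(1-\alpha_0)\tilde{\kappa}_D^{(0)\,2}|\tau_3|$ contribution. The proof of Theorem~\ref{thm:M-dgn-ref-1} is entirely parallel, using the $\Sigma_1$ symmetry preserved when $\tau_3 = 0$ in place of $\Sigma_3$.
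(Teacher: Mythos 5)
Your proposal is correct in substance but reaches the conclusion by a genuinely different route from the paper. The paper works at the level of the Lyapunov--Schmidt reduction: Proposition \ref{prop:M-dgn-ref} shows, by commuting $\Sigma_3$ through the resolvent and the operator $\mathcal{A}$, that the off-diagonal component $h_1(\varepsilon,{\bm \kappa};\tau_0,{\bm \tau})$ of $\mathcal{M}$ vanishes \emph{identically} whenever $\sigma_3{\bm \kappa}=\pm{\bm \kappa}$; one then restricts the ansatz to ${\bm \kappa}=\sqrt{|\tau_3|}\,\tilde\kappa\,{\bm e}_3$ from the outset, which collapses the three-equation system \eqref{eq:hs-0} to two equations in the two scalar unknowns $(\varepsilon,\tilde\kappa)$, runs the implicit function theorem in that reduced setting (with the sign $\chi={\rm sgn}(\alpha_2\beta_2\tau_3)$ fixed by solvability of the leading-order equation $\chi\alpha_2\tilde\kappa^2=\beta_2\tau_3$), and finally identifies the constrained solution with the general one of Theorem \ref{thm:M-dgn} by uniqueness. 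You instead take the output of Theorem \ref{thm:M-dgn} as given and argue at the operator level: $H^{\tau_0,(0,\tau_3)}$ commutes with $\Sigma_3$, $\sigma_3{\bm M}\equiv{\bm M}$ modulo $(\Z^2)^*$, so $\Sigma_3$ permutes the (locally unique) pair $\{{\bm D}^+,{\bm D}^-\}$, forcing ${\bm D}^+-{\bm M}$ into an eigenspace of $\sigma_3$; the leading-order formula for ${\bm \kappa}_D^{(0)}(\pm\pi/2)$ then identifies the eigenvalue as ${\rm sgn}(\beta_2\alpha_2\tau_3)$. Your approach is more economical (no second application of the implicit function theorem) and makes the geometric mechanism transparent; the paper's approach has the advantage of exhibiting the exact vanishing of $h_1$ on the symmetry line, which is the structural fact underlying both arguments. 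Both proofs lean equally on the uniqueness of the two degeneracies near $(E_S,{\bm M})$.

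One sentence in your final paragraph should be corrected: it is not true that ``a $\Sigma_3$-swap of ${\bm D}^+$ and ${\bm D}^-$ would contradict the inversion pairing,'' nor is it necessary that ${\bm D}^+$ be fixed pointwise by $\Sigma_3$. When ${\rm sgn}(\beta_2\alpha_2\tau_3)=-1$ the swap genuinely occurs: $\Sigma_3{\bm D}^+={\bm D}^-$, which is equivalent to $\sigma_3({\bm D}^+-{\bm M})=-({\bm D}^+-{\bm M})$ and places ${\bm D}^\pm$ on the $-1$ eigenline $\kappa_1=0$, exactly as the theorem asserts. Your second paragraph already handles the dichotomy correctly --- both alternatives constrain ${\bm D}^+-{\bm M}$ to a coordinate axis, and the leading-order computation selects which --- so the erroneous sentence is dispensable rather than fatal; simply delete the claimed contradiction and let the dichotomy plus the formula for ${\bm \kappa}_D^{(0)}$ carry the argument.
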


\begin{remark}
In the case of each special deformation, the additional hypothesis \ref{itm:quad-dgn-6} can be slightly relaxed. For example, the value of $\beta_1$ is unimportant in the case ${\tau_1 = 0}$, ${\tau_3 \neq 0}$.
\end{remark}

\subsection{Rotational symmetry breaking}
\label{sec:break-rot}

Results analogous to Theorems \ref{thm:M-dgn} 
can be obtained when rotational invariance is broken. To illustrate this, consider the effect of perturbation by an even potential which breaks $\pi/2$-rotational symmetry. Consider the Schr\"{o}dinger operator ${H_{V \, + \, \delta W} = -\Delta + V +\delta W}$, where $V$ is a square lattice potential (Definition \ref{def:sql-pot}) and $W$ is $\Z^2$-periodic, and such that
\begin{equation}
\mathcal{P}[W]({\bm x}) = W(-{\bm x}) = W({\bm x}) \quad \text{and} \quad \mathcal{R}[W]({\bm x}) = W(R^\mathsf{T} {\bm x}) \neq W({\bm x}).
\end{equation}
By \cite[Theorem 4.3]{keller2018spectral}, the quadratic degeneracy does not persist. In this case, the effective Hamiltonian \eqref{eq:kmow-Heff_2} \eqref{eq:Leff-quad} governing the band structure near ${(E_S, \, {\bm M})}$ is:
\begin{equation}
\label{eq:Heff-norot}
H^{\bm M}_{\rm eff}({\bm \kappa}; \delta) = \bigl( \delta \theta'_0 + (1 - \alpha_0) |{\bm \kappa}|^2 \bigr) \, I + (\delta \theta'_1 - \alpha_1 {\bm \kappa} \cdot \sigma_1 {\bm \kappa}) \, \sigma_1 + (\delta \theta'_2 - \alpha_2 {\bm \kappa} \cdot \sigma_3 {\bm \kappa}) \, \sigma_2
\end{equation}
where $\theta'_0$, $\theta'_1$ and $\theta'_2$ are real parameters given by the expressions:
\begin{equation}
\theta'_0 \equiv \langle \Phi_1, \, W \Phi_1 \rangle, \quad 
\theta'_1 \equiv {\rm Re} \, \langle \Phi_1, \, W \Phi_2 \rangle, \quad \theta'_2 \equiv -{\rm Im} \, \langle \Phi_1, \, W \Phi_2 \rangle.
\end{equation}
The effective Hamiltonian in \eqref{eq:Heff-norot} does not have a $\sigma_3$ term since $\mathcal{PC}$ symmetry is not broken. For ${\bm \kappa} = {\bm 0}$, this reduces to the conclusion \cite[Theorem 4.3]{keller2018spectral}. In analogy with the discussion at the start of Section \ref{sec:M-dgn}, the Hamiltonian \eqref{eq:Heff-norot} anticipates a splitting into distinct Dirac points.

\bigskip

\section{Set-up for the proofs of main results}
\label{sec:set-up}

\setcounter{equation}{0}
\setcounter{figure}{0}

Throughout this section:
\begin{enumerate}
\item $V$ denotes a square lattice potential in the sense of Definition \ref{def:sql-pot}.
\item ${(E_S, \, {\bm M})}$ denotes a quadratic band degeneracy point in the sense of Section \ref{sec:kmow}, satisfying conditions \ref{itm:quad-dgn-1} -- \ref{itm:quad-dgn-6}.
\item We use the operator $T_* H_V$ arising from $H_{V \circ \, T^{-1}}$ by change of variables:
\begin{equation}
\label{eq:H-phi_3}
T_* H_V = H^{\tau_0, {\bm \tau}} = -\Delta + V + \tau_0 \Delta + |{\bm \tau}| \nabla \cdot \sigma_\varphi \nabla, \quad \text{where} \quad \sigma_\varphi = \cos(\varphi) \sigma_1 + \sin(\varphi) \sigma_3,
\end{equation}
and consider $\tau_0$, $|{\bm \tau}|$ in a neighborhood of zero, which is independent of ${\varphi \smallin (-\pi, \, \pi]}$.
\end{enumerate}

\subsection{Reduction to a local analysis about ${(E_S, {\bm M})}$}
\label{sec:reduction}

We first show that for ${(\tau_0, \, {\bm \tau})}$ sufficiently small, the band structure of $H^{\tau_0, {\bm \tau}}$ near ${(E_S, \, {\bm M})}$ is determined by a ${2 \times 2}$ self-adjoint matrix-valued analytic function of the energy-quasimomentum displacement ${(\varepsilon, \, {\bm \kappa})}$ from ${(E_S, {\bm M})}$. This reduction serves as the starting point for the proofs of Theorems \ref{thm:M-srf} and \ref{thm:M-dgn}.

\begin{proposition}
\label{prop:red-M}
There exist $\varepsilon^\flat$, $\kappa^\flat$, $\tau_0^\flat$, ${\tau^\flat > 0}$, and a ${2 \times 2}$ matrix-valued function $\mathcal{M} = \left(\mathcal{M}_{ij}(\varepsilon; {\bm \kappa}, \tau_0, {\bm \tau})\right)$, where
 $1\le i,j\le 2$, defined for ${|\varepsilon| < \varepsilon^\flat}$, ${|{\bm \kappa}| < \kappa^\flat}$, ${|\tau_0| < \tau_0^\flat}$, ${|{\bm \tau}| < \tau^\flat}$, such that:
\begin{enumerate}
\item \label{itm:det-calM-0} ${E_S + \varepsilon}$ is an $L^2_{{\bm M} + {\bm \kappa}}$ eigenvalue of $H^{\tau_0, {\bm \tau}}$, with ${|\varepsilon| < \tilde{\varepsilon}}$, ${|{\bm \kappa}| < \tilde{\kappa}}$, if and only if
\begin{equation}
\label{eq:det-calM-0}
\det(\mathcal{M}(\varepsilon; {\bm \kappa}, \tau_0, {\bm \tau})) = 0.
\end{equation}
\item \label{itm:calM-0} ${E_S + \varepsilon}$ is an $L^2_{{\bm M} + {\bm \kappa}}$ eigenvalue of $H^{\tau_0, {\bm \tau}}$, with ${|\varepsilon| < \tilde{\varepsilon}}$, ${|{\bm \kappa}| < \tilde{\kappa}}$, of multiplicity two, if and only if 
\begin{equation}
\label{eq:calM-0}
\mathcal{M}_{i, j}(\varepsilon; {\bm \kappa}, \tau_0, {\bm \tau}) = 0,\quad i,j=1,2.
\end{equation}
\end{enumerate}
\end{proposition}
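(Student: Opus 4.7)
The plan is to convert the eigenvalue problem posed on the $\bm{\kappa}$-dependent space $L^2_{\bm{M}+\bm{\kappa}}$ into an equivalent problem on the fixed space $L^2_{\bm{M}}$, and then carry out a Lyapunov--Schmidt/Schur complement reduction around the two-dimensional eigenspace $X_0 \equiv {\rm ker}_{L^2_{\bm{M}}}(H - E_S)$, which by hypothesis \ref{itm:quad-dgn-1} is spanned by $\{\Phi_1,\Phi_2\}$.

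First, setting $\Phi = e^{i\bm{\kappa}\cdot\bm{x}}\phi$ gives a bijection $\Phi \in L^2_{\bm{M}+\bm{\kappa}} \leftrightarrow \phi \in L^2_{\bm{M}}$, and the eigenvalue equation $H^{\tau_0,\bm{\tau}}\Phi = (E_S+\varepsilon)\Phi$ transforms into $(H + \mathcal{W}(\bm{\kappa},\tau_0,\bm{\tau}) - E_S - \varepsilon)\phi = 0$ on $L^2_{\bm{M}}$, where a direct calculation using \eqref{eq:H-phi_3} gives
\begin{equation*}
\mathcal{W}(\bm{\kappa},\tau_0,\bm{\tau}) = \tau_0\Delta + |\bm{\tau}|\,\nabla\cdot\sigma_\varphi\nabla - 2i\bm{\kappa}\cdot B\nabla + \bm{\kappa}\cdot B\bm{\kappa},\qquad B \equiv (T^{\mathsf{T}}T)^{-1}.
\end{equation*}
As an operator $H^2_{\bm{M}}\to L^2_{\bm{M}}$, $\mathcal{W}$ is polynomial in $(\bm{\kappa},\tau_0,\bm{\tau})$, with operator norm $O(|\bm{\kappa}|+|\tau_0|+|\bm{\tau}|)$.

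Next, I decompose $L^2_{\bm{M}} = X_0 \oplus X_0^\perp$ with orthogonal projections $\Pi^\parallel$ and $\Pi^\perp = I - \Pi^\parallel$, and write $\phi = \phi^\parallel + \phi^\perp$ with $\phi^\parallel = c_1\Phi_1 + c_2\Phi_2$. Projecting the eigenvalue equation by $\Pi^\parallel$ and $\Pi^\perp$, and using $(H - E_S)\phi^\parallel = 0$, yields the coupled system
\begin{align*}
-\varepsilon\phi^\parallel + \Pi^\parallel\mathcal{W}\phi^\parallel + \Pi^\parallel\mathcal{W}\phi^\perp &= 0, \\
\bigl[(H - E_S - \varepsilon) + \Pi^\perp\mathcal{W}\Pi^\perp\bigr]\phi^\perp &= -\Pi^\perp\mathcal{W}\phi^\parallel.
\end{align*}
Because $E_S$ is an isolated eigenvalue of $H$ on $L^2_{\bm{M}}$ (by \ref{itm:quad-dgn-1} and the discreteness of the Floquet spectrum recalled in Section \ref{sec:fb-thy}), $(H - E_S)|_{X_0^\perp}$ has a bounded inverse as a map $L^2_{\bm{M}} \to H^2_{\bm{M}}$. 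A Neumann-series argument then shows that for $(\varepsilon,\bm{\kappa},\tau_0,\bm{\tau})$ in a sufficiently small joint neighborhood of the origin, the bracketed operator is invertible on $X_0^\perp$ with inverse depending analytically on the parameters. Solving for $\phi^\perp = \mathcal{K}(\varepsilon;\bm{\kappa},\tau_0,\bm{\tau})\phi^\parallel$ and substituting into the first equation produces the reduced $2\times 2$ system $\mathcal{M}(\varepsilon;\bm{\kappa},\tau_0,\bm{\tau})\,{\bm c} = \bm{0}$, with ${\bm c}=(c_1,c_2)^{\mathsf{T}}$ and entries
\begin{equation*}
\mathcal{M}_{ij}(\varepsilon;\bm{\kappa},\tau_0,\bm{\tau}) = -\varepsilon\,\delta_{ij} + \langle \Phi_i,\, \mathcal{W}\Phi_j\rangle + \langle \Phi_i,\, \mathcal{W}\,\mathcal{K}\,\Phi_j\rangle
\end{equation*}
jointly analytic in the parameters, and forming a self-adjoint matrix (inherited from self-adjointness of $H^{\tau_0,\bm{\tau}}$ on $L^2_{\bm{M}+\bm{\kappa}}$).

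Both claims of the proposition then follow: the correspondence ${\bm c} \mapsto e^{i\bm{\kappa}\cdot\bm{x}}\bigl(\phi^\parallel + \mathcal{K}\phi^\parallel\bigr)$ is a dimension-preserving linear bijection between ${\rm ker}(\mathcal{M})$ and the $(E_S+\varepsilon)$-eigenspace in $L^2_{\bm{M}+\bm{\kappa}}$, since $\phi^\parallel = \bm{0}$ forces $\phi^\perp = \bm{0}$ via the uniquely solvable second projected equation. So a nontrivial eigenfunction exists iff $\det\mathcal{M} = 0$, giving \ref{itm:det-calM-0}; and the eigenvalue has multiplicity two iff ${\rm dim}({\rm ker}(\mathcal{M})) = 2$, which by Lemma \ref{lem:2d-ker} is equivalent to $\mathcal{M} = 0$, i.e.\ all four entries $\mathcal{M}_{ij}$ vanish, giving \ref{itm:calM-0}. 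The principal technical point is the uniform invertibility and joint analytic dependence of $[(H - E_S - \varepsilon) + \Pi^\perp\mathcal{W}\Pi^\perp]|_{X_0^\perp}$; since $\mathcal{W}$ contains the unbounded differential operators $\Delta$, $\nabla\cdot\sigma_\varphi\nabla$ and $\bm{\kappa}\cdot B\nabla$, this requires careful tracking of $H^2_{\bm{M}}\to L^2_{\bm{M}}$ mapping properties and use of elliptic regularity to close the Neumann expansion, as is standard in perturbative spectral analyses of periodic elliptic operators.
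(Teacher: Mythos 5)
Your proposal is correct and follows essentially the same Lyapunov--Schmidt/Schur-complement route as the paper: decompose around the two-dimensional eigenspace, solve the $\Pi^\perp$-projected equation by Neumann series, substitute into the $\Pi^\parallel$-projected equation to produce a $2\times 2$ matrix whose kernel is in linear bijection with the eigenspace, and invoke Lemma~\ref{lem:2d-ker} for part~\ref{itm:calM-0}. The only difference is cosmetic: you work directly in $L^2_{\bm M}$ with the conjugated perturbation $\mathcal{W}=-2i\bm\kappa\cdot B\nabla+\bm\kappa\cdot B\bm\kappa+\nabla\cdot(I-B)\nabla$, while the paper factors out $e^{i\bm M\cdot\bm x}$ and works in $L^2(\R^2/\Z^2)$ with $\mathcal{L}({\bm\kappa};\tau_0,{\bm\tau})$; these are unitarily equivalent (indeed $\mathcal{W}=-\mathcal{L}$ after conjugation), and your explicit remark that the bijection is injective because $\phi^\parallel=0$ forces $\phi^\perp=0$ is a useful point the paper leaves implicit.
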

The essential properties of $\mathcal{M}(\varepsilon; {\bm \kappa}, \tau_0, {\bm \tau})$ are presented in Proposition \ref{prop:calM-pr} below. \\

\noindent {\it Proof of Proposition \ref{prop:red-M}.} Recall that 
\begin{equation}
\label{eq:H-phi-k_1}
H^{\tau_0, {\bm \tau}}({\bm k}) = - (\nabla + i{\bm k})^2 + V + \tau_0 (\nabla + i{\bm k})^2 + |{\bm \tau}| (\nabla + i{\bm k}) \cdot \sigma_\varphi (\nabla + i{\bm k}),
\end{equation}
which acts in $L^2(\mathbb{R}^2/\Z^2)$. We expand
$H^{\tau_0, {\bm \tau}}({\bm k})$ for ${\bm k}$ in a neighborhood of $\bm M$. Using the abbreviated notation
\begin{equation}
\label{eq:def_H-M}
\nabla_{\bm M} \equiv \nabla + i{\bm M}, \quad H({\bm M}) \equiv -\nabla_{\bm M} \cdot \nabla_{\bm M} + V,
\end{equation}
we obtain
\begin{equation}
\label{eq:H-phi-k_ex}
H^{\tau_0, {\bm \tau}}({\bm M} + {\bm \kappa}) = H({\bm M}) - 2i {\bm \kappa} \cdot \nabla_{\bm M} + |{\bm \kappa}|^2  + \tau_0 (\nabla_{\bm M} + i {\bm \kappa})^2 + |{\bm \tau}| (\nabla_{\bm M} + i {\bm \kappa}) \cdot \sigma_\varphi (\nabla_{\bm M} + i {\bm \kappa}).
\end{equation}

Next, via a Schur complement/Lyapunov-Schmidt reduction procedure, we reformulate the Floquet-Bloch eigenvalue problem:
\begin{equation}
\label{eq:fb_evp_2a}
 H^{\tau_0, {\bm \tau}}({\bm M} + {\bm \kappa}) \phi = E \phi, \quad \phi \in L^2(\mathbb{R}^2/\mathbb{Z}^2),
\end{equation}
for ${\bm \kappa} \equiv {\bm k} - {\bm M}$ small, and energies $E$ near $E_S$.  We seek $(\phi, \, E)$, $\phi \ne 0$, expanded as: 
\begin{align}
\label{eq:evp-asz_1}
\phi & = a_1 \phi_1 + a_2 \phi_2 + \phi^{(1)}, \quad a_1, \, a_2 \smallin \C, \quad \phi^{(1)} \smallin {\rm ker}(H_{\bm M} - E_S)^\perp, \\
E & = E_S + \varepsilon, \quad \varepsilon \smallin \R.
\end{align}
Here, ${{\rm ker}(H({\bm M}) - E_S) = {\rm span}\{ \phi_1, \, \phi_2 \}}$, and $a_1$, $a_2$, $\phi^{(1)}$, and $\varepsilon$ are to be determined.
 Substituting into the eigenvalue problem \eqref{eq:fb_evp_2a}, we obtain an equivalent problem: 
\begin{equation}
\label{eq:space}
\textrm{Determine\ \ $(a_1, a_2) \neq (0, 0)$, $\phi^{(1)}\in L^2(\R^2/\Z^2)$, and $\varepsilon\in\R$, }
\end{equation}
as functions of parameters ${\bm \kappa}$,$\tau_0$, and ${\bm \tau}$, 
such that
 \begin{align}
\label{eq:H-M-inhom}
(H({\bm M}) - E_S) \phi^{(1)} & = F(a_1, a_2, \phi^{(1)}, \varepsilon; {\bm \kappa}, \tau_0, {\bm \tau}), \\
F(a_1, a_2, \phi^{(1)}, \varepsilon; {\bm \kappa}, \tau_0, {\bm \tau}) & \equiv (\varepsilon + \mathcal{L}({\bm \kappa}; \tau_0, {\bm \tau})) (a_1 \phi_1 + a_2 \phi_2 + \phi^{(1)}).
\nonumber\end{align}
Here, 
\begin{align}
\label{eq:def_calL}
\mathcal{L}({\bm \kappa}; \tau_0, {\bm \tau}) \equiv 2i {\bm \kappa} \cdot \nabla_{\bm M} - |{\bm \kappa}|^2 - \tau_0(\nabla_{\bm M} + i {\bm \kappa})^2 - |{\bm \tau}| (\nabla_{\bm M} + i {\bm \kappa}) \cdot \sigma_\varphi (\nabla_{\bm M} + i {\bm \kappa}).
\end{align}

Now introduce the orthogonal projections:
\begin{align}
\label{eq:def_proj-M}
\Pi_{\bm M}^\parallel & \equiv \phi_1 \langle \phi_1, \, \cdot \rangle  + \phi_2 \langle \phi_2, \, \cdot \rangle : L^2(\R^2/\Z^2) \to {\rm ker}(H({\bm M}) - E_S), \\
\Pi_{\bm M}^\perp & \equiv 1 - \Pi_{\bm M}^\parallel: L^2(\R^2/\Z^2) \to {\rm ker}(H({\bm M})- E_S)^\perp.
\end{align}
 Then, \eqref{eq:space}, \eqref{eq:H-M-inhom} for $(a_1, a_2, \phi^{(1)}, \varepsilon)\in\C\times\C\times L^2(\R^2/\Z^2)\times\R$ is equivalent to:
\begin{align}
\label{eq:phi-1}
& (H({\bm M}) - E_S) \phi^{(1)} = \Pi_{\bm M}^\perp F(a_1, a_2, \phi^{(1)}, \varepsilon; {\bm \kappa}, \tau_0, {\bm \tau}), \\
\label{eq:red-M}
& \Pi_{\bm M}^\parallel F(a_1, a_2, \phi^{(1)}, \varepsilon; {\bm \kappa}, \tau_0, {\bm \tau}) = 0.
\end{align}

Our next steps are to solve \eqref{eq:phi-1} for $\phi^{(1)}$ as a functional of $a_1$, $a_2$, and $\varepsilon$, and to then substitute $\phi_1[a_1, a_2, \varepsilon]$ into \eqref{eq:red-M} to obtain a closed system of equations, depending on ${\bm \kappa}$, $\tau_0$, and ${\bm \tau}$, for the unknowns $a_1$, $a_2$, and $\varepsilon$. \\

\noindent {\bf N.B.} We occasionally suppress the dependence of functions on some or all of the parameters ${\bm \kappa}$, $\tau_0$, and ${\bm \tau}$.

\subsubsection{Constructing $(a_1,a_2,\varepsilon)\mapsto\phi^{(1)}[a_1,a_2,\varepsilon]$}
\label{sec:phi-1}

From \eqref{eq:phi-1}, we have 
\begin{align}
\label{eq:phi-1_1}
(H({\bm M}) - E_S) \phi^{(1)} - (\varepsilon + \Pi_{\bm M}^\perp \mathcal{L}({\bm \kappa})) \phi^{(1)} = \Pi_{\bm M}^\perp \mathcal{L}({\bm \kappa}) (a_1 \phi_1 + a_2 \phi_2).
\end{align}
Introduce the resolvent
\begin{equation}
\label{eq:res-M}
\mathscr{R}_{\bm M}(E_S) \equiv \Pi^\perp_{\bm M} (H_{\bm M} - E_S)^{-1} \Pi^\perp_{\bm M} : L^2(\R^2/\Z^2) \to  H^2(\R^2/\Z^2),
\end{equation}
which is a bounded linear operator. Applying $\mathscr{R}_{\bm M}(E_S)$ to  \eqref{eq:phi-1_1} yields
 \begin{equation}
\label{eq:phi-1_2}
\bigl(I - \mathscr{R}_{\bm M}(E_S) (\varepsilon + \mathcal{L}({\bm \kappa}))\bigr) \phi^{(1)} = \mathscr{R}_{\bm M}(E_S) \mathcal{L}({\bm \kappa}) (a_1 \phi_1 + a_2 \phi_2).
\end{equation}

Next, observe that there exists a constant ${C > 0}$ such that, for $|\varepsilon|$, $|{\bm \kappa}|$, $|\tau_0|$, and $|{\bm \tau}|$ sufficiently small, the  norm of $\mathscr{R}_{\bm M}(E_S)( \varepsilon + \mathcal{L}({\bm \kappa}))$, as an operator on $L^2_{\bm M}$, satisfies the bound
\begin{equation}
\label{eq:phi-1-op_bd}
\lVert \mathscr{R}_{\bm M}(E_S)( \varepsilon + \mathcal{L}({\bm \kappa})) \rVert \leq C (|\varepsilon| + |{\bm \kappa}| + |\tau_0| + |{\bm \tau}|).
\end{equation}
It follows that there exist $\varepsilon^\flat$, $\kappa^\flat$, $\tau_0^\flat$, ${\tau^\flat > 0}$ such that, for ${|\varepsilon| < \varepsilon^\flat}$, ${|{\bm \kappa}| < \kappa^\flat}$, ${|\tau_0| < \tau_0^\flat}$, ${|{\bm \tau}| < \tau^\flat}$, \eqref{eq:phi-1_2} is solvable for ${\phi^{(1)} \smallin H^2(\R^2/\Z^2)}$:
\begin{align}
\label{eq:phi-1_3}
\phi^{(1)} = \bigl(I - \mathscr{R}_{\bm M}(E_S) (\varepsilon + \mathcal{L}({\bm \kappa}))\bigr)^{-1} \mathscr{R}_{\bm M}(E_S) \mathcal{L}({\bm \kappa}) (a_1 \phi_1 + a_2 \phi_2).
\end{align}
For fixed parameters ${\bm \kappa}$, $\tau_0$, and ${\bm \tau}$, \eqref{eq:phi-1_3} defines a mapping
\begin{equation}
\label{eq:phi-1-map}
(a_1, a_2, \varepsilon) \mapsto \phi^{(1)}[a_1, a_2, \varepsilon] \smallin {\rm ker}(H_{\bm M} - E_S))^\perp.
\end{equation}

\subsubsection{The reduced equation}
\label{sec:red-M}

For fixed ${\bm \kappa}$, $\tau_0$, and ${\bm \tau}$, substituting the result \eqref{eq:phi-1-map} into \eqref{eq:red-M} yields
\begin{equation}
\label{eq:reduced}
\Pi_{\bm M}^\parallel F(a_1 \phi_1 + a_2 \phi_2, \phi^{(1)}[a_1, a_2, \varepsilon], \varepsilon) = 0.
\end{equation}
or equivalently
\[
\left\langle \Phi_j, F(a_1 \phi_1 + a_2 \phi_2, \phi^{(1)}[a_1, a_2, \varepsilon], \varepsilon)\right\rangle=0,\ j=1,2. 
\]

This is equivalent to a system of two linear, homogeneous equations for $a_1$ and $a_2$:
\begin{equation}
\label{eq:calM-a-0}
\mathcal{M}(\varepsilon; {\bm \kappa}, \tau_0, {\bm \tau}) \!
\begin{bmatrix}
a_1 \\
a_2
\end{bmatrix}
= 0.
\end{equation}
The matrix $\mathcal M$ is displayed in Section \ref{apx:calM-pr}.
This system is to be solved for $(a_1, a_2)\ne(0,0)$, and $\varepsilon$, with ${|\varepsilon| < \varepsilon^\flat}$, ${|{\bm \kappa}| < \kappa^\flat}$, ${|\tau_0| < \tau_0^\flat}$, ${|{\bm \tau}| < \tau^\flat}$.

 To prove  conclusion \ref{itm:det-calM-0} of Proposition \ref{prop:red-M}, note that \eqref{eq:calM-a-0} is has a non-trivial solution if and only if
\begin{equation}
\label{eq:det-calM-0_2}
\det\mathcal{M}(\varepsilon; {\bm \kappa}, \tau_0, {\bm \tau}) = 0.
\end{equation}
It follows from the ansatz \eqref{eq:evp-asz_1} that $E_S + \varepsilon$ is an $L^2_{{\bm M} + {\bm \kappa}}$ eigenvalue of $H^{\tau_0, {\bm \tau}}$ if and only if \eqref{eq:det-calM-0_2} holds.

Conclusion \ref{itm:calM-0} of Proposition \ref{prop:red-M} follows by noting that  
${E_S + \varepsilon}$ is a twofold degenerate $L^2_{{\bm M} + {\bm \kappa}}$ eigenvalue of $H^{\tau_0, {\bm \tau}}$ if and only if  $\varepsilon$ and ${\bm \kappa}$ are such that
${{\rm dim}({\rm ker}(\mathcal{M}(\varepsilon; {\bm \kappa}, \tau_0, {\bm \tau}))) = 2}$.
By Lemma \ref{lem:2d-ker}  
\begin{equation}
\label{eq:calM-0_2}
\mathcal{M}(\varepsilon; {\bm \kappa}, \tau_0, {\bm \tau}) = \left( \mathcal{M}_{ij}(\varepsilon; {\bm \kappa}, \tau_0, {\bm \tau})\right)_{1\le i,j\le 2}  = 0.
\end{equation}
 The proof of Proposition \ref{prop:red-M} is now complete.

\subsection[TEXT]{Properties of $\mathcal{M}(\varepsilon; {\bm \kappa}, \tau_0, {\bm \tau})$}
\label{sec:calM-pr}

We conclude this section by recording detailed properties of the ${2 \times 2}$ matrix $\mathcal{M}(\varepsilon; {\bm \kappa}; \tau_0, {\bm \tau})$, defined in \eqref{eq:calM-a-0}, and displayed in Appendix \ref{apx:calM-pr}. The proof  is given in Appendix \ref{apx:calM-pr}.

\begin{proposition}
\label{prop:calM-pr}
For ${|\varepsilon| < \varepsilon^\flat}$, ${|{\bm \kappa}| < \kappa^\flat}$, ${|\tau_0| < \tau_0^\flat}$ and  ${|{\bm \tau}| < \tau^\flat}$, the matrix $\mathcal{M}(\varepsilon; {\bm \kappa}, \tau_0, {\bm \tau})$ satisfies the following properties:
\begin{enumerate}
\item \label{itm:calM-sa} $\mathcal{M}(\varepsilon; {\bm \kappa}, \tau_0, {\bm \tau})$ is self-adjoint.
\item \label{itm:calM-an} The entries $\mathcal{M}_{j, k}(\varepsilon; {\bm \kappa}, \tau_0, {\bm \tau})$ define analytic functions; in particular, they have expansions in power series in $(\varepsilon; {\bm \kappa}, \tau_0, {\bm \tau})$ which converge uniformly on their domain.
\item \label{itm:calM-ev} By $\mathcal{P}$ (or $\mathcal{C}$) symmetry,
\begin{equation}
\label{eq:calM-ev}
\mathcal{M}(\varepsilon; {\bm \kappa}, \tau_0, {\bm \tau}) = \mathcal{M}(\varepsilon; -{\bm \kappa}, \tau_0, {\bm \tau}).
\end{equation}
\item \label{itm:calM-PC} By $\mathcal{PC}$ symmetry,
\begin{equation}
\label{eq:calM-PC}
\mathcal{M}(\varepsilon; {\bm \kappa}, \tau_0, {\bm \tau})_{1, 1} = \mathcal{M}(\varepsilon; {\bm \kappa}, \tau_0, {\bm \tau})_{2, 2}.
\end{equation}
\item \label{itm:calM-ex-1} $\mathcal{M}(\varepsilon; {\bm \kappa}, \tau_0, {\bm \tau})$ has the expansion
\begin{equation}
\label{eq:calM-ex-1}
\mathcal{M}(\varepsilon; {\bm \kappa}, \tau_0, {\bm \tau}) = \varepsilon I - H^{\bm M}_{\rm eff}({\bm \kappa}; \tau_0, {\bm \tau}) + \tilde{\mathcal{M}}(\varepsilon; {\bm \kappa}, \tau_0, {\bm \tau}).
\end{equation}
Here, $H^{\bm M}_{\rm eff}({\bm \kappa}; \tau_0, {\bm \tau})$ is the Fourier symbol of an effective Hamiltonian:
\begin{align}
\label{eq:Heff_2}
H^{\bm M}_{\rm eff}({\bm \kappa}; \tau_0, {\bm \tau}) = \bigl( \beta_0 \tau_0 + (1 - \alpha_0) |{\bm \kappa}|^2 \bigr) \, I + (\beta_1 \tau_1 - \alpha_1 {\bm \kappa} \cdot \sigma_1 {\bm \kappa}) \, \sigma_1 + (\beta_2 \tau_3 - \alpha_2 {\bm \kappa} \cdot \sigma_3 {\bm \kappa}) \, \sigma_2;
\end{align}
see \eqref{eq:M-eff} of Remark \ref{rmk:M-srf-eff}. The entries of the ${2 \times 2}$ matrix $\tilde{\mathcal{M}}(\varepsilon; {\bm \kappa}, \tau_0, {\bm \tau})$ satisfy, for $j, k = 1, 2$,
\begin{equation}
\label{eq:til-calM-bd}
\tilde{\mathcal{M}}_{j, k}(\varepsilon; {\bm \kappa}, \tau_0, {\bm \tau}) = O(|\varepsilon| |{\bm \kappa}|^2 + |{\bm \kappa}|^4 + |{\bm \kappa}|^2 |\tau_0| + |{\bm \kappa}|^2 |{\bm \tau}| + |\tau_0|^2 + |{\bm \tau}|^2)
\end{equation}
as $|\varepsilon|$, $|{\bm \kappa}|$, $|\tau_0|$, ${|{\bm \tau}| \to 0}$.
\end{enumerate}
\end{proposition}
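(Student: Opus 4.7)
The plan is to work directly from the Lyapunov--Schmidt representation of $\mathcal{M}$ derived in Section \ref{sec:red-M}. Substituting the Neumann-series representation \eqref{eq:phi-1_3} into the reduced equation \eqref{eq:reduced} (and noting that $\phi^{(1)}_k \in \mathrm{ker}(H({\bm M}) - E_S)^\perp$, so that $\langle \phi_j, \varepsilon \phi^{(1)}_k \rangle = 0$), the entries of $\mathcal{M}$ take the compact form
\begin{equation*}
\mathcal{M}_{j,k}(\varepsilon; {\bm \kappa}, \tau_0, {\bm \tau}) = \varepsilon \, \delta_{j,k} + \langle \phi_j, \, \mathcal{L}({\bm \kappa}; \tau_0, {\bm \tau}) \, (\phi_k + \phi^{(1)}_k) \rangle,
\end{equation*}
with $\phi^{(1)}_k = (I - \mathscr{R}_{\bm M}(E_S)(\varepsilon + \mathcal{L}))^{-1} \mathscr{R}_{\bm M}(E_S) \mathcal{L} \phi_k$. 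From this representation, parts \ref{itm:calM-sa} and \ref{itm:calM-an} are immediate: self-adjointness follows because $\mathcal{L}({\bm \kappa}; \tau_0, {\bm \tau})$ is formally self-adjoint on $L^2(\R^2/\Z^2)$ (the real-Pauli combination $\sigma_\varphi$ is real symmetric and all parameters are real) and $\mathscr{R}_{\bm M}(E_S)$ is self-adjoint, so $\mathcal{M}$ inherits self-adjointness as the Schur complement of the self-adjoint operator $H^{\tau_0, {\bm \tau}}({\bm M} + {\bm \kappa}) - (E_S + \varepsilon)$; analyticity holds because $\mathcal{L}$ is polynomial of degree at most two in $({\bm \kappa}, \tau_0, {\bm \tau})$ and the Neumann series for $(I - \mathscr{R}_{\bm M}(E_S)(\varepsilon + \mathcal{L}))^{-1}$ converges in operator norm uniformly on the stated neighborhood by the bound \eqref{eq:phi-1-op_bd}.

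For part \ref{itm:calM-ev}, introduce on $L^2(\R^2/\Z^2)$ the involution $\mathcal{P}_{\bm M}[\phi]({\bm x}) := e^{-2i{\bm M} \cdot {\bm x}} \phi(-{\bm x})$, which is the conjugate of spatial inversion $\mathcal{P}$ by the Floquet--Bloch identification $L^2_{{\bm M}+{\bm \kappa}} \cong L^2(\R^2/\Z^2)$. Direct verification (using $e^{2i{\bm M}\cdot{\bm v}} = 1$ for ${\bm v} \in \Z^2$ and $\mathcal{P}\nabla\mathcal{P} = -\nabla$) shows that $\mathcal{P}_{\bm M}^2 = I$, $\mathcal{P}_{\bm M}\phi_k = -\phi_k$ (from $\mathcal{R}\Phi_k = \pm i\Phi_k$ and $\mathcal{P} = \mathcal{R}^2$), $\mathcal{P}_{\bm M}\mathcal{L}({\bm \kappa}; \tau_0, {\bm \tau})\mathcal{P}_{\bm M} = \mathcal{L}(-{\bm \kappa}; \tau_0, {\bm \tau})$, and $\mathcal{P}_{\bm M}\mathscr{R}_{\bm M}(E_S)\mathcal{P}_{\bm M} = \mathscr{R}_{\bm M}(E_S)$. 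Conjugating the displayed formula for $\mathcal{M}_{j,k}$ by $\mathcal{P}_{\bm M}$ then yields $\mathcal{M}({\bm \kappa}) = \mathcal{M}(-{\bm \kappa})$. For part \ref{itm:calM-PC}, the identity $\phi_2 = \mathcal{PC}[\phi_1]$, the commutation of $\mathcal{PC}$ with $H^{\tau_0, {\bm \tau}}({\bm M} + {\bm \kappa})$, and the antilinearity of $\mathcal{C}$ together give $\mathcal{M}_{2,2} = \overline{\mathcal{M}_{1,1}}$; by self-adjointness the diagonal entries are real, so $\mathcal{M}_{1,1} = \mathcal{M}_{2,2}$.

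For part \ref{itm:calM-ex-1}, I would Taylor expand $\mathcal{M}_{j,k}$ about $(\varepsilon; {\bm \kappa}, \tau_0, {\bm \tau}) = (0; {\bm 0}, 0, {\bm 0})$ using the analyticity from part \ref{itm:calM-an}. The $\varepsilon$-linear contribution is exactly $\varepsilon \delta_{j,k}$. The quadratic-in-${\bm \kappa}$ contribution (at $\tau_0 = 0$, ${\bm \tau} = {\bm 0}$) combines the direct term from $\langle \phi_j, \mathcal{L}\phi_k \rangle$ with the leading resolvent contribution $\langle \phi_j, \mathcal{L} \, \mathscr{R}_{\bm M}(E_S) \, \mathcal{L} \phi_k \rangle$ from $\phi^{(1)}_k$; after applying the $\mathcal{R}$- and $\Sigma_1$-symmetries of $\Phi_1, \Phi_2$ (which eliminate all but a few of the inner products $\langle \partial_a\Phi_j, \mathscr{R}(E_S)\partial_b\Phi_k\rangle$), this reduces precisely to the Pauli-decomposed form $-(1-\alpha_0)|{\bm \kappa}|^2 I + \alpha_1 ({\bm \kappa} \cdot \sigma_1 {\bm \kappa})\sigma_1 + \alpha_2 ({\bm \kappa} \cdot \sigma_3 {\bm \kappa})\sigma_2$, matching $-H^{\bm M}_{\rm eff}$ in \eqref{eq:Heff_2}. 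The $\tau_0$- and ${\bm \tau}$-linear contributions arise purely from the direct term $\langle \phi_j, \mathcal{L}\phi_k\rangle$: using $\nabla_{\bm M}\phi_k = e^{-i{\bm M}\cdot{\bm x}}\nabla\Phi_k$ and the definitions \eqref{eq:def-b-par}, they are identified with $-\beta_0 \tau_0 I - \beta_1 \tau_1 \sigma_1 - \beta_2 \tau_3 \sigma_2$. The remainder bound \eqref{eq:til-calM-bd} then follows: part \ref{itm:calM-ev} rules out all odd-in-${\bm \kappa}$ Taylor coefficients, and the gradient structure of $\mathcal{L}$ pairs any additional $\tau_0$ or $\tau_j$ factor appearing beyond the leading terms with at least one extra $\nabla_{\bm M}$ factor, producing the claimed powers of $|{\bm \kappa}|$, $|\tau_0|^{1/2}$, $|{\bm \tau}|^{1/2}$. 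The main obstacle is the symmetry bookkeeping in the quadratic-${\bm \kappa}$ step, where the rotation eigenvalues $\mathcal{R}\Phi_1 = i\Phi_1$, $\mathcal{R}\Phi_2 = -i\Phi_2$, together with $\Sigma_1$-invariance, must be applied carefully to collapse a generic symmetric bilinear form in ${\bm \kappa}$ into the three Pauli structures of \eqref{eq:Heff_2}; this calculation was carried out in the undeformed setting in \cite{keller2018spectral} and extends straightforwardly here, since the deformation parameters $\tau_0, {\bm \tau}$ enter only through direct bilinear forms in the $\nabla\Phi_j$ inner products.
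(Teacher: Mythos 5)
Your proposal is correct and follows essentially the same route as the paper's Appendix E: the same Lyapunov--Schmidt representation of $\mathcal{M}$ (direct term plus resolvent/Neumann-series term), the same symmetry conjugations for parts \ref{itm:calM-ev} and \ref{itm:calM-PC} (your $\mathcal{P}_{\bm M}$ on $L^2(\R^2/\Z^2)$ is just the gauge-conjugate of the paper's use of $\mathcal{P}$ on $L^2_{\bm M}$), and the same identification of the leading Taylor coefficients with the $\alpha$'s (via \cite{keller2018spectral}) and the $\beta$'s. Your invocation of the general fact that the Schur complement of a self-adjoint operator is self-adjoint is a slightly more conceptual packaging of the paper's explicit uniqueness argument for the self-adjointness of $(1-\mathscr{R}(E_S)F_2)^{-1}\mathscr{R}(E_S)$, but it is the same underlying fact; the only blemish is the stray reference to powers $|\tau_0|^{1/2}$, $|{\bm \tau}|^{1/2}$ in the remainder discussion, where the bound \eqref{eq:til-calM-bd} involves only integer powers.
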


\bigskip

\section{Dispersion surfaces near $(E_S, M)$ under small deformation; Proof of Theorem \ref{thm:M-srf}}
\label{sec:pf-M-srf}

\setcounter{equation}{0}
\setcounter{figure}{0}

By conclusion \ref{itm:det-calM-0} of Proposition \ref{prop:red-M}, for ${|\varepsilon| < \varepsilon^\flat}$, ${|{\bm \kappa}| < \kappa^\flat}$, ${|\tau_0| < \tau_0^\flat}$, and ${|{\bm \tau}| < \tau^\flat}$, ${E_S + \varepsilon}$ is an $L^2_{{\bm M} + {\bm \kappa}}$ eigenvalue of $H^{\tau_0, {\bm \tau}}$ if and only if 
\begin{equation}
\label{eq:det-calM-0_3}
\det(\mathcal{M}(\varepsilon; {\bm \kappa}, \tau_0, {\bm \tau})) = 0.
\end{equation}
Here, $\mathcal{M}(\varepsilon; {\bm \kappa}, \tau_0, {\bm \tau})$ is the ${2 \times 2}$ matrix introduced in \eqref{eq:calM-a-0}. To prove Theorem \ref{thm:M-srf}, we determine the locus of solutions ${\varepsilon = \varepsilon({\bm \kappa}; \tau_0, {\bm \tau})}$ to \eqref{eq:det-calM-0_3}.

 Noting \eqref{eq:Heff_2}, we center about zero energy by translating the energy $\varepsilon$ setting:
\begin{equation}
\label{eq:def-nu}
\nu \equiv \varepsilon - \beta_0 \tau_0 - (1 - \alpha_0) |{\bm \kappa}|^2, 
\end{equation}
and we define
 $\nu^\flat \equiv \varepsilon^\flat + \beta_0 \tau_0^\flat + (1 - \alpha_0) (\kappa^\flat{})^2$.
For 
\begin{equation}\label{eq:param-dom}
{|\nu| < \nu^\flat},\ {|{\bm \kappa}| < \kappa^\flat},\  {|\tau_0| < \tau_0^\flat},\ \textrm{and}\ {|{\bm \tau}| < \tau^\flat},
\end{equation}
 we study
\begin{equation}
\label{eq:def-calD}
\mathcal{D}(\nu; {\bm \kappa}, \tau_0, {\bm \tau}) \equiv \det \Big( \mathcal{M}(\nu + \beta_0 \tau_0 + (1 - \alpha_0) {\bm \kappa}\cdot{\bm \kappa} ; {\bm \kappa}, \tau_0, {\bm \tau}) \Big).
\end{equation}
The mapping ${\nu \mapsto \mathcal{D}(\nu; {\bm \kappa}, \tau_0, {\bm \tau})}$ extends to an analytic function in a  complex neighborhood defined by the inequalities \eqref{eq:param-dom}.

We next study the zeros ${\nu = \nu({\bm \kappa}; \tau_0, {\bm \tau})}$ of $\mathcal{D}(\nu; {\bm \kappa}, \tau_0, {\bm \tau})$. We first prove the existence of exactly two zeros of $\mathcal{D}(\nu; {\bm \kappa}, \tau_0, {\bm \tau})$ in a complex neighborhood of ${\nu = 0}$ via an application of Rouch\'{e}'s theorem, and  then obtain expressions for these zeros using the argument principle.

\begin{proposition}
\label{prop:calD-pr}
$\mathcal{D}(\nu; {\bm \kappa}, \tau_0, {\bm \tau})$ is defined and analytic on the domain \eqref{eq:param-dom}, and has the following properties:
\begin{enumerate}
\item \label{itm:calD-ev} By $\mathcal{P}$ (or $\mathcal{C}$) symmetry,
\begin{equation}
\label{eq:calD-ev}
\mathcal{D}(\nu; {\bm \kappa}, \tau_0, {\bm \tau}) = \mathcal{D}(\nu; -{\bm \kappa}, \tau_0, {\bm \tau}).
\end{equation}
\item \label{itm:calD-ex} $\mathcal{D}(\nu; {\bm \kappa}, \tau_0, {\bm \tau})$ has the expansion
\begin{equation}
\label{eq:calD-ex}
\mathcal{D}(\nu; {\bm \kappa}, \tau_0, {\bm \tau}) = \mathcal{D}^{(0)}(\nu; {\bm \kappa}, \tau_0, {\bm \tau}) + \mathcal{D}^{(1)}(\nu; {\bm \kappa}, \tau_0, {\bm \tau}).
\end{equation}
Here, $\mathcal{D}^{(0)}(\nu; {\bm \kappa}, \tau_0, {\bm \tau})$ is defined by
\begin{align}
\label{eq:def-calD-0}
\mathcal{D}^{(0)}(\nu; {\bm \kappa}, \tau_0, {\bm \tau}) & \equiv \nu^2 - |\omega({\bm \kappa}; \tau_0, {\bm \tau})|^2, \quad \text{where} \\
\label{eq:def-om}
\omega({\bm \kappa}; \tau_0, {\bm \tau}) & \equiv (\beta_1 \tau_1 - \alpha_1 {\bm \kappa} \cdot \sigma_1 {\bm \kappa}) - i(\beta_2 \tau_3 - \alpha_2 {\bm \kappa} \cdot \sigma_3 {\bm \kappa}).
\end{align}
The function $\mathcal{D}^{(1)}(\nu; {\bm \kappa}, \tau_0, {\bm \tau})$ is analytic on the domain \eqref{eq:param-dom} and satisfies
\begin{align}
\label{eq:calD-1-bd}
\mathcal{D}^{(1)}(\nu; {\bm \kappa}, \tau_0, {\bm \tau}) & =
O(|\nu|^2 |{\bm \kappa}|^2 + |\nu| |{\bm \kappa}|^4 + |\nu| |{\bm \kappa}|^2 |\tau_0| + |\nu| |{\bm \kappa}|^2 |{\bm \tau}| + |\nu| |\tau_0|^2 + |\nu| |{\bm \tau}|^2 \\
& \qquad \quad + |{\bm \kappa}|^6 + |{\bm \kappa}|^4 |\tau_0| + |{\bm \kappa}|^4 |{\bm \tau}| + |{\bm \kappa}|^2 |\tau_0|^2 + |{\bm \kappa}|^2 |{\bm \tau}|^2 + |\tau_0|^3 + |{\bm \tau}|^3), \nonumber \\
\label{eq:dcalD-1-bd}
\partial_\nu \mathcal{D}^{(1)}(\nu; {\bm \kappa}, \tau_0, {\bm \tau}) & = O(|\nu| |{\bm \kappa}|^2 + |{\bm \kappa}|^4 + |{\bm \kappa}|^2 |\tau_0| + |{\bm \kappa}|^2 |{\bm \tau}| + |\tau_0|^2 + |{\bm \tau}|^2)
\end{align}
as $|\nu|$, $|{\bm \kappa}|$, $|\tau_0|$, ${|{\bm \tau}| \to 0}$.
\end{enumerate}
\end{proposition}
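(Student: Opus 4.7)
The plan is to derive all three assertions directly from the structural properties of $\mathcal{M}$ recorded in Proposition \ref{prop:calM-pr}, combined with the translation $\varepsilon = \nu + \beta_0\tau_0 + (1-\alpha_0)|{\bm \kappa}|^2$ of the energy variable. Analyticity of $\mathcal{D}$ on the stated polydisc is immediate, since $\mathcal{D}$ is a polynomial in the entries of $\mathcal{M}$, which are analytic by part \ref{itm:calM-an} of Proposition \ref{prop:calM-pr}, evaluated at an analytic function of $(\nu;{\bm \kappa},\tau_0,{\bm \tau})$. The symmetry \eqref{eq:calD-ev} under ${\bm \kappa}\mapsto -{\bm \kappa}$ follows at once from \eqref{eq:calM-ev}, since the substituted first argument $\nu + \beta_0\tau_0 + (1-\alpha_0)|{\bm \kappa}|^2$ is itself even in ${\bm \kappa}$.

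For the expansion \eqref{eq:calD-ex} -- \eqref{eq:def-om}, I would exploit the fact that self-adjointness (\ref{itm:calM-sa}) together with the $\mathcal{PC}$ identity (\ref{itm:calM-PC}) forces $\mathcal{M}$ into a $2\times 2$ Hermitian matrix whose two diagonal entries coincide. Writing $\tilde m_d \equiv \tilde{\mathcal{M}}_{1,1} = \tilde{\mathcal{M}}_{2,2}\in\mathbb{R}$ and $\tilde m \equiv \tilde{\mathcal{M}}_{2,1}\in\mathbb{C}$, and using the matrix representation $-a\sigma_1 - b\sigma_2 = \begin{bmatrix} 0 & -a+ib \\ -a-ib & 0 \end{bmatrix}$ with $a = \beta_1\tau_1 - \alpha_1 {\bm \kappa}\cdot\sigma_1{\bm \kappa}$ and $b = \beta_2\tau_3 - \alpha_2 {\bm \kappa}\cdot\sigma_3{\bm \kappa}$, the expansion \eqref{eq:calM-ex-1} combined with \eqref{eq:Heff_2} yields $\mathcal{M}_{1,1} = \nu + \tilde m_d$ and $\mathcal{M}_{2,1} = -\bar\omega + \tilde m$, with $\omega = a - ib$ as in \eqref{eq:def-om}. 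Computing the Hermitian determinant and expanding gives
\begin{equation*}
\mathcal{D} \;=\; (\nu + \tilde m_d)^2 - |\bar\omega - \tilde m|^2 \;=\; (\nu^2 - |\omega|^2) \,+\, \bigl(2\nu\,\tilde m_d + \tilde m_d^2 + 2\,\mathrm{Re}(\omega\,\tilde m) - |\tilde m|^2\bigr),
\end{equation*}
identifying $\mathcal{D}^{(0)}$ and $\mathcal{D}^{(1)}$ exactly as in the statement.

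The bounds \eqref{eq:calD-1-bd} and \eqref{eq:dcalD-1-bd} then follow by a direct estimate of each summand of $\mathcal{D}^{(1)}$, using \eqref{eq:til-calM-bd} together with the elementary inequality $|\omega| = O(|{\bm \tau}| + |{\bm \kappa}|^2)$. After translation, $|\varepsilon| = O(|\nu| + |\tau_0| + |{\bm \kappa}|^2)$, so both $|\tilde m_d|$ and $|\tilde m|$ are controlled by $|\nu||{\bm \kappa}|^2 + |{\bm \kappa}|^4 + |{\bm \kappa}|^2|\tau_0| + |{\bm \kappa}|^2|{\bm \tau}| + |\tau_0|^2 + |{\bm \tau}|^2$. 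Multiplying out $\nu\,\tilde m_d$ and $\omega\,\tilde m$ reproduces, term-by-term, the majorants appearing in \eqref{eq:calD-1-bd}; any stray cross-term such as $|{\bm \tau}||\tau_0|^2$ or $|{\bm \tau}||{\bm \kappa}|^2|\tau_0|$ is absorbed by Young's inequality (e.g., $|{\bm \tau}||\tau_0|^2 \leq |{\bm \tau}|^3 + |\tau_0|^3$), and the quadratic-in-error terms $\tilde m_d^2$ and $|\tilde m|^2$ are of strictly higher order. For \eqref{eq:dcalD-1-bd}, the dominant contribution to $\partial_\nu\mathcal{D}^{(1)}$ is $2\tilde m_d$, whose bound matches the claimed one; the remaining terms involve $\partial_\nu\tilde m_d$ or $\partial_\nu\tilde m$, each of which is $O(|{\bm \kappa}|^2)$ since the $O(|\varepsilon||{\bm \kappa}|^2)$ summand in \eqref{eq:til-calM-bd} carries a genuine linear factor of $\varepsilon$ by analyticity, with $\varepsilon$-derivative controlled by a Cauchy estimate on a slightly shrunken polydisc. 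The main technical task is the bookkeeping of monomial contributions; there is no essential conceptual obstacle, since the Pauli decomposition enabled by self-adjointness and $\mathcal{PC}$ symmetry is what produces the clean factorization $\mathcal{D}^{(0)} = \nu^2 - |\omega|^2$ in the first place.
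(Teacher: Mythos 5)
Your proof follows essentially the same route as the paper's Appendix~\ref{apx:pf-calD-ex}: use \eqref{eq:calM-ex-1} together with the Hermitian structure (from self-adjointness) and the equal-diagonal identity (from $\mathcal{PC}$ symmetry, part~\ref{itm:calM-PC}) to reduce $\mathcal{M}$ to a two-parameter form $\begin{bmatrix} \nu+\tilde m_d & -\bar\omega+\tilde m\\ \ast & \nu+\tilde m_d \end{bmatrix}$ after the energy shift, compute the determinant explicitly, and estimate the remainder via \eqref{eq:til-calM-bd}. The analyticity and evenness-in-${\bm \kappa}$ observations are also the same as in the paper.

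One small technical point to tighten: your claim that $\partial_\nu\tilde m_d$ and $\partial_\nu\tilde m$ are $O(|{\bm \kappa}|^2)$ is stronger than what a Cauchy estimate actually delivers. From \eqref{eq:til-calM-bd} one only gets $\partial_\varepsilon\tilde{\mathcal{M}}_{j,k} = O\bigl(|{\bm \kappa}|^2 + |{\bm \kappa}|^4 + |{\bm \kappa}|^2|\tau_0| + |{\bm \kappa}|^2|{\bm \tau}| + |\tau_0|^2 + |{\bm \tau}|^2\bigr)$, since the $\varepsilon$-independent monomials in the supremum over a circle $|\varepsilon|=\rho$ contribute $A/\rho$ with $\rho$ a fixed radius; equivalently, inspecting \eqref{eq:def-t-calM}, differentiating $(1-\mathcal{A})^{-1}$ produces an $O(1)$ factor sandwiched between the operators $F_1(-i\nabla)$ whose norms are $O(|{\bm \kappa}| + |\tau_0| + |{\bm \tau}|)$. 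This larger bound is still harmless for \eqref{eq:dcalD-1-bd}: the stray monomials, such as $|\tau_0|^2$ and $|{\bm \tau}|^2$, appear in $\partial_\nu\mathcal{D}^{(1)}$ multiplied by $\nu$, $\omega$, $\tilde m_d$ or $\tilde m$, all of which are small, so they are absorbed into the monomials already present in the claimed bound. So the conclusion stands, but you should state the derivative bound correctly rather than as $O(|{\bm \kappa}|^2)$.
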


\begin{proof}
Part \ref{itm:calD-ev} of Proposition \ref{prop:calD-pr} follows from part \ref{itm:calM-ev} of Proposition \ref{prop:calM-pr}. Part \ref{itm:calD-ex} is proven in Appendix \ref{apx:pf-calD-ex}.
\end{proof}

\subsection{Locating the zeros of $\mathcal{D}(\nu; {\bm \kappa}, \tau_0, {\bm \tau})$ in a disk}
\label{sec:pf-M-srf-loc}

In this section, we use the expansion from conclusion \ref{itm:calD-ex} of Proposition \ref{prop:calD-pr} to locate the zeros of $\mathcal{D}(\nu; {\bm \kappa}, \tau_0, {\bm \tau})$ relative to the explicit zeros of $\mathcal{D}^{(0)}(\nu; {\bm \kappa}, \tau_0, {\bm \tau})$ by using the bounds on $\mathcal{D}^{(1)}(\nu; {\bm \kappa}, \tau_0, {\bm \tau})$ and applying Rouch\'{e}'s theorem.

\begin{proposition}
\label{prop:calD-zs}
There exist $\kappa^{\star}$, $\tau_0^{\star}$, ${\tau^{\star} > 0}$ and a non-negative function $\nu^\star({\bm \kappa}; \tau_0, {\bm \tau})$ such that, for ${|{\bm \kappa}| < \kappa^\star}$, ${|\tau_0| < \tau_0^\star}$, ${|{\bm \tau}| < \tau^\star}$, the mapping ${\nu \mapsto \mathcal{D}(\nu; {\bm \kappa}, \tau_0, {\bm \tau})}$ has exactly two zeros ${\nu = \nu_\pm({\bm \kappa}; \tau_0, {\bm \tau})}$, counting multiplicity, in the open disk ${|\nu| < \nu^\star({\bm \kappa}; \tau_0, {\bm \tau})}$.
\end{proposition}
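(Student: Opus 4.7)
The plan is to apply Rouch\'e's theorem to the decomposition $\mathcal{D} = \mathcal{D}^{(0)} + \mathcal{D}^{(1)}$ established in part \ref{itm:calD-ex} of Proposition \ref{prop:calD-pr}. The leading term $\mathcal{D}^{(0)}(\nu; {\bm \kappa}, \tau_0, {\bm \tau}) = \nu^2 - |\omega({\bm \kappa}; \tau_0, {\bm \tau})|^2$ is a quadratic polynomial in $\nu$ with exactly two zeros (counting multiplicity), namely $\nu = \pm |\omega|$, both contained in the closed disk $\{|\nu| \leq |\omega|\}$. Hence, if a radius $\nu^\star = \nu^\star({\bm \kappa}; \tau_0, {\bm \tau}) > |\omega|$ can be selected so that $|\mathcal{D}^{(1)}| < |\mathcal{D}^{(0)}|$ on the boundary circle $|\nu| = \nu^\star$, Rouch\'e's theorem yields exactly two zeros of $\mathcal{D} = \mathcal{D}^{(0)} + \mathcal{D}^{(1)}$ inside that circle, counted with multiplicity.

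The simplest workable choice is to take $\nu^\star$ to be a sufficiently small positive constant, independent of $({\bm \kappa}, \tau_0, {\bm \tau})$, satisfying $\nu^\star < \nu^\flat$ so that $\mathcal{D}$ is analytic on $\{|\nu| \leq \nu^\star\}$ for parameters in the domain \eqref{eq:param-dom}. Since the definition \eqref{eq:def-om} of $\omega$ implies $|\omega({\bm \kappa}; \tau_0, {\bm \tau})| = O(|{\bm \tau}| + |{\bm \kappa}|^2)$, by shrinking the thresholds $\kappa^\star, \tau_0^\star, \tau^\star$ we may force $|\omega| \leq \nu^\star/2$; the reverse triangle inequality then yields
\[
|\mathcal{D}^{(0)}(\nu; {\bm \kappa}, \tau_0, {\bm \tau})| = |\nu^2 - |\omega|^2| \geq \nu^{\star 2} - |\omega|^2 \geq \tfrac{3}{4} \nu^{\star 2}
\]
on the circle $|\nu| = \nu^\star$. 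For the correction, every monomial appearing in the bound \eqref{eq:calD-1-bd} on $|\mathcal{D}^{(1)}|$ carries at least one factor of $|{\bm \kappa}|$, $|\tau_0|$, or $|{\bm \tau}|$; evaluated at $|\nu| = \nu^\star$ (now a fixed constant) this gives an $o(1)$ bound as $({\bm \kappa}, \tau_0, {\bm \tau}) \to ({\bm 0}, 0, {\bm 0})$, uniformly in $\nu$ on the circle. A further shrinking of the thresholds therefore secures $|\mathcal{D}^{(1)}(\nu)| < \tfrac{3}{4} \nu^{\star 2} \leq |\mathcal{D}^{(0)}(\nu)|$ there, and Rouch\'e's theorem applies to give the desired count.

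The main subtlety lies in the case $|\omega| = 0$, where $\mathcal{D}^{(0)}$ has a double zero at the origin rather than two distinct simple zeros. This is not an obstacle for Rouch\'e applied on a single enclosing circle, which counts with multiplicity, but it does mean the two zeros $\nu_\pm$ need not be well separated a priori. The promotion of $\nu_\pm$ to explicit analytic functions of the parameters --- required for the dispersion surface expansion \eqref{eq:M-srf} of Theorem \ref{thm:M-srf} --- is then to be obtained in a subsequent step via the argument principle applied to $\mathcal{D}$ and $\nu\mathcal{D}$ on the same contour (as anticipated in the paragraph preceding the proposition), treating the regimes $|\omega| \neq 0$ and $|\omega| = 0$ uniformly by Puiseux-type expansion from the leading behavior $\nu_\pm = \pm |\omega| + O(\text{higher order})$.
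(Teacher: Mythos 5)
Your Rouch\'{e} argument is sound and proves the proposition as literally stated: a constant is an admissible choice of the non-negative function $\nu^\star$, the lower bound $|\mathcal{D}^{(0)}| \geq \tfrac{3}{4}\nu^{\star 2}$ on the fixed circle follows from the reverse triangle inequality once $|\omega| \leq \nu^\star/2$, and every monomial in the bound \eqref{eq:calD-1-bd} on $\mathcal{D}^{(1)}$ carries a parameter factor, so the correction is indeed $o(1)$ on the circle as the parameters shrink. Your handling of the degenerate case $|\omega| = 0$ (Rouch\'{e} counts with multiplicity) is also correct.

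The substantive difference from the paper is the choice of radius, and it matters for what comes next. The paper takes $\nu^\star({\bm \kappa}; \tau_0, {\bm \tau}) = \sqrt{|\tau_0|^2 + 2(C_1 |{\bm \kappa}|^4 + C_2 |{\bm \tau}|^2)}$, a radius that \emph{shrinks with the parameters} so that the contour tightly encloses the two roots; the Rouch\'{e} comparison is then between a quadratic-order lower bound $|\mathcal{D}^{(0)}| \geq C_1|{\bm \kappa}|^4 + |\tau_0|^2 + C_2|{\bm \tau}|^2$ and a cubic-order upper bound $|\mathcal{D}^{(1)}| \leq D_1|{\bm \kappa}|^6 + D_2|\tau_0|^3 + D_3|{\bm \tau}|^3$ on that small circle, which forces the term-by-term threshold conditions \eqref{eq:def-kap-ta-s}. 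This is not a cosmetic preference: in Proposition \ref{prop:calD-zs-int} the sums $\nu_+^\ell + \nu_-^\ell$ are computed by integrating over \emph{this same circle}, and the sharp error bounds $O(|{\bm \kappa}|^{2\ell+2} + |\tau_0|^{\ell+1} + |{\bm \tau}|^{\ell+1})$ --- hence the estimates $Q_4 = O(|{\bm \kappa}|^4 + \cdots)$ and $Q_6 = O(|{\bm \kappa}|^6 + \cdots)$ required by Theorem \ref{thm:M-srf} --- come from combining the small circumference $2\pi\nu^\star$ with lower bounds for $|\mathcal{D}|$ and $|\mathcal{D}^{(0)}|$ that are comparable to $\nu^{\star 2}$. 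With your fixed radius the contour integrals only yield errors of the order of the first power of the parameters, which is far too weak to extract the quartic and sextic remainders. So while your proof closes this proposition, the fixed-radius disk would have to be abandoned (or supplemented by a second, shrinking contour) before the argument-principle step you defer to at the end; the paper avoids this by building the parameter dependence into $\nu^\star$ from the outset.
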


\noindent {\it Proof of Proposition \ref{prop:calD-zs}.} First, we note that there exist $C_1$, ${C_2 > 0}$ such that $|\omega({\bm \kappa}; \tau_0, {\bm \tau})|^2$, where $\omega({\bm \kappa}; \tau_0, {\bm \tau})$ is displayed in \eqref{eq:def-om}, has the upper bound
\begin{equation}
\label{eq:om-sq-bd}
|\omega({\bm \kappa}, \tau_0, {\bm \tau})|^2 = (\beta_1 \tau_1 - \alpha_1 {\bm \kappa} \cdot \sigma_1 {\bm \kappa})^2 + (\beta_2 \tau_3 - \alpha_2 {\bm \kappa} \cdot \sigma_3 {\bm \kappa})^2 \leq C_1 |{\bm \kappa}|^4 + C_2 |{\bm \tau}|^2.
\end{equation}
Therefore, we have the lower bound
\begin{equation}
\label{eq:calD-0-bd}
|\mathcal{D}^{(0)}(\nu; {\bm \kappa}, \tau_0, {\bm \tau})| = \bigl|\nu^2 - |\omega({\bm \kappa}, \tau_0, {\bm \tau})|^2\bigr| \geq |\nu|^2 - (C_1 |{\bm \kappa}|^4 + C_2 |{\bm \tau}|^2).
\end{equation}
Let \begin{equation}
\label{eq:def-nu-s}
\nu^\star({\bm \kappa}; \tau_0, {\bm \tau}) \equiv \sqrt{|\tau_0|^2 + 2(C_1 |{\bm \kappa}|^4 + C_2 |{\bm \tau}|^2)}.
\end{equation}
Then,  on the circle ${|\nu| = \nu^\star({\bm \kappa}; \tau_0, {\bm \tau})}$
we have the lower bound
\begin{equation}
\label{eq:calD-0-c-bd}
|\mathcal{D}^{(0)}(\nu; {\bm \kappa}, \tau_0, {\bm \tau})| \geq  C_1 |{\bm \kappa}|^4 + |\tau_0|^2 + C_2 |{\bm \tau}|^2.
\end{equation}

Immediate consequences of \eqref{eq:calD-1-bd} and 
\eqref{eq:dcalD-1-bd}
are the following upper bounds on $\mathcal{D}^{(1)}(\nu; {\bm \kappa}, \tau_0, {\bm \tau})$ and $\partial_\nu \mathcal{D}^{(1)}(\nu; {\bm \kappa}, \tau_0, {\bm \tau})$ on $nu| = \nu^\star$:
\begin{lemma}
\label{lem:calD-1-c-bd}
There exist $D_1$, $D_2$, ${D_3 > 0}$, and $D_1'$, $D_2'$, ${D_3' > 0}$ such that, for ${|\nu| = \nu^\star({\bm \kappa}; \tau_0, {\bm \tau})}$ and $|{\bm \kappa}|$, $|\tau_0|$, $|{\bm \tau}|$ sufficiently small,
\begin{align}
\label{eq:calD-1-c-bd}
|\mathcal{D}^{(1)}(\nu; {\bm \kappa}, \tau_0, {\bm \tau})| & \leq D_1 |{\bm \kappa}|^6 + D_2 |\tau_0|^3 + D_3 |{\bm \tau}|^3, \\
\label{eq:dcalD-1-c-bd}
|\partial_\nu \mathcal{D}^{(1)}(\nu; {\bm \kappa}, \tau_0, {\bm \tau})| & \leq D_1' |{\bm \kappa}|^4 + D_2' |\tau_0|^2 + D_3' |{\bm \tau}|^2.
\end{align}
\end{lemma}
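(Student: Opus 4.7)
The plan is to prove Lemma \ref{lem:calD-1-c-bd} by direct substitution of the explicit formula for $\nu^\star({\bm \kappa};\tau_0,{\bm \tau})$ from \eqref{eq:def-nu-s} into the asymptotic bounds \eqref{eq:calD-1-bd} and \eqref{eq:dcalD-1-bd} of Proposition \ref{prop:calD-pr}. The first observation I would record is that, by the elementary inequality $\sqrt{a^2+b^2+c^2}\le |a|+|b|+|c|$ applied to \eqref{eq:def-nu-s}, there exists a constant $C_0>0$ such that on the circle $|\nu|=\nu^\star({\bm \kappa};\tau_0,{\bm \tau})$,
\begin{equation*}
|\nu| \;\le\; C_0\bigl(|{\bm \kappa}|^2 + |\tau_0| + |{\bm \tau}|\bigr).
\end{equation*}
Feeding this into the right-hand sides of \eqref{eq:calD-1-bd} and \eqref{eq:dcalD-1-bd} leaves only a finite collection of monomials in $|{\bm \kappa}|,|\tau_0|,|{\bm \tau}|$ to estimate.

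The remaining step for \eqref{eq:calD-1-c-bd} is to verify that every resulting monomial is majorized by a non-negative linear combination of $|{\bm \kappa}|^6,\,|\tau_0|^3,\,|{\bm \tau}|^3$. This is handled term-by-term using Young's inequality: for instance, $|{\bm \kappa}|^4|\tau_0|\le \tfrac{2}{3}|{\bm \kappa}|^6 + \tfrac{1}{3}|\tau_0|^3$, $|{\bm \kappa}|^2|\tau_0|^2\le \tfrac{1}{3}|{\bm \kappa}|^6 + \tfrac{2}{3}|\tau_0|^3$, $|\tau_0|^2|{\bm \tau}|\le \tfrac{2}{3}|\tau_0|^3 + \tfrac{1}{3}|{\bm \tau}|^3$, and analogous inequalities for cross terms involving $|{\bm \tau}|$. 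The cubic-in-$|\nu|$-free terms (e.g. $|\nu|^2|{\bm \kappa}|^2$, $|\nu||\tau_0|^2$) are first reduced via the bound $|\nu|\lesssim |{\bm \kappa}|^2+|\tau_0|+|{\bm \tau}|$ above, then processed identically. Collecting the resulting coefficients yields constants $D_1,D_2,D_3>0$ making \eqref{eq:calD-1-c-bd} valid.

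For the derivative bound \eqref{eq:dcalD-1-c-bd}, the same strategy applies with target exponents lowered from $(6,3,3)$ to $(4,2,2)$. After substituting $|\nu|\lesssim |{\bm \kappa}|^2+|\tau_0|+|{\bm \tau}|$ into \eqref{eq:dcalD-1-bd}, every monomial appearing can be bounded by a non-negative combination of $|{\bm \kappa}|^4,\,|\tau_0|^2,\,|{\bm \tau}|^2$ via Young's inequality, for example $|{\bm \kappa}|^2|\tau_0|\le \tfrac{1}{2}|{\bm \kappa}|^4 + \tfrac{1}{2}|\tau_0|^2$. This furnishes the constants $D_1',D_2',D_3'>0$.

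The entire argument is elementary; there is no genuine analytic obstacle, only bookkeeping. The main thing to be careful about is enumerating every monomial produced by the substitution and checking that each admits the desired majorization — something guaranteed because each such monomial has total (weighted) degree at least six (resp.\ four), with $|{\bm \kappa}|$ counted with weight $1$ and $|\tau_0|,|{\bm \tau}|$ with weight $2$, matching the scaling built into $\nu^\star$.
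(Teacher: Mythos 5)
Your argument is correct and is exactly the route the paper intends: the paper presents Lemma \ref{lem:calD-1-c-bd} as an immediate consequence of \eqref{eq:calD-1-bd}--\eqref{eq:dcalD-1-bd}, the only content being the substitution $|\nu|=\nu^\star\le C_0(|{\bm \kappa}|^2+|\tau_0|+|{\bm \tau}|)$ from \eqref{eq:def-nu-s} followed by the weighted-degree/Young's-inequality bookkeeping you carry out. Your closing observation that every resulting monomial has weighted degree at least $6$ (resp.\ $4$), with $|{\bm \kappa}|$ of weight $1$ and $|\tau_0|,|{\bm \tau}|$ of weight $2$, is precisely the scaling built into $\nu^\star$ and correctly justifies the term-by-term majorization.
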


We conclude the proof of Proposition \ref{prop:calD-zs} by applying Rouch\'{e}'s theorem. First, note that $\mathcal{D}^{(0)}(\nu; {\bm \kappa}, \tau_0, {\bm \tau})$, displayed in \eqref{eq:def-calD-0}, has two zeros (counting multiplicity):
\begin{equation}
\label{eq:calD-0-zs}
\nu = \pm |\omega({\bm \kappa}; \tau_0, {\bm \tau})|.
\end{equation}
By the upper bound \eqref{eq:om-sq-bd}, these zeros lie in $|\nu| < \nu^\star({\bm \kappa}; \tau_0, {\bm \tau})$.
Let \begin{equation}
\label{eq:def-kap-ta-s}
0 < \kappa^\star \leq \sqrt{\frac{C_1}{D_1}}, \quad 0 < \tau_0^\star \leq \frac{1}{D_2}, \quad
0 < \tau^\star \leq \frac{C_2}{D_3}.
\end{equation}
Assume ${|\nu| = \nu^\star({\bm \kappa}; \tau_0, {\bm \tau})}$ and ${|{\bm \kappa}| < \kappa^\star}$, ${|\tau_0| < \tau_0^\star}$, ${|{\bm \tau}| < \tau^\star}$,
Using the upper bound \eqref{eq:calD-1-c-bd} for  $|\mathcal{D}^{(1)}|$ and the lower bound  \eqref{eq:calD-0-c-bd} for
$|\mathcal{D}^{(0)}|$, we obtain 
 for  we have the strict inequality on the circle $|\nu| = \nu^\star({\bm \kappa}; \tau_0, {\bm \tau})$:
\begin{align}
\label{eq:rou-ineq}
|\mathcal{D}^{(1)}(\nu; {\bm \kappa}, \tau_0, {\bm \tau})| &  < |\mathcal{D}^{(0)}(\nu; {\bm \kappa}, \tau_0, {\bm \tau})| 
\end{align}
By Rouch\'{e}'s theorem, for $|{\bm \kappa}| < \kappa^\star$, $|\tau_0| < \tau_0^\star$, $|{\bm \tau}| < \tau^\star$,
the functions $\mathcal{D}^{(0)}(\nu; {\bm \kappa}, \tau_0, {\bm \tau})$ and\\ ${\mathcal{D}(\nu; {\bm \kappa}, \tau_0, {\bm \tau}) = \mathcal{D}^{(0)}(\nu; {\bm \kappa}, \tau_0, {\bm \tau}) + \mathcal{D}^{(1)}(\nu; {\bm \kappa}, \tau_0, {\bm \tau})}$ have the same number of zeros (two, counting multiplicity) in ${|\nu| < \nu^\star({\bm \kappa}; \tau_0, {\bm \tau})}$. 
The proof of Proposition \ref{prop:calD-zs} is now complete.
\medskip 

We denote the two zeros of $\mathcal{D}(\nu; {\bm \kappa}, \tau_0, {\bm \tau})$ in ${|\nu| < \nu^\star({\bm \kappa}; \tau_0, {\bm \tau})}$ by ${\nu = \nu_\pm({\bm \kappa}; \tau_0, {\bm \tau})}$.

\subsection[TEXT]{Expressions for the zeros of $\mathcal{D}(\nu; {\bm \kappa}, \tau_0, {\bm \tau})$}
\label{sec:pf-M-srf-exp}

Assume ${|{\bm \kappa}| < \kappa^\star}$, ${|\tau_0| < \tau_0^\star}$, ${|{\bm \tau}| < \tau^\star}$, such that the conclusion of Proposition \ref{prop:calD-zs} holds. Our next goal is to obtain explicit expressions for the zeros  $\nu = \nu_\pm({\bm \kappa}; \tau_0, {\bm \tau})$ of $\mathcal{D}(\nu; {\bm \kappa}, \tau_0, {\bm \tau})$.

Since $\nu = \nu_\pm({\bm \kappa}; \tau_0, {\bm \tau})$ are the only two zeros in the disc,  there exists an analytic function $\tilde{\mathcal{D}}(\nu; {\bm \kappa}, \tau_0, {\bm \tau})$ such that
\begin{equation}
\label{eq:calD-fac}
\mathcal{D}(\nu; {\bm \kappa}, \tau_0, {\bm \tau}) = ( \nu - \nu_+({\bm \kappa}; \tau_0, {\bm \tau}) ) ( \nu - \nu_-({\bm \kappa}; \tau_0, {\bm \tau}) )\ \tilde{\mathcal{D}}(\nu; {\bm \kappa}, \tau_0, {\bm \tau}),
\end{equation}
where ${\tilde{\mathcal{D}}(\nu; {\bm \kappa}, \tau_0, {\bm \tau}) \neq 0}$ for ${|\nu| < \nu^\star({\bm \kappa}; \tau_0, {\bm \tau})}$. Therefore, for ${|\nu| < \nu^\star({\bm \kappa}; \tau_0, {\bm \tau})}$, we have $\mathcal{D}(\nu; {\bm \kappa}, \tau_0, {\bm \tau}) = 0$ if and only if
\begin{equation}
\label{eq:calD-qu}
(\nu - \nu_+({\bm \kappa}; \tau_0, {\bm \tau}))(\nu - \nu_-({\bm \kappa}; \tau_0, {\bm \tau})) = 0.
\end{equation}
The solutions to \eqref{eq:calD-qu} are
\begin{gather}
\label{eq:calD-qu-sol}
\nu = \nu_\pm({\bm \kappa}; \tau_0, {\bm \tau}) = \frac{1}{2} \mathfrak{t}({\bm \kappa}; \tau_0, {\bm \tau}) \pm \sqrt{-\mathfrak{d}({\bm \kappa}; \tau_0, {\bm \tau}) + \frac{1}{4} \mathfrak{t}({\bm \kappa}; \tau_0, {\bm \tau})^2} \, ,
\end{gather}
where we define
\begin{align}
\label{eq:def-frak-t}
\mathfrak{t}({\bm \kappa}; \tau_0, {\bm \tau}) & \equiv \nu_+({\bm \kappa}; \tau_0, {\bm \tau}) + \nu_-({\bm \kappa}; \tau_0, {\bm \tau}), \\
\label{eq:def-frak-d}
\mathfrak{d}({\bm \kappa}; \tau_0, {\bm \tau}) & \equiv \nu_+({\bm \kappa}; \tau_0, {\bm \tau}) \, \nu_-({\bm \kappa}; \tau_0, {\bm \tau}).
\end{align}
Moreover, we observe that 
\begin{equation}
\label{eq:frak-d-sum}
\mathfrak{d}({\bm \kappa}; \tau_0, {\bm \tau}) = \frac{1}{2} (\nu_+({\bm \kappa}; \tau_0, {\bm \tau}) + \nu_-({\bm \kappa}; \tau_0, {\bm \tau}))^2 - \frac{1}{2} (\nu_+({\bm \kappa}; \tau_0, {\bm \tau})^2 + \nu_-({\bm \kappa}; \tau_0, {\bm \tau})^2).
\end{equation}
Therefore, to study the expressions defined in \eqref{eq:def-frak-t} and \eqref{eq:def-frak-d}, it suffices to study ${\bm \kappa}, \tau_0, {\bm \tau})\mapsto\nu_+^\ell + \nu_-^\ell$ for $\ell=1,2$. By analyticity of ${\nu \mapsto \mathcal{D}(\nu; {\bm \kappa}, \tau_0, {\bm \tau})}$ we have 
\begin{equation}
\label{eq:zs-arg}
\nu_+({\bm \kappa}; \tau_0, {\bm \tau})^\ell + \nu_-({\bm \kappa}; \tau_0, {\bm \tau})^\ell = \frac{1}{2 \pi i} \int_{|\nu| \, = \, \nu^\star({\bm \kappa}; \tau_0, {\bm \tau})} \frac{\nu^\ell \, \partial_\nu \mathcal{D}(\nu; {\bm \kappa}, \tau_0, {\bm \tau})}{\mathcal{D}(\nu; {\bm \kappa}, \tau_0, {\bm \tau})} \, {\rm d}\nu.
\end{equation}
Therefore, by  \eqref{eq:def-frak-t}, \eqref{eq:frak-d-sum}, and \eqref{eq:zs-arg} we have:
\begin{proposition}
\label{prop:frak-t&d-pr}
The functions functions $\mathfrak{t}({\bm \kappa}; \tau_0, {\bm \tau})$ and $\mathfrak{d}({\bm \kappa}; \tau_0, {\bm \tau})$
 are analytic for ${|{\bm \kappa}| < \kappa^\star}$, ${|\tau_0| < \tau_0^\star}$, ${|{\bm \tau}| < \tau^\star}$ and satisfy the symmetry property:
\begin{equation}
\label{eq:frak-t&d-pr}
\mathfrak{t}({\bm \kappa}; \tau_0, {\bm \tau}) = \mathfrak{t}(-{\bm \kappa}; \tau_0, {\bm \tau}), \quad \mathfrak{d}({\bm \kappa}; \tau_0, {\bm \tau}) = \mathfrak{d}(-{\bm \kappa}; \tau_0, {\bm \tau}).
\end{equation}
\end{proposition}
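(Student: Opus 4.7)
The plan is to establish the symmetry \eqref{eq:frak-t&d-pr} immediately from the evenness of $\mathcal{D}$ in ${\bm \kappa}$, and to obtain analyticity of $\mathfrak{t}$ and $\mathfrak{d}$ by replacing the parameter-dependent contour $|\nu| = \nu^\star({\bm \kappa}; \tau_0, {\bm \tau})$ appearing in \eqref{eq:zs-arg} with a locally-uniform fixed contour. Note that, in view of \eqref{eq:def-frak-t} and \eqref{eq:frak-d-sum}, both $\mathfrak{t}$ and $\mathfrak{d}$ are rational expressions in the power sums $\nu_+^\ell + \nu_-^\ell$, $\ell = 1, 2$, so it suffices to prove the claimed properties for these power sums.

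The symmetry statement is the easy part. Invoking conclusion \ref{itm:calD-ev} of Proposition \ref{prop:calD-pr}, namely $\mathcal{D}(\nu; {\bm \kappa}, \tau_0, {\bm \tau}) = \mathcal{D}(\nu; -{\bm \kappa}, \tau_0, {\bm \tau})$, together with the manifest evenness of $\nu^\star({\bm \kappa}; \tau_0, {\bm \tau})$ in ${\bm \kappa}$ (immediate from \eqref{eq:def-nu-s}), one sees that the multisets of zeros $\{\nu_+({\bm \kappa}; \tau_0, {\bm \tau}), \nu_-({\bm \kappa}; \tau_0, {\bm \tau})\}$ and $\{\nu_+(-{\bm \kappa}; \tau_0, {\bm \tau}), \nu_-(-{\bm \kappa}; \tau_0, {\bm \tau})\}$ in the disk $|\nu| < \nu^\star$ coincide. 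Since $\mathfrak{t}$ and $\mathfrak{d}$ are elementary symmetric functions of these two zeros, they are invariant under ${\bm \kappa} \mapsto -{\bm \kappa}$.

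Analyticity is the substantive point. The obstruction is that $\nu^\star({\bm \kappa}; \tau_0, {\bm \tau})$ collapses to zero as $({\bm \kappa}, \tau_0, {\bm \tau}) \to ({\bm 0}, 0, {\bm 0})$, so the contour in \eqref{eq:zs-arg} degenerates at the point of interest, where $\nu_+$ and $\nu_-$ coalesce. To handle this, I would fix an arbitrary base point $({\bm \kappa}_0, \tau_0^{(0)}, {\bm \tau}_0)$ in the parameter domain and choose a radius $r$ with $\nu^\star({\bm \kappa}_0; \tau_0^{(0)}, {\bm \tau}_0) < r < \varepsilon^\flat / 2$ (say) on which $\mathcal{D}(\,\cdot\,; {\bm \kappa}_0, \tau_0^{(0)}, {\bm \tau}_0)$ does not vanish; such a radius exists because $\mathcal{D}(\,\cdot\,; {\bm \kappa}_0, \tau_0^{(0)}, {\bm \tau}_0)$ has only finitely many zeros in its disk of analyticity. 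By continuity in parameters and a further application of Rouch\'{e}'s theorem (with comparison function $\mathcal{D}(\,\cdot\,; {\bm \kappa}_0, \tau_0^{(0)}, {\bm \tau}_0)$), there is an open neighborhood $U$ of $({\bm \kappa}_0, \tau_0^{(0)}, {\bm \tau}_0)$ throughout which $\mathcal{D}(\,\cdot\,; {\bm \kappa}, \tau_0, {\bm \tau})$ has exactly the two zeros $\nu_\pm({\bm \kappa}; \tau_0, {\bm \tau})$ inside $|\nu| < r$ and is nonvanishing on $|\nu| = r$. The argument principle then yields, for $\ell = 1, 2$ and $({\bm \kappa}, \tau_0, {\bm \tau}) \smallin U$,
\begin{equation}
\nu_+({\bm \kappa}; \tau_0, {\bm \tau})^\ell + \nu_-({\bm \kappa}; \tau_0, {\bm \tau})^\ell = \frac{1}{2 \pi i} \oint_{|\nu| \, = \, r} \frac{\nu^\ell \, \partial_\nu \mathcal{D}(\nu; {\bm \kappa}, \tau_0, {\bm \tau})}{\mathcal{D}(\nu; {\bm \kappa}, \tau_0, {\bm \tau})} \, {\rm d}\nu.
\end{equation}
The integrand is jointly analytic in $(\nu, {\bm \kappa}, \tau_0, {\bm \tau})$ on a neighborhood of $\{|\nu| = r\} \times U$, so the right-hand side is analytic on $U$. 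Since the base point was arbitrary, $\nu_+^\ell + \nu_-^\ell$, and hence $\mathfrak{t}$ and $\mathfrak{d}$, are analytic on the full parameter domain.

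The main technical point, and essentially the whole content of the proposition, is the analyticity at the collision point $({\bm \kappa}, \tau_0, {\bm \tau}) = ({\bm 0}, 0, {\bm 0})$, where the two branches $\nu_\pm$ themselves may fail to be single-valued. The contour-integral approach above sidesteps this entirely by working only with symmetric functions of the two roots. An equivalent, and perhaps cleaner, alternative is to apply the Weierstrass preparation theorem to write $\mathcal{D}(\nu; {\bm \kappa}, \tau_0, {\bm \tau}) = \bigl( \nu^2 - \mathfrak{t}({\bm \kappa}; \tau_0, {\bm \tau}) \nu + \mathfrak{d}({\bm \kappa}; \tau_0, {\bm \tau}) \bigr) \, u(\nu; {\bm \kappa}, \tau_0, {\bm \tau})$ with $u$ a nonvanishing analytic unit, thereby reading off the analyticity of $\mathfrak{t}$ and $\mathfrak{d}$ directly.
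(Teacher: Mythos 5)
Your proposal is correct and follows the same route as the paper: the symmetry is read off from the evenness of $\mathcal{D}$ in ${\bm \kappa}$ (Proposition \ref{prop:calD-pr}, part \ref{itm:calD-ev}) together with the representation \eqref{eq:zs-arg}, which is exactly the paper's one-line argument. Where you go beyond the paper is in the analyticity claim: the paper leaves implicit the fact that the parameter-dependent, collapsing contour $|\nu| = \nu^\star({\bm \kappa}; \tau_0, {\bm \tau})$ must be traded for a locally fixed contour (or, equivalently, that the integral is contour-independent by Cauchy's theorem as long as no zeros are crossed), and you correctly identify the coalescence of $\nu_\pm$ at the origin as the point where analyticity of the individual branches fails but analyticity of the symmetric functions survives. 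Your Weierstrass-preparation remark is a clean equivalent formulation. One detail to tighten: choosing $r$ merely so that $\mathcal{D}(\,\cdot\,; {\bm \kappa}_0, \tau_0^{(0)}, {\bm \tau}_0)$ is nonvanishing on $|\nu| = r$ does not by itself guarantee that the \emph{only} zeros inside $|\nu| < r$ are $\nu_\pm$; Proposition \ref{prop:calD-zs} controls the zero count only in the smaller disk $|\nu| < \nu^\star$, and the annulus between $\nu^\star$ and $r$ could a priori contain further zeros, which would contaminate the power sums. The fix is immediate — take $r$ in a small interval just above $\nu^\star({\bm \kappa}_0; \tau_0^{(0)}, {\bm \tau}_0)$ avoiding the moduli of the finitely many other zeros, so that the count in $|\nu| < r$ is still exactly two — but it should be stated.
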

\begin{proof}
Symmetry is an immediate consequence of the representation \eqref{eq:zs-arg}, and the corresponding symmetry of 
$\mathcal{D}(\nu; {\bm \kappa}, \tau_0, {\bm \tau})$; see Proposition \ref{prop:calD-pr}.
\end{proof}

We use the following result to obtain expansions of $\mathfrak{t}({\bm \kappa}; \tau_0, {\bm \tau})$ and $\mathfrak{d}({\bm \kappa}; \tau_0, {\bm \tau})$:

\begin{proposition}
\label{prop:calD-zs-int}
We have
\begin{align}
\label{eq:calD-zs-sum}
\nu_+({\bm \kappa}; \tau_0, {\bm \tau}) + \nu_-({\bm \kappa}; \tau_0, {\bm \tau}) & = O(|{\bm \kappa}|^4 + |\tau_0|^2 + |{\bm \tau}|^2), \\
\label{eq:cald-zs2-sum}
\nu_+({\bm \kappa}; \tau_0, {\bm \tau})^2 + \nu_-({\bm \kappa}; \tau_0, {\bm \tau})^2 & = 2 |\omega({\bm \kappa}; \tau_0, {\bm \tau})|^2 + O(|{\bm \kappa}|^6 + |\tau_0|^3 + |{\bm \tau}|^3)
\end{align}
as $|{\bm \kappa}|$, $|\tau_0|$, ${|{\bm \tau}| \to 0}$.
\end{proposition}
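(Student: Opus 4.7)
\emph{Proof proposal.} The starting point is the integral representation \eqref{eq:zs-arg}, which expresses $\nu_+^\ell + \nu_-^\ell$ as a contour integral of $\nu^\ell \partial_\nu \mathcal{D}/\mathcal{D}$ over the circle $|\nu| = \nu^\star({\bm \kappa}; \tau_0, {\bm \tau})$. The plan is to split the integrand using the decomposition $\mathcal{D} = \mathcal{D}^{(0)} + \mathcal{D}^{(1)}$ of Proposition \ref{prop:calD-pr}, isolating an explicitly computable leading order contribution coming from $\mathcal{D}^{(0)}$ and controlling the remainder via the bounds of Lemma \ref{lem:calD-1-c-bd}.

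First I would compute the leading contribution. Since $\mathcal{D}^{(0)}(\nu;{\bm \kappa},\tau_0,{\bm \tau}) = \nu^2 - |\omega({\bm \kappa};\tau_0,{\bm \tau})|^2$ has simple zeros at $\nu = \pm |\omega({\bm \kappa};\tau_0,{\bm \tau})|$, both of which lie inside the disc $|\nu| < \nu^\star$ by \eqref{eq:om-sq-bd} and \eqref{eq:def-nu-s}, the argument principle gives
\begin{equation}
\frac{1}{2\pi i} \int_{|\nu| = \nu^\star} \frac{\nu^\ell \, \partial_\nu \mathcal{D}^{(0)}(\nu;{\bm \kappa},\tau_0,{\bm \tau})}{\mathcal{D}^{(0)}(\nu;{\bm \kappa},\tau_0,{\bm \tau})} \, {\rm d}\nu = |\omega|^\ell + (-|\omega|)^\ell,
\end{equation}
which equals $0$ for $\ell = 1$ and $2|\omega({\bm \kappa};\tau_0,{\bm \tau})|^2$ for $\ell = 2$. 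These are exactly the stated leading-order terms in \eqref{eq:calD-zs-sum} and \eqref{eq:cald-zs2-sum}.

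Next, I would write the error as
\begin{equation}
\frac{\partial_\nu \mathcal{D}}{\mathcal{D}} - \frac{\partial_\nu \mathcal{D}^{(0)}}{\mathcal{D}^{(0)}} = \frac{\partial_\nu \mathcal{D}^{(1)} \, \mathcal{D}^{(0)} - \partial_\nu \mathcal{D}^{(0)} \, \mathcal{D}^{(1)}}{\mathcal{D}^{(0)} \, \mathcal{D}},
\end{equation}
and bound this quotient on $|\nu| = \nu^\star$. Introducing the small parameter $\epsilon \equiv |{\bm \kappa}|^2 + |\tau_0| + |{\bm \tau}|$, one has $\nu^\star \lesssim \epsilon$ and hence $|\partial_\nu \mathcal{D}^{(0)}| = 2|\nu| \lesssim \epsilon$. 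Lemma \ref{lem:calD-1-c-bd} gives $|\mathcal{D}^{(1)}| \lesssim \epsilon^3$ and $|\partial_\nu \mathcal{D}^{(1)}| \lesssim \epsilon^2$, while \eqref{eq:calD-0-c-bd} gives $|\mathcal{D}^{(0)}| \gtrsim \epsilon^2$ on this circle, and Rouch\'{e}'s inequality \eqref{eq:rou-ineq} implies $|\mathcal{D}| \geq |\mathcal{D}^{(0)}| - |\mathcal{D}^{(1)}| \gtrsim \epsilon^2$ after possibly shrinking $\kappa^\star,\tau_0^\star,\tau^\star$. Combining these estimates, the numerator is $O(\epsilon^4)$ and the denominator is $\gtrsim \epsilon^4$, so the error integrand is bounded by $O(\epsilon^\ell)$ pointwise on $|\nu| = \nu^\star$. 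Multiplying by the circumference $2\pi \nu^\star \lesssim \epsilon$ yields an error of size $O(\epsilon^{\ell+1})$, which is $O(|{\bm \kappa}|^4 + |\tau_0|^2 + |{\bm \tau}|^2)$ for $\ell = 1$ and $O(|{\bm \kappa}|^6 + |\tau_0|^3 + |{\bm \tau}|^3)$ for $\ell = 2$.

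The main obstacle is bookkeeping: keeping the three distinct small scales $|{\bm \kappa}|^2$, $|\tau_0|$, and $|{\bm \tau}|$ properly tracked so that the final error takes the anisotropic form required, rather than the isotropic form in $\epsilon$. This is not a serious difficulty, however, because the bounds in Lemma \ref{lem:calD-1-c-bd} and \eqref{eq:calD-0-c-bd} are already anisotropic and additive across the three scales, so at each step of the estimate one can simply preserve the additive structure (using $(a+b+c)^n \lesssim a^n + b^n + c^n$ for nonnegative quantities and fixed $n$). Once the two trace/determinant-type quantities $\mathfrak{t}$ and $\mathfrak{d}$ are estimated via \eqref{eq:def-frak-t}, \eqref{eq:frak-d-sum}, and the bounds of this proposition, Theorem \ref{thm:M-srf} follows from \eqref{eq:calD-qu-sol}.
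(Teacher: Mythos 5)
Your proposal is correct and follows essentially the same argument as the paper: the leading term is computed by a residue calculation for $\mathcal{D}^{(0)}$, and the error is controlled on the circle $|\nu| = \nu^\star$ using the lower bounds on $|\mathcal{D}^{(0)}|$ and $|\mathcal{D}|$ and the upper bounds of Lemma \ref{lem:calD-1-c-bd}. Your combined error quotient $\bigl(\partial_\nu\mathcal{D}^{(1)}\mathcal{D}^{(0)} - \partial_\nu\mathcal{D}^{(0)}\mathcal{D}^{(1)}\bigr)/\bigl(\mathcal{D}^{(0)}\mathcal{D}\bigr)$ is algebraically identical to the sum of the two remainder integrands the paper treats separately in \eqref{eq:zs-int-2}--\eqref{eq:zs-int-3}, so the two presentations differ only in bookkeeping.
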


\begin{proof}
We expand the right-hand side of \eqref{eq:zs-arg}, starting with its integrand. By Proposition \ref{prop:calD-pr},
\begin{align}
\label{eq:dcalD-ex}
\partial_\nu \mathcal{D}(\nu; {\bm \kappa}, \tau_0, {\bm \tau}) & = \partial_\nu \mathcal{D}^{(0)}(\nu; {\bm \kappa}, \tau_0, {\bm \tau}) + \partial_\nu \mathcal{D}^{(1)}(\nu; {\bm \kappa}, \tau_0, {\bm \tau}) \\
& = 2 \nu + \partial_\nu \mathcal{D}^{(1)}(\nu; {\bm \kappa}, \tau_0, {\bm \tau}). \nonumber
\end{align}
Furthermore,
\begin{equation}
\label{eq:calD-inv-ex}
\frac{1}{\mathcal{D}(\nu; {\bm \kappa}, \tau_0, {\bm \tau})} = \frac{1}{\mathcal{D}^{(0)}(\nu; {\bm \kappa}, \tau_0, {\bm \tau})} - \frac{\mathcal{D}^{(1)}(\nu; {\bm \kappa}, \tau_0, {\bm \tau})}{\mathcal{D}^{(0)}(\nu; {\bm \kappa}, \tau_0, {\bm \tau}) \mathcal{D}(\nu; {\bm \kappa}, \tau_0, {\bm \tau})}.
\end{equation}
Therefore, the right-hand side of \eqref{eq:zs-arg} is equal to
\begin{align}
\label{eq:zs-arg_2}
& \frac{1}{2 \pi i} \int_{|\nu| \, = \, \nu^\star({\bm \kappa}, \tau_0, {\bm \tau})} \frac{2 \nu^{\ell + 1}}{\mathcal{D}^{(0)}(\nu; {\bm \kappa}, \tau_0, {\bm \tau})} \, {\rm d}\nu \\
& \qquad - \frac{1}{2 \pi i} \int_{|\nu| \, = \, \nu^\star({\bm \kappa}, \tau_0, {\bm \tau})} \frac{2 \nu^{\ell + 1} \, \mathcal{D}^{(1)}(\nu; {\bm \kappa}, \tau_0, {\bm \tau})}{\mathcal{D}(\nu; {\bm \kappa}, \tau_0, {\bm \tau})\mathcal{D}^{(0)}(\nu; {\bm \kappa}, \tau_0, {\bm \tau})} \, {\rm d}\nu \nonumber \\
& \qquad + \frac{1}{2 \pi i} \int_{|\nu| \, = \, \nu^\star({\bm \kappa}, \tau_0, {\bm \tau})} \frac{\nu^\ell \partial_\nu \mathcal{D}^{(1)}(\nu; {\bm \kappa}, \tau_0, {\bm \tau})}{\mathcal{D}(\nu; {\bm \kappa}, \tau_0, {\bm \tau})} \, {\rm d}\nu. \nonumber
\end{align}
We evaluate the first of these contour integrals:
\begin{align}
\label{eq:zs-int-1}
& \frac{1}{2 \pi i} \int_{|\nu| \, = \, \nu^\star({\bm \kappa}; \tau_0, {\bm \tau})} \frac{2 \nu^{\ell + 1}}{\mathcal{D}^{(0)}(\nu; {\bm \kappa}, \tau_0, {\bm \tau})} \, {\rm d}\nu \\
& \qquad = \frac{1}{2 \pi i} \int_{|\nu| \, = \, \nu^\star({\bm \kappa}; \tau_0, {\bm \tau})} \frac{2 \nu^{\ell + 1}}{\nu^2 - |\omega({\bm \kappa}; \tau_0, {\bm \tau})|^2} \, {\rm d}\nu =
\begin{cases}
0, & \ell = 1 \\
2 |\omega({\bm \kappa}; \tau_0, {\bm \tau})|^2, & \ell = 2
\end{cases} \, . \nonumber
\end{align}
The remaining contour integrals can be bounded. First, using \eqref{eq:calD-0-c-bd}, \eqref{eq:calD-1-c-bd}, and the following lower bound, valid for ${|\nu| = \nu^\star({\bm \kappa}, \tau_0, {\bm \tau})}$:
\begin{align}
\label{eq:calD-c-bd}
|\mathcal{D}(\nu; {\bm \kappa}, \tau_0, {\bm \tau})| & \geq |\mathcal{D}^{(0)}(\nu; {\bm \kappa}, \tau_0, {\bm \tau})| - |\mathcal{D}^{(1)}(\nu; {\bm \kappa}, \tau_0, {\bm \tau})| \\ 
& \geq (C_1 - D_1 \kappa^\star{}^2)|{\bm \kappa}|^4 + (1 - D_2 \tau_0^\star)|\tau_0|^2 + (C_2 - D_3 \tau^\star)|{\bm \tau}|^2, \nonumber
\end{align}
we have, for ${\ell \smallin \N}$,
\begin{align}
\label{eq:zs-int-2}
\biggl|\frac{1}{2\pi i} \int_{|\nu| \, = \, \nu^\star({\bm \kappa}; \tau_0, {\bm \tau})} \frac{2 \nu^{\ell + 1} \, \mathcal{D}^{(1)}(\nu; {\bm \kappa}, \tau_0, {\bm \tau})}{\mathcal{D}(\nu; {\bm \kappa}, \tau_0, {\bm \tau}) \mathcal{D}^{(0)}(\nu; {\bm \kappa}, \tau_0, {\bm \tau})} \, {\rm d}\nu \biggr| = O(|{\bm \kappa}|^{2\ell + 2} + |\tau_0|^{\ell + 1} + |{\bm \tau}|^{\ell + 1})
\end{align}
as $|{\bm \kappa}|$, $|\tau_0|$, ${|{\bm \tau}| \to 0}$. Finally, using \eqref{eq:dcalD-1-c-bd} and \eqref{eq:calD-c-bd} above,
\begin{equation}
\label{eq:zs-int-3}
\biggl|\frac{1}{2\pi i} \int_{|\nu| \, = \, \nu^\star({\bm \kappa}; \tau_0, {\bm \tau})} \frac{\nu^\ell \partial_\nu \mathcal{D}^{(1)}(\nu; {\bm \kappa}, \tau_0, {\bm \tau})}{\mathcal{D}(\nu; {\bm \kappa}, \tau_0, {\bm \tau})} \, {\rm d}\nu \biggr| = O(|{\bm \kappa}|^{2\ell + 2} + |\tau_0|^{\ell + 1} + |{\bm \tau}|^{\ell + 1})
\end{equation}
as $|{\bm \kappa}|$, $|\tau_0|$, ${|{\bm \tau}| \to 0}$. Collecting terms appropriately yields the assertions of Proposition \ref{prop:calD-zs-int}.
\end{proof}

Hence, by \eqref{eq:def-frak-t}, \eqref{eq:frak-d-sum}, and Proposition \ref{prop:calD-zs-int}, the analytic functions $\mathfrak{t}({\bm \kappa}; \tau_0, {\bm \tau})$ and $\mathfrak{d}({\bm \kappa}; \tau_0, {\bm \tau})$ have the expansions:
\begin{align}
\label{eq:frak-t-ex}
\mathfrak{t}({\bm \kappa}; \tau_0, {\bm \tau}) & = O(|{\bm \kappa}|^4 + |\tau_0|^2 + |{\bm \tau}|^2), \\
\label{eq:frak-d-ex}
\mathfrak{d}({\bm \kappa}; \tau_0, {\bm \tau}) & =  -|\omega({\bm \kappa}, \tau_0, {\bm \tau})|^2 + O(|{\bm \kappa}|^6 + |\tau_0|^3 + |{\bm \tau}|^3)
\end{align}
as $|{\bm \kappa}|$, $|\tau_0|$, ${|{\bm \tau}| \to 0}$. Therefore, by \eqref{eq:calD-qu-sol}, the zeros $\nu_\pm({\bm \kappa}; \tau_0, {\bm \tau})$ of $\mathcal{D}(\nu; {\bm \kappa}, \tau_0, {\bm \tau})$ are given by
\begin{equation}
\label{eq:nu-zs-ex}
\nu_\pm({\bm \kappa}; \tau_0, {\bm \tau}) = Q_4({\bm \kappa}, \tau_0, {\bm \tau}) \pm \sqrt{|\omega({\bm \kappa}, \tau_0, {\bm \tau})|^2 + Q_6({\bm \kappa}, \tau_0, {\bm \tau})}.
\end{equation}
Here, we define, for ${|{\bm \kappa}| < \kappa^\star}$, ${|\tau_0| < \tau_0^\star}$, ${|{\bm \tau}| < \tau^\star}$, the analytic functions
\begin{align}
\label{eq:def-Q4}
Q_4({\bm \kappa}; \tau_0, {\bm \tau}) & \equiv \frac{1}{2} \mathfrak{t}({\bm \kappa}, \tau_0, {\bm \tau}) \\
\label{eq:def-Q6}
Q_6({\bm \kappa}; \tau_0, {\bm \tau}) & \equiv -\mathfrak{d}({\bm \kappa}, \tau_0, {\bm \tau}) + \frac{1}{4} \mathfrak{t}({\bm \kappa}, \tau_0, {\bm \tau})^2 - |\omega({\bm \kappa}; \tau_0, {\bm \tau})|^2.
\end{align}
By Proposition \ref{prop:frak-t&d-pr}, \eqref{eq:frak-t-ex}, and \eqref{eq:frak-d-ex}, these satisfy, for ${n \smallin \{ 4, \, 6 \}}$,
\begin{align}
\label{eq:Qn-ev}
Q_n({\bm \kappa}; \tau_0, {\bm \tau}) & = Q_n(-{\bm \kappa}; \tau_0, {\bm \tau}), \\
\label{eq:Qn-bd}
Q_n({\bm \kappa}; \tau_0, {\bm \tau}) & = O(|{\bm \kappa}|^n + |\tau_0|^{n/2} + |{\bm \tau}|^{n/2}) \ \ \text{as} \ \ |{\bm \kappa}|, \, |\tau_0|, \, |{\bm \tau}| \to 0.
\end{align}

Finally, from the change of variables \eqref{eq:def-nu}, we define
\begin{equation}
\label{eq:eps-zs-ex}
\varepsilon_\pm({\bm \kappa}; \tau_0, {\bm \tau}) \equiv \beta_0 \tau_0 + (1 - \alpha_0) |{\bm \kappa}|^2 + \nu_\pm({\bm \kappa}; \tau_0, {\bm \tau}),
\end{equation}
such that the solutions to \eqref{eq:det-calM-0_3} are ${\varepsilon = \varepsilon_\pm({\bm \kappa}; \tau_0, {\bm \tau})}$.
Further, we define
\begin{equation}
\label{eq:def-Epm}
E_\pm({\bm \kappa}; \tau_0, {\bm \tau}) \equiv E_S + \varepsilon_\pm({\bm \kappa}; \tau_0, {\bm \tau})
\end{equation}
such that 
\begin{align}
E_\pm({\bm \kappa}; \tau_0, {\bm \tau}) - E_S & = \beta_0 \tau_0 + (1 - \alpha_0)|{\bm \kappa}|^2 + Q_4({\bm \kappa}, \tau_0, {\bm \tau}) \\
& \qquad \pm \sqrt{(\beta_1 \tau_1 - \alpha_1 {\bm \kappa} \cdot \sigma_1 {\bm \kappa})^2 + (\beta_2 \tau_3 - \alpha_2 {\bm \kappa} \cdot \sigma_3 {\bm \kappa})^2 + Q_6({\bm \kappa}, \tau_0, {\bm \tau})}. \nonumber
\end{align}
By Proposition \ref{prop:red-M}, $E_\pm({\bm \kappa}; \tau_0, {\bm \tau})$ are $L^2_{{\bm M} + {\bm \kappa}}$ eigenvalues of $H^{\tau_0, {\bm \tau}}$.

\begin{remark}
Observe that ${E_\pm({\bm \kappa}; \tau_0, {\bm \tau}) = E_\pm(-{\bm \kappa}; \tau_0, {\bm \tau})}$, which reflects the global band structure symmetry discussed in Proposition \ref{prop:gl-P-sym}. For this local result, the global symmetry arises first at the level of $\mathcal{M}(\varepsilon, {\bm \kappa}; \tau_0, {\bm \tau})$ (see conclusion \ref{itm:calM-ev} of Proposition \ref{prop:calM-pr}), then in $\mathcal{D}(\nu; {\bm \kappa}, \tau_0, {\bm \tau})$ (see conclusion \ref{itm:calD-ev} of Proposition \ref{prop:calD-pr}), and finally in the roots of $\mathcal{D}(\nu; {\bm \kappa}, \tau_0, {\bm \tau})$ (as discussed above).
\end{remark}

The proof of Theorem \ref{thm:M-srf} is now complete.

\bigskip

\section{Quadratic degeneracies split into Dirac points; Proof of Theorem \ref{thm:M-dgn}}
\label{sec:pf-M-dgn}

\setcounter{equation}{0}
\setcounter{figure}{0}

By conclusion \ref{itm:calM-0} of Proposition \ref{prop:red-M}, for ${|\varepsilon| < \varepsilon^\flat}$, ${|{\bm \kappa}| < \kappa^\flat}$, ${|\tau_0| < \tau_0^\flat}$, ${|{\bm \tau}| < \tau^\flat}$, ${E_S + \varepsilon}$ is an $L^2_{{\bm M} + {\bm \kappa}}$ eigenvalue of $H^{\tau_0, {\bm \tau}}$, of multiplicity two, if and only if
\begin{equation}
\label{eq:calM-0_3}
\mathcal{M}(\varepsilon; {\bm \kappa}, \tau_0, {\bm \tau}) =
 \left(\mathcal{M}_{ij}(\varepsilon; {\bm \kappa}, \tau_0, {\bm \tau})\right)_{1\le i,j\le2} = 0.
\end{equation}
Here, $\mathcal{M}(\varepsilon; {\bm \kappa}, \tau_0, {\bm \tau})$ is the ${2 \times 2}$ matrix introduced in \eqref{eq:calM-a-0}. To prove Theorem \ref{thm:M-dgn}, we construct solutions ${(\varepsilon, \, {\bm \kappa}) = (\varepsilon(\tau_0, {\bm \tau}), \, {\bm \kappa}(\tau_0, {\bm \tau}))}$ to the system \eqref{eq:calM-0_3}.

First, we note a convenient decomposition of the left-hand side of $\mathcal{M}(\varepsilon; {\bm \kappa}, \tau_0, {\bm \tau})$ (Proposition \ref{prop:calM-pr}), we have by Lemma \ref{lem:Ahermsym}, that $\mathcal{M}(\varepsilon; {\bm \kappa}, \tau_0, {\bm \tau})$ has the expansion in terms of the orthonormal basis of Pauli matrices:
\begin{equation}
\label{eq:calM-ex-2}
\mathcal{M}(\varepsilon; {\bm \kappa}, \tau_0, {\bm \tau}) = \sum_{\ell \, = \, 0}^3 h_\ell(\varepsilon, {\bm \kappa}; \tau_0, {\bm \tau}) \, \sigma_\ell.
\end{equation}
 The coefficients $h_\ell(\varepsilon, {\bm \kappa}; \tau_0, {\bm \tau})$, expressed in terms of the entries of $\mathcal{M}(\varepsilon; {\bm \kappa}, \tau_0, {\bm \tau})$ are :
\begin{equation}
\label{eq:def-hs}
\begin{aligned}
h_0(\varepsilon, {\bm \kappa}; \tau_0, {\bm \tau}) & \equiv \frac{1}{2}(\mathcal{M}_{1, 1}(\varepsilon; {\bm \kappa}, \tau_0, {\bm \tau}) + \mathcal{M}_{2, 2}(\varepsilon; {\bm \kappa}, \tau_0, {\bm \tau})), \\
h_1(\varepsilon, {\bm \kappa}; \tau_0, {\bm \tau}) & \equiv {\rm Re}(\mathcal{M}_{1, 2}(\varepsilon; {\bm \kappa}, \tau_0, {\bm \tau})), \\
h_2(\varepsilon, {\bm \kappa}; \tau_0, {\bm \tau}) & \equiv - {\rm Im}\bigl(\mathcal{M}_{1, 2}(\varepsilon; {\bm \kappa}, \tau_0, {\bm \tau})), \\
h_3(\varepsilon, {\bm \kappa}; \tau_0, {\bm \tau}) & \equiv \frac{1}{2}(\mathcal{M}_{1, 1}(\varepsilon; {\bm \kappa}, \tau_0, {\bm \tau}) - \mathcal{M}_{2, 2}(\varepsilon; {\bm \kappa}, \tau_0, {\bm \tau})).
\end{aligned}
\end{equation}
The following proposition summarizes the properties of $h_\ell(\varepsilon, {\bm \kappa};\tau_0,{\bm \tau})$, ${0 \leq \ell \leq 3}$.
\begin{proposition}
\label{prop:hs-pr}
The functions $h_\ell(\varepsilon, {\bm \kappa}; \tau_0, {\bm \tau})$, ${0 \leq \ell \leq 3}$ are analytic functions; they have power series expansions in $(\varepsilon, {\bm \kappa}; \tau_0, {\bm \tau})$ which converge uniformly for ${|\varepsilon| < \varepsilon^\flat}$, ${|{\bm \kappa}| < \kappa^\flat}$, ${|\tau_0| < \tau}_0^\flat$, ${|{\bm \tau}| < \tau^\flat}$. Additionally, these functions have the following properties:
\begin{enumerate}
\item \label{itm:hs-ev} By $\mathcal{P}$ (or $\mathcal{C}$) symmetry,
\begin{equation}
\label{eq:hs-ev}
h_\ell(\varepsilon, {\bm \kappa}; \tau_0, {\bm \tau}) = h_\ell(\varepsilon, -{\bm \kappa}; \tau_0, {\bm \tau}).
\end{equation}
\item \label{itm:h-0&3-PC} By $\mathcal{PC}$ symmetry,
\begin{align}
\label{eq:h-0&3-PC}
h_0(\varepsilon, {\bm \kappa}; \tau_0, {\bm \tau}) &= \mathcal{M}_{1, 1}(\varepsilon; {\bm \kappa}, \tau_0, {\bm \tau})\quad {\rm and}\\
h_3(\varepsilon, {\bm \kappa}; \tau_0, {\bm \tau}) &= 0. \label{eq:h30}
\end{align}
\item \label{itm:hs-ex} 
For ${0 \leq \ell \leq 2}$, $h_\ell(\varepsilon, {\bm \kappa}; \tau_0, {\bm \tau})$ have the expansions:
\begin{equation}
\label{eq:hs-ex}
\begin{aligned}
h_0(\varepsilon, {\bm \kappa}; \tau_0, {\bm \tau}) & = \varepsilon - \beta_0 \tau_0 - (1 - \alpha_0) |{\bm \kappa}|^2 + \widetilde{\mathcal{M}}_{1, 1}(\varepsilon; {\bm \kappa}, \tau_0, {\bm \tau}), \\
h_1(\varepsilon, {\bm \kappa}; \tau_0, {\bm \tau}) & = \beta_1 \tau_1 - \alpha_1 {\bm \kappa} \cdot \sigma_1 {\bm \kappa} + {\rm Re}(\widetilde{\mathcal{M}}_{1, 2}(\varepsilon; {\bm \kappa}, \tau_0, {\bm \tau})), \\
h_2(\varepsilon, {\bm \kappa}; \tau_0, {\bm \tau}) & = \beta_2 \tau_3 - \alpha_2 {\bm \kappa} \cdot \sigma_3 {\bm \kappa} - {\rm Im}(\widetilde{\mathcal{M}}_{1, 2}(\varepsilon; {\bm \kappa}, \tau_0, {\bm \tau})).
\end{aligned}
\end{equation}
The entries of $\widetilde{\mathcal{M}}(\varepsilon; {\bm \kappa}, \tau_0, {\bm \tau})$ satisfy the bounds:
\begin{equation}
\label{eq:t-calM-bd_2}
\widetilde{\mathcal{M}}_{j, k}(\varepsilon; {\bm \kappa}, \tau_0, {\bm \tau}) = O ( |\varepsilon| |{\bm \kappa}|^2 + |{\bm \kappa}|^4 + |{\bm \kappa}|^2 |\tau_0| + |{\bm \kappa}|^2 |{\bm \tau}| + |\tau_0|^2 + |{\bm \tau}|^2 )
\end{equation}
as $|\varepsilon|$, $|{\bm \kappa}|$, $|\tau_0|$, ${|{\bm \tau}| \to 0}$; see conclusion \ref{itm:calM-ex-1} of Proposition \ref{prop:calM-pr}
\end{enumerate}
\end{proposition}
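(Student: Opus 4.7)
The proof is a re-expression of Proposition \ref{prop:calM-pr} in the Pauli-matrix basis, so my strategy is to feed each conclusion of that proposition through the formulas \eqref{eq:def-hs}. Each $h_\ell(\varepsilon, {\bm \kappa}; \tau_0, {\bm \tau})$ is a fixed real-linear combination of $\mathcal{M}_{1,1}$, $\mathcal{M}_{2,2}$, ${\rm Re}(\mathcal{M}_{1,2})$, and ${\rm Im}(\mathcal{M}_{1,2})$ (using self-adjointness of $\mathcal{M}$ from part \ref{itm:calM-sa}), so the real-analyticity claim of the proposition, together with the uniform convergence of power series on \eqref{eq:param-dom}, will follow immediately from part \ref{itm:calM-an}. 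The $\mathcal{P}$-evenness \eqref{eq:hs-ev} is likewise inherited entrywise from part \ref{itm:calM-ev}.

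For the $\mathcal{PC}$ consequences \eqref{eq:h-0&3-PC} and \eqref{eq:h30}, I would invoke part \ref{itm:calM-PC} of Proposition \ref{prop:calM-pr}, namely $\mathcal{M}_{1,1} = \mathcal{M}_{2,2}$, and substitute directly into \eqref{eq:def-hs}, producing $h_0 = \frac{1}{2}(\mathcal{M}_{1,1} + \mathcal{M}_{2,2}) = \mathcal{M}_{1,1}$ and $h_3 = \frac{1}{2}(\mathcal{M}_{1,1} - \mathcal{M}_{2,2}) = 0$.

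For the expansions \eqref{eq:hs-ex} of $h_0$, $h_1$, $h_2$, I would start from the decomposition of part \ref{itm:calM-ex-1}:
\begin{equation*}
\mathcal{M}(\varepsilon; {\bm \kappa}, \tau_0, {\bm \tau}) = \varepsilon I - H^{\bm M}_{\rm eff}({\bm \kappa}; \tau_0, {\bm \tau}) + \tilde{\mathcal{M}}(\varepsilon; {\bm \kappa}, \tau_0, {\bm \tau}),
\end{equation*}
and project onto the Pauli basis using \eqref{eq:def-hs} together with the explicit expansion \eqref{eq:Heff_2} of $H^{\bm M}_{\rm eff}$. Since $\{\sigma_0, \sigma_1, \sigma_2, \sigma_3\}$ is orthonormal for $({\rm Herm}(2), \R)$ under $\langle A, B \rangle = \frac{1}{2}{\rm tr}(A^* B)$ (as discussed after Lemma \ref{lem:Ahermsym}), this reads off the Pauli coefficients of $\varepsilon I - H^{\bm M}_{\rm eff}$ as the explicit leading terms in \eqref{eq:hs-ex}, while those of $\tilde{\mathcal{M}}$ supply the $\widetilde{\mathcal{M}}_{j,k}$-dependent error terms, whose bounds \eqref{eq:t-calM-bd_2} coincide with \eqref{eq:til-calM-bd}. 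The only potential obstacle is careful sign bookkeeping among the minus sign in the decomposition, the coefficients of $H^{\bm M}_{\rm eff}$ in the Pauli basis, and the conventions in \eqref{eq:def-hs} (which absorb the $-i$ entry of $\sigma_2$ into the definition of $h_2$); everything else is routine.
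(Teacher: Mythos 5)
Your proposal is correct and follows exactly the route the paper takes: its proof of this proposition is the one-line observation that everything "follows immediately from the definitions \eqref{eq:def-hs} and Proposition \ref{prop:calM-pr}," and your write-up simply supplies the entrywise details of that deduction (analyticity and evenness inherited entry by entry, the $\mathcal{PC}$ identity $\mathcal{M}_{1,1}=\mathcal{M}_{2,2}$ giving $h_0=\mathcal{M}_{1,1}$ and $h_3=0$, and the Pauli-basis projection of the decomposition \eqref{eq:calM-ex-1}). No gaps.
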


\begin{proof}
These follow immediately from the definitions \eqref{eq:def-hs} and
Proposition \ref{prop:calM-pr}.
\end{proof}

\noindent Applying Proposition \ref{prop:hs-pr}, we find that the condition \eqref{eq:calM-0_3} for the existence of twofold band structure degeneracies of $H^{\tau_0, {\bm \tau}}$ in a neighborhood of $(E_S, \, {\bm M})$ is equivalent to a system of three equations:
\begin{equation}
\label{eq:hs-0}
\begin{aligned}
h_0(\varepsilon, {\bm \kappa}; \tau_0, {\bm \tau}) & = \mathcal{M}_{1, 1}(\varepsilon; {\bm \kappa}, \tau_0, {\bm \tau}) = 0, \\
h_1(\varepsilon, {\bm \kappa}; \tau_0, {\bm \tau}) & = {\rm Re}(\mathcal{M}_{1, 2}(\varepsilon; {\bm \kappa}, \tau_0, {\bm \tau})) = 0, \\
h_2(\varepsilon, {\bm \kappa}; \tau_0, {\bm \tau}) & = -{\rm Im}(\mathcal{M}_{1, 2}(\varepsilon; {\bm \kappa}, \tau_0, {\bm \tau})) = 0.
\end{aligned}
\end{equation}
In what follows, we use the expansions from conclusion \ref{itm:hs-ex} of Proposition \ref{prop:hs-pr} to construct solutions to the system \eqref{eq:hs-0} via the implicit function theorem.

\subsection{Set-up for the implicit function theorem}
\label{sec:hs-0-ift}

Introduce polar coordinates as in 
\eqref{eq:def-tau-polar}:
$ {\bm \tau} = |{\bm \tau}| \hat{\bm\tau}(\varphi) \equiv |{\bm \tau}| [\cos(\varphi), \, \sin(\varphi)]^\mathsf{T}$. Then, by part  \ref{itm:hs-ex} of Proposition \ref{prop:hs-pr}, the system \eqref{eq:hs-0} can be expressed as:
\begin{equation}
\label{eq:hs-0_2}
\begin{aligned}
\varepsilon - \beta_0 \tau_0 - (1 - \alpha_0) |{\bm \kappa}|^2 + \tilde{\mathcal{M}}_{1, 1}(\varepsilon; {\bm \kappa}, \tau_0, |{\bm \tau}| \hat{\bm \tau}(\varphi)) & = 0, \\
\beta_1 |{\bm \tau}| \cos(\varphi) - \alpha_1 {\bm \kappa} \cdot \sigma_1 {\bm \kappa} + {\rm Re}(\tilde{\mathcal{M}}_{1, 2}(\varepsilon; {\bm \kappa}, \tau_0, |{\bm \tau}| \hat{\bm \tau}(\varphi))) & = 0, \\
\beta_2 |{\bm \tau}| \sin(\varphi) - \alpha_2 {\bm \kappa} \cdot \sigma_3 {\bm \kappa} - {\rm Im}(\tilde{\mathcal{M}}_{1, 2}(\varepsilon; {\bm \kappa}, \tau_0, |{\bm \tau}| \hat{\bm \tau}(\varphi))) & = 0.
\end{aligned}
\end{equation}

For any fixed ${\varphi \smallin [-\pi, \, \pi]}$, we shall construct solutions to \eqref{eq:hs-0_2} by first rescaling ${\bm \kappa}$:
\begin{align}
\label{eq:def-t-kap}
{\bm \kappa} = \sqrt{|{\bm \tau}|} \tilde{\bm \kappa}.
\end{align}
The scaling \eqref{eq:def-t-kap} is natural since: the leading order terms of $h_\ell$, ${1 \leq \ell \leq 2}$, do not depend on $\tau_0$, are linear in $|{\bm \tau}|$, and are quadratic in ${\bm \kappa}$. Moreover, recalling conclusion \ref{itm:hs-ev} of Proposition \ref{prop:hs-pr}, the functions $h_\ell(\varepsilon, {\bm \kappa}; \tau_0, {\bm \tau})$, ${0 \leq \ell \leq 2}$, are even in ${\bm \kappa}$, implying that their power series expansions contain only even powers of ${\bm \kappa}$. \\

\noindent {\bf N.B.} For  the remainder of the proof we shall, for convenience, omit dependence on the fixed parameter ${\varphi \smallin [-\pi, \, \pi]}$ in certain expressions. \\

\noindent Substitution of the polar representation of $\bm\tau$ and \eqref{eq:def-t-kap} into \eqref{eq:hs-0_2} yields:
\begin{equation}
\label{eq:hs-0_3}
\begin{aligned}
\varepsilon - \beta_0 \tau_0 - (1 - \alpha_0) |\tilde{\bm \kappa}|^2 |{\bm \tau}| + g_0(\varepsilon, \tilde{\bm \kappa}; \tau_0, |{\bm \tau}|) & = 0, \\
( \beta_1 \cos(\varphi) - \alpha_1 \tilde{\bm \kappa} \cdot \sigma_1 \tilde{\bm \kappa} ) |{\bm \tau}| + g_1(\varepsilon, \tilde{\bm \kappa}; \tau_0, |{\bm \tau}|) & = 0, \\
( \beta_2 \sin(\varphi) - \alpha_2 \tilde{\bm \kappa} \cdot \sigma_3 \tilde{\bm \kappa} ) |{\bm \tau}| + g_2(\varepsilon, \tilde{\bm \kappa}; \tau_0, |{\bm \tau}|) & = 0,
\end{aligned}
\end{equation}
where
\begin{equation}
\label{eq:def-gs}
\begin{aligned}
g_0(\varepsilon, \tilde{\bm \kappa}; \tau_0, |{\bm \tau}|) & \equiv \tilde{\mathcal{M}}_{1, 1}(\varepsilon, \sqrt{|{\bm \tau}|} \tilde{\bm \kappa}; \tau_0, |{\bm \tau}| \hat{\bm \tau}), \\
g_1(\varepsilon, \tilde{\bm \kappa}; \tau_0, |{\bm \tau}|) & \equiv {\rm Re}(\tilde{\mathcal{M}}_{1, 2}(\varepsilon, \sqrt{|{\bm \tau}|} \tilde{\bm \kappa}; \tau_0, |{\bm \tau}| \hat{\bm \tau})), \\
g_2(\varepsilon, \tilde{\bm \kappa}; \tau_0, |{\bm \tau}|) & \equiv -{\rm Im}(\tilde{\mathcal{M}}_{1, 2}(\varepsilon, \sqrt{|{\bm \tau}|} \tilde{\bm \kappa}; \tau_0, |{\bm \tau}| \hat{\bm \tau})).
\end{aligned}
\end{equation}

\noindent The expressions $g_\ell(\varepsilon, \tilde{\bm \kappa}; \tau_0, s)$, ${0 \leq \ell \leq 2}$, where $s$ is allowed to take on negative values, define analytic functions in a neighborhood of ${(\tau_0, s) = (0, 0)}$. We shall first solve: 
\begin{equation}
\label{eq:hs-0_4}
\begin{aligned}
\varepsilon - \beta_0 \tau_0 - (1 - \alpha_0) |\tilde{\bm \kappa}|^2 s + g_0(\varepsilon, \tilde{\bm \kappa}; \tau_0, s) & = 0, \\
( \beta_1 \cos(\varphi) - \alpha_1 \tilde{\bm \kappa} \cdot \sigma_1 \tilde{\bm \kappa} ) s + g_1(\varepsilon, \tilde{\bm \kappa}; \tau_0, s) & = 0, \\
( \beta_2 \sin(\varphi) - \alpha_2 \tilde{\bm \kappa} \cdot \sigma_3 \tilde{\bm \kappa} ) s + g_2(\varepsilon, \tilde{\bm \kappa}; \tau_0, s) & = 0,
\end{aligned}
\end{equation}
for ${(\varepsilon, \, \tilde{\bm \kappa}) = (\varepsilon(\tau_0, s), \, \tilde{\bm \kappa}(\tau_0, s))}$ in a neighborhood of ${(\tau_0, \, s) = (0, \, 0)}$. The desired solution of \eqref{eq:hs-0_3} is then ${(\varepsilon, \, \tilde{\bm \kappa}) = (\varepsilon(\tau_0, |\bm\tau|), \, \tilde{\bm \kappa}(\tau_0, |\bm\tau|))}$.

\begin{proposition}
\label{prop:gs-pr}
The  functions $g_\ell(\varepsilon, \tilde{\bm \kappa}; \tau_0, s)$, ${0 \leq \ell \leq 2}$ are analytic and  have the following properties:
\begin{enumerate}
\item \label{itm:g-0} For ${\ell = 0}$, $g_0(\varepsilon, \tilde{\bm \kappa}; \tau_0, s)$ satisfies:
\begin{equation}
\label{eq:g-0-bd}
g_0(\varepsilon, \tilde{\bm \kappa}; \tau_0, s) = O ( |\tau_0|^2 + |s| ) \ \ \text{as} \ \ |\tau_0|, \, |s| \to 0,
\end{equation}
uniformly for $(\varepsilon, \, \tilde{\bm \kappa})$ in any sufficiently small neighborhood of ${(\varepsilon, \, \tilde{\bm \kappa}) = (0, \, {\bm 0})}$. Furthermore,
\begin{equation}
\label{eq:f-bd}
g_0(\varepsilon, \tilde{\bm \kappa}; \tau_0, 0) \equiv f(\varepsilon; \tau_0) = O ( |\tau_0|^2 + |\varepsilon||\tau_0|^2 + |\tau_0|^3 ) \ \ \text{as} \ \ |\varepsilon|, \, |\tau_0| \to 0.
\end{equation}
Here, $f(\varepsilon; \tau_0)$ is analytic for $(\varepsilon; \tau_0)$ in a neighborhood of ${(\varepsilon; \tau_0) = (0; 0)}$ and does not depend on $\tilde{\bm \kappa}$ (or $\varphi$). 

\item \label{itm:g-1&2} For ${\ell \smallin \{ 1, \, 2 \}}$, there exist analytic functions $\tilde{g}_\ell(\varepsilon, \tilde{\bm \kappa}; \tau_0, s)$ such that
\begin{equation}
\label{eq:g-1&2-ex}
g_\ell(\varepsilon, {\bm \kappa}_1; \tau_0, s) = s \, \tilde{g}_\ell(\varepsilon, \tilde{\bm \kappa}; \tau_0, s).
\end{equation}
The functions $\tilde{g}_\ell(\varepsilon, \tilde{\bm \kappa}; \tau_0, s)$ have expansions
\begin{align}
\label{eq:g-1&2-1-ex}
\tilde{g}_\ell(\varepsilon, \tilde{\bm \kappa}; \tau_0, s) & = \tilde{g}_\ell^{(0)}(\varepsilon, \tilde{\bm \kappa}) + \tilde{g}_\ell^{(1)}(\varepsilon, \tilde{\bm \kappa}; \tau_0, s), \quad \text{where} \\
\label{eq:g-1&2-0&1-bd}
\tilde{g}_\ell^{(0)}(\varepsilon, \tilde{\bm \kappa}) & = O(|\varepsilon|) \ \ \text{as} \ \ |\varepsilon| \to 0 \\ \tilde{g}_\ell^{(1)}(\varepsilon, \tilde{\bm \kappa}; \tau_0, s) & = O(|\tau_0| + |s|) \ \ \text{as} \ \ |\tau_0|, \, |s| \to 0,
\end{align}
uniformly for $(\varepsilon, \, \tilde{\bm \kappa})$ in any sufficiently small neighborhood of ${(\varepsilon, \, \tilde{\bm \kappa}) = (0, \, {\bm 0})}$.
\end{enumerate}
\end{proposition}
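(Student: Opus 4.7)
The plan is to derive both parts of Proposition \ref{prop:gs-pr} from the structural expansion of $\mathcal{M}(\varepsilon; {\bm \kappa}, \tau_0, {\bm \tau})$ recorded in Proposition \ref{prop:calM-pr}, exploiting two facts: the evenness of $\tilde{\mathcal{M}}(\varepsilon; {\bm \kappa}, \tau_0, {\bm \tau})$ in ${\bm \kappa}$ (from the global $\mathcal{P}$ or $\mathcal{C}$ symmetry, part \ref{itm:calM-ev}), and the vanishing of $\mathcal{M}_{1,2}$ at ${\bm \kappa} = {\bm 0}$, ${\bm \tau} = {\bm 0}$, which I will establish from the preserved rotational ($\mathcal{R}$) symmetry.

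For part \ref{itm:g-0}, the substitution ${\bm \kappa} = \sqrt{s}\tilde{\bm \kappa}$, ${\bm \tau} = s\hat{\bm \tau}(\varphi)$ in $\tilde{\mathcal{M}}_{1,1}(\varepsilon; {\bm \kappa}, \tau_0, {\bm \tau})$ produces an analytic function of $s$ because $\tilde{\mathcal{M}}_{1,1}$ is even in ${\bm \kappa}$; each ${\bm \kappa}$-factor appears paired, contributing an integer power of $s$. Applying this substitution term-by-term to the bound \eqref{eq:t-calM-bd_2} produces $O(|\tau_0|^2 + |s|)$, which is \eqref{eq:g-0-bd}. Setting $s = 0$ forces both ${\bm \kappa} = {\bm 0}$ and ${\bm \tau} = {\bm 0}$, so the dependence on $\tilde{\bm \kappa}$ drops out; the resulting function $f(\varepsilon;\tau_0) = \tilde{\mathcal{M}}_{1,1}(\varepsilon; {\bm 0}, \tau_0, {\bm 0})$ is $O(|\tau_0|^2)$ by \eqref{eq:t-calM-bd_2}. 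The more detailed bound \eqref{eq:f-bd} then follows by Taylor-expanding this analytic function in $\varepsilon$ and $\tau_0$: the leading monomial is $c \tau_0^2$, and the next-order contributions are precisely $\varepsilon \tau_0^2$ and $\tau_0^3$.

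The main obstacle in part \ref{itm:g-1&2} is establishing $\tilde{\mathcal{M}}_{1,2}(\varepsilon; {\bm 0}, \tau_0, {\bm 0}) = 0$ for all admissible $\varepsilon$ and $\tau_0$, which is the key ingredient for the factorization $g_\ell = s\tilde{g}_\ell$. I will prove this via rotational symmetry: at ${\bm \kappa} = {\bm 0}$ and ${\bm \tau} = {\bm 0}$ the operator $H^{\tau_0, {\bm 0}}({\bm M}) = H({\bm M}) + \tau_0 \nabla_{\bm M}^2$ commutes with $\mathcal{R}$, so both the projections $\Pi_{\bm M}^\parallel$, $\Pi_{\bm M}^\perp$ and the resolvent $\mathscr{R}_{\bm M}(E_S)$ preserve the $\mathcal{R}$-eigenspace decomposition \eqref{eq:L2M-dsum}. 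Since the operator $\mathcal{L}({\bm 0};\tau_0,{\bm 0}) = -\tau_0\nabla_{\bm M}^2$ also commutes with $\mathcal{R}$, the correction $\phi_2^{(1)}[\varepsilon, {\bm 0}; \tau_0, {\bm 0}]$ constructed in Section \ref{sec:phi-1} remains in the same $\mathcal{R}$-eigenspace as $\phi_2$, namely $L^2_{{\bm M}, -i}$. As $\phi_1 \in L^2_{{\bm M}, +i}$, orthogonality of distinct $\mathcal{R}$-eigenspaces yields $\mathcal{M}_{1,2}(\varepsilon; {\bm 0}, \tau_0, {\bm 0}) = 0$; combined with $H^{\bm M}_{\rm eff,1,2}({\bm 0}; \tau_0, {\bm 0}) = 0$ evident from \eqref{eq:Heff_2}, this gives $\tilde{\mathcal{M}}_{1,2}(\varepsilon; {\bm 0}, \tau_0, {\bm 0}) = 0$.

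With this vanishing in hand, every monomial in the power series of $\tilde{\mathcal{M}}_{1,2}(\varepsilon; {\bm \kappa}, \tau_0, {\bm \tau})$ contains at least one factor drawn from $\{\kappa_1, \kappa_2, \tau_1, \tau_3\}$. Substitution assigns each such factor $\sqrt{s}$ (from ${\bm \kappa}$) or $s$ (from ${\bm \tau}$); evenness in ${\bm \kappa}$ forces ${\bm \kappa}$-factors into pairs, so every monomial acquires at least one full factor of $s$. This establishes the factorization $g_\ell = s\tilde{g}_\ell$ with $\tilde{g}_\ell$ analytic on the required domain. Defining $\tilde{g}_\ell^{(0)}(\varepsilon, \tilde{\bm \kappa}) \equiv \tilde{g}_\ell(\varepsilon, \tilde{\bm \kappa}; 0, 0)$ and $\tilde{g}_\ell^{(1)} \equiv \tilde{g}_\ell - \tilde{g}_\ell^{(0)}$ gives an analytic splitting with $\tilde{g}_\ell^{(1)} = O(|\tau_0| + |s|)$ by construction. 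The bound $\tilde{g}_\ell^{(0)} = O(|\varepsilon|)$ will follow by identifying which terms from \eqref{eq:t-calM-bd_2} survive at order $s^1$ after setting $\tau_0 = 0$: only the substitution $|\varepsilon||{\bm \kappa}|^2 \mapsto |\varepsilon||\tilde{\bm \kappa}|^2 s$ contributes at $s^1$, while $|{\bm \kappa}|^4$, $|{\bm \kappa}|^2|{\bm \tau}|$, and $|{\bm \tau}|^2$ become $O(s^2)$, and the $|\tau_0|^2$, $|{\bm \kappa}|^2|\tau_0|$ terms vanish at $\tau_0 = 0$.
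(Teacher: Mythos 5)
Your proof is correct and takes essentially the same route as the paper's (Appendix G). Both rely on the same two key observations: evenness of $\tilde{\mathcal{M}}$ in ${\bm\kappa}$ makes the rescaling ${\bm\kappa}=\sqrt{s}\,\tilde{\bm\kappa}$, ${\bm\tau}=s\,\hat{\bm\tau}(\varphi)$ analytic in $s$, and $\mathcal{R}$-symmetry at ${\bm\kappa}={\bm 0}$, ${\bm\tau}={\bm 0}$ forces $\tilde{\mathcal{M}}_{1,2}(\varepsilon;{\bm 0},\tau_0,{\bm 0})=0$, yielding the factorization $g_\ell = s\,\tilde g_\ell$.

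One comparative remark and one caution. For the bounds \eqref{eq:f-bd} and $\tilde{g}_\ell^{(0)}(\varepsilon,\tilde{\bm\kappa})=O(|\varepsilon|)$, the paper works with the explicit formula $\tilde{\mathcal{M}}_{1,1}(\varepsilon;{\bm 0},\tau_0,{\bm 0})=\tau_0^2\langle\Delta\Phi_1,(1-\mathcal{A})^{-1}\mathscr{R}(E_S)\Delta\Phi_1\rangle$ and with the explicit $s^1$-coefficient in \eqref{eq:tM12exp}, then bounds $\lVert(1-\mathcal{A}(\varepsilon;{\bm 0},0,{\bm 0}))^{-1}-1\rVert=O(|\varepsilon|)$. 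You instead argue from the bound \eqref{eq:t-calM-bd_2} alone that only the $|\varepsilon||{\bm\kappa}|^2$ monomial class can survive at order $s^1$ (with $\tau_0=0$). That route works, but it silently uses the fact that an $O$-estimate on an analytic function on a polydisc forces the excluded Taylor coefficients to vanish (Cauchy estimates, e.g., setting $\varepsilon={\bm\tau}={\bm 0}$ gives $O(|{\bm\kappa}|^4)$ hence no degree-two pure-$\kappa$ monomials); that step should be made explicit. Also a small notational slip: the $\mathcal{R}$-eigenspace argument should be phrased in terms of $\Phi_1,\Phi_2\in L^2_{{\bm M},\pm i}$ as the paper does, not the lowercase $\phi_j$, which live in $L^2(\R^2/\Z^2)$ and carry the $\mathcal{R}$-grading transported by the $e^{-i{\bm M}\cdot{\bm x}}$ factor.
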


\begin{proof}
The proof of Proposition \ref{prop:gs-pr} can be found in Appendix \ref{apx:gs-pr}.
\end{proof}

We summarize the content of this subsection with the following:

\begin{proposition}
\label{prop:hs-0_fin}
For ${|\tau_0| < \tau_0^\flat}$, ${|{\bm \tau}| < \tau^\flat}$, and any fixed ${\varphi \smallin [-\pi, \, \pi]}$, the energy-quasimomentum pair 
\[ {(E, \, {\bm k}) = (E_S + \varepsilon, \, {\bm M} + {\bm \kappa})}\]
is a twofold degenerate eigenpair of of $H^{\tau_0, {\bm \tau}}$ if and only if
\begin{equation}
\label{eq:calM-0-sol}
(\varepsilon, \, {\bm \kappa}) = (\varepsilon(\tau_0, s), \, \sqrt{s} \tilde{\bm \kappa}(\tau_0, s)) \bigr\rvert_{s \, = \, |{\bm \tau}|}
\end{equation}
where ${(\varepsilon, \, \tilde{\bm \kappa}) = (\varepsilon(\tau_0, s), \, \tilde{\bm \kappa}(\tau_0, s))}$ is a solution to the system of equations:
\begin{equation}
\label{eq:hs-0_fin}
\begin{aligned}
\varepsilon - \beta_0 \tau_0 - (1 - \alpha_0) |\tilde{\bm \kappa}|^2 s + g_0(\varepsilon, \tilde{\bm \kappa}; \tau_0, s)  & = 0, \\
s \cdot \bigl( \beta_1 \cos(\varphi) - \alpha_1 \tilde{\bm \kappa} \cdot \sigma_1 \tilde{\bm \kappa}  +  \tilde{g}_1(\varepsilon, \tilde{\bm \kappa}; \tau_0, s) \bigr) & = 0, \\
s \cdot \bigl( \beta_2 \sin(\varphi) - \alpha_2 \tilde{\bm \kappa} \cdot \sigma_3 \tilde{\bm \kappa} + \tilde{g}_2(\varepsilon, \tilde{\bm \kappa}; \tau_0, s) \bigr) & = 0.
\end{aligned}
\end{equation}
\end{proposition}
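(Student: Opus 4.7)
The plan is to establish Proposition \ref{prop:hs-0_fin} as a bookkeeping/reformulation statement, assembling the reductions already obtained. Since Proposition \ref{prop:red-M}, part \ref{itm:calM-0}, identifies twofold degenerate eigenpairs with the vanishing of the matrix $\mathcal{M}(\varepsilon; {\bm \kappa}, \tau_0, {\bm \tau})$, and Proposition \ref{prop:hs-pr} translates this matrix equation into a system of three scalar equations $h_0 = h_1 = h_2 = 0$ (noting $h_3\equiv 0$ by $\mathcal{PC}$ symmetry), the main task is to verify that the rescaling ${\bm \kappa} = \sqrt{|{\bm \tau}|}\tilde{\bm\kappa}$ together with the polar representation ${\bm \tau} = |{\bm \tau}| \hat{\bm\tau}(\varphi)$ takes this system into the stated form \eqref{eq:hs-0_fin}.

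First, I would write out the equivalence chain: a twofold degeneracy at $(E_S+\varepsilon, {\bm M}+{\bm\kappa})$ holds iff $\mathcal{M}(\varepsilon; {\bm \kappa}, \tau_0, {\bm \tau}) = 0$ by part \ref{itm:calM-0} of Proposition \ref{prop:red-M}, iff each of $h_0$, $h_1$, $h_2$ (as defined in \eqref{eq:def-hs}) vanishes, by part \ref{itm:h-0&3-PC} of Proposition \ref{prop:hs-pr}. Substituting the expansions in part \ref{itm:hs-ex} of Proposition \ref{prop:hs-pr} gives exactly the system \eqref{eq:hs-0_2} after writing ${\bm \tau} = |{\bm \tau}|[\cos\varphi,\sin\varphi]^{\mathsf T}$.

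Next, I would introduce $\tilde{\bm \kappa}$ via ${\bm \kappa} = \sqrt{|{\bm\tau}|}\tilde{\bm\kappa}$, which is motivated (and here justified) by the observation that in \eqref{eq:hs-0_2} the leading non-constant terms in ${\bm \kappa}$ are quadratic while those in ${\bm \tau}$ are linear, so the natural blow-up that keeps both contributions at the same order is this parabolic scaling. Substitution produces \eqref{eq:hs-0_3} with $g_\ell$ defined by \eqref{eq:def-gs}. Then I would invoke Proposition \ref{prop:gs-pr}, part \ref{itm:g-1&2}: since $\tilde{\mathcal{M}}_{1,2}(\varepsilon, {\bm \kappa}; \tau_0, {\bm \tau})$ vanishes when ${\bm\kappa}=\bm 0$, $\tau_0=0$, and ${\bm\tau}=\bm 0$ (see \eqref{eq:til-calM-bd}) and the leading nonvanishing monomials all carry a factor $|{\bm\tau}|$ or $|{\bm\kappa}|^2 = s|\tilde{\bm\kappa}|^2$, one may factor $s$ out of $g_1$ and $g_2$ as analytic functions of $(\varepsilon,\tilde{\bm\kappa},\tau_0,s)$ in a suitable polydisk, yielding the expressions $s\,\tilde g_1$ and $s\,\tilde g_2$ in \eqref{eq:g-1&2-ex}. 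Insertion of these into \eqref{eq:hs-0_3} produces \eqref{eq:hs-0_fin}.

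Finally, I would note the equivalence at the level of solutions: setting $s = |{\bm\tau}|$ after solving \eqref{eq:hs-0_fin} reconstructs ${\bm\kappa} = \sqrt{|{\bm\tau}|}\tilde{\bm\kappa}$, so \eqref{eq:calM-0-sol} exactly records the correspondence between solutions of the reduced system in the rescaled variables and twofold degenerate eigenpairs of $H^{\tau_0,{\bm\tau}}$. I expect no real obstacle here; the only care needed is to confirm that the analyticity domain of Proposition \ref{prop:gs-pr} is large enough to absorb the rescaling ${\bm\kappa} = \sqrt{s}\,\tilde{\bm\kappa}$ uniformly in $\varphi$, which follows from the uniform bounds in \eqref{eq:til-calM-bd} together with $|\hat{\bm\tau}(\varphi)|=1$. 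The substantive analysis—applying the implicit function theorem to \eqref{eq:hs-0_fin} to extract $\varepsilon(\tau_0,s)$ and $\tilde{\bm\kappa}(\tau_0,s)$, and thence Theorem \ref{thm:M-dgn}—is reserved for the next subsection.
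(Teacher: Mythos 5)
Your proposal follows the paper's route exactly: the paper proves Proposition~\ref{prop:hs-0_fin} simply by substituting the factorization~\eqref{eq:g-1&2-ex} from Proposition~\ref{prop:gs-pr} into the rescaled system~\eqref{eq:hs-0_4}, which is precisely your chain of equivalences. Your bookkeeping of the references (Proposition~\ref{prop:red-M} part~\ref{itm:calM-0}, Proposition~\ref{prop:hs-pr} parts~\ref{itm:h-0&3-PC} and~\ref{itm:hs-ex}, the rescaling~\eqref{eq:def-t-kap}, and the factorization from Proposition~\ref{prop:gs-pr} part~\ref{itm:g-1&2}) is correct.

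One imprecision worth flagging: your parenthetical justification that $g_1$ and $g_2$ factor by $s$ because ``the leading nonvanishing monomials all carry a factor $|{\bm\tau}|$ or $|{\bm\kappa}|^2 = s|\tilde{\bm\kappa}|^2$'' is not actually supplied by the bound~\eqref{eq:til-calM-bd}. That bound permits a $|\tau_0|^2$ contribution to $\tilde{\mathcal{M}}_{1,2}$, which after the rescaling ${\bm\kappa} = \sqrt{s}\,\tilde{\bm\kappa}$ would survive setting $s = 0$ and therefore would obstruct factoring out $s$. The fact that $\tilde{\mathcal{M}}_{1,2}(\varepsilon;{\bm 0},\tau_0,{\bm 0}) = 0$ is \emph{not} a size estimate but a symmetry cancellation: the paper's proof of Proposition~\ref{prop:gs-pr} part~\ref{itm:g-1&2} uses $\mathcal{R}$-invariance to show the relevant inner product picks up a factor $(-i)^2=-1$ and hence vanishes identically. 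Since you ultimately invoke Proposition~\ref{prop:gs-pr} as a black box the logical structure is fine, but your intuitive gloss would not itself establish the factorization.
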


\begin{proof}
This follows immediately after substituting \eqref{eq:g-1&2-ex} into \eqref{eq:hs-0_4}.
\end{proof}

\subsection{The set of solutions to (\ref{eq:hs-0_fin}) for $\tau_0$, $s$ sufficiently small}
\label{sec:hs-0-sol}

We divide our study of the system \eqref{eq:hs-0_fin} into two cases: ${s = 0}$ and ${s \neq 0}$.

If ${s = 0}$, the second and third equations of \eqref{eq:hs-0_fin} are satisfied identically, and \eqref{eq:hs-0_fin} reduces to the single nonlinear equation:
\begin{equation}
\label{eq:F-0}
F(\varepsilon; \tau_0) \equiv \varepsilon - \beta_0 \tau_0 + f(\varepsilon; \tau_0) = 0.
\end{equation}
Here, $f(\varepsilon; \tau_0)$ is defined in conclusion \ref{itm:g-0} of Proposition \ref{prop:gs-pr} and does not depend on $\tilde{\bm \kappa}$ (or $\varphi$). The function $F(\varepsilon; \tau_0)$ is analytic in a neighborhood of ${(\varepsilon; \tau_0) = (0; 0)}$.

For ${s \neq 0}$, \eqref{eq:hs-0_fin} is equivalent to the system:
\begin{equation}
\label{eq:G-0}
\begin{aligned}
G_1(\varepsilon, \tilde{\bm \kappa}; \tau_0, s) & \equiv \varepsilon - \beta_0 \tau_0 - (1 - \alpha_0) |\tilde{\bm \kappa}|^2 s + g_0(\varepsilon, \tilde{\bm \kappa}; \tau_0, s) = 0, \\
G_2(\varepsilon, \tilde{\bm \kappa}; \tau_0, s) & \equiv \beta_1 \cos(\varphi) - \alpha_1 \tilde{\bm \kappa} \cdot \sigma_1 \tilde{\bm \kappa} + \tilde{g}_1(\varepsilon, \tilde{\bm \kappa}; \tau_0, s) = 0, \\
G_3(\varepsilon, \tilde{\bm \kappa}; \tau_0, s) & \equiv \beta_2 \sin(\varphi) - \alpha_2 \tilde{\bm \kappa} \cdot \sigma_3 \tilde{\bm \kappa} + \tilde{g}_2(\varepsilon, \tilde{\bm \kappa}; \tau_0, s) = 0.
\end{aligned}
\end{equation}
The vector-valued function $G(\varepsilon, \tilde{\bm \kappa}; \tau_0, s)$, with components defined in \eqref{eq:G-0}, is analytic in a neighborhood of ${(\varepsilon, \tilde{\bm \kappa}; \tau_0, s) = (0, {\bm 0}; 0, 0)}$.

\subsubsection{Solutions to (\ref{eq:hs-0_fin}) for $\tau_0$ sufficiently small, ${s = 0}$}
\label{sec:F-0}

We first solve \eqref{eq:F-0}, for $\tau_0$ in a neighborhood of ${\tau_0 = 0}$, using the implicit function theorem. 
For ${\tau_0 = 0}$, \eqref{eq:F-0} reads:
$F(\varepsilon; 0) = \varepsilon = 0$.
Clearly, ${\varepsilon = 0}$ is the only solution. Furthermore, 
$\partial_\varepsilon F(0; 0) = 1 \neq 0$.
Hence, by the implicit function theorem, there exists ${\tau_0^{\star\star} > 0}$, and a unique, analytic function $\varepsilon_M(\tau_0)$, defined for $|\tau_0| < \tau_0^{\star \star}$, such that
\begin{equation}
\label{eq:F-0-ift}
\varepsilon_M(0) = 0, \quad F(\varepsilon_M(\tau_0); \tau_0) = 0,
\quad|\tau_0|<\tau_0^{\star\star}.
\end{equation}
Expanding $\varepsilon_M(\tau_0)$ about $\tau_0 = 0$ yields:$
\varepsilon_M(\tau_0) = \beta_0 \tau_0 + \varepsilon_M^{(1)}(\tau_0)$, where
$
\varepsilon_M^{(1)}(\tau_0) = O(|\tau_0|^2) \ \ \text{as} \ \ |\tau_0| \to 0$.
Here, the coefficient of 
$\tau_0$ is obtained via implicit differentiation of \eqref{eq:F-0-ift}. Finally, define
\begin{align}
\label{eq:def-EM_2}
E_M(\tau_0) & \equiv E_S + \varepsilon_M(\tau_0) = E_S + \beta_0 \tau_0 + \varepsilon_M^{(1)}(\tau_0).
\end{align}
By Proposition \ref{prop:hs-0_fin}, the energy-quasimomentum pair ${(E_M(\tau_0), \, {\bm M})}$ of $H^{\tau_0, {\bm 0}}$ is twofold-degenerate. 
 
\begin{proposition}
\label{prop:M-dgn-quad}
${(E_M(\tau_0), {\bm M})}$ is a quadratic band degeneracy point of $H^{\tau_0, {\bm 0}}$.
\end{proposition}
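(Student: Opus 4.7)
The plan is to verify that $(E_M(\tau_0), {\bm M})$ satisfies conditions \ref{itm:quad-dgn-1}--\ref{itm:quad-dgn-5} of Section \ref{sec:quad-dgn} for the operator $H^{\tau_0, \bm{0}}$. A key preliminary observation is that
\[
H^{\tau_0, \bm{0}} = -(1-\tau_0)\Delta + V = (1-\tau_0)\bigl(-\Delta + \tilde V_{\tau_0}\bigr), \qquad \tilde V_{\tau_0} \equiv V/(1-\tau_0).
\]
Since rescaling $V$ by the positive constant $1/(1-\tau_0)$ preserves every symmetry of Definition \ref{def:sql-pot}, $\tilde V_{\tau_0}$ is itself a square lattice potential for every $\tau_0<1$, and the $L^2_{\bm M}$ eigenpairs of $H^{\tau_0, \bm 0}$ at energy $E_M(\tau_0)$ are in bijection with those of the standard Schr\"odinger operator $\widetilde H(\tau_0) \equiv -\Delta + \tilde V_{\tau_0}$ at energy $E_M(\tau_0)/(1-\tau_0)$. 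Hence it suffices to verify \ref{itm:quad-dgn-1}--\ref{itm:quad-dgn-5} for $\widetilde H(\tau_0)$, which belongs to the setting of Theorem \ref{thm:kmow}.

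Condition \ref{itm:quad-dgn-1} is precisely the twofold degeneracy at $(E_M(\tau_0), {\bm M})$ established in the analysis of \eqref{eq:F-0}--\eqref{eq:F-0-ift} above. For \ref{itm:quad-dgn-2}--\ref{itm:quad-dgn-4}, I will invoke that $\widetilde H(\tau_0)$ commutes with $\mathcal R$, so the Floquet-Bloch fiber $\widetilde H(\tau_0)({\bm M})$ is block-diagonal with respect to the $\mathcal R$-isotypic decomposition \eqref{eq:L2M-dsum}. Standard Kato-Rellich analytic perturbation theory applied separately in each isotypic block yields: (i) a simple $L^2_{{\bm M}, +i}$ eigenvalue emanating analytically from $E_S$, with eigenstate $\Phi_1^{\tau_0}$ that perturbs $\Phi_1$ analytically, and by $\mathcal{PC}$-duality between the $+i$ and $-i$ blocks (which commutes with $\widetilde H(\tau_0)$), a simple $L^2_{{\bm M}, -i}$ eigenvalue at the same energy with eigenstate $\Phi_2^{\tau_0} \equiv \mathcal{PC}[\Phi_1^{\tau_0}]$; and (ii) by the separation of $E_S$ from the $L^2_{{\bm M}, \pm 1}$ spectrum of $H$ (hypothesis \ref{itm:quad-dgn-4}), the corresponding isotypic spectra of $\widetilde H(\tau_0)$ remain bounded away from $E_S$ for $|\tau_0|$ sufficiently small. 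Combining (i) and (ii) with the already-established twofold degeneracy identifies the perturbed $L^2_{{\bm M}, \pm i}$ eigenvalues with $E_M(\tau_0)/(1-\tau_0)$, yielding \ref{itm:quad-dgn-2}--\ref{itm:quad-dgn-4}.

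For \ref{itm:quad-dgn-5}, let $\tilde{\mathscr R}^{\tau_0}$ denote the reduced resolvent of $\widetilde H(\tau_0)$ at $E_M(\tau_0)/(1-\tau_0)$, which is bounded on ${\rm ker}(\widetilde H(\tau_0)-E_M(\tau_0)/(1-\tau_0))^\perp$ and depends analytically on $\tau_0$ by the separation from the remaining spectrum established in (ii). The parameters $\alpha_1^{\tau_0}, \alpha_2^{\tau_0}$ obtained by replacing $\Phi_\ell$ and $\mathscr R(E_S)$ in \eqref{eq:def-a-par} by $\Phi_\ell^{\tau_0}$ and $\tilde{\mathscr R}^{\tau_0}$, respectively, are analytic in $\tau_0$ with $\alpha_j^0 = \alpha_j$, and $\alpha_j \neq 0$ by hypothesis \ref{itm:quad-dgn-5}. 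Hence $\alpha_j^{\tau_0} \neq 0$ for $|\tau_0|$ small, possibly after shrinking $\tau_0^{\star\star}$. The main obstacle is the careful bookkeeping of analyticity of the eigenprojections and reduced resolvent in $\tau_0$; however, this reduces to a routine application of analytic perturbation theory for the self-adjoint analytic family $\tau_0 \mapsto \widetilde H(\tau_0)$, exploiting the persistent spectral isolation of the multiplicity-two $L^2_{\bm M}$ eigenvalue at ${\bm M}$.
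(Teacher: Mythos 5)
Your proof is correct, but it takes a genuinely different route from the paper's. The paper verifies \ref{itm:quad-dgn-1}--\ref{itm:quad-dgn-5} directly for $H^{\tau_0,{\bm 0}}$ using the explicit eigenfunctions produced by the Lyapunov--Schmidt reduction, namely $\Phi^{\bm M}_\ell(\tau_0) = C^{\bm M}_\ell(\tau_0)\bigl(\Phi_\ell + (1-\mathcal{A}(\varepsilon_M(\tau_0);{\bm 0},\tau_0,{\bm 0}))^{-1}\mathscr{R}(E_S)(-\tau_0\Delta)\Phi_\ell\bigr)$; the commutation of $\mathcal{R}$ and $\mathcal{PC}$ with $(1-\mathcal{A})^{-1}\mathscr{R}(E_S)$ gives \ref{itm:quad-dgn-2}--\ref{itm:quad-dgn-4}, and \ref{itm:quad-dgn-5} is obtained via Neumann-series expansions yielding $\alpha_j(\tau_0)=\alpha_j+O(|\tau_0|)$. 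You instead exploit the factorization $H^{\tau_0,{\bm 0}}=(1-\tau_0)(-\Delta + V/(1-\tau_0))$ --- a rescaling the paper only mentions in passing in Section \ref{sec:deform-syms} --- which places the problem squarely in the setting of Theorem \ref{thm:kmow} for the square lattice potential $\tilde V_{\tau_0}$, and then you invoke Kato--Rellich analytic perturbation theory in each $\mathcal{R}$-isotypic block together with $\mathcal{PC}$-duality. Your approach is cleaner conceptually and makes the persistence of the quadratic degeneracy under pure dilation almost self-evident; its limitation is that it is special to $|{\bm \tau}|=0$ (no analogous rescaling exists for ${\bm \tau}\neq{\bm 0}$, which is why the paper's explicit machinery is set up the way it is --- it carries over verbatim to the Dirac-point case in Proposition \ref{prop:M-dgn-dir}), and it produces the nonvanishing of $\alpha_1(\tau_0),\alpha_2(\tau_0)$ by continuity rather than the quantitative expansions \eqref{eq:a-quad-ex}. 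Two minor points worth making explicit if you wrote this up: the parameters in \ref{itm:quad-dgn-5} for $H^{\tau_0,{\bm 0}}$ and for $\widetilde H(\tau_0)$ differ by the nonzero factor $1/(1-\tau_0)$ coming from the resolvent, so their nonvanishing is equivalent; and the dispersion surfaces of $H^{\tau_0,{\bm 0}}$ are $(1-\tau_0)$ times those of $\widetilde H(\tau_0)$, so the quadratic touching transfers back.
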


\begin{proof}
The proof of Proposition \ref{prop:M-dgn-quad} can be found in Appendix \ref{apx:M-dgn-quad}.
\end{proof}

The proof of conclusion \ref{itm:M-dgn-1} of Theorem \ref{thm:M-dgn} is now complete.

\subsubsection{Solutions to (\ref{eq:hs-0_fin}) for $\tau_0$, ${s \neq 0}$ sufficiently small}
\label{sec:G-0}

We now solve \eqref{eq:G-0} for ${(\tau_0, \, s)}$ in a neighborhood of ${(\tau_0, \, s) = (0, \, 0)}$. 
 For ${(\tau_0, \, s) = (0, \, 0)}$, \eqref{eq:G-0} reduces to:
\begin{equation}
\label{eq:G-0-0}
\begin{aligned}
\varepsilon & = 0, \\
\smash{\beta_1 \cos(\varphi) - \alpha_1 \tilde{\bm \kappa} \cdot \sigma_1 \tilde{\bm \kappa} + \tilde{g}_1^{(0)}(\varepsilon, \tilde{\bm \kappa})} & = 0, \\
\smash{\beta_2 \sin(\varphi) - \alpha_2 \tilde{\bm \kappa} \cdot \sigma_3 \tilde{\bm \kappa} + \tilde{g}_2^{(0)}(\varepsilon, \tilde{\bm \kappa})} & = 0.
\end{aligned}
\end{equation}
The functions $\smash{\tilde{g}_\ell^{(0)}(\varepsilon, \tilde{\bm \kappa})}$, ${\ell \smallin \{ 1, \, 2 \}}$, defined in conclusion \ref{itm:g-1&2} of Proposition \ref{prop:gs-pr}, satisfy ${\smash{\tilde{g}_\ell^{(0)}(\varepsilon, \tilde{\bm \kappa}) = O(|\varepsilon|)}}$ as ${|\varepsilon| \to 0}$. Therefore, the system \eqref{eq:G-0-0} has two solutions:
\begin{equation}
\label{eq:G-0-0-sol}
(\varepsilon, \, \tilde{\bm \kappa}) = (0, \, \pm {\bm \kappa}_D^{(0)}(\varphi)),
\end{equation}
 where ${\bm \kappa}^{(0)}_D(\varphi)$ is defined in \eqref{eq:def-kap-D-0-1}.

Next, we proceed to construct a continuation of the solution $(\varepsilon, \, \tilde{\bm \kappa}) = (0, \, {\bm \kappa}_D^{(0)}(\varphi))$ for ${(\tau_0, \, s)}$ sufficiently small; the solution ${(\varepsilon, \, \tilde{\bm \kappa}) = (0, \, -{\bm \kappa}_D^{(0)}(\varphi))}$ can be treated analogously. The Jacobian of $G(\varepsilon, \tilde{\bm \kappa}; 0, 0)$, evaluated at ${(\varepsilon, \, \tilde{\bm \kappa}) = (0, \, {\bm \kappa}_D^{(0)}(\varphi))}$, is
\begin{equation}
\label{eq:D-G-0-0}
D_{(\varepsilon, \tilde{\bm \kappa})} G(0, {\bm \kappa}_D^{(0)}(\varphi); 0, 0) =
\begin{bmatrix}
1 & 0 & 0 \\
\partial_\varepsilon \tilde{g}_1^{(0)}(0, {\bm \kappa}_D^{(0)}(\varphi)) & \alpha_1 \kappa_D^{(0)}(\varphi)_2 & \alpha_1 \kappa_D^{(0)}(\varphi)_1 \\
\partial_\varepsilon \tilde{g}_2^{(0)}(0, {\bm \kappa}_D^{(0)}(\varphi)) & \alpha_2 \kappa_D^{(0)}(\varphi)_1 & -\alpha_2 \kappa_D^{(0)}(\varphi)_2
\end{bmatrix} \!\! .
\end{equation}
This yields:
\begin{equation}
\label{eq:det-D-G-0-0}
\det(D_{(\varepsilon, \tilde{\bm \kappa})} G(0, {\bm \kappa}_D^{(0)}(\varphi); 0, 0)) = - \alpha_1 \alpha_2 |{\bm \kappa}_D^{(0)}(\varphi)|^2 \neq 0.
\end{equation}
Indeed, $\alpha_1$, ${\alpha_2 \neq 0}$ by property \ref{itm:quad-dgn-5}, and $|{\bm \kappa}_D^{(0)}(\varphi)|$ is defined and non-zero  by hypotheses  \ref{itm:quad-dgn-5} and \ref{itm:quad-dgn-6}. Therefore $D_{(\varepsilon, \tilde{\bm \kappa})} G(0, {\bm \kappa}_D^{(0)}(\varphi); 0, 0)$ is invertible. By the implicit function theorem, there exist $\tau_0^{\star\star}$, ${\tau^{\star\star} > 0}$, and unique, analytic functions $\varepsilon_D(\tau_0, s)$, $\tilde{\bm \kappa}_D(\tau_0, s)$, defined for ${|\tau_0| < \tau_0^{\star \star}}$, ${|s| < \tau^{\star \star}}$, such that
\begin{equation}
\label{eq:G-0-ift}
\varepsilon_D(0, 0) = 0, \quad \tilde{\bm \kappa}_D(0, 0) = {\bm \kappa}_D^{(0)}(\varphi), \quad G(\varepsilon_D(\tau_0, s), \tilde{\bm \kappa}_D(\tau_0, s); \tau_0, s) = 0.
\end{equation}
Note that, for all ${\varphi \smallin [-\pi, \, \pi]}$, we have the lower bound:
\begin{equation}
\label{eq:G-0-ift-bd}
|\det(D_{(\varepsilon, \, \tilde{\bm \kappa})} G(0, {\bm \kappa}_D^{(0)}(\varphi); 0, 0))| \geq C > 0,
\end{equation}
where $C$ does not depend on $\varphi$. Therefore, $\tau_0^{\star\star}$, $\tau^{\star\star}$ may be chosen independently of $\varphi$. Expanding $\varepsilon_D(\tau_0, s)$, $\tilde{\bm \kappa}_D(\tau_0, s)$ about ${(\tau_0, \, s) = (0, \, 0)}$ yields:
\begin{align}
\label{eq:G-0-tay-1}
\varepsilon_D(\tau_0, s) & = \beta_0 \tau_0 + (1 - \alpha_0) |{\bm \kappa}_D^{(0)}(\varphi)|^2 s + \varepsilon_D^{(1)}(\tau_0, s), \\
\tilde{\bm \kappa}_D(\tau_0, s) & = {\bm \kappa}_D^{(0)}(\varphi) + {\bm \kappa}_D^{(1)}(\tau_0, s),
\end{align}
where  $\varepsilon_D^{(1)}(\tau_0, s)$, ${\bm \kappa}_D^{(1)}(\tau_0, s)$, are defined and analytic for ${|\tau_0| < \tau_0^{\star \star}}$, ${|s| < \tau^{\star \star}}$, and satisfy:
\begin{equation}
\label{eq:G-0-tay-2}
\varepsilon_D^{(1)}(\tau_0, s) = O(|\tau_0|^2 + |s|^2), \quad |{\bm \kappa}_D^{(1)}(\tau_0, s)| = O(|\tau_0| + |s|) \ \ \text{as} \ \ |\tau_0|, \, |s| \to 0.
\end{equation}
The coefficients of the linear terms of \eqref{eq:G-0-tay-1} are obtained via implicit differentiation of \eqref{eq:G-0-ift}.  Finally, we define:
\begin{align}
\label{eq:def-ED_2}
E_D(\tau_0, |{\bm \tau}|; \varphi) & \equiv E_S + \varepsilon_D(\tau_0, |{\bm \tau}|; \varphi) \\
& = E_S + \beta_0 \tau_0 + (1 - \alpha_0) |{\bm \kappa}_D^{(0)}|^2 |{\bm \tau}| + \varepsilon_D^{(1)}(\tau_0, |{\bm \tau}|; \varphi), \nonumber \\
\label{eq:def-Dp}
{\bm D}^+(\tau_0, |{\bm \tau}|; \varphi) & \equiv {\bm M} + \sqrt{|{\bm \tau}|} \, \tilde{\bm \kappa}_D(\tau_0, |{\bm \tau}|; \varphi) \\
& = {\bm M} + \sqrt{|{\bm \tau}|} ( {\bm \kappa}_D^{(0)}(\varphi) + {\bm \kappa}_D^{(1)}(\tau_0, |{\bm \tau}|; \varphi) ). \nonumber
\end{align}
By Proposition \ref{prop:hs-0_fin}, the energy-quasimomentum pair $(E_D(\tau_0, |{\bm \tau}|; \varphi), \, {\bm D}^+(\tau_0, |{\bm \tau}|; \varphi))$ of $H^{\tau_0, {\bm \tau}}$ is twofold degenerate.

A second twofold degeneracy can be obtained analogously, beginning from the solution ${(\varepsilon, \, \tilde{\bm \kappa}) = (0, \, -{\bm \kappa}_D^{(0)}(\varphi))}$ of \eqref{eq:G-0-0-sol} and applying the implicit function theorem. On the other hand, from conclusion \ref{itm:calM-ev} of Proposition \ref{prop:calM-pr}, observe that another solution to \eqref{eq:calM-0_3} is immediately given by
\begin{equation}
\label{eq:calM-0-sol-3}
(\varepsilon, \, {\bm \kappa}) = (\varepsilon_D(\tau_0, |{\bm \tau}|), \, -{\bm \kappa}_D(\tau_0, |{\bm \tau}|)).
\end{equation}
By uniqueness, the two solutions obtained via each of these methods coincide. We therefore define:
\begin{align}
\label{eq:def-Dm}
{\bm D}^-(\tau_0, |{\bm \tau}|; \varphi) & \equiv {\bm M} - \sqrt{|{\bm \tau}|} \, \tilde{\bm \kappa}_D(\tau_0, |{\bm \tau}|; \varphi) \\
& = {\bm M} - \sqrt{|{\bm \tau}|} \, ( {\bm \kappa}_D^{(0)}(\varphi) + {\bm \kappa}_D^{(1)}(\tau_0, |{\bm \tau}|; \varphi) ). \nonumber
\end{align}

\begin{proposition}
\label{prop:M-dgn-dir}
${(E_D(\tau_0, |{\bm \tau}|; \varphi), {\bm D}^\pm(\tau_0, |{\bm \tau}|; \varphi))}$ are Dirac points of $H^{\tau_0, {\bm \tau}}$.
\end{proposition}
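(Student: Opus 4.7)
The plan is to verify that $(E_D(\tau_0, |\bm\tau|; \varphi), \bm D^\pm(\tau_0, |\bm\tau|; \varphi))$ satisfy conditions \ref{itm:dir-pt-1}--\ref{itm:dir-pt-4} of Section \ref{sec:dir-pt}, applied to the elliptic operator $L = H^{\tau_0, \bm\tau}$. Condition \ref{itm:dir-pt-1} (twofold $L^2_{\bm D^\pm}$ eigenvalue) is already established by the construction underlying conclusion \ref{itm:M-dgn-2} of Theorem \ref{thm:M-dgn}: via conclusion \ref{itm:calM-0} of Proposition \ref{prop:red-M}, the vanishing of $\mathcal{M}(\varepsilon; \bm\kappa, \tau_0, \bm\tau)$ along the solution branch furnished by \eqref{eq:G-0-ift} is equivalent to twofold $L^2_{\bm D^\pm}$ degeneracy of $E_S + \varepsilon_D$.

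For conditions \ref{itm:dir-pt-2} and \ref{itm:dir-pt-3}, I will exploit the $\mathcal{PC}$-invariance of $H^{\tau_0, \bm\tau}$ (Section \ref{sec:deform-syms}) together with the fact that $\mathcal{PC}$ maps each $L^2_{\bm k}$ to itself. The 2-dimensional complex eigenspace $V_\pm := \ker(H^{\tau_0, \bm\tau} - E_D) \cap L^2_{\bm D^\pm}$ is therefore invariant under the antilinear involution $\mathcal{PC}$. A standard fact about antilinear involutions is that $V_\pm$ is the complexification of its real subspace $V_\pm \cap Y_{\bm D^\pm, +1}$; picking an orthonormal real basis $\{v_1, v_2\}$ of the latter, the vectors $\tilde\Phi^{\bm D^\pm}_1 := v_1 \in Y_{\bm D^\pm, +1}$ and $\tilde\Phi^{\bm D^\pm}_2 := i v_2 \in Y_{\bm D^\pm, -1}$ form an orthonormal complex basis of $V_\pm$, which supplies the simple $Y_{\bm D^\pm, \pm 1}$ eigenstates required by \ref{itm:dir-pt-2}, \ref{itm:dir-pt-3}.

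The substance of the proof lies in verifying condition \ref{itm:dir-pt-4}, namely $\bm\gamma^{\bm D^\pm}_1, \bm\gamma^{\bm D^\pm}_2 \ne \bm 0$. I propose to compute these parameters perturbatively in the small quantity $\sqrt{|\bm\tau|}$. The true Floquet--Bloch eigenstates $\Phi^{\bm D^\pm}_1$ and $\Phi^{\bm D^\pm}_2 = \mathcal{PC}[\Phi^{\bm D^\pm}_1]$ entering \eqref{eq:def-gam-par} are $O(\sqrt{|\bm\tau|})$-perturbations of the quadratic-degeneracy eigenstates $\Phi_1, \Phi_2$ of $H_V$ at $\bm M$, as is visible from the Lyapunov--Schmidt decomposition \eqref{eq:evp-asz_1}. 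Inserting this expansion into \eqref{eq:def-gam-par} and using the definitions \eqref{eq:def-a-par} of $\alpha_1, \alpha_2$ yields leading-order expressions of the form $\bm\gamma^{\bm D^\pm}_1 = \mp\, 2\alpha_1 \sqrt{|\bm\tau|}\, \sigma_1 \bm\kappa^{(0)}_D(\varphi) + O(|\bm\tau|)$ and $\bm\gamma^{\bm D^\pm}_2 = \mp\, 2\alpha_2 \sqrt{|\bm\tau|}\, \sigma_3 \bm\kappa^{(0)}_D(\varphi) + O(|\bm\tau|)$, which are nonzero by the nondegeneracy hypotheses \ref{itm:quad-dgn-5}--\ref{itm:quad-dgn-6} ($\alpha_1, \alpha_2 \ne 0$ and $|\bm\kappa^{(0)}_D(\varphi)| \ne 0$). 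Equivalently and perhaps more transparently, Taylor-expanding the effective Hamiltonian $H^{\bm M}_{\rm eff}(\bm\kappa; \tau_0, \bm\tau)$ of \eqref{eq:M-eff} about $\bm\kappa = \pm\sqrt{|\bm\tau|}\,\bm\kappa^{(0)}_D(\varphi)$ and reading off the linear-in-$\bm q$ coefficients of the resulting $2 \times 2$ Dirac matrix directly produces the same leading-order $\bm\gamma^{\bm D^\pm}_j$.

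The main obstacle will be rigorously identifying the abstract $\bm\gamma^{\bm D^\pm}_j$ of \eqref{eq:def-gam-par} with their effective-Hamiltonian counterparts, since \eqref{eq:def-gam-par} is stated in terms of the true eigenstates and gradient inner products, while the effective Hamiltonian only captures leading-order spectral behavior. This reconciliation requires tracking how the amplitudes $(a_1, a_2)$ of \eqref{eq:evp-asz_1} together with the resolvent correction $\phi^{(1)}$ combine to form the genuine eigenstates, and monitoring their leading-order scaling in $\sqrt{|\bm\tau|}$. Once this identification is in hand, condition \ref{itm:dir-pt-4} follows immediately from hypotheses \ref{itm:quad-dgn-5} and \ref{itm:quad-dgn-6}.
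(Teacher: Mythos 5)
Your proposal is correct and follows essentially the same route as the paper: \ref{itm:dir-pt-1} comes for free from the construction in Theorem \ref{thm:M-dgn}, \ref{itm:dir-pt-2}--\ref{itm:dir-pt-3} from $\mathcal{PC}$-invariance, and \ref{itm:dir-pt-4} from expanding the true eigenstates as $O(\sqrt{|{\bm \tau}|})$ perturbations of $\Phi_1$, $\Phi_2$ via the Lyapunov--Schmidt decomposition, which yields ${\bm \gamma}^{{\bm D}^\pm}_1 = \pm 2\sqrt{|{\bm \tau}|}\,\alpha_1 \sigma_1 {\bm \kappa}^{(0)}_D(\varphi) + O(|\tau_0| + |{\bm \tau}|)$ and the analogous formula for ${\bm \gamma}^{{\bm D}^\pm}_2$ (this is exactly Proposition \ref{prop:gam-dir-p-ex}, proved from the expansion \eqref{eq:phi-dir-p-ex}), so that nondegeneracy follows from \ref{itm:quad-dgn-5}--\ref{itm:quad-dgn-6} as you say. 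The only differences are cosmetic: for \ref{itm:dir-pt-2}--\ref{itm:dir-pt-3} the paper builds the $Y_{{\bm D}^\pm, \pm 1}$ eigenstates explicitly from the amplitudes $(a_1, a_2) = (1, \pm 1)$ together with the commutation of $\mathcal{PC}$ with $(1 - \mathcal{A})^{-1}$, rather than invoking the abstract real-form argument for antilinear involutions, and your overall signs for ${\bm \gamma}^{{\bm D}^\pm}_j$ are opposite to the paper's convention, which is immaterial since only nonvanishing is required.
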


\begin{proof}
The proof of Proposition \ref{prop:M-dgn-dir} can be found in Appendix \ref{apx:M-dgn-dir}.
\end{proof}

\noindent This concludes the proof of conclusion \ref{itm:M-dgn-2} of Theorem \ref{thm:M-dgn}.

\subsection{Special deformations; Proof of Theorems \ref{thm:M-dgn-ref-1} and \ref{thm:M-dgn-ref-2}}
\label{sec:pf-M-dgn-ref}

The proofs of Theorems \ref{thm:M-dgn-ref-1} and \ref{thm:M-dgn-ref-2} proceed analogously to that of conclusion \ref{itm:M-dgn-2} of Theorem \ref{thm:M-dgn}.

In these cases, the solutions to \eqref{eq:hs-0} are more easily constructed due to the presence of additional symmetries; see the table in Figure \ref{fig:deform-1}. The following proposition, proved in Appendix \ref{apx:M-dgn-ref}, indicates simplifications arising from  additional symmetries.

\begin{proposition}
\label{prop:M-dgn-ref}
Consider the functions $h_\ell(\varepsilon, {\bm \kappa}; \tau_0, {\bm \tau})$, ${\ell \smallin \{ 1, \, 2 \}}$, defined in \eqref{eq:def-hs}. The following hold:
\begin{enumerate}
\item \label{itm:M-dgn-ref-1} Assume ${{\bm \tau} = (\tau_1, 0)}$ with ${\tau_1 \neq 0}$. Then, if ${\bm \kappa}$ satisfies $\sigma_1 {\bm \kappa} = \pm {\bm \kappa}$, it follows that
\begin{equation}
\label{eq:h2-0}
h_2(\varepsilon, {\bm \kappa}; \tau_0, {\bm \tau}) = 0.
\end{equation}
\item \label{itm:M-dgn-ref-2} Assume ${{\bm \tau} = (0, \tau_3)}$ with ${\tau_3 \neq 0}$. Then, if ${\bm \kappa}$ satisfies $\sigma_3 {\bm \kappa} = \pm {\bm \kappa}$, it follows that
\begin{equation}
h_1(\varepsilon, {\bm \kappa}; \tau_0, {\bm \tau}) = 0.
\end{equation}
\end{enumerate}
\end{proposition}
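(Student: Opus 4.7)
Plan for the proof of Proposition \ref{prop:M-dgn-ref}. The strategy is to exploit the additional discrete symmetry preserved by each special class of deformations (see Figure \ref{fig:deform-1}): $\Sigma_1$-invariance holds when ${\bm \tau}=(\tau_1,0)$, and $\Sigma_3$-invariance holds when ${\bm \tau}=(0,\tau_3)$. I focus on part \ref{itm:M-dgn-ref-1}; part \ref{itm:M-dgn-ref-2} is obtained by replacing the pair $(\Sigma_1,\sigma_1)$ with $(\Sigma_3,\sigma_3)$ throughout.

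\textit{Step 1: Conjugation of the Bloch operator by $\Sigma_1$.} Using the chain rule Lemma \ref{lem:chain} together with $\sigma_1=\sigma_1^{\mathsf T}=\sigma_1^{-1}$, $\sigma_1{\bm M}\equiv{\bm M}\pmod{2\pi\Z^2}$, and $\sigma_1\sigma_1\sigma_1=\sigma_1$, a direct calculation gives, when $\tau_3=0$,
\begin{equation*}
\Sigma_1\,H({\bm M})\,\Sigma_1^{-1}=H({\bm M}),\qquad \Sigma_1\,\mathcal{L}({\bm \kappa};\tau_0,(\tau_1,0))\,\Sigma_1^{-1}=\mathcal{L}(\sigma_1{\bm \kappa};\tau_0,(\tau_1,0)),
\end{equation*}
with $\mathcal{L}$ as in \eqref{eq:def_calL}. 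In particular, the tilt term $\tau_1(\nabla_{\bm M}+i{\bm \kappa})\cdot\sigma_1(\nabla_{\bm M}+i{\bm \kappa})$ commutes with $\Sigma_1$ because $\sigma_1\sigma_1\sigma_1=\sigma_1$, whereas a $\sigma_3$ tilt would anticommute with $\Sigma_1$ (and hence be incompatible with this symmetry). Consequently $\Sigma_1$ commutes with $H({\bm M})-E_S$, hence with the resolvent $\mathscr{R}_{\bm M}(E_S)$ and with $\Pi_{\bm M}^\parallel$, $\Pi_{\bm M}^\perp$ defined in \eqref{eq:def_proj-M}.

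\textit{Step 2: Action of $\Sigma_1$ on the degenerate eigenspace.} Since $\Sigma_1\mathcal{R}\Sigma_1^{-1}=\mathcal{R}^{-1}$, $\Sigma_1$ interchanges the $\pm i$ eigenspaces $L^2_{{\bm M},\pm i}$ of $\mathcal{R}$. By the simplicity conditions \ref{itm:quad-dgn-2}, \ref{itm:quad-dgn-3}, there exists a unimodular constant $c$ with
\begin{equation*}
\Sigma_1\phi_1=c\,\phi_2,\qquad \Sigma_1\phi_2=c^*\,\phi_1,
\end{equation*}
the second identity following from $\phi_2=\mathcal{PC}\phi_1$ and the commutation of $\Sigma_1$ with $\mathcal{PC}$.

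\textit{Step 3: Identification of $\mathcal{M}_{12}$ and $\mathcal{M}_{21}$ on the fixed axes.} The reduced equation \eqref{eq:reduced} expresses $\mathcal{M}_{jk}(\varepsilon;{\bm \kappa},\tau_0,{\bm \tau})$ as a sum of inner products $\langle\phi_j,A({\bm \kappa};\tau_0,{\bm \tau})\phi_k\rangle$, where $A({\bm \kappa};\tau_0,{\bm \tau})$ is built out of $\varepsilon$, $\mathcal{L}({\bm \kappa};\tau_0,{\bm \tau})$ and $\mathscr{R}_{\bm M}(E_S)$ via \eqref{eq:phi-1_3}. By Step 1 this operator satisfies $\Sigma_1 A({\bm \kappa};\tau_0,(\tau_1,0))\Sigma_1^{-1}=A(\sigma_1{\bm \kappa};\tau_0,(\tau_1,0))$. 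Using unitarity of $\Sigma_1$ and Step 2,
\begin{equation*}
\mathcal{M}_{12}(\varepsilon;{\bm \kappa},\tau_0,(\tau_1,0))=\langle\Sigma_1\phi_2,\,\Sigma_1 A({\bm \kappa})\phi_2\rangle=|c|^2\,\langle\phi_2,A(\sigma_1{\bm \kappa})\phi_1\rangle=\mathcal{M}_{21}(\varepsilon;\sigma_1{\bm \kappa},\tau_0,(\tau_1,0)).
\end{equation*}
For ${\bm \kappa}$ with $\sigma_1{\bm \kappa}=\pm{\bm \kappa}$, part \ref{itm:calM-ev} of Proposition \ref{prop:calM-pr} ($\mathcal{P}$-symmetry) absorbs the sign, giving $\mathcal{M}_{12}(\varepsilon;{\bm \kappa},\tau_0,(\tau_1,0))=\mathcal{M}_{21}(\varepsilon;{\bm \kappa},\tau_0,(\tau_1,0))$. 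Combined with self-adjointness $\mathcal{M}_{12}=\overline{\mathcal{M}_{21}}$ (part \ref{itm:calM-sa} of Proposition \ref{prop:calM-pr}), $\mathcal{M}_{12}$ is real, so $h_2=-{\rm Im}(\mathcal{M}_{12})=0$ by \eqref{eq:def-hs}. Part \ref{itm:M-dgn-ref-2} is identical after replacing $\Sigma_1$ by $\Sigma_3$, using $\sigma_3\sigma_1\sigma_3=-\sigma_1$ (explaining why $\tau_1=0$ is required) and $\sigma_3\sigma_3\sigma_3=\sigma_3$; this forces ${\rm Re}(\mathcal{M}_{12})=0$, hence $h_1=0$.

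The main technical hurdle is verifying the conjugation identity in Step 1 term by term, in particular checking that the quadratic form of $\mathcal{L}$ in $(\nabla_{\bm M},{\bm \kappa})$ transforms with ${\bm \kappa}\mapsto\sigma_1{\bm \kappa}$ but with all coefficient matrices (including the tilt $\sigma_1$) preserved; once this is in hand, propagating the symmetry through the Neumann series defining $\phi^{(1)}$ in \eqref{eq:phi-1_3} is automatic since $\mathscr{R}_{\bm M}(E_S)$ commutes with $\Sigma_1$.
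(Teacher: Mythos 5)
Your proposal is correct and follows essentially the same route as the paper's proof: conjugation by the preserved reflection ($\Sigma_1$ or $\Sigma_3$), which intertwines the reduction operators with ${\bm \kappa}\mapsto\sigma_1{\bm \kappa}$ and swaps $\phi_1\leftrightarrow\phi_2$, combined with self-adjointness of $\mathcal{M}$ and evenness in ${\bm \kappa}$ to handle the $\sigma_1{\bm \kappa}=-{\bm \kappa}$ case. The only cosmetic difference is that you apply the symmetry to the full matrix $\mathcal{M}$ at once, whereas the paper first splits off $\mathcal{M}^{(0)}$ (whose off-diagonal entry is manifestly real when $\tau_3=0$) and runs the argument on $\mathcal{M}^{(1)}$.
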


\begin{proof}
The proof can be found in Appendix \ref{apx:M-dgn-ref}.
\end{proof}

\noindent Proposition \ref{prop:M-dgn-ref} implies that, following the proof of Theorem \ref{thm:M-dgn}, the ansatz \eqref{eq:def-t-kap} can be improved by setting ${\bm \kappa}$ to be proportional to one of the two eigenvectors of either $\sigma_1$ or $\sigma_3$, immediately eliminating one equation from the system \eqref{eq:hs-0}.

Note that, in each case, Proposition \ref{prop:M-dgn-ref} provides two possibilities for the direction of ${\bm \kappa}$. However, only one of these two possible directions will lead to a solution of \eqref{eq:hs-0}, which can be determined at leading order: For concreteness, consider the case ${{\bm \tau} = (\tau_1, 0)}$ with ${\tau_1 \neq 0}$. By part \ref{itm:M-dgn-ref-1} of Proposition \ref{prop:M-dgn-ref}, if ${{\bm \kappa} = \pm \sigma_1 {\bm \kappa}}$, then ${h_2(\varepsilon; {\bm \kappa}; \tau_0, {\bm \tau}) = 0}$. Let us write ${\sigma_1 {\bm \kappa} = \chi {\bm \kappa}}$, with ${\chi = \pm 1}$ to be determined. Next, by the expansion \eqref{eq:hs-ex},
\begin{equation}
h_1(\varepsilon, {\bm \kappa}; \tau_0, {\bm \tau}) = \beta_1 \tau_1 - \alpha_1 {\bm \kappa} \cdot \sigma_1 {\bm \kappa} + {\rm Re}\bigl( \tilde{\mathcal{M}}_{1, 2}(\varepsilon; {\bm \kappa}, \tau_0, {\bm \tau}) \bigr) = \beta_1 \tau_1 - \chi \alpha_1 |{\bm \kappa}|^2 + {\rm Re}\bigl( \tilde{\mathcal{M}}_{1, 2}(\varepsilon; {\bm \kappa}, \tau_0, {\bm \tau}) \bigr).
\end{equation}
Neglecting higher order terms, we see that ${h_1(\varepsilon, {\bm \kappa}; \tau_0, {\bm \tau}) = 0}$ to leading order if and only if
\begin{equation}
|{\bm \kappa}|^2 = \frac{\beta_1 \tau_1}{\chi \alpha_1}.
\end{equation}
Therefore, we must chose ${\chi = {\rm sgn}(\alpha_1 \beta_1 \tau_1)}$ to obtain a real solution. A completely analogous consideration holds in the case ${{\bm \tau} = (0, \tau_3)}$ with ${\tau_3 \neq 0}$, considered by part \ref{itm:M-dgn-ref-2}.

In each case, we obtain a pair of Dirac points by following the remainder of the proof of Theorem \ref{thm:M-dgn}. However, by conclusion \ref{itm:M-dgn-2} of Theorem \ref{thm:M-dgn}, there are only two Dirac points in this neighborhood of ${(E_S, {\bm M})}$, so these must coincide. 

\bigskip

\section{Breaking parity or time-reversal symmetries}
\label{sec:break-PC}

\setcounter{equation}{0}
\setcounter{figure}{0}

So far, we have studied band structure degeneracies of deformed Schr\"{o}dinger operators which commute with the parity ($\mathcal{P}$) and time-reversal ($\mathcal{C}$) operators; see Definition \ref{def:syms}. In this section, we discuss how these band structure degeneracies are impacted by perturbations which break either $\mathcal{P}$ or $\mathcal{C}$ symmetry. In all cases, we show that the formerly degenerate bands, which touch either at a quadratic band degeneracy point or at pairs of Dirac points, become locally nondegenerate, or locally {\it gapped.} These gapped bands possess topological information which relates to edge structures via the {\it bulk-edge correspondence;} see Section \ref{sec:edge-states} for discussion.

We work with the general periodic, elliptic operator $L$ introduced in \eqref{eq:def-L}:
\begin{equation}
\label{eq:def-L_2}
L = -\nabla \cdot A \nabla + V,
\end{equation}
where $A$ is symmetric and positive definite. Note that the ``pushforward'' deformed Schr\"{o}dinger operators ${T_* H}$, in \eqref{eq:def-TH}, are of this form. Further, observe that $L$ is $\mathcal{P}$- and $\mathcal{C}$-symmetric, and therefore $\mathcal{PC}$-symmetric. We consider two scenarios: 
\begin{enumerate}
\item $L$ has a quadratic band degeneracy point (arising, e.g., from Theorem \ref{thm:M-dgn}, part \ref{itm:M-dgn-1} when ${L = T_* H}$). Nearby, the band structure of $L$ is approximated by an effective operator $L^{\bm M}_{\rm eff}$ with Fourier symbol
\begin{equation}
\label{eq:Leff-quad}
L^{\bm M}_{\rm eff}({\bm \kappa}) =  (1 - \alpha_0) |{\bm \kappa}|^2 \, I - \alpha_1 ({\bm \kappa} \cdot \sigma_1 {\bm \kappa}) \, \sigma_1 - \alpha_2 ({\bm \kappa} \cdot \sigma_3 {\bm \kappa}) \, \sigma_2.
\end{equation}
Here, ${{\bm \kappa} = {\bm k} - {\bm M}}$ is the quasimomentum displacement from ${\bm M}$.

\item $L$ has a symmetric pair of Dirac points (arising, e.g., from Theorem \ref{thm:M-dgn}, part \ref{itm:M-dgn-2} when ${L = T_* H}$). Near each, the band structure of $L$ is approximated by effective operators $L^{{\bm D}^\pm}_{\rm eff}$ with Fourier symbols
\begin{align}
\label{eq:Leff-dir-p}
L^{{\bm D}^+}_{\rm eff}({\bm q}) & = ({\bm \gamma}^+_0 \cdot {\bm q}) \, I + ({\bm \gamma}^+_1 \cdot {\bm q}) \, \sigma_1 + ({\bm \gamma}^+_2 \cdot {\bm q}) \, \sigma_2, \\
\label{eq:Leff-dir-m}
L^{{\bm D}^-}_{\rm eff}({\bm q}) & = ({\bm \gamma}^-_0 \cdot {\bm q}) \, I + ({\bm \gamma}^-_1 \cdot {\bm q}) \, \sigma_1 + ({\bm \gamma}^-_2 \cdot {\bm q}) \, \sigma_2 = - L_{\rm eff}^{{\bm D}^+}({\bm q}).
\end{align}
Here, ${{\bm q} = {\bm k} - {\bm D}^\pm}$ is the quasimomentum displacement from either ${\bm D}^\pm$.
\end{enumerate}

\begin{remark}
\label{rmk:Leff-break-PC}
For brevity, we will demonstrate the impact of $\mathcal{P}$- or $\mathcal{C}$-breaking perturbations only on these effective operators; it is possible to rigorously justify the approximation by effective operators using a strategy similar to that presented in Sections \ref{sec:set-up} and \ref{sec:pf-M-srf}. Detailed proofs can be found in \cite{chaban2025thesis}.
\end{remark}

In all cases below, we derive perturbed effective operators $L_{\rm eff}$ with Fourier symbols of the general form
\begin{equation}
L_{\rm eff}({\bm \kappa}) = h_0({\bm \kappa}) I + {\bm h}({\bm \kappa}) \cdot {\bm \sigma}.
\end{equation}
Here, ${\bm h}({\bm \kappa}) = [h_1({\bm \kappa}), \, h_2({\bm \kappa}), \, h_3({\bm \kappa})]^\mathsf{T}$ is a vector-valued function and ${\bm \sigma}$ is a ``vector'' of Pauli matrices; see \eqref{eq:def-pauli-vec}. The two dispersion relations of $L_{\rm eff}$ are given by
\begin{equation}
\varepsilon_\pm({\bm \kappa}) = h_0({\bm \kappa}) \pm |{\bm h}({\bm \kappa})|.
\end{equation}

\begin{proposition}
\label{prop:Leff-nondgn}
{\rm (Condition for nondegeneracy of bands of $L_{\rm eff}$.)}
The two dispersion surfaces of $L_{\rm eff}$ are nondegenerate if and only if ${{\bm h}({\bm \kappa}) \neq {\bm 0}}$ for all ${\bm \kappa}$.
\end{proposition}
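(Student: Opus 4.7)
The plan is to prove this directly via a short eigenvalue computation using the Pauli algebra, since the content of the proposition is essentially the statement that $h_0({\bm \kappa}) + |{\bm h}({\bm \kappa})| = h_0({\bm \kappa}) - |{\bm h}({\bm \kappa})|$ precisely when ${\bm h}({\bm \kappa}) = {\bm 0}$.

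First I would write $L_{\rm eff}({\bm \kappa}) - h_0({\bm \kappa}) I = {\bm h}({\bm \kappa}) \cdot {\bm \sigma}$, so that it suffices to compute the eigenvalues of the traceless Hermitian matrix ${\bm h}({\bm \kappa}) \cdot {\bm \sigma}$. Using the anticommutation relations $\sigma_j \sigma_k + \sigma_k \sigma_j = 2 \delta_{jk} I$ for the Pauli matrices (which follows immediately from the definitions \eqref{eq:def-pauli}), a direct expansion gives
\begin{equation}
({\bm h}({\bm \kappa}) \cdot {\bm \sigma})^2 = \sum_{j, k \, = \, 1}^3 h_j({\bm \kappa}) h_k({\bm \kappa}) \sigma_j \sigma_k = \sum_{j \, = \, 1}^3 h_j({\bm \kappa})^2 \, I = |{\bm h}({\bm \kappa})|^2 \, I.
\end{equation}
Since ${\bm h}({\bm \kappa}) \cdot {\bm \sigma}$ is Hermitian, its eigenvalues $\lambda_\pm({\bm \kappa})$ are real and must satisfy $\lambda_\pm({\bm \kappa})^2 = |{\bm h}({\bm \kappa})|^2$, so $\lambda_\pm({\bm \kappa}) = \pm |{\bm h}({\bm \kappa})|$. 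Shifting by $h_0({\bm \kappa})$ recovers the dispersion relations $\varepsilon_\pm({\bm \kappa}) = h_0({\bm \kappa}) \pm |{\bm h}({\bm \kappa})|$ as stated.

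From this, the equivalence is immediate. If ${\bm h}({\bm \kappa}) \neq {\bm 0}$ for every ${\bm \kappa}$, then $|{\bm h}({\bm \kappa})| > 0$ everywhere, so $\varepsilon_+({\bm \kappa}) - \varepsilon_-({\bm \kappa}) = 2 |{\bm h}({\bm \kappa})| > 0$, and the two dispersion surfaces are nondegenerate. Conversely, if ${\bm h}({\bm \kappa}_\star) = {\bm 0}$ for some ${\bm \kappa}_\star$, then $L_{\rm eff}({\bm \kappa}_\star) = h_0({\bm \kappa}_\star) I$ (which is also confirmed by Lemma \ref{lem:Adegen}), so $\varepsilon_+({\bm \kappa}_\star) = \varepsilon_-({\bm \kappa}_\star) = h_0({\bm \kappa}_\star)$ and the two dispersion surfaces touch at ${\bm \kappa}_\star$.

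There is no serious obstacle here; the entire argument rests on the identity $({\bm h} \cdot {\bm \sigma})^2 = |{\bm h}|^2 I$, a standard consequence of the Pauli matrix anticommutation relations, combined with the Hermiticity of $L_{\rm eff}({\bm \kappa})$.
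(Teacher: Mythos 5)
Your proof is correct. The only (minor) difference from the paper's argument is the route to the key equivalence: the paper deduces it from Lemma \ref{lem:2d-ker} (equivalently Lemma \ref{lem:Adegen}), arguing that a twofold degenerate eigenvalue $\varepsilon$ of the $2\times2$ Hermitian matrix $L_{\rm eff}({\bm \kappa})$ forces $\varepsilon I - L_{\rm eff}({\bm \kappa})$ to have two-dimensional kernel and hence to vanish identically, which by the orthonormality of the Pauli basis gives $\varepsilon = h_0({\bm \kappa})$ and ${\bm h}({\bm \kappa}) = {\bm 0}$; you instead re-derive the dispersion relations $\varepsilon_\pm({\bm \kappa}) = h_0({\bm \kappa}) \pm |{\bm h}({\bm \kappa})|$ from the identity $({\bm h} \cdot {\bm \sigma})^2 = |{\bm h}|^2 I$ and read off the gap $2|{\bm h}({\bm \kappa})|$ directly. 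Both arguments are equally elementary and complete; yours has the small advantage of also justifying the formula for $\varepsilon_\pm$ that the paper states without proof just before the proposition, while the paper's kernel-dimension argument avoids any computation with the Pauli algebra.
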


\begin{remark}
{\rm (Spectral gaps.)}
Note that nondegenerate dispersion surfaces does not imply that the effective operator $L_{\rm eff}$ has a spectral gap; see Figure \ref{fig:break-PC-1a} for a counterexample.
\end{remark}

The remainder of the section is organized as follows: In Section \ref{sec:break-P}, we break $\mathcal{P}$ symmetry by perturbing $L$ with an odd potential, then discuss the effect on quadratic band degeneracies (Section \ref{sec:break-P-quad}) and symmetric pairs of Dirac points (Section \ref{sec:break-P-dir}). In Section \ref{sec:break-C}, we break $\mathcal{C}$ symmetry by perturbing $L$ with a magneto-optic term, then similarly discuss the effect on quadratic band degeneracies (Section \ref{sec:break-C-quad}) and symmetric pairs of Dirac points (Section \ref{sec:break-C-dir}).

\begin{figure}[t]
\centering
\includegraphics[height = 4.3cm]{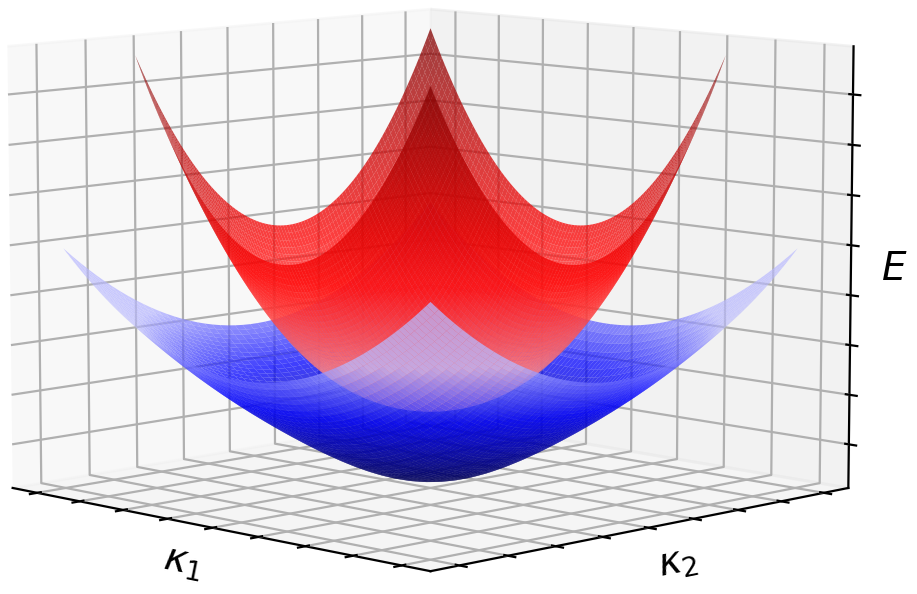}
\vspace{0.3cm}
\caption{Two nondegenerate dispersion surfaces need not give rise to a gap in the energy spectrum.}
\label{fig:break-PC-1a}
\end{figure}

\subsection{Breaking $\mathcal{P}$ symmetry}
\label{sec:break-P}

Let $W$ denote a nonzero smooth function which is $\Z^2$-periodic, real-valued, and odd:
\begin{equation}
\label{eq:def-W}
\mathcal{P}[W]({\bm x}) = W(-{\bm x}) = -W({\bm x}) \quad \text{and} \quad \mathcal{C}[W]({\bm x}) = W({\bm x})^\star = W({\bm x}).
\end{equation}
Hence, as a multiplication operator, $W$ anticommutes with $\mathcal{P}$ and commutes with $\mathcal{C}$.

\begin{remark}
Note that such $W$ cannot satisfy ${\mathcal{R}[W]({\bm x}) = W(R^\mathsf{T} {\bm x}) = W({\bm x})}$. If it did, then
\begin{equation}
\mathcal{P}[W]({\bm x}) = \mathcal{R}^2[W]({\bm x}) = W({\bm x})
\end{equation}
which contradicts ${\mathcal{P}[W]({\bm x}) = -W({\bm x})}$.
\end{remark}

\noindent For ${\delta > 0}$ small, we consider
\begin{equation}
\label{eq:def-L-P}
L^\delta \equiv -\nabla \cdot A \nabla + V + \delta W = L + \delta W.
\end{equation}

\subsubsection{Effect of $\mathcal{P}$-breaking on quadratic band degeneracies}
\label{sec:break-P-quad}

\begin{proposition}
\label{prop:Leff-quad-P}
Suppose ${L^0 = L}$ has a quadratic band degeneracy point. Then, for ${\delta > 0}$ sufficiently small, the Fourier symbol of $L^{\bm M}_{\rm eff}$, defined in \eqref{eq:Leff-quad}, perturbs as:
\begin{equation}
\label{eq:Leff-quad-P}
L^{\bm M}_{\rm eff}({\bm \kappa}; \delta) = (\delta^2 \vartheta^{\bm M}_0 + (1 - \alpha_0) |{\bm \kappa}|^2) I + (\delta^2 \vartheta^{\bm M}_1 - \alpha_1 {\bm \kappa} \cdot \sigma_1 {\bm \kappa}) \sigma_1 + (\delta^2 \vartheta^{\bm M}_2 - \alpha_2 {\bm \kappa} \cdot \sigma_3 {\bm \kappa}) \sigma_2 + \delta ({\bm \vartheta}^{\bm M}_3 \cdot {\bm \kappa}) \sigma_3.
\end{equation}
Here, $\vartheta^{\bm M}_0$, $\vartheta^{\bm M}_1$, ${\vartheta^{\bm M}_2 \in \R}$ and ${{\bm \vartheta}^{\bm M}_3 \in \R^2}$ are defined by:
\begin{equation}
\begin{aligned}
\label{eq:def-thetas}
\vartheta^{\bm M}_0 & \equiv \langle W \Phi^{\bm M}_1, \, \mathscr{R}(E_S) W \Phi^{\bm M}_1 \rangle, \\
\vartheta^{\bm M}_1 & \equiv {\rm Re} \, \langle W \Phi^{\bm M}_1, \, \mathscr{R}(E_S) W \Phi^{\bm M}_2 \rangle, \\
\vartheta^{\bm M}_2 & \equiv - {\rm Im} \, \langle W \Phi^{\bm M}_1, \, \mathscr{R}(E_S) W \Phi^{\bm M}_2 \rangle, \\
{\bm \vartheta}^{\bm M}_3 & \equiv 2 \, {\rm Re} \, \langle W \Phi^{\bm M}_1, \, \mathscr{R}(E_S) (2 i \nabla) \Phi^{\bm M}_1 \rangle.
\end{aligned}
\end{equation}
\end{proposition}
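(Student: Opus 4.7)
The plan is to extend the Lyapunov--Schmidt/Schur complement reduction of Section \ref{sec:set-up} to the perturbed operator $L^\delta = L + \delta W$ at quasimomentum $\bm{M} + \bm{\kappa}$. Decomposing a candidate eigenstate as $\phi = a_1 \Phi_1^{\bm M} + a_2 \Phi_2^{\bm M} + \phi^{(1)}$ with $\phi^{(1)}$ in the orthogonal complement of $\ker(L(\bm{M}) - E_S)$, solving the $\Pi^\perp$-projected equation for $\phi^{(1)}$ by Neumann series in $(\varepsilon, \bm{\kappa}, \delta)$, and substituting back, one obtains a $2 \times 2$ self-adjoint analytic matrix $\mathcal{M}^\delta(\varepsilon; \bm{\kappa}, \delta)$ whose vanishing characterizes the $L^2_{\bm{M}+\bm{\kappa}}$ eigenvalues $E_S + \varepsilon$ of $L^\delta$. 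Expansion of $\mathcal{M}^\delta$ in the Pauli basis via Lemma \ref{lem:Ahermsym} then identifies the Fourier symbol of the effective Hamiltonian; the $\delta = 0$ contribution reproduces $\varepsilon I - L^{\bm M}_{\rm eff}(\bm{\kappa})$ from \eqref{eq:Leff-quad}, as in \cite{keller2018spectral}.

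The new contributions enter order by order in $\delta$. At $O(\delta)$ with $\bm{\kappa} = \bm 0$, the direct matrix elements $\langle \Phi_j^{\bm M}, W \Phi_k^{\bm M}\rangle$ all vanish: properties \ref{itm:quad-dgn-2}--\ref{itm:quad-dgn-3} give $\mathcal{R}[\Phi_j^{\bm M}] = \pm i\Phi_j^{\bm M}$, hence $\mathcal{P}[\Phi_j^{\bm M}] = -\Phi_j^{\bm M}$, and since $\mathcal{P}[W] = -W$, unitarity of $\mathcal{P}$ forces each such inner product to equal its own negative. At $O(\delta^2)$, the resolvent correction produces entries of the shape $\delta^2 \langle W\Phi_j^{\bm M}, \mathscr{R}(E_S) W\Phi_k^{\bm M}\rangle$; expanding these in $\{I, \sigma_1, \sigma_2, \sigma_3\}$ and using the $\mathcal{C}$-invariance of $L^\delta$ (preserved because $W$ is real) to equate the two diagonal entries, one recovers exactly $\vartheta_0^{\bm M}, \vartheta_1^{\bm M}, \vartheta_2^{\bm M}$ as in \eqref{eq:def-thetas}, with no $\sigma_3$ contribution at this order. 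These computations parallel those defining $\alpha_0, \alpha_1, \alpha_2$ in \eqref{eq:def-a-par}.

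The $\sigma_3$ term ${\bm \vartheta}_3^{\bm M} \cdot \bm{\kappa}$ arises at mixed order $O(\delta\,\bm{\kappa})$, from cross terms between $\delta W$ and the linear-in-$\bm{\kappa}$ piece $-i\bm{\kappa}\cdot A\nabla_{\bm M} - i\nabla_{\bm M}\cdot A\bm{\kappa}$ of $L^\delta(\bm{M}+\bm{\kappa}) - L(\bm{M})$. The induced matrix entries take the general form $\delta\,\bm{\kappa}\cdot \langle W\Phi_j^{\bm M}, \mathscr{R}(E_S)(2i\nabla)\Phi_k^{\bm M}\rangle + \text{c.c.}$. The main obstacle is to verify, using the $\mathcal{R}$-eigenvalue structure of $\Phi_1^{\bm M}$ and $\Phi_2^{\bm M}$ together with the anticommutation of $W$ with $\mathcal{P}$, that the off-diagonal $(1,2)$ entries cancel in the $\sigma_1$ and $\sigma_2$ directions, while the diagonal entries (no longer equal, since $\mathcal{PC}$ is broken) differ by the real inner product $2\,\mathrm{Re}\,\langle W\Phi_1^{\bm M}, \mathscr{R}(E_S)(2i\nabla)\Phi_1^{\bm M}\rangle\cdot \bm{\kappa}$. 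A further $\mathcal{C}$-symmetry argument shows that the $\Phi_2^{\bm M}$ diagonal entry coincides with the negative of the $\Phi_1^{\bm M}$ diagonal entry, so that $h_3 = (\mathcal{M}^\delta_{11} - \mathcal{M}^\delta_{22})/2$ collapses to ${\bm \vartheta}_3^{\bm M}\cdot\bm{\kappa}$. Higher-order contributions $O(\delta^3, \delta^2|\bm{\kappa}|, \delta|\bm{\kappa}|^2, |\bm{\kappa}|^3)$ are absorbed into analytic remainders controlled exactly as in Section \ref{sec:set-up}, and do not modify the leading symbol \eqref{eq:Leff-quad-P}.
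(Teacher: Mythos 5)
Your proposal follows the strategy the paper itself points to in Remark \ref{rmk:Leff-break-PC}: a Schur complement/Lyapunov--Schmidt reduction for $L^\delta = L + \delta W$ in the style of Sections \ref{sec:set-up}--\ref{sec:pf-M-srf}, then a Pauli-basis classification of the reduced $2\times 2$ matrix order by order in $(\delta, \bm{\kappa})$. The paper does not give an in-text proof (it states the result and defers details to \cite{chaban2025thesis}), so there is no displayed argument to compare against; your bookkeeping of which Pauli components survive at which orders, and the resulting formulas for $\vartheta_0^{\bm M}, \vartheta_1^{\bm M}, \vartheta_2^{\bm M}, \bm{\vartheta}_3^{\bm M}$, are correct.

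One place to tighten is the symmetry justification at $O(\delta\,\bm{\kappa})$. You attribute the vanishing of the off-diagonal $(1,2)$ entry to ``the $\mathcal{R}$-eigenvalue structure together with the anticommutation of $W$ with $\mathcal{P}$,'' but a direct application of $\mathcal{P}$ alone (unitary, $\mathcal{P}[\Phi_j] = -\Phi_j$, $\mathcal{P} W = -W\mathcal{P}$, $\mathcal{P}\nabla = -\nabla\mathcal{P}$, $[\mathcal{P},\mathscr{R}(E_S)]=0$) is a tautology: every factor of $-1$ appears an even number of times. The cancellation requires an operator that swaps the basis, namely $\mathcal{C}$ (or equivalently $\mathcal{PC}$). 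Using $\mathcal{C}[\Phi_1] = -\Phi_2$ and $\mathcal{C}[\Phi_2] = -\Phi_1$ (these follow from $\Phi_2 = \mathcal{PC}[\Phi_1]$ together with $\mathcal{P}[\Phi_j] = \mathcal{R}^2[\Phi_j] = -\Phi_j$, which is where the $\mathcal{R}$-structure actually enters), together with $W\mathcal{C} = \mathcal{C} W$ and the antiunitarity identity $\langle f, g\rangle = \langle \mathcal{C}[g], \mathcal{C}[f]\rangle$, one finds that the term $\langle W\Phi_1, \mathscr{R}(E_S)(2i\bm{\kappa}\cdot\nabla)\Phi_2\rangle$ transforms into the negative of $\langle (2i\bm{\kappa}\cdot\nabla)\Phi_1, \mathscr{R}(E_S)W\Phi_2\rangle$, so the $(1,2)$ entry vanishes at this order. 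The same $\mathcal{C}$-argument (not a ``further'' one) gives $\mathcal{M}_{22}^{(\delta\bm{\kappa})} = -\mathcal{M}_{11}^{(\delta\bm{\kappa})}$, hence $h_0 = 0$ and $h_3 = 2\,\mathrm{Re}\,\langle W\Phi_1^{\bm M}, \mathscr{R}(E_S)(2i\bm{\kappa}\cdot\nabla)\Phi_1^{\bm M}\rangle$, and it also gives $\mathcal{M}_{11}^{(\delta^2)} = \mathcal{M}_{22}^{(\delta^2)}$. So the single operative symmetry in all three cancellations is $\mathcal{C}$; the $\mathcal{R}$-eigenvalue structure enters only in pinning down the sign of $\mathcal{C}[\Phi_j]$ via $\mathcal{P} = \mathcal{R}^2$.
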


\begin{proposition}
Assume either ${\vartheta^{\bm M}_1 \neq 0}$ or ${\vartheta^{\bm M}_2 \neq 0}$, and that ${{\bm \vartheta}^{\bm M}_3 \neq {\bm 0}}$. Then, Proposition \ref{prop:Leff-nondgn} implies that the dispersion surfaces of $L^{\bm M}_{\rm eff}(\delta)$ are nondegenerate.
\end{proposition}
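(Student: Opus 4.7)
The plan is to apply Proposition~\ref{prop:Leff-nondgn} directly: reading off the coefficients of $I$ and the Pauli matrices in \eqref{eq:Leff-quad-P}, the symbol $L^{\bm M}_{\rm eff}({\bm \kappa};\delta) = h_0({\bm \kappa})I + {\bm h}({\bm \kappa})\cdot{\bm \sigma}$ has
\[
{\bm h}({\bm \kappa}) = \bigl(\delta^2\vartheta^{\bm M}_1 - \alpha_1{\bm \kappa}\cdot\sigma_1{\bm \kappa},\;\delta^2\vartheta^{\bm M}_2 - \alpha_2{\bm \kappa}\cdot\sigma_3{\bm \kappa},\;\delta\,{\bm \vartheta}^{\bm M}_3\cdot{\bm \kappa}\bigr),
\]
and by Proposition~\ref{prop:Leff-nondgn} the two dispersion surfaces are nondegenerate iff ${\bm h}({\bm \kappa})\ne{\bm 0}$ on the relevant neighborhood of ${\bm \kappa}={\bm 0}$.

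The first, easy step would be to dispatch the former degeneracy point itself: evaluating at the origin gives ${\bm h}({\bm 0}) = \delta^2(\vartheta^{\bm M}_1,\vartheta^{\bm M}_2,0)$, which is nonzero because $\delta\ne0$ and at least one of $\vartheta^{\bm M}_1,\vartheta^{\bm M}_2$ is nonzero by hypothesis. By analyticity of ${\bm \kappa}\mapsto{\bm h}({\bm \kappa})$, this already forces ${\bm h}\ne{\bm 0}$ throughout an open neighborhood of the origin, lifting the quadratic band degeneracy at ${\bm M}$. Note the asymmetric roles of the hypotheses: the condition on $\vartheta^{\bm M}_1,\vartheta^{\bm M}_2$ is what performs the gap-opening, while ${\bm \vartheta}^{\bm M}_3\ne{\bm 0}$ is the sole source of the $\sigma_3$ (parity-odd) contribution and is essential later for the nonvanishing Berry curvature and Chern number computations of Section~\ref{sec:band-top}.

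For a ${\bm \kappa}^\star\ne{\bm 0}$ on which ${\bm h}({\bm \kappa}^\star)={\bm 0}$ might be postulated, I would argue by contradiction. The vanishing of $h_3$, combined with $\delta\ne0$ and ${\bm \vartheta}^{\bm M}_3\ne{\bm 0}$, confines ${\bm \kappa}^\star$ to the one-dimensional line $L = \{{\bm \kappa}:{\bm \vartheta}^{\bm M}_3\cdot{\bm \kappa}=0\}$. Parametrizing ${\bm \kappa}^\star = t\hat{\bm e}$ along $L$ with $\hat{\bm e}$ a unit vector, the remaining two equations $h_1=h_2=0$ collapse to the overdetermined scalar pair
\[
2\alpha_1\hat e_1\hat e_2\,t^2 = \delta^2\vartheta^{\bm M}_1,\qquad \alpha_2(\hat e_1^2-\hat e_2^2)\,t^2 = \delta^2\vartheta^{\bm M}_2,
\]
two conditions on the single unknown $t^2\ge0$. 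The main obstacle lies precisely here: the system is generically inconsistent under the stated hypotheses, but an accidental collinearity of $(\vartheta^{\bm M}_1,\vartheta^{\bm M}_2)$ with $(2\alpha_1\hat e_1\hat e_2,\alpha_2(\hat e_1^2-\hat e_2^2))$ could produce a spurious common nonnegative solution. I would handle this either by an implicit genericity argument on the parameters $(\vartheta^{\bm M}_1,\vartheta^{\bm M}_2,{\bm \vartheta}^{\bm M}_3)$, or quantitatively, by noting that such putative zeros necessarily lie at $|{\bm \kappa}^\star|=O(\delta)$; outside a ball of radius $o(1)$ but larger than $O(\delta)$, the quadratic forms $\alpha_1\kappa_1\kappa_2$ and $\alpha_2(\kappa_1^2-\kappa_2^2)$ dominate the constant perturbations $\delta^2\vartheta^{\bm M}_j$ and vanish simultaneously only at the origin. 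With the nondegeneracy criterion verified, Proposition~\ref{prop:Leff-nondgn} delivers the conclusion.
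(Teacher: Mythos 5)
Your approach — apply Proposition~\ref{prop:Leff-nondgn} directly by reading off ${\bm h}({\bm \kappa})$ from \eqref{eq:Leff-quad-P} and checking that it never vanishes — is the right one, and the paper (which defers detailed proofs to \cite{chaban2025thesis}) offers no alternative route. Your first step, ${\bm h}({\bm 0}) = \delta^2(\vartheta^{\bm M}_1,\vartheta^{\bm M}_2,0) \neq {\bm 0}$, correctly lifts the degeneracy \emph{at} ${\bm \kappa} = {\bm 0}$; and your identification of the hard case — the line $\{{\bm \vartheta}^{\bm M}_3 \cdot {\bm \kappa} = 0\}$, along which $h_3$ vanishes and one is left with the two scalar conditions $h_1 = h_2 = 0$ in the single unknown $t^2$ — is exactly where the difficulty lies. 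It is worth noting that this is the \emph{only} one of the four symmetry-breaking propositions in Section~\ref{sec:break-PC} for which this issue arises: in the other three (\eqref{eq:Leff-quad-C}, \eqref{eq:Leff-dir-p-P}--\eqref{eq:Leff-dir-m-P}, \eqref{eq:Leff-dir-p-C}--\eqref{eq:Leff-dir-m-C}), $h_3$ is a nonzero \emph{constant}, so ${\bm h}$ trivially never vanishes, and the proposition really is an immediate consequence of Proposition~\ref{prop:Leff-nondgn}.

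However, neither of your two proposed resolutions actually closes the gap you found, and I do not think it can be closed under the hypotheses as stated. The quantitative argument (zeros would have to lie at $|{\bm \kappa}^\star| = O(\delta)$, and the quadratic forms dominate for $|{\bm \kappa}|$ large) does not exclude zeros precisely because the $O(\delta)$ regime is not covered — and that is exactly where a zero can sit: with $t^2 = O(\delta^2)$ the quadratic and constant terms balance. The genericity argument would weaken the conclusion to ``for generic $\vartheta$-parameters,'' which is not what the proposition asserts. Concretely, take ${\bm \vartheta}^{\bm M}_3$ along the $\kappa_1$-axis so that the zero line of $h_3$ is $\{\kappa_1 = 0\}$; there $h_1 \equiv \delta^2\vartheta^{\bm M}_1$ and $h_2 = \delta^2\vartheta^{\bm M}_2 + \alpha_2\kappa_2^2$. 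If $\vartheta^{\bm M}_1 = 0$ (allowed by the ``either\ldots or'' hypothesis) and $\vartheta^{\bm M}_2/\alpha_2 < 0$, then ${\bm h}$ vanishes at $\kappa_2^2 = -\delta^2\vartheta^{\bm M}_2/\alpha_2 > 0$, producing a pair of twofold degeneracies along the $\kappa_2$-axis. The hypotheses impose no sign constraint on $\vartheta^{\bm M}_2/\alpha_2$ nor any direction constraint on ${\bm \vartheta}^{\bm M}_3$, so this scenario is not excluded. The proposition as stated therefore needs either stronger hypotheses (e.g.\ $\vartheta^{\bm M}_1 \neq 0$ \emph{and} $\vartheta^{\bm M}_2 \neq 0$, plus a non-collinearity condition along the zero line of $h_3$) or a weaker conclusion (the degeneracy at ${\bm \kappa} = {\bm 0}$ is lifted, for which your easy first step is the complete proof). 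You were right to flag this; your proof cannot be completed as written because the statement itself appears to overreach.
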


\subsubsection{Effect of $\mathcal{P}$-breaking on Dirac points}
\label{sec:break-P-dir}

\begin{proposition}
\label{prop:Leff-dir-P}
Suppose ${L^0 = L}$ has a symmetric pair of Dirac points. Then, for ${\delta > 0}$ sufficiently small, the Fourier symbols of $L^{{\bm D}^\pm}_{\rm eff}$, defined in \eqref{eq:Leff-dir-p} and \eqref{eq:Leff-dir-m}, perturb as:
\begin{align}
\label{eq:Leff-dir-p-P}
L^{{\bm D}^+}_{\rm eff}({\bm q}; \delta) & = ({\bm q} \cdot {\bm \gamma}^+_0) \, I + ({\bm q} \cdot {\bm \gamma}^+_1) \, \sigma_1 + ({\bm q} \cdot {\bm \gamma}^+_2) \, \sigma_2 + \delta \vartheta^{{\bm D}^+} \sigma_3, \\
\label{eq:Leff-dir-m-P}
L^{{\bm D}^-}_{\rm eff}({\bm q}; \delta) & = ({\bm q} \cdot {\bm \gamma}_0^-) \, I + ({\bm q} \cdot {\bm \gamma}^-_1) \, \sigma_1 + ({\bm q} \cdot {\bm \gamma}^-_2) \, \sigma_2 + \delta \vartheta^{{\bm D}^-} \sigma_3.
\end{align}
Here, ${\vartheta^{{\bm D}^\pm} \! \smallin \R}$ are defined by:
\begin{equation}
\label{eq:def-vth}
\vartheta^{{\bm D}^\pm} \equiv \langle \Phi^{{\bm D}^\pm}_1, \, W \Phi^{{\bm D}^\pm}_1 \rangle.
\end{equation}
\end{proposition}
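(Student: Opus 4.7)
The plan is to adapt the Schur complement / Lyapunov-Schmidt reduction of Section \ref{sec:set-up} to the perturbed operator $L^\delta = L + \delta W$ in a neighborhood of each Dirac point $\bm D^\pm$. Expanding $L^\delta(\bm D^\pm + \bm q)$ in $\bm q$ and introducing the orthogonal projections $\Pi^\parallel = \sum_{j=1,2}\Phi^{\bm D^\pm}_j\langle \Phi^{\bm D^\pm}_j, \cdot\rangle$ and $\Pi^\perp = I - \Pi^\parallel$ on $L^2(\mathbb{R}^2/\Lambda)$, I would solve the $\Pi^\perp$-equation for the high-frequency correction $\phi^{(1)}$ as an analytic functional of $(a_1, a_2, \varepsilon, \bm q, \delta)$ near the origin, via a Neumann series built on $\Pi^\perp(L - E_D)^{-1}\Pi^\perp$. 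Substituting back into the $\Pi^\parallel$-equation produces a $2\times 2$ self-adjoint analytic matrix $\mathcal{N}(\varepsilon; \bm q, \delta)$ whose vanishing determinant characterizes the $L^2_{\bm D^\pm + \bm q}$-eigenvalues $E_D + \varepsilon$ of $L^\delta$. At $\delta = 0$, the leading part of $\mathcal{N}$ is $\varepsilon I - L^{\bm D^\pm}_{\rm eff}(\bm q)$, reproducing Theorem \ref{thm:dir-srf}; the order-$\delta$ correction is precisely the matrix $-\delta\bigl(\langle \Phi^{\bm D^\pm}_i, W\Phi^{\bm D^\pm}_j\rangle\bigr)_{1 \le i, j \le 2}$, with remainder $O(\delta|\bm q| + \delta^2)$.

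The core new computation is identifying this $\delta$-linear matrix as $-\delta\vartheta^{\bm D^\pm}\sigma_3$. For this I would first pass to the auxiliary basis $\{\tilde{\Phi}^{\bm D^\pm}_1, \tilde{\Phi}^{\bm D^\pm}_2\}$ that diagonalizes $\mathcal{PC}$, i.e. $\mathcal{PC}\tilde{\Phi}^{\bm D^\pm}_j = (-1)^{j-1}\tilde{\Phi}^{\bm D^\pm}_j$. Since $W$ is real-valued and $\mathcal{P}$-odd, a direct computation on $\mathcal{PC}[f](\bm x) = \overline{f(-\bm x)}$ gives $\mathcal{PC}W = -W\mathcal{PC}$ as operators. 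Combined with the antiunitary identity $\langle \mathcal{PC}\phi, \mathcal{PC}\psi\rangle = \overline{\langle \phi, \psi\rangle}$, this forces $\langle \tilde{\Phi}^{\bm D^\pm}_j, W\tilde{\Phi}^{\bm D^\pm}_j\rangle = 0$ for $j = 1, 2$ (using that these diagonal matrix elements are real), while $\langle \tilde{\Phi}^{\bm D^\pm}_1, W\tilde{\Phi}^{\bm D^\pm}_2\rangle$ is forced to be real-valued. The matrix of $W$ in the $\tilde{\Phi}$-basis is therefore a real scalar multiple of $\sigma_1$. The unitary change of basis $\Phi^{\bm D^\pm}_1 = (\tilde{\Phi}^{\bm D^\pm}_1 + \tilde{\Phi}^{\bm D^\pm}_2)/\sqrt{2}$, $\Phi^{\bm D^\pm}_2 = (\tilde{\Phi}^{\bm D^\pm}_1 - \tilde{\Phi}^{\bm D^\pm}_2)/\sqrt{2}$ rotates $\sigma_1$ to $\sigma_3$, producing $\langle \Phi^{\bm D^\pm}_1, W\Phi^{\bm D^\pm}_1\rangle = -\langle \Phi^{\bm D^\pm}_2, W\Phi^{\bm D^\pm}_2\rangle = \vartheta^{\bm D^\pm}$ and $\langle \Phi^{\bm D^\pm}_1, W\Phi^{\bm D^\pm}_2\rangle = 0$. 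Expanding $\varepsilon I - \mathcal{N}$ in the Pauli basis via Lemma \ref{lem:Ahermsym} and truncating to leading order then yields $L^{\bm D^\pm}_{\rm eff}(\bm q; \delta)$ in the asserted form.

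I expect the main difficulty to be bookkeeping rather than conceptual: one must check that the mixed $O(\delta|\bm q|)$ corrections (which would otherwise merely rescale $\bm \gamma^\pm_j$) and the $O(\delta^2)$ corrections do not contaminate the $\delta$-linear $\sigma_3$ coefficient, and that no $I$-, $\sigma_1$-, or $\sigma_2$-contributions arise at order $\delta$. The vanishing of the latter three is a direct consequence of the same $\mathcal{PC}$-anticommutation argument applied to the $\tilde{\Phi}$-basis matrix of $W$, which admits only a $\sigma_1$ component. These absorption arguments mirror those used in the proof of Theorem \ref{thm:M-srf} and, per Remark \ref{rmk:Leff-break-PC}, their full details are deferred to \cite{chaban2025thesis}.
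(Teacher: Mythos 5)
Your proposal is correct and follows exactly the strategy the paper itself indicates in Remark \ref{rmk:Leff-break-PC}: a Lyapunov--Schmidt reduction about each Dirac point yields a $2\times 2$ reduced matrix whose $\delta$-linear part is $-\delta\bigl(\langle\Phi^{{\bm D}^\pm}_j, W\Phi^{{\bm D}^\pm}_k\rangle\bigr)_{j,k}$, and the anticommutation $\mathcal{PC}\,W = -W\,\mathcal{PC}$ together with antiunitarity of $\mathcal{PC}$ forces this matrix to equal $-\delta\,\vartheta^{{\bm D}^\pm}\sigma_3$. Your detour through the $\tilde\Phi$-basis is a minor stylistic variant of the more direct computation from $\Phi^{{\bm D}^\pm}_2 = \mathcal{PC}[\Phi^{{\bm D}^\pm}_1]$ (the route the paper takes in the proof of the adjacent Proposition \ref{prop:vth-sym}); both yield the same matrix identity and the result.
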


\noindent The parameters $\vartheta^{{\bm D}^\pm} \!$, as well as the perturbed effective operators, are related via symmetry:

\begin{proposition}
\label{prop:vth-sym}
Consider the pair of effective operators \eqref{eq:Leff-dir-p-P} and \eqref{eq:Leff-dir-m-P}. We have:
\begin{align}
\label{eq:vth-sym}
\vartheta^{{\bm D}^-} \! = - \vartheta^{{\bm D}^+} \quad \text{which implies} \quad L^{{\bm D}^-}_{\rm eff}({\bm q}; \delta) = - L^{{\bm D}^+}_{\rm eff}({\bm q}; \delta).
\end{align}
\end{proposition}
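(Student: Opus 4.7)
The strategy is to exploit parity to intertwine the Floquet-Bloch eigenspaces at ${\bm D}^+$ and ${\bm D}^-$, then use the oddness of $W$ under $\mathcal{P}$ to flip the sign of the scalar parameter $\vartheta^{{\bm D}^\pm}$. The operator identity will then follow by combining the scalar identity $\vartheta^{{\bm D}^-} = -\vartheta^{{\bm D}^+}$ with the unperturbed relation $L^{{\bm D}^-}_{\rm eff}({\bm q}) = -L^{{\bm D}^+}_{\rm eff}({\bm q})$ already recorded in \eqref{eq:Leff-dir-m}.

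First, I would observe that ${\bm D}^-(\tau_0, {\bm \tau}) = 2{\bm M} - {\bm D}^+(\tau_0, {\bm \tau})$, together with $2{\bm M} = [2\pi,2\pi]^\mathsf{T} \smallin (\Z^2)^*$, implies $L^2_{{\bm D}^-} = L^2_{-{\bm D}^+}$. Since $L = T_* H_V$ commutes with $\mathcal{P}$, and $\mathcal{P}$ maps $L^2_{\bm k}$ isometrically onto $L^2_{-{\bm k}}$, parity transports the degenerate eigenspace of $L$ at energy $E_D$ from ${\bm D}^+$ onto that at ${\bm D}^-$ (cf.\ Proposition \ref{prop:gl-P-sym}). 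This canonical identification lets me fix a compatible basis
\begin{equation*}
\Phi_1^{{\bm D}^-} \equiv \mathcal{P}[\Phi_1^{{\bm D}^+}], \qquad \Phi_2^{{\bm D}^-} \equiv \mathcal{P}[\Phi_2^{{\bm D}^+}].
\end{equation*}
Because $\mathcal{P}^2 = I$ and $[\mathcal{P}, \mathcal{C}] = 0$, the convention $\Phi_2 = \mathcal{PC}[\Phi_1]$ of Section \ref{sec:dir-pt} carries over automatically to ${\bm D}^-$. A short check using Lemma \ref{lem:chain} and the $\mathcal{P}$-invariance of $A$ and $V$ reproduces ${\bm \gamma}^-_i = -{\bm \gamma}^+_i$ for $i = 0, 1, 2$, consistent with the unperturbed identity in \eqref{eq:Leff-dir-m}.

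With this basis, the scalar calculation is immediate. Expanding
\begin{equation*}
\vartheta^{{\bm D}^-} = \langle \mathcal{P}[\Phi_1^{{\bm D}^+}], \, W \, \mathcal{P}[\Phi_1^{{\bm D}^+}] \rangle = \int_{\R^2/\Z^2} \overline{\Phi_1^{{\bm D}^+}(-{\bm x})} \, W({\bm x}) \, \Phi_1^{{\bm D}^+}(-{\bm x}) \, {\rm d}{\bm x},
\end{equation*}
and substituting ${\bm y} = -{\bm x}$ (unit Jacobian, with the fundamental cell invariant modulo $\Z^2$), the oddness $W(-{\bm y}) = -W({\bm y})$ from \eqref{eq:def-W} produces $\vartheta^{{\bm D}^-} = -\vartheta^{{\bm D}^+}$. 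Substituting this together with \eqref{eq:Leff-dir-m} into the definitions \eqref{eq:Leff-dir-p-P}--\eqref{eq:Leff-dir-m-P} gives $L^{{\bm D}^-}_{\rm eff}({\bm q}; \delta) = -L^{{\bm D}^+}_{\rm eff}({\bm q}; \delta)$.

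The main subtlety (rather than a true obstacle) is that the proposition implicitly fixes the bases at ${\bm D}^+$ and ${\bm D}^-$ independently, whereas the argument above presupposes they are intertwined by $\mathcal{P}$. This is harmless because both $\vartheta^{{\bm D}^\pm}$ and the triples $({\bm \gamma}^\pm_0, {\bm \gamma}^\pm_1, {\bm \gamma}^\pm_2)$ are invariant under the residual $U(1)$ rephasing $\Phi_1^{{\bm D}^\pm} \mapsto e^{i\chi_\pm} \Phi_1^{{\bm D}^\pm}$ (which also preserves $\Phi_2 = \mathcal{PC}[\Phi_1]$), so the parity-intertwined choice can always be arranged without loss of generality.
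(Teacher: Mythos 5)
Your proof is correct and follows essentially the same route as the paper: take $\Phi_1^{{\bm D}^-} = \mathcal{P}[\Phi_1^{{\bm D}^+}]$, use the unitarity of $\mathcal{P}$ and the oddness of $W$ to flip the sign of the inner product, then combine with \eqref{eq:Leff-dir-m}. (The paper states the eigenstate relation as a given — it is built into the construction of the eigenstates in Appendix \ref{apx:M-dgn-dir} and Remark \ref{rmk:proof-dir-m} — and performs the same sign-flip calculation.) One small slip in your closing remark: the components ${\bm \gamma}^\pm_1, {\bm \gamma}^\pm_2$ are \emph{not} invariant under the residual rephasing $\Phi_1 \mapsto e^{i\chi}\Phi_1$, $\Phi_2 = \mathcal{PC}[\Phi_1] \mapsto e^{-i\chi}\Phi_2$; that gauge transformation conjugates $L^{{\bm D}^\pm}_{\rm eff}$ by $\mathrm{diag}(e^{-i\chi}, e^{i\chi})$ and rotates $({\bm \gamma}_1, {\bm \gamma}_2)$ in the $\sigma_1$-$\sigma_2$ plane. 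Only $\vartheta^{{\bm D}^\pm}$ and ${\bm \gamma}^\pm_0$ are rephasing-invariant. The correct justification is simply that the paper's convention \emph{fixes} the basis at ${\bm D}^-$ to be the $\mathcal{P}$-image of the basis at ${\bm D}^+$ (this is already implicit in \eqref{eq:Dirac-hams} and \eqref{eq:Leff-dir-m}), so no WLOG argument is needed; with that convention there is no residual gauge ambiguity between the two Dirac points to account for.
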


\begin{proof}
Since ${\Phi^{{\bm D}^-}_1 = \mathcal{P}[\Phi^{{\bm D}^+}_1]}$ and $\mathcal{P}$ is unitary,
\begin{align}
\label{eq:vth-sym_2}
\vartheta^{{\bm D}^-} & = \langle \Phi^{{\bm D}^-}_1, \, W \Phi^{{\bm D}^-}_1 \rangle \\
& = \langle \mathcal{P}[\Phi^{{\bm D}^+}_1], \, W \mathcal{P}[\Phi^{{\bm D}^+}_1] \rangle \nonumber \\
& = \langle \mathcal{P}[\Phi^{{\bm D}^+}_1], \, \mathcal{P}[-W \Phi^{{\bm D}^+}_1] \rangle \nonumber \\
& = - \langle \Phi^{{\bm D}^+}_1, \, W \Phi^{{\bm D}^+}_1 \rangle = - \vartheta^{{\bm D}^+}. \nonumber
\end{align}
That $L^{{\bm D}^-}_{\rm eff}({\bm q}; \delta) = - L^{{\bm D}^+}_{\rm eff}({\bm q}; \delta)$ then immediately follows. An argument using $\mathcal{C}$ is also possible.
\end{proof}

\begin{proposition}
Assume ${\vartheta^{{\bm D}^+} = -\vartheta^{{\bm D}^-} \neq 0}$. Then, by Proposition \ref{prop:Leff-nondgn}, the dispersion surfaces of the perturbed effective operators $L^{{\bm D}^\pm}_{\rm eff}(\delta)$ are nondegenerate.
\end{proposition}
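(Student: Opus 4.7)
The proof is a direct application of Proposition \ref{prop:Leff-nondgn} to the Fourier symbols given in \eqref{eq:Leff-dir-p-P} and \eqref{eq:Leff-dir-m-P}. The plan is to simply read off the coefficient vector $\bm h(\bm q)$ in the Pauli basis and observe that its third component is a nonzero constant, so $\bm h(\bm q)\neq \bm 0$ identically in $\bm q$.

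Concretely, from \eqref{eq:Leff-dir-p-P}, the symbol $L^{{\bm D}^+}_{\rm eff}({\bm q}; \delta)$ has the expansion $h_0(\bm q)\,I + \bm h(\bm q)\cdot\bm\sigma$ with
\begin{equation}
h_0(\bm q)=\bm q\cdot\bm\gamma_0^+,\quad h_1(\bm q)=\bm q\cdot\bm\gamma_1^+,\quad h_2(\bm q)=\bm q\cdot\bm\gamma_2^+,\quad h_3(\bm q)=\delta\,\vartheta^{\bm D^+}.
\end{equation}
Under the hypothesis $\vartheta^{\bm D^+}\neq 0$ and with $\delta\neq 0$, the third component $h_3(\bm q)=\delta\,\vartheta^{\bm D^+}$ is a nonzero constant, so $|\bm h(\bm q)|\geq |\delta\,\vartheta^{\bm D^+}|>0$ for every $\bm q$. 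Hence $\bm h(\bm q)\neq\bm 0$ everywhere, and Proposition \ref{prop:Leff-nondgn} yields that the two dispersion surfaces $\varepsilon_\pm^{\bm D^+}(\bm q)=h_0(\bm q)\pm|\bm h(\bm q)|$ are nondegenerate. An identical calculation applies to $L^{\bm D^-}_{\rm eff}({\bm q}; \delta)$ using $\vartheta^{\bm D^-}=-\vartheta^{\bm D^+}\neq 0$ (Proposition \ref{prop:vth-sym}); alternatively, the identity $L^{\bm D^-}_{\rm eff}(\bm q;\delta)=-L^{\bm D^+}_{\rm eff}(\bm q;\delta)$ immediately transfers the nondegeneracy.

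There is no real obstacle here: the entire content is that a constant $\sigma_3$ term in a $2\times 2$ Hermitian symbol whose off-diagonal and traceless diagonal pieces are linear in $\bm q$ forces a uniform lower bound $|\bm h(\bm q)|\geq |\delta\,\vartheta^{\bm D^\pm}|$, and the characterization of multiplicity-two eigenvalues through $\bm h\equiv\bm 0$ (Lemma \ref{lem:Adegen} and Proposition \ref{prop:Leff-nondgn}) then concludes the argument. The only substantive content is encoded earlier, namely the derivation of \eqref{eq:Leff-dir-p-P}--\eqref{eq:Leff-dir-m-P} and the symmetry relation $\vartheta^{\bm D^-}=-\vartheta^{\bm D^+}$ of Proposition \ref{prop:vth-sym}; the present proposition itself is a one-line consequence.
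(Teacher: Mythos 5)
Your proof is correct and is exactly the argument the paper intends: the constant $\sigma_3$ coefficient $h_3(\bm q)=\delta\,\vartheta^{\bm D^\pm}$ is nonzero, so $\bm h(\bm q)\neq\bm 0$ for all $\bm q$, and Proposition \ref{prop:Leff-nondgn} gives nondegeneracy. The paper states this as an immediate consequence without writing out the details; you have simply made the one-line observation explicit.
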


\begin{figure}[t]
\centering
\begin{subfigure}{0.42\textwidth}
\subcaption{$\qquad$}
\vspace{0.1cm}
\includegraphics[height = 4.3cm]{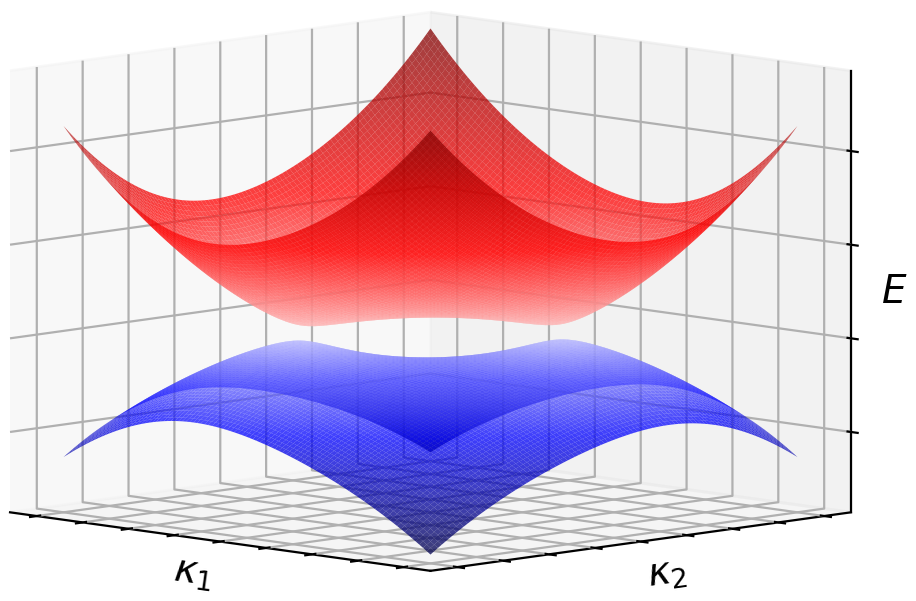}
\end{subfigure}
\hspace{0.4cm}
\begin{subfigure}{0.42\textwidth}
\subcaption{$\qquad$}
\vspace{0.1cm}
\includegraphics[height = 4.3cm]{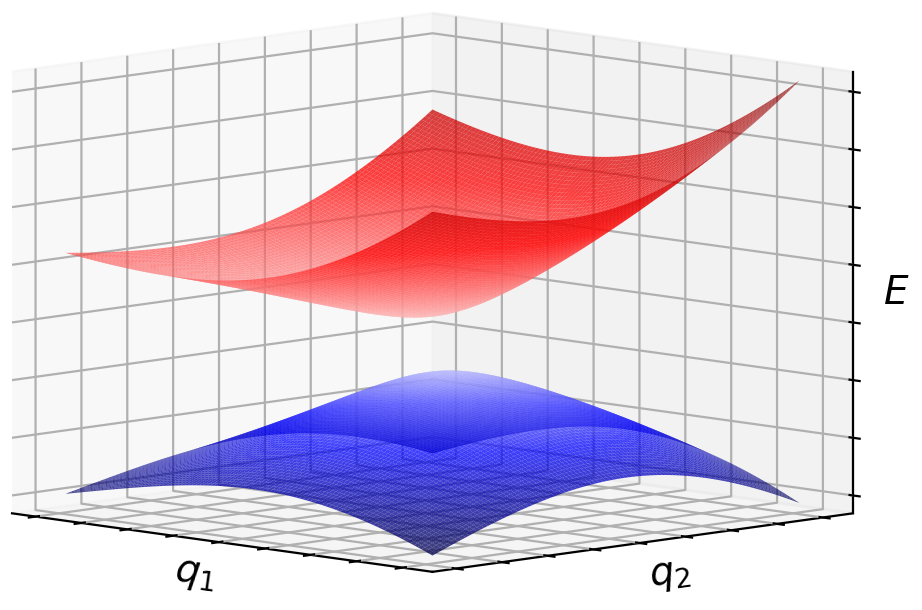}
\end{subfigure}
\vspace{0.3cm}
\caption{Nondegenerate (i.e. gapped) dispersion surfaces of the perturbed $\mathcal{P}$-breaking effective operators {\bf (a)} about a quadratic band degeneracy point; see $L^{\bm M}_{\rm eff}(\delta)$ in \eqref{eq:Leff-quad-P}; and {\bf (b)} about one of two Dirac points from a pair; see $L^{{\bm D}^+}_{\rm eff}(\delta)$ in \eqref{eq:Leff-dir-p-P}. Compare {\bf (a)} with panel (\subref{fig:intro-1b}) of Figure \ref{fig:intro-1} and {\bf (b)} with panel (\subref{fig:intro-2b}) of Figure \ref{fig:intro-2}.}
\end{figure}

\subsection{Breaking $\mathcal{C}$ symmetry}
\label{sec:break-C}

Let $a$ denote a nonzero smooth function which is $\Z^2$-periodic, real-valued, and even. In particular:
\begin{equation}
\label{eq:def-a}
P[a]({\bm x}) = a(-{\bm x}) = a({\bm x}) \quad \text{and} \quad \mathcal{C}[a]({\bm x}) = a({\bm x})^* = a({\bm x}).
\end{equation}
We introduce a magneto-optic type term of the form ${\nabla \cdot a \sigma_2 \nabla}$, which satisfies:
\begin{equation}
\label{eq:def-a-op}
\mathcal{P} \circ \nabla \cdot a \sigma_2 \nabla = \nabla \cdot a \sigma_2 \nabla \circ \mathcal{P} \quad \text{and} \quad \mathcal{C} \circ \nabla \cdot a \sigma_2 \nabla = - \nabla \cdot a \sigma_2 \nabla \circ \mathcal{C}.
\end{equation}
Here, $\sigma_2$ denotes the second Pauli matrix; see \eqref{eq:def-pauli}. For ${\delta > 0}$ sufficiently small, we consider:
\begin{equation}
\label{eq:def-L-C}
\tilde{L}^\delta \equiv -\nabla \cdot A \nabla + V + \delta \nabla \cdot a \sigma_2 \nabla = L + \delta \nabla \cdot a \sigma_2 \nabla.
\end{equation}

\subsubsection{Effect of $\mathcal{C}$-breaking on quadratic band degeneracies}
\label{sec:break-C-quad}

\begin{proposition}
\label{prop:Leff-quad-C}
Suppose ${\tilde{L}^0 = L}$ has a quadratic band degeneracy point. Then, for ${\delta > 0}$ sufficiently small, the Fourier symbol of $L^{\bm M}_{\rm eff}$, defined in \eqref{eq:Leff-quad}, perturbs as
\begin{equation}
\label{eq:Leff-quad-C}
\tilde{L}^{\bm M}_{\rm eff}({\bm \kappa}; \delta) = (1 - \alpha_0) |{\bm \kappa}|^2 \, I - \alpha_1 ({\bm \kappa} \cdot \sigma_1 {\bm \kappa}) \, \sigma_1 - \alpha_2 ({\bm \kappa} \cdot \sigma_3 {\bm \kappa}) \, \sigma_2 + \delta \tilde{\vartheta}^{\bm M} \, \sigma_3.
\end{equation}
Here, ${\tilde{\vartheta}^{\bm M} \smallin \R}$ is defined by:
\begin{equation}
\label{eq:def-tl-th}
\tilde{\vartheta}^{\bm M} \equiv \langle \Phi^{\bm M}_1, \, \nabla \cdot a \sigma_2 \nabla \Phi^{\bm M}_1 \rangle.
\end{equation}
\end{proposition}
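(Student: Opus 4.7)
The plan is to apply the Schur complement/Lyapunov--Schmidt reduction of Section \ref{sec:set-up}, with $(\tau_0, {\bm \tau})$ replaced by the single small parameter $\delta$. Projecting the eigenvalue equation $\tilde L^\delta({\bm M} + {\bm \kappa})\phi = (E_S + \varepsilon)\phi$ onto the two-dimensional eigenspace $\mathrm{span}\{\Phi_1^{\bm M}, \Phi_2^{\bm M}\}$ and inverting $(L({\bm M}) - E_S)$ on its orthogonal complement yields a self-adjoint $2 \times 2$ analytic matrix $\mathcal{M}(\varepsilon; {\bm \kappa}, \delta)$ whose vanishing determinant characterizes the local band structure, in exact analogy with Proposition \ref{prop:red-M}. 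Expanding in the Pauli basis $\mathcal{M} = h_0 I + h_1 \sigma_1 + h_2 \sigma_2 + h_3 \sigma_3$ as in Section \ref{sec:pf-M-dgn}, setting $\delta = 0$ recovers the effective Hamiltonian \eqref{eq:Leff-quad}. Since $\tilde L^\delta$ retains $\mathcal{P}$-invariance, each $h_\ell$ is even in ${\bm \kappa}$, ruling out any $O(\delta|{\bm \kappa}|)$ mixing; however, the $\mathcal{C}$-breaking perturbation destroys the $\mathcal{PC}$ constraint \eqref{eq:h30}, so $h_3$ is free to become nonzero.

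The new content is the computation of the leading $O(\delta)$ corrections to each $h_\ell$. Writing $\mathcal{W} \equiv \nabla \cdot a \sigma_2 \nabla$, the $O(\delta)$ contribution to $\mathcal{M}_{jk}$ at $\varepsilon = 0$, ${\bm \kappa} = {\bm 0}$ is $-\delta \langle \Phi_j^{\bm M}, \mathcal{W}\Phi_k^{\bm M}\rangle$. The key input is that $\mathcal{W}$ is self-adjoint and, by \eqref{eq:def-a-op}, anti-commutes with $\mathcal{PC}$. Combined with $\Phi_2^{\bm M} = \mathcal{PC}[\Phi_1^{\bm M}]$, $(\mathcal{PC})^2 = I$, and the antiunitary identity $\langle \mathcal{PC}f, \mathcal{PC}g\rangle = \overline{\langle f, g\rangle}$, a short calculation produces
\begin{equation*}
\langle \Phi_2^{\bm M}, \mathcal{W}\Phi_2^{\bm M}\rangle = -\langle \Phi_1^{\bm M}, \mathcal{W}\Phi_1^{\bm M}\rangle, \qquad \langle \Phi_1^{\bm M}, \mathcal{W}\Phi_2^{\bm M}\rangle = 0.
\end{equation*}
The first identity cancels the $O(\delta)$ contribution to $h_0 = \tfrac12(\mathcal{M}_{11} + \mathcal{M}_{22})$ while producing $h_3|_{{\bm \kappa}={\bm 0}} = -\delta\, \tilde\vartheta^{\bm M}$; the second forces the $O(\delta)$ contributions to $h_1 = \mathrm{Re}(\mathcal{M}_{12})$ and $h_2 = -\mathrm{Im}(\mathcal{M}_{12})$ to vanish. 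Reading the coefficients into $\mathcal{M} = \varepsilon I - \tilde L_{\rm eff}^{\bm M}({\bm \kappa}; \delta) + (\text{higher order})$ and combining with the unperturbed terms yields the asserted formula \eqref{eq:Leff-quad-C}.

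The main obstacle is mostly bookkeeping: I must check that the Neumann-series inversion of the resolvent operator $I - \mathscr{R}_{\bm M}(E_S)(\varepsilon + 2i{\bm \kappa}\cdot\nabla_{\bm M} - |{\bm \kappa}|^2 - \delta\mathcal{W})$, and the analyticity and residual bounds analogous to Proposition \ref{prop:calM-pr}, all remain valid uniformly in $\delta$. This is immediate because $\delta\mathcal{W} : H^2(\R^2/\Z^2) \to L^2(\R^2/\Z^2)$ has small operator norm when $\delta$ is small, so the existing machinery carries through verbatim. The substantive step is the pair of symmetry identities above, which crucially do not require any rotational invariance of $a$ and rely solely on the $\mathcal{PC}$-anticommutation recorded in \eqref{eq:def-a-op}.
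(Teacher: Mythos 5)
Your proposal is correct and follows exactly the strategy the paper announces for this result in Remark \ref{rmk:Leff-break-PC} (the Lyapunov--Schmidt/Schur-complement reduction of Sections \ref{sec:set-up}--\ref{sec:pf-M-srf}, with the details deferred to \cite{chaban2025thesis}): the $O(\delta)$ effect enters only through the projected matrix elements $-\delta\langle\Phi^{\bm M}_j,\,\nabla\cdot a\sigma_2\nabla\,\Phi^{\bm M}_k\rangle$, and your two symmetry identities — which I have checked follow from self-adjointness of $\nabla\cdot a\sigma_2\nabla$, its anticommutation with the antiunitary $\mathcal{PC}$, and $\Phi^{\bm M}_2=\mathcal{PC}[\Phi^{\bm M}_1]$ — correctly kill the $O(\delta)$ contributions to $h_0$, $h_1$, $h_2$ while producing $h_3=-\delta\tilde\vartheta^{\bm M}$, i.e.\ the $+\delta\tilde\vartheta^{\bm M}\sigma_3$ term in \eqref{eq:Leff-quad-C}. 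No gaps; the retained $\mathcal{P}$-invariance argument for evenness in ${\bm\kappa}$ and the uniform-in-$\delta$ Neumann-series bounds are exactly as in Proposition \ref{prop:calM-pr}.
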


\begin{proposition}
Assume ${\tilde{\vartheta}^{\bm M} \neq 0}$. Then, by Proposition \ref{prop:Leff-nondgn}, the dispersion surfaces of the perturbed effective operator $\tilde{L}^{\bm M}_{\rm eff}(\delta)$ are nondegenerate.
\end{proposition}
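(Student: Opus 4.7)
The plan is a direct appeal to Proposition \ref{prop:Leff-nondgn}, applied to the Fourier symbol $\tilde{L}^{\bm M}_{\rm eff}({\bm \kappa}; \delta)$ displayed in \eqref{eq:Leff-quad-C}. First I would rewrite this symbol in the canonical decomposition $L_{\rm eff}({\bm \kappa}) = h_0({\bm \kappa}) I + {\bm h}({\bm \kappa}) \cdot {\bm \sigma}$, reading off
\begin{equation*}
h_0({\bm \kappa}) = (1-\alpha_0)|{\bm \kappa}|^2, \quad h_1({\bm \kappa}) = -\alpha_1 {\bm \kappa} \cdot \sigma_1 {\bm \kappa}, \quad h_2({\bm \kappa}) = -\alpha_2 {\bm \kappa} \cdot \sigma_3 {\bm \kappa}, \quad h_3({\bm \kappa}) = \delta \tilde{\vartheta}^{\bm M}.
\end{equation*}
The decisive feature is that the $\sigma_3$-coefficient is the \emph{${\bm \kappa}$-independent} constant $\delta \tilde{\vartheta}^{\bm M}$; this is precisely the leading-order signature of $\mathcal{C}$-breaking recorded by Proposition \ref{prop:Leff-quad-C}.

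Next, using $\delta > 0$ and the hypothesis $\tilde{\vartheta}^{\bm M} \neq 0$, I would extract the uniform lower bound
\begin{equation*}
|{\bm h}({\bm \kappa})| \geq |h_3({\bm \kappa})| = \delta \, |\tilde{\vartheta}^{\bm M}| > 0
\end{equation*}
for every ${\bm \kappa}$. Proposition \ref{prop:Leff-nondgn} then yields the desired conclusion: the two dispersion relations $\varepsilon_\pm({\bm \kappa}) = h_0({\bm \kappa}) \pm |{\bm h}({\bm \kappa})|$ are pointwise distinct. In fact one gets slightly more, namely a genuine pointwise spectral gap $\varepsilon_+({\bm \kappa}) - \varepsilon_-({\bm \kappa}) \geq 2\delta |\tilde{\vartheta}^{\bm M}|$, which is uniform in ${\bm \kappa}$.

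There is no real obstacle in the present proposition — the substantive work has been done in Proposition \ref{prop:Leff-quad-C}, where the ${\bm \kappa}$-independent $\sigma_3$ term is produced via the Hermiticity of $\nabla \cdot a \sigma_2 \nabla$ (guaranteeing $\tilde{\vartheta}^{\bm M} \smallin \R$) and a standard Schur complement expansion. What is conceptually worth emphasizing, and what I would flag in the proof, is the contrast with Proposition \ref{prop:Leff-quad-P}: in the $\mathcal{P}$-breaking case the $\sigma_3$ entry is $\delta ({\bm \vartheta}^{\bm M}_3 \cdot {\bm \kappa})$, which vanishes at the degenerate quasimomentum ${\bm \kappa} = {\bm 0}$, so gapping must be secured through the $\delta^2$-size contributions to the $\sigma_1$ and $\sigma_2$ components. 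Here, by contrast, the $\mathcal{C}$-breaking perturbation opens a uniform local gap already at first order in $\delta$.
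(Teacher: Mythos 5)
Your proposal is correct and follows precisely the route the paper intends: read off $h_3({\bm \kappa}) = \delta\tilde{\vartheta}^{\bm M}$ from the Pauli decomposition of $\tilde{L}^{\bm M}_{\rm eff}({\bm \kappa};\delta)$, note that it is ${\bm \kappa}$-independent and nonzero by hypothesis, and invoke Proposition \ref{prop:Leff-nondgn}. The paper gives no explicit proof because the claim is exactly this one-line application, and your supplementary contrast with the $\mathcal{P}$-breaking case (where the $\sigma_3$ coefficient $\delta({\bm \vartheta}^{\bm M}_3\cdot{\bm \kappa})$ vanishes at ${\bm \kappa}={\bm 0}$, so gapping requires the $\delta^2$-order terms) is accurate.
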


\begin{figure}[t]
\centering
\begin{subfigure}{0.42\textwidth}
\subcaption{$\qquad$}
\vspace{0.1cm}
\includegraphics[height = 4.3cm]{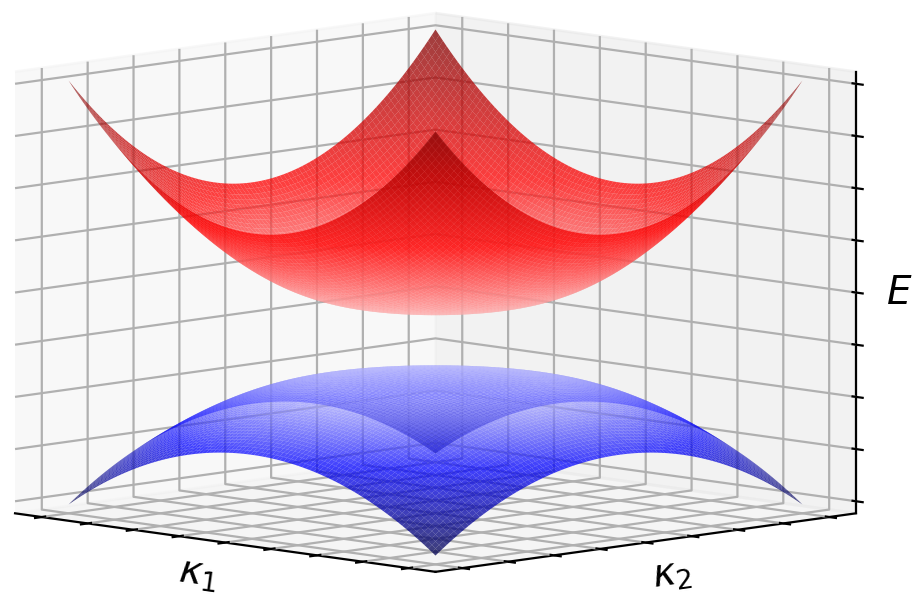}
\end{subfigure}
\hspace{0.4cm}
\begin{subfigure}{0.42\textwidth}
\subcaption{$\qquad$}
\vspace{0.1cm}
\includegraphics[height = 4.3cm]{fig_break-PC-2b.png}
\end{subfigure}
\vspace{0.3cm}
\caption{Nondegenerate (i.e. gapped) dispersion surfaces of the perturbed $\mathcal{C}$-breaking effective operators {\bf (a)} about a quadratic band degeneracy point; see $\tilde{L}^{\bm M}_{\rm eff}(\delta)$ in \eqref{eq:Leff-quad-C}; and {\bf (b)} about one of two Dirac points from a pair; see $\tilde{L}^{{\bm D}^+}_{\rm eff}(\delta)$ in \eqref{eq:Leff-dir-p-C}. Compare {\bf (a)} with panel (\subref{fig:intro-1b}) of Figure \ref{fig:intro-1} and {\bf (b)} with panel (\subref{fig:intro-2b}) of Figure \ref{fig:intro-2}.}
\end{figure}

\subsubsection{Effect of $\mathcal{C}$-breaking on Dirac points}
\label{sec:break-C-dir}

\begin{proposition}
\label{prop:Leff-dir-C}
Suppose ${\tilde{L}^0 = L}$ has a symmetric pair of Dirac points. Then, for ${\delta > 0}$ sufficiently small, the Fourier symbols of $L^{{\bm D}^\pm}_{\rm eff}$, defined in \eqref{eq:Leff-dir-p} and \eqref{eq:Leff-dir-m}, perturb as
\begin{align}
\label{eq:Leff-dir-p-C}
\tilde{L}^{{\bm D}^+}_{\rm eff}({\bm q}; \delta) & = ({\bm q} \cdot {\bm \gamma}^+_0) \, I + ({\bm q} \cdot {\bm \gamma}^+_1) \, \sigma_1 + ({\bm q} \cdot {\bm \gamma}^+_2) \, \sigma_2 + \delta \tilde{\vartheta}^{{\bm D}^+} \sigma_3, \\
\label{eq:Leff-dir-m-C}
\tilde{L}^{{\bm D}^-}_{\rm eff}({\bm q}; \delta) & = ({\bm q} \cdot {\bm \gamma}^-_0) \, I + ({\bm q} \cdot {\bm \gamma}^-_1) \, \sigma_1 + ({\bm q} \cdot {\bm \gamma}^-_2) \, \sigma_2 + \delta \tilde{\vartheta}^{{\bm D}^-} \sigma_3.
\end{align}
Here, ${\tilde{\vartheta}^{{\bm D}^\pm} \! \smallin \R}$ are defined by:
\begin{equation}
\label{eq:def-tl-vth}
\tilde{\vartheta}^{{\bm D}^\pm} \equiv \langle \Phi^{{\bm D}^\pm}_1, \, \nabla \cdot a \sigma_2 \nabla \Phi^{{\bm D}^\pm}_1 \rangle.
\end{equation}
\end{proposition}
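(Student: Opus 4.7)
The plan is to adapt the Schur complement / Lyapunov--Schmidt reduction of Section~\ref{sec:set-up} to the perturbed operator $\tilde L^\delta = L + \delta W$, with $W\equiv\nabla\cdot a\sigma_2\nabla$, now centered at a Dirac point $(E_D,\bm D^\pm)$ in place of the quadratic degeneracy $(E_S,\bm M)$. Projecting the Floquet--Bloch eigenvalue problem onto the two-dimensional eigenspace ${\rm span}\{\Phi_1^{\bm D^\pm},\Phi_2^{\bm D^\pm}\}$, with $\Phi_2^{\bm D^\pm}=\mathcal{PC}[\Phi_1^{\bm D^\pm}]$, produces a self-adjoint analytic $2\times 2$ matrix whose vanishing determinant defines the local dispersion surfaces. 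At $\delta=0$ this reduction recovers $L^{\bm D^\pm}_{\rm eff}(\bm q)$ in \eqref{eq:Leff-dir-p}--\eqref{eq:Leff-dir-m}; standard first-order degenerate perturbation theory then identifies the linear-in-$\delta$ contribution to the effective Hamiltonian with the Galerkin matrix $\bigl(\langle \Phi_j^{\bm D^\pm},W\Phi_k^{\bm D^\pm}\rangle\bigr)_{j,k=1}^2$. Expanding this self-adjoint matrix in the Pauli basis via Lemma~\ref{lem:Ahermsym} reduces the proposition to showing that only the $\sigma_3$ component survives, with value $\tilde\vartheta^{\bm D^\pm}$ as in \eqref{eq:def-tl-vth}.

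The heart of the argument is a $\mathcal{PC}$-symmetry computation. Since $a$ is real-valued and $\mathcal P$-even, a direct check shows that $W$ commutes with $\mathcal P$ and anticommutes with $\mathcal C$; hence $W\mathcal{PC}=-\mathcal{PC}W$, and $W$ is self-adjoint. Combined with $\Phi_2^{\bm D^\pm}=\mathcal{PC}[\Phi_1^{\bm D^\pm}]$, the involution $(\mathcal{PC})^2=I$, and the antiunitary identity $\langle f,\mathcal{PC}g\rangle=\overline{\langle\mathcal{PC}f,g\rangle}$, a short calculation yields
\begin{equation*}
\langle \Phi_1^{\bm D^\pm},W\Phi_2^{\bm D^\pm}\rangle = 0,\qquad \langle \Phi_2^{\bm D^\pm},W\Phi_2^{\bm D^\pm}\rangle = -\langle \Phi_1^{\bm D^\pm},W\Phi_1^{\bm D^\pm}\rangle.
\end{equation*}
Formula~\eqref{eq:exp-Aherm} then identifies the Galerkin matrix with $\tilde\vartheta^{\bm D^\pm}\sigma_3$; combining with the unperturbed linear-in-$\bm q$ expansion of Theorem~\ref{thm:dir-srf} produces \eqref{eq:Leff-dir-p-C}--\eqref{eq:Leff-dir-m-C}.

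The analytic remainder in $(\varepsilon,\bm q,\delta)$ is controlled exactly as in Section~\ref{sec:pf-M-srf}: ellipticity of $L$ ensures that the resolvent $\mathscr R(E_D)$ on the orthogonal complement of the degenerate eigenspace is bounded, so a Neumann series argument supplies uniform bounds on the entries of the reduced matrix beyond the stated leading terms, and mixed $O(\delta|\bm q|)$ corrections to the $\bm\gamma^\pm_\ell$ coefficients are absorbed into higher order. The main obstacle is the symmetry step: one must carefully track the antilinearity of $\mathcal C$ so that both the vanishing of the off-diagonal and the sign flip on the diagonal emerge from the single algebraic identity $W\mathcal{PC}=-\mathcal{PC}W$ (rather than from separate accidents), a subtlety that is especially important because the analogous $\mathcal{P}$-breaking computation in Proposition~\ref{prop:Leff-dir-P} rests on the opposite relation $W\mathcal{PC}=\mathcal{PC}W$ and therefore produces the same $\sigma_3$ structure but with a different sign relation between $\tilde\vartheta^{\bm D^+}$ and $\tilde\vartheta^{\bm D^-}$. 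Once the symmetry computation is in hand, the rest of the proof is routine perturbation theory.
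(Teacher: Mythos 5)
Your core argument is correct and follows exactly the route the paper intends: the paper itself only sketches this proof (Remark \ref{rmk:Leff-break-PC} defers details to \cite{chaban2025thesis} and to the reduction of Appendix \ref{apx:pf-dir-srf}), and your Lyapunov--Schmidt reduction at $(E_D,{\bm D}^\pm)$, identification of the $O(\delta)$ term with the Galerkin matrix $\bigl(\langle\Phi^{{\bm D}^\pm}_j,W\Phi^{{\bm D}^\pm}_k\rangle\bigr)$, and $\mathcal{PC}$-anticommutation computation giving $\langle\Phi_1,W\Phi_2\rangle=0$ and $\langle\Phi_2,W\Phi_2\rangle=-\langle\Phi_1,W\Phi_1\rangle$ are precisely what is needed; Lemma \ref{lem:Ahermsym} then yields the pure $\sigma_3$ term with coefficient \eqref{eq:def-tl-vth}.

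One point in your closing discussion is wrong and worth correcting, even though it does not affect the proof of this proposition. You assert that the $\mathcal{P}$-breaking computation of Proposition \ref{prop:Leff-dir-P} ``rests on the opposite relation $W\mathcal{PC}=\mathcal{PC}W$.'' It does not: the odd multiplicative potential of Section \ref{sec:break-P} anticommutes with $\mathcal{P}$ and commutes with $\mathcal{C}$, hence it \emph{also} anticommutes with $\mathcal{PC}$, just as the magneto-optic term does (which commutes with $\mathcal{P}$ and anticommutes with $\mathcal{C}$). That shared anticommutation is exactly why both perturbations produce a pure $\sigma_3$ term at each Dirac point; a perturbation commuting with $\mathcal{PC}$ would instead contribute to the $I$, $\sigma_1$, $\sigma_2$ components and no $\sigma_3$ (as in \eqref{eq:Heff-norot}). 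The genuine difference between the two scenarios is the behavior under $\mathcal{P}$ alone, which, via $\Phi^{{\bm D}^-}_1=\mathcal{P}[\Phi^{{\bm D}^+}_1]$, produces $\vartheta^{{\bm D}^-}=-\vartheta^{{\bm D}^+}$ in Proposition \ref{prop:vth-sym} versus $\tilde{\vartheta}^{{\bm D}^-}=+\tilde{\vartheta}^{{\bm D}^+}$ in Proposition \ref{prop:tl-vth-sym}; but that relation is the content of those separate propositions, not of the statement you are proving here.
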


\noindent The parameters $\tilde{\vartheta}^{{\bm D}^\pm}$ are again related by symmetry:

\begin{proposition}
\label{prop:tl-vth-sym}
Consider the pair of effective operators \eqref{eq:Leff-dir-p-C} and \eqref{eq:Leff-dir-m-C}. We have:
\begin{equation}
\tilde{\vartheta}^{{\bm D}^-} = \tilde{\vartheta}^{{\bm D}^+}.
\end{equation}
\end{proposition}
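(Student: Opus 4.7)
The plan is to mirror the proof of Proposition \ref{prop:vth-sym}, with the decisive change being that, in contrast to the $\mathcal{P}$-breaking potential $W$ (which anticommutes with $\mathcal{P}$), the magneto-optic perturbation $\nabla \cdot a \sigma_2 \nabla$ \emph{commutes} with $\mathcal{P}$, as recorded in \eqref{eq:def-a-op}. The absence of a sign flip when sliding $\mathcal{P}$ through the perturbation is precisely what produces the equality $\tilde{\vartheta}^{{\bm D}^-} = \tilde{\vartheta}^{{\bm D}^+}$ in place of the opposite-sign identity for $\vartheta^{{\bm D}^\pm}$.

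First I would adopt the same basis convention invoked in \eqref{eq:vth-sym_2}, namely $\Phi^{{\bm D}^-}_1 = \mathcal{P}[\Phi^{{\bm D}^+}_1]$. This is admissible because, by Proposition \ref{prop:gl-P-sym}, $\mathcal{P}$ maps $L^2_{{\bm D}^+}$ onto $L^2_{{\bm D}^-}$ and commutes with $L$, so $\mathcal{P}[\Phi^{{\bm D}^+}_1]$ is a unit vector in $\ker(L-E_D) \cap L^2_{{\bm D}^-}$ of the correct form (set $\tilde{\Phi}^{{\bm D}^-}_j \equiv \mathcal{P}[\tilde{\Phi}^{{\bm D}^+}_j]$ for $j=1,2$). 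Substituting this into \eqref{eq:def-tl-vth}, commuting $\mathcal{P}$ through $\nabla \cdot a \sigma_2 \nabla$ by \eqref{eq:def-a-op}, and invoking the unitarity of $\mathcal{P}$ on $L^2$, we obtain
\begin{align*}
\tilde{\vartheta}^{{\bm D}^-}
&= \langle \mathcal{P}[\Phi^{{\bm D}^+}_1], \, \nabla \cdot a \sigma_2 \nabla \, \mathcal{P}[\Phi^{{\bm D}^+}_1] \rangle \\
&= \langle \mathcal{P}[\Phi^{{\bm D}^+}_1], \, \mathcal{P}[\nabla \cdot a \sigma_2 \nabla \, \Phi^{{\bm D}^+}_1] \rangle \\
&= \langle \Phi^{{\bm D}^+}_1, \, \nabla \cdot a \sigma_2 \nabla \, \Phi^{{\bm D}^+}_1 \rangle \; = \; \tilde{\vartheta}^{{\bm D}^+},
\end{align*}
which is the claimed identity.

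There is essentially no obstacle; this is a three-line calculation, structurally identical to \eqref{eq:vth-sym_2} but with the $\mathcal{P}$-anticommutation of $W$ replaced by the $\mathcal{P}$-commutation of the magneto-optic term. One mild subtlety worth flagging is that the parallel argument via $\mathcal{C}$ does \emph{not} go through as cleanly: $\mathcal{C}$ anticommutes with $\nabla \cdot a \sigma_2 \nabla$ (since $\sigma_2$ is pure imaginary) \emph{and} is antilinear, and under the convention above one checks $\mathcal{C}[\Phi^{{\bm D}^+}_1] = \Phi^{{\bm D}^-}_2$ rather than $\Phi^{{\bm D}^-}_1$, so the $\mathcal{C}$-route does not directly compute $\tilde{\vartheta}^{{\bm D}^-}$. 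The $\mathcal{P}$ argument is therefore the natural one, exactly mirroring the structure of Proposition \ref{prop:vth-sym} up to the single sign that distinguishes the two perturbation classes.
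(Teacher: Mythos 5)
Your proof is correct and is essentially identical to the paper's: both take $\Phi^{{\bm D}^-}_1 = \mathcal{P}[\Phi^{{\bm D}^+}_1]$, commute $\mathcal{P}$ through $\nabla \cdot a \sigma_2 \nabla$ using \eqref{eq:def-a-op}, and conclude by unitarity of $\mathcal{P}$. Your closing aside mildly disputes the paper's remark that a $\mathcal{C}$-based argument is also possible; this does not affect the validity of your main argument, which stands as written.
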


\begin{proof}
Since ${\Phi^{{\bm D}^-}_1 = \mathcal{P}[\Phi^{{\bm D}^+}_1]}$ and $\mathcal{P}$ is unitary,
\begin{align}
\label{eq:tl-vth-sym_2}
\tilde{\vartheta}^{{\bm D}^-} & = \langle \Phi^{{\bm D}^-}_1, \, \nabla \cdot a \sigma_2 \nabla \Phi^{{\bm D}^-}_1 \rangle \\
& = \langle \mathcal{P}[\Phi^{{\bm D}^+}_1], \, \nabla \cdot a \sigma_2 \nabla \mathcal{P}[\Phi^{{\bm D}^+}_1] \rangle \nonumber \\
& = \langle \mathcal{P}[\Phi^{{\bm D}^+}_1], \, \mathcal{P}[\nabla \cdot a \sigma_2 \nabla \Phi^{{\bm D}^+}_1] \rangle \nonumber \\
& = \langle \Phi^{{\bm D}^+}_1, \, \nabla \cdot a \sigma_2 \nabla \Phi^{{\bm D}^+}_1 \rangle = \tilde{\vartheta}^{{\bm D}^+}. \nonumber
\end{align}
An argument using $\mathcal{C}$ is also possible.
\end{proof}

\noindent The symmetry relating effective operators $L^{{\bm D}^\pm}_{\rm eff}({\bm q}; \delta)$, as discussed in Proposition \ref{prop:vth-sym}, has no analog.

\begin{proposition}
Assume ${\tilde{\vartheta}^+ = \tilde{\vartheta}^- \neq 0}$. Then, by Proposition \ref{prop:Leff-nondgn}, the dispersion surfaces of the perturbed effective operators $\tilde{L}^{{\bm D}^\pm}_{\rm eff}(\delta)$ are nondegenerate.
\end{proposition}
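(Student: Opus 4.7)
The plan is to invoke Proposition \ref{prop:Leff-nondgn} directly: the dispersion surfaces of an effective operator of the form $h_0({\bm q}) I + {\bm h}({\bm q}) \cdot {\bm \sigma}$ are nondegenerate if and only if ${\bm h}({\bm q}) \neq {\bm 0}$ for every ${\bm q}$, so it suffices to check that the three-component vector of Pauli coefficients never vanishes.

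Reading off the expansions \eqref{eq:Leff-dir-p-C} and \eqref{eq:Leff-dir-m-C}, the vector-valued functions associated with $\tilde{L}^{{\bm D}^\pm}_{\rm eff}({\bm q}; \delta)$ are
\begin{equation}
{\bm h}^\pm({\bm q}) = \bigl({\bm q} \cdot {\bm \gamma}^\pm_1, \ {\bm q} \cdot {\bm \gamma}^\pm_2, \ \delta \tilde{\vartheta}^{{\bm D}^\pm}\bigr)^\mathsf{T}.
\end{equation}
The first two components depend on ${\bm q}$ and will vanish along at least one direction (they vanish at ${\bm q} = {\bm 0}$, for instance), but the third component is constant in ${\bm q}$ and equal to $\delta \tilde{\vartheta}^{{\bm D}^\pm}$. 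The hypothesis $\tilde{\vartheta}^{{\bm D}^+} = \tilde{\vartheta}^{{\bm D}^-} \neq 0$, together with $\delta > 0$, therefore guarantees that the third component of ${\bm h}^\pm({\bm q})$ is a nonzero constant, so $|{\bm h}^\pm({\bm q})| \geq \delta |\tilde{\vartheta}^{{\bm D}^\pm}| > 0$ uniformly in ${\bm q}$.

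Applying Proposition \ref{prop:Leff-nondgn} to each of $\tilde{L}^{{\bm D}^+}_{\rm eff}(\delta)$ and $\tilde{L}^{{\bm D}^-}_{\rm eff}(\delta)$ separately yields the claim. No obstacle arises: the full content of the proposition is a one-line consequence of Proposition \ref{prop:Leff-nondgn} together with the observation that the $\mathcal{C}$-breaking perturbation contributes a nonzero, ${\bm q}$-independent $\sigma_3$ mass term which cannot be cancelled by the other ${\bm q}$-linear contributions to ${\bm h}^\pm({\bm q})$. The only substantive step, namely the derivation of the formula \eqref{eq:Leff-dir-p-C}--\eqref{eq:Leff-dir-m-C} and the symmetry $\tilde{\vartheta}^{{\bm D}^-} = \tilde{\vartheta}^{{\bm D}^+}$, is already carried out in Propositions \ref{prop:Leff-dir-C} and \ref{prop:tl-vth-sym}.
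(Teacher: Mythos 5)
Your proof is correct and is exactly the argument the paper intends: the $\sigma_3$-coefficient $\delta\tilde{\vartheta}^{{\bm D}^\pm}$ is a nonzero constant, so ${\bm h}^\pm({\bm q})\neq{\bm 0}$ for all ${\bm q}$, and Proposition \ref{prop:Leff-nondgn} applies. The paper leaves this as a one-line invocation; you have merely written out the same reasoning explicitly.
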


\begin{remark}
\label{rmk:Leff-dir-PC}
Note that the perturbed effective operators \eqref{eq:Leff-dir-p-P} and \eqref{eq:Leff-dir-m-P} in the $\mathcal{P}$-breaking scenario appear similar to \eqref{eq:Leff-dir-p-C} and \eqref{eq:Leff-dir-m-C} in the $\mathcal{C}$-breaking scenario. The important difference lies in the symmetry relationship between $\vartheta^{{\bm D}^\pm}$ and between $\tilde{\vartheta}^{{\bm D}^\pm}$ with consequences on the topological index $c_1$.
\end{remark}

\subsection{Band structure topology}
\label{sec:band-top}

As remarked in Section \ref{sec:previous}, breaking certain symmetries can open a local spectral gap about a band structure degeneracy, separating the dispersion surfaces in a neighborhood of the degenerate energy-quasimomentum into nondegenerate, or {\it isolated} bands. To such isolated bands, one can assign an integer topological index, which is nonzero when when time-reversal symmetry is broken and zero when it is unbroken. This topological index plays an important role in our interpretation of results on edge states via the bulk-edge correspondence; see Section \ref{sec:edge-states}.

The integer topological index, i.e. the {\it first Chern number} $c_1$, is defined as follows: To any isolated band, associated with a family of Floquet-Bloch eigenstates $\phi(\cdot; {\bm k})$, where ${\bm k} \smallin \mathcal{B}$, we define the {\it Berry connection}
\begin{equation}
\label{eq:def-berry-con}
{\bm A}({\bm k}) \equiv i \langle \phi(\cdot; {\bm k}), \, \nabla_{\bm k} \phi(\cdot; {\bm k}) \rangle
\end{equation}
as well as the {\it Berry curvature}
\begin{equation}
\label{eq:def-berry-cur}
{\bm F}({\bm k}) = \nabla_{\bm k} \times {\bm A}({\bm k}).
\end{equation}
For models with two spatial dimensions, ${\bm F}({\bm k})$ has just one component which we denote as $F({\bm k})$. This allows us to define the first Chern number as
\begin{equation}
\label{eq:def-chern}
c_1 = \frac{1}{2 \pi} \int_\mathcal{B} F({\bm k}) \, {\rm d}{\bm k}.
\end{equation}
This is a well-defined integer.

Above, we considered small (size ${\delta > 0}$) parity or time-reversal symmetry-breaking perturbations. In this regime, the integral expression \eqref{eq:def-chern} for $c_1$ is dominated by contributions from the Berry curvature $F({\bm k})$ in quasimomentum neighborhoods of gap openings (i.e., about former degeneracies). In the limit ${\delta \to 0}$, the gap closes and the the Berry curvature blows up. This limit is explored in \cite{drouot2019a} in the context of honeycomb media, and it is shown that the Chern number can actually be read off from the perturbed effective Dirac Hamiltonian. That the Chern number is explicit in the expression for effective Hamiltonians is also demonstrated in \cite{chong2008effective} in the context of square lattice media and their quadratic band degeneracies, and tilted Dirac points emerging from deformations of such media.

In each of these works, it is shown that ${c_1 = 0}$ for structures which preserve $\mathcal{C}$ symmetry, while ${c_1 = \pm 1}$ if $\mathcal{C}$ symmetry is broken. We demonstrate this, in the context of our model of deformations, as follows:

\begin{remark}
\label{rmk:chern-break-P}
{\rm (Chern number; $\mathcal{P}$-breaking, $\mathcal{C}$-preserving perturbation.)} Consider the scenario \eqref{eq:def-L-P} of a $\mathcal{P}$-breaking, $\mathcal{C}$-preserving perturbation to $L$. The effective operator \eqref{eq:Leff-quad-P} exhibits a gap opening about a former quadratic band degeneracy of $L$. The Chern number is
\begin{equation}
\label{eq:chern-break-P-quad}
c_1 = 0.
\end{equation}
On the other hand, the pair of effective operators \eqref{eq:Leff-dir-p-P} and \eqref{eq:Leff-dir-m-P} exhibit gap openings about former Dirac points. The Chern number, obtained by summing contributions from both Dirac points, is
\begin{equation}
\label{eq:chern-break-P-dir}
c_1 = - \frac{{\rm sgn}(\vartheta^{{\bm D}^+})}{2} - \frac{{\rm sgn}(\vartheta^{{\bm D}^-})}{2} = - \frac{{\rm sgn}(\vartheta^{{\bm D}^+})}{2} + \frac{{\rm sgn}(\vartheta^{{\bm D}^+})}{2} = 0.
\end{equation}
\end{remark}

\begin{remark}
\label{rmk:chern-break-C}
{\rm (Chern number; $\mathcal{P}$-preserving, $\mathcal{C}$-breaking perturbation.)}
Consider the scenario \eqref{eq:def-L-C} of a $\mathcal{P}$-preserving, $\mathcal{C}$-breaking perturbation to $L$. The effective operator \eqref{eq:Leff-quad-C} exhibits a gap opening about a former quadratic band degeneracy of $L$. The Chern number is
\begin{equation}
\label{eq:chern-break-C-quad}
c_1 = - {\rm sgn}(\tilde{\vartheta}^{\bm M}).
\end{equation}
On the other hand, the pair of effective operators \eqref{eq:Leff-dir-p-C} and \eqref{eq:Leff-dir-m-C} exhibit gap openings about former Dirac points. The Chern number, obtained by summing contributions from both Dirac points, is
\begin{equation}
\label{eq:chern-break-C-dir}
c_1 = - \frac{{\rm sgn}(\tilde{\vartheta}^{{\bm D}^+})}{2} - \frac{{\rm sgn}(\tilde{\vartheta}^{{\bm D}^-})}{2} = - \frac{{\rm sgn}(\tilde{\vartheta}^{{\bm D}^+})}{2} - \frac{{\rm sgn}(\tilde{\vartheta}^{{\bm D}^+})}{2} = - {\rm sgn}(\tilde{\vartheta}^{{\bm D}_+}).
\end{equation}
\end{remark}

\noindent Note that, regardless of whether the Berry curvature receives contributions from gap closings at quadratic band degeneracy points or at pairs of Dirac points, the total Chern number is unchaged; these two scenarios correspond to continuous deformations of dispersion surfaces which maintain the band gap.

\bigskip

\appendix

\section{Fourier series of square lattice potentials}
\label{apx:fourier}

\setcounter{equation}{0}
\setcounter{figure}{0}

In this appendix, we discuss the structure of the Fourier series of $\Z^2$-periodic potentials with additional symmetries. In the course of this discussion, we correct the statement and proof of \cite[Proposition 2.9]{keller2018spectral}. 

First, recall that any ${f \smallin L^2(\R^2/\Z^2)}$ can be expressed as a uniformly convergent Fourier series:
\begin{align}
\label{eq:def-fourier}
f({\bm x}) = \sum_{{\bm m} \smallin \Z^2} f_{\bm m} \, e^{i {\bm m} {\bm k} \cdot {\bm x}}, \quad \text{where} \quad f_{\bm m} = \frac{1}{|\Omega|} \int_\Omega e^{-i {\bm m} {\bm k} \cdot {\bm x}} \, f({\bm x}) \, {\rm d}{\bm x}.
\end{align}
Here, $\Omega$ is a choice of fundamental cell for ${\R^2/\Z^2}$. Throughout this appendix, we use the compressed notation ${{\bm m} {\bm k} = m_1 {\bm k}_1 + m_2 {\bm k}_2}$, where ${{\bm m} = (m_1, m_2) \smallin \Z^2}$ and ${\bm k}_1$, ${\bm k}_2$ are the dual lattice basis vectors; see \eqref{eq:def-sql-dual}.

Additional symmetries of functions in $L^2(\R^2/\Z^2)$ imply constraints on their Fourier coefficients, as detailed in the following.

\begin{proposition}
\label{prop:fourier-sym}
Suppose ${f \smallin L^2(\R^2/\Z^2)}$ with Fourier series  \eqref{eq:def-fourier}. Then the following hold:
\begin{enumerate}
\item \label{itm:fourier-sym-1} ${f({\bm x}) = \mathcal{P}[f]({\bm x}) = f(-{\bm x})}$ if and only if ${f_{\bm m} = f_{-{\bm m}}}$.
\item \label{itm:fourier-sym-2} ${f({\bm x}) = \mathcal{C}[f]({\bm x}) = f({\bm x})^*}$ if and only if ${f_{\bm m} = f_{-{\bm m}}^*}$.
\item \label{itm:fourier-sym-3} ${f({\bm x}) = \mathcal{R}[f]({\bm x}) = f(R^\mathsf{T}{\bm x})}$ if and only if ${f_{\bm m} = f_{\tilde{\mathcal{R}} {\bm m}}}$, where ${\tilde{\mathcal{R}}(m_1, m_2) = (m_2, -m_1)}$.
\item \label{itm:fourier-sym-4} ${f({\bm x}) = \Sigma_1[f]({\bm x}) = f(\sigma_1 {\bm x})}$ if and only if ${f_{\bm m} = f_{\tilde{\Sigma}_1 {\bm m}}}$, where ${\tilde{\Sigma}_1 (m_1, m_2) = (m_2, m_1)}$.
\end{enumerate}
\end{proposition}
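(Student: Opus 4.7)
All four parts follow the same template, so the plan is to set up one reindexing routine and then apply it four times. Starting from the uniformly convergent Fourier series
\[
f(\bm x) = \sum_{\bm m \in \Z^2} f_{\bm m} \, e^{i \bm m \bm k \cdot \bm x},
\]
I would apply each symmetry operator termwise to the series on the right, rewrite the resulting expression as another Fourier series on the standard orthonormal basis $\{e^{i \bm n \bm k \cdot \bm x}\}_{\bm n \in \Z^2}$, and then invoke uniqueness of Fourier coefficients to convert the symmetry of $f$ into a symmetry relation among the $\{f_{\bm m}\}$.

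For parts \ref{itm:fourier-sym-1} and \ref{itm:fourier-sym-2} the reindexing is trivial: substituting $-\bm x$ for $\bm x$ replaces $e^{i\bm m\bm k\cdot\bm x}$ by $e^{-i\bm m\bm k\cdot\bm x}$, and the change of summation variable $\bm m \mapsto -\bm m$ gives $\mathcal{P}[f](\bm x) = \sum_{\bm m} f_{-\bm m} e^{i\bm m\bm k\cdot\bm x}$, so $\mathcal{P}[f] = f$ is equivalent, by uniqueness, to $f_{\bm m} = f_{-\bm m}$. Conjugating the series and reindexing similarly gives the $\mathcal{C}$ statement $f_{\bm m} = f_{-\bm m}^*$.

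For parts \ref{itm:fourier-sym-3} and \ref{itm:fourier-sym-4}, with $A\in\{R,\sigma_1\}$ satisfying $A^{\mathsf T}A = I$, I would write
\[
f(A\bm x) = \sum_{\bm m \in \Z^2} f_{\bm m}\, e^{i\bm m \bm k \cdot A\bm x} = \sum_{\bm m \in \Z^2} f_{\bm m}\, e^{i A^{\mathsf T}(\bm m \bm k)\cdot \bm x}.
\]
The key step is to identify the linear map on indices: using $\bm m \bm k = 2\pi(m_1, m_2)^{\mathsf T}$ and the explicit matrices for $R$ and $\sigma_1$, one checks that $A^{\mathsf T}(\bm m \bm k) = (\tilde A \bm m)\bm k$ for an explicit $\tilde A : \Z^2 \to \Z^2$, namely $\tilde{\mathcal R}^{-1}(m_1,m_2) = (-m_2, m_1)$ for $R$ (so that iterating the equivalence, or equivalently relabeling $\bm n = \tilde{\mathcal R}^{-1}\bm m$, gives the stated form with $\tilde{\mathcal R}(m_1, m_2) = (m_2, -m_1)$) and $\tilde\Sigma_1(m_1,m_2) = (m_2,m_1)$ for $\sigma_1$. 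After reindexing the sum by $\bm n = \tilde A \bm m$ (which is a bijection of $\Z^2$ since $A$ preserves the dual lattice), uniqueness of Fourier coefficients converts $\mathcal{R}[f] = f$ and $\Sigma_1[f] = f$ into the stated identities $f_{\bm m} = f_{\tilde{\mathcal R}\bm m}$ and $f_{\bm m} = f_{\tilde\Sigma_1\bm m}$.

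The only real content, beyond bookkeeping, is checking that the integer maps $\tilde{\mathcal R}, \tilde\Sigma_1$ are correctly read off from the matrix action of $R^{\mathsf T}, \sigma_1$ on the dual lattice; this is the step where the earlier version required correction, so I would be explicit about the direction of the rotation and the convention $\bm k_j \cdot \bm v_\ell = 2\pi \delta_{j\ell}$ to avoid sign and transpose confusions. No substantive analytic obstacle arises: uniform convergence of the Fourier series and the orthonormality of $\{e^{i\bm n\bm k\cdot\bm x}\}$ in $L^2(\R^2/\Z^2)$ make all termwise manipulations legitimate.
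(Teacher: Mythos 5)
Your proposal is correct and is exactly the argument the paper has in mind: the paper's proof simply cites \cite[Proposition 2.8]{keller2018spectral} for parts 1--3 and remarks that part 4 is analogous ``using the relations $\sigma_1 \bm{k}_1 = \bm{k}_2$ and $\sigma_1 \bm{k}_2 = \bm{k}_1$,'' and the substitution-reindex-uniqueness routine you describe is precisely that calculation. One small point worth noting: the operator $\mathcal{R}$ is defined by $\mathcal{R}[f](\bm x) = f(R^{\mathsf T}\bm x)$, so if you take $A = R^{\mathsf T}$ directly rather than $A = R$, then $A^{\mathsf T}(\bm m\bm k) = R(\bm m\bm k) = (\tilde{\mathcal R}\bm m)\bm k$ without any inverse appearing, which avoids the detour you take through $\tilde{\mathcal R}^{-1}$ and the subsequent relabeling (the two versions are of course equivalent statements, as you observe, since $\tilde{\mathcal R}$ is a bijection of $\Z^2$).
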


\begin{proof}
Parts \ref{itm:fourier-sym-1}, \ref{itm:fourier-sym-2}, and \ref{itm:fourier-sym-3} are proven in  \cite[Proposition 2.8]{keller2018spectral}. Part \ref{itm:fourier-sym-4} is proven analogously, using the relations ${\sigma_1 {\bm k}_1 = {\bm k}_2}$ and ${\sigma_1 {\bm k}_2 = {\bm k}_1}$.
\end{proof}

\noindent Parts \ref{itm:fourier-sym-1} and \ref{itm:fourier-sym-2} of Proposition \ref{prop:fourier-sym} together imply that ${f = \mathcal{P}[f]}$ and ${f = \mathcal{C}[f]}$ if and only if ${f_{\bm m} \smallin \R}$.

In Part \ref{itm:fourier-sym-3}, the map ${\tilde{\mathcal{R}}: \Z^2 \to \Z^2}$ has ${\bm 0}$ as a unique fixed point. We partition ${\Z^2 \setminus \{ {\bm 0} \}}$ into orbits under $\tilde{\mathcal{R}}$ of length exactly four, and write ${{\bm m} \sim {\bm n}}$ if ${\bm m}$ and ${\bm n}$ belong to the same orbit. The relation $\sim$ defines an equivalence relation, and we denote by $\tilde{S}$ a set of representatives from each of the equivalence classes of $(\Z^2 \setminus \{ {\bm 0} \})/\sim$. For clarity, we here fix a choice of $\tilde{S}$:
\begin{equation}
\label{eq:def-til-S}
\tilde{S} \equiv \{ (m_1, m_2) \smallin \Z^2 : m_1 \geq 1, \, m_2 \geq 0 \};
\end{equation}
see the red-shaded region in Figure \ref{fig:fourier-1}, panel (\subref{fig:fourier-1a}).

We obtain the following Fourier series representation of $\Z^2$-periodic, $\pi/2$-rotationally invariant potentials:

\begin{proposition}
\label{prop:fourier-rot}
Suppose ${V \smallin L^2(\R^2/\Z^2)}$ such that ${\mathcal{P}[V] = V}$, ${\mathcal{C}[V] = V}$, and ${\mathcal{R}[V] = V}$. Then
\begin{equation}
\label{eq:fourier-rot}
V({\bm x}) = V_{\bm 0} + \sum_{{\bm m} \smallin \tilde{S}} 2 V_{\bm m} \bigl( \cos({\bm m}{\bm k} \cdot {\bm x}) + \cos((\tilde{\mathcal{R}}{\bm m}) {\bm k} \cdot {\bm x}) \bigr)
\end{equation}
where $\tilde{S}$ is defined in \eqref{eq:def-til-S}.
\end{proposition}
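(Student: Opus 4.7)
The plan is to combine the three symmetry constraints from Proposition \ref{prop:fourier-sym} and then group terms in the Fourier expansion by orbits of $\tilde{\mathcal{R}}$. First, I would start from the Fourier representation $V({\bm x}) = \sum_{{\bm m} \smallin \Z^2} V_{\bm m} e^{i {\bm m}{\bm k}\cdot{\bm x}}$. Applying parts \ref{itm:fourier-sym-1} and \ref{itm:fourier-sym-2} of Proposition \ref{prop:fourier-sym} gives $V_{\bm m} = V_{-{\bm m}}$ and $V_{\bm m} = V_{-{\bm m}}^*$, which together force $V_{\bm m} \smallin \R$. Applying part \ref{itm:fourier-sym-3} yields the additional constraint $V_{\bm m} = V_{\tilde{\mathcal{R}}{\bm m}}$, so the Fourier coefficients are real and constant on orbits of $\tilde{\mathcal{R}}$ acting on $\Z^2$.

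Next, I would analyze the orbit structure of $\tilde{\mathcal{R}}:\Z^2\to\Z^2$. Since $\tilde{\mathcal{R}}(m_1,m_2)=(m_2,-m_1)$, direct iteration gives $\tilde{\mathcal{R}}^2{\bm m}=-{\bm m}$ and $\tilde{\mathcal{R}}^4{\bm m}={\bm m}$. For ${\bm m}\ne{\bm 0}$, $\tilde{\mathcal{R}}^2{\bm m}=-{\bm m}\ne{\bm m}$, so the orbit of ${\bm m}$ is exactly $\{{\bm m},\tilde{\mathcal{R}}{\bm m},-{\bm m},-\tilde{\mathcal{R}}{\bm m}\}$, containing four distinct elements. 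The origin is a fixed point. Thus $\Z^2\setminus\{{\bm 0}\}$ partitions into four-element orbits, and the set $\tilde{S}$ in \eqref{eq:def-til-S} picks exactly one representative per orbit (a brief verification using that the first quadrant together with the positive $m_1$-axis is a fundamental domain for the $\Z/4\Z$ action generated by $\tilde{\mathcal{R}}$).

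Using this orbit decomposition, I would reorganize the Fourier sum:
\begin{equation*}
V({\bm x}) = V_{\bm 0} + \sum_{{\bm m} \smallin \tilde{S}} V_{\bm m}\bigl(e^{i{\bm m}{\bm k}\cdot{\bm x}} + e^{i(\tilde{\mathcal{R}}{\bm m}){\bm k}\cdot{\bm x}} + e^{-i{\bm m}{\bm k}\cdot{\bm x}} + e^{-i(\tilde{\mathcal{R}}{\bm m}){\bm k}\cdot{\bm x}}\bigr),
\end{equation*}
since $V_{\bm m}=V_{\tilde{\mathcal{R}}{\bm m}}=V_{-{\bm m}}=V_{-\tilde{\mathcal{R}}{\bm m}}$. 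Pairing the complex conjugate exponentials via $e^{iy}+e^{-iy}=2\cos(y)$ yields the claimed expression \eqref{eq:fourier-rot}. The expansion converges in the same sense as the original Fourier series.

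There is no real obstacle here; the one point requiring a little care is verifying that $\tilde{S}$ as defined in \eqref{eq:def-til-S} is indeed a set of orbit representatives (no orbit is missed and none is counted twice), which is a short combinatorial check. The statement does not need the $\Sigma_1$ invariance of $V$ — only $\mathcal{P}$, $\mathcal{C}$, and $\mathcal{R}$ invariance are used, so the conclusion holds under the stated hypotheses.
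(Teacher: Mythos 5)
Your proof is correct and follows exactly the framework the paper sets up in Appendix \ref{apx:fourier} (coefficient constraints from Proposition \ref{prop:fourier-rot}'s hypotheses via Proposition \ref{prop:fourier-sym}, the length-four orbit structure of $\tilde{\mathcal{R}}$ on $\Z^2\setminus\{{\bm 0}\}$ with $\tilde{S}$ as a set of representatives, and regrouping conjugate exponentials into cosines); the paper itself simply cites \cite[Proposition 2.9(a)]{keller2018spectral} for this step. Your observation that $\Sigma_1$ invariance is not needed here is also consistent with the paper, which only imposes it in the refinement of Proposition \ref{prop:fourier-sql}.
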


\begin{proof}
This is proven as part (a) of \cite[Proposition 2.9]{keller2018spectral}.
\end{proof}

\begin{figure}[t]
\centering
\begin{subfigure}{0.4\textwidth}
\centering
\subcaption{$\quad$}
\label{fig:fourier-1a}
\vspace{0.1cm}
    \begin{tikzpicture}[x = 0.85cm, y = 0.85cm]
    \draw[white] (0, 0) rectangle (7.5, 7.25);
    \draw[black, thick, ->, opacity = 0.5] (0.5, 3) -- (7, 3) node[anchor = north] {$m_1$};
    \draw[black, thick, ->, opacity = 0.5] (3, 0.5) -- (3, 6.75) node[anchor = east] {$m_2 \,$};
    \foreach \x in {1,...,6}
        \foreach \y in {1,...,6}
            \fill[black] (\x, \y) circle (0.08cm);
    \fill[gray, opacity = 0.3] (2.5, 2.5) rectangle (3.5, 3.5);
    \fill[red, opacity = 0.3] (3.5, 2.5) rectangle (6.5, 6.5);
    \end{tikzpicture}
\end{subfigure}
\hspace{0.3cm}
\begin{subfigure}{0.4\textwidth}
\subcaption{$\quad$}
\label{fig:fourier-1b}
\vspace{0.1cm}
    \begin{tikzpicture}[x = 0.85cm, y = 0.85cm]
    \draw[white] (0, 0) rectangle (7.5, 7.25);
    \draw[black, thick, ->, opacity = 0.5] (0.5, 3) -- (7, 3) node[anchor = north] {$m_1$};
    \draw[black, thick, ->, opacity = 0.5] (3, 0.5) -- (3, 6.75) node[anchor = east] {$m_2 \,$};
    \foreach \x in {1,...,6}
        \foreach \y in {1,...,6}
            \fill[black] (\x, \y) circle (0.08cm);
    \fill[gray, opacity = 0.3] (2.5, 2.5) rectangle (3.5, 3.5);
    \fill[red, opacity = 0.3] (3.5, 2.5) rectangle (6.5, 3.5);
    \fill[green, opacity = 0.3] (4.5, 3.5) rectangle (6.5, 4.5);
    \fill[green, opacity = 0.3] (5.5, 4.5) rectangle (6.5, 5.5);
    \fill[green!20!blue, opacity = 0.3] (3.5, 3.5) rectangle (4.5, 4.5);
    \fill[green!20!blue, opacity = 0.3] (4.5, 4.5) rectangle (5.5, 5.5);
    \fill[green!20!blue, opacity = 0.3] (5.5, 5.5) rectangle (6.5, 6.5);
    \end{tikzpicture}
\end{subfigure}
\vspace{0.3cm}
\caption{Shaded regions correspond to representatives of equivalence classes of indices $(m_1,m_2)\in\Z^2$ of Fourier coefficients for {\bf (a)} $\Z^2$-periodic, $\pi/2$-rotationally invariant potentials, and {\bf (b)} $\Z^2$-periodic, $\pi/2$-rotationally invariant, and reflection invariant potentials. In {\bf (a)}, the fixed point ${\bm 0}$ of $\tilde{\mathcal{R}}$ is shaded gray, and the set $\tilde{S}$ is shaded red. In {\bf (b)}, the common fixed point ${\bm 0}$ of $\tilde{\mathcal{R}}$ and $\tilde{\Sigma}_1$ is shaded gray, the set $\tilde{S}_1$ is shaded red, the set $\tilde{S}_2$ is shaded blue, and the set $\tilde{S}_3$ is shaded green.}
\label{fig:fourier-1}
\end{figure}

The potentials of interest in this article are $\Z^2$-periodic, $\pi/2$-rotationally invariant, and reflection invariant. Next, we refine the Fourier series representation \eqref{eq:fourier-rot} using the constraint on Fourier coefficients from part \ref{itm:fourier-sym-4} of Proposition \ref{prop:fourier-sym}. We identify three subsets of $\tilde{S}$: 
\begin{enumerate}
\item First, we define
\begin{equation}
\label{eq:def-tilS-1}
\tilde{S}_1 \equiv \{ (m_1, m_2) \smallin \Z^2 : m_1 \geq 1, \, m_2 = 0 \};
\end{equation}
see the red-shaded region in Figure \ref{fig:fourier-1}, panel (\subref{fig:fourier-1b}). Note that $\tilde{S}_1$ satisfies ${\tilde{\Sigma}_1 \tilde{S}_1 = \tilde{\mathcal{R}} \tilde{S}_1}$. Therefore, for ${{\bm m} \smallin \tilde{S}_1}$, the constraint from part \ref{itm:fourier-sym-4} of Proposition \ref{prop:fourier-sym} on $V_{\bm m}$ is already explicit in the expression \eqref{eq:fourier-rot}.
\item Next, we define
\begin{equation}
\label{eq:def-tilS-2}
\tilde{S}_2 \equiv \{ (m_1, m_2) \smallin \Z^2 : m_1 \geq 1, \, m_2 = m_1 \};
\end{equation}
see the blue-shaded region in Figure \ref{fig:fourier-1}, panel (\subref{fig:fourier-1b}). $\tilde{S}_2$ consists of fixed points of $\tilde{\Sigma}_1$ belonging to $\tilde{S}$. 
\item Finally, we partition ${\tilde{S} \setminus (\tilde{S}_1 \cup \tilde{S}_2)}$ into orbits under $\tilde{\Sigma}_1$ of length exactly two, and write ${{\bm m} \approx {\bm n}}$ if ${\bm m}$ and ${\bm n}$ belong to the same orbit. The relation $\approx$ is an equivalence relation, and we denote by $\tilde{S}_3$ a set of representatives from each of the equivalence classes of ${(\tilde{S} \setminus (\tilde{S}_1 \cup \tilde{S}_2))/\approx}$. For clarity, we fix
\begin{equation}
\label{eq:def-tilS-3}
\tilde{S}_3 \equiv \{ (m_1, m_2) \smallin \Z^2 : m_1 \geq 2, \, 1 \leq m_2 < m_1 \};
\end{equation}
see the green-shaded region in Figure \ref{fig:fourier-1}, panel (\subref{fig:fourier-1b}).
\end{enumerate}

\noindent We obtain the following Fourier series representation of $\Z^2$-periodic, $\pi/2$-rotationally invariant, and reflection invariant potentials (i.e., square lattice potentials, in the sense of Definition \ref{def:sql-pot}):

\begin{proposition}
\label{prop:fourier-sql}
Let ${V \smallin L^2(\R^2/\Z^2)}$ be $\mathcal{P}$, $\mathcal{C}$ and $\mathcal{R}$ invariant as in Proposition \ref{prop:fourier-rot}, and additionally assume that ${\Sigma_1[V] = V}$. Then,
\begin{align}
V({\bm x}) & = V_{\bm 0} + \sum_{{\bm m} \smallin \tilde{S}_1 \cup \, \tilde{S}_2} 2 V_{\bm m} \bigl( \cos({\bm m}{\bm k} \cdot {\bm x}) + \cos((\tilde{\mathcal{R}}{\bm m}) {\bm k} \cdot {\bm x}) \bigr) \\
& \qquad \, + \sum_{{\bm m} \smallin \tilde{S}_3} 2 V_{\bm m} \bigl( \cos({\bm m}{\bm k} \cdot {\bm x}) + \cos((\tilde{\mathcal{R}}{\bm m}) {\bm k} \cdot {\bm x}) + \cos((\tilde{\Sigma}_1 {\bm m}) {\bm k} \cdot {\bm x}) + \cos((\tilde{\mathcal{R}} \tilde{\Sigma}_1 {\bm m}) {\bm k} \cdot {\bm x}) \bigr) \nonumber
\end{align}
where $\tilde{S}_1$, $\tilde{S}_2$, and $\tilde{S}_3$ are defined in \eqref{eq:def-tilS-1}, \eqref{eq:def-tilS-2}, and \eqref{eq:def-tilS-3}, respectively. 
\end{proposition}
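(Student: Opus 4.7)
The plan is to refine the representation of Proposition \ref{prop:fourier-rot} by incorporating the additional reflection invariance $\Sigma_1[V] = V$, which by part \ref{itm:fourier-sym-4} of Proposition \ref{prop:fourier-sym} is equivalent to $V_{\bm m} = V_{\tilde{\Sigma}_1 {\bm m}}$ for all ${\bm m} \in \Z^2$. The strategy is simply to partition the index set $\tilde{S}$ according to how it interacts with $\tilde{\Sigma}_1$, and then collect the terms of \eqref{eq:fourier-rot} accordingly.

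First I would verify the disjoint partition $\tilde{S} = \tilde{S}_1 \sqcup \tilde{S}_2 \sqcup (\tilde{S}\setminus(\tilde{S}_1\cup \tilde{S}_2))$, where for the representatives \eqref{eq:def-til-S}: $\tilde{S}_1$ consists of indices on the positive $m_1$-axis (with $m_2=0$), $\tilde{S}_2$ consists of diagonal indices ($m_2 = m_1 \geq 1$), and the remainder is the ``strict interior'' $\{(m_1,m_2) : m_1 \geq 2, \ 1 \leq m_2 < m_1\}$. For ${\bm m} \in \tilde{S}_1$, a direct check gives $\tilde{\Sigma}_1 {\bm m} = \tilde{\mathcal{R}} \tilde{\mathcal{R}} \tilde{\mathcal{R}} {\bm m}$ (both equal $(0, m_1)$), so $\tilde{\Sigma}_1 \tilde{S}_1 \subset \tilde{\mathcal{R}}\tilde{S}_1 \cdot (\tilde{\mathcal{R}}\text{-orbit})$; hence the constraint $V_{\bm m} = V_{\tilde{\Sigma}_1{\bm m}}$ is already implied by the $\tilde{\mathcal{R}}$-invariance used in \eqref{eq:fourier-rot} and no new terms arise. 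For ${\bm m} \in \tilde{S}_2$, we have $\tilde{\Sigma}_1 {\bm m} = {\bm m}$, so again $\Sigma_1$-invariance imposes no new constraint and the two-cosine expression already appearing in \eqref{eq:fourier-rot} is unchanged.

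Next I would treat the remaining indices. Here $\tilde{\Sigma}_1$ acts freely on $\tilde{S} \setminus (\tilde{S}_1 \cup \tilde{S}_2)$, partitioning it into pairs $\{{\bm m}, \tilde{\Sigma}_1 {\bm m}\}$; the set $\tilde{S}_3$ of \eqref{eq:def-tilS-3} is chosen as a system of representatives (one per pair). Thus the sum in \eqref{eq:fourier-rot} restricted to this subset can be rewritten as
\begin{equation*}
\sum_{{\bm m} \in \tilde{S}_3} 2 V_{\bm m} \bigl(\cos({\bm m}{\bm k}\cdot{\bm x}) + \cos((\tilde{\mathcal{R}}{\bm m}){\bm k}\cdot{\bm x})\bigr) + 2 V_{\tilde{\Sigma}_1{\bm m}} \bigl(\cos((\tilde{\Sigma}_1 {\bm m}){\bm k}\cdot{\bm x}) + \cos((\tilde{\mathcal{R}}\tilde{\Sigma}_1{\bm m}){\bm k}\cdot{\bm x})\bigr).
\end{equation*}
Invoking $V_{\bm m} = V_{\tilde{\Sigma}_1 {\bm m}}$ and combining yields the four-cosine expression of the statement. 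Summing the contributions from $\tilde{S}_1, \tilde{S}_2, \tilde{S}_3$ together with the constant term $V_{\bm 0}$ produces the claimed formula.

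The main (admittedly modest) obstacle is the careful bookkeeping required to verify three points: (i) $\tilde{S}_1, \tilde{S}_2, \tilde{S}_3$ together exhaust $\tilde{S}$ without overlap; (ii) $\tilde{\Sigma}_1$ acts trivially on the equivalence classes of $\tilde{S}_1$ and $\tilde{S}_2$ modulo $\tilde{\mathcal{R}}$, so no collapse of terms is overlooked; and (iii) the four index sets $\{{\bm m}, \tilde{\mathcal{R}}{\bm m}, \tilde{\Sigma}_1{\bm m}, \tilde{\mathcal{R}}\tilde{\Sigma}_1{\bm m}\}$ for ${\bm m} \in \tilde{S}_3$ are all distinct (this uses $m_1 \neq m_2$ and $m_1 m_2 \neq 0$), so that the four-cosine grouping does not double-count any Fourier mode. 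Each of these is a short combinatorial check on the explicit sets in \eqref{eq:def-til-S}--\eqref{eq:def-tilS-3}.
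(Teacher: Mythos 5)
Your proof is correct and follows the same approach as the paper: you partition $\tilde{S}$ into $\tilde{S}_1$ (where $\tilde{\Sigma}_1$ lands in the $\tilde{\mathcal{R}}$-orbit, so the constraint $V_{\bm m} = V_{\tilde{\Sigma}_1{\bm m}}$ is already implied by rotational invariance), $\tilde{S}_2$ (fixed points of $\tilde{\Sigma}_1$), and $\tilde{S}_3$ (free $\tilde{\Sigma}_1$-pairs, where the constraint collapses two two-cosine terms into one four-cosine term), then apply the result of Proposition~\ref{prop:fourier-rot} together with the constraint from part~\ref{itm:fourier-sym-4} of Proposition~\ref{prop:fourier-sym}. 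This is exactly the bookkeeping the paper sets up in its discussion preceding the proposition and then invokes in the one-line proof.
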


\begin{proof}
This is proven analogously to Proposition \ref{prop:fourier-rot}, using the above discussion to implement the additional constraint ${V_{\bm m} = V_{\tilde{\Sigma}_1 {\bm m}}}$ for each ${{\bm m} \smallin \tilde{S}}$.
\end{proof}

The following remark concerns the term ``generic'' used in the statement of Theorem \ref{thm:kmow}, as it relates to a nondegeneracy condition on distinguished Fourier coefficients.

\begin{remark}
\label{rmk:kmow-generic}
In part \ref{itm:kmow-exist} of Theorem \ref{thm:kmow}, the term ``generic'' has the following meaning: Let $V_{m_1, m_2}$ denote the $(m_1, m_2)$ Fourier coefficient of the square lattice potential $V$; see Definition \ref{eq:def-fourier}. Suppose that the distinguished Fourier coefficients $V_{0, 1}$ and $V_{1, 1}$ satisfy $V_{0, 1}$, ${V_{1, 1} \neq 0}$ and ${V_{0, 1} \neq \pm V_{1, 1}}$. Then, for all ${\delta \smallin \R}$, except for $\delta$ in a discrete set containing ${\delta = 0}$, the band structures of the operators ${H^\delta = - \Delta + \delta V}$ contain quadratic band degeneracy points.
\end{remark}

\noindent See Theorem 6.1 of \cite{keller2018spectral} for further discussion.

\bigskip

\section{Sketch of proof of Theorem \ref{thm:dir-srf}}
\label{apx:pf-dir-srf}

\setcounter{equation}{0}
\setcounter{figure}{0}

In this appendix, we provide a sketch for the proof of Theorem \ref{thm:dir-srf}, which concerns solutions to the Floquet-Bloch eigenvalue problem
\begin{equation}
\label{eq:fb-evp-L}
L \Phi = E \Phi, \quad \Phi \smallin L^2_{\bm k}, \quad {\bm k} \smallin \mathcal{B}
\end{equation}
for $(E, {\bm k})$ in a suitably small neighborhood of an energy-quasimomentum pair $(E_D, {\bm D})$ satisfying properties \ref{itm:dir-pt-1} -- \ref{itm:dir-pt-4}, i.e., a Dirac point of $L$. Throughout this appendix, $(\varepsilon, {\bm q})$ denotes a displacement from $(E_D, {\bm D})$.

First, in analogy with Proposition \ref{prop:red-M} part \ref{itm:det-calM-0}, we have the following:

\begin{proposition}
\label{prop:red-D}
Suppose ${(E_D, {\bm D})}$ satisfies properties \ref{itm:dir-pt-1} -- \ref{itm:dir-pt-4}, and is therefore a Dirac point of $L$. Then, there exist $\varepsilon^\flat$, ${q^\flat > 0}$, and a ${2 \times 2}$ matrix-valued, analytic function $\mathcal{N}(\varepsilon; {\bm q})$, defined for ${|\varepsilon| < \varepsilon^\flat}$ and ${|{\bm q}| < q^\flat}$, such that ${E_D + \varepsilon}$ is an $L^2_{{\bm D} + {\bm q}}$ eigenvalue of $L$, with ${|\varepsilon| < \varepsilon^\flat}$ and ${|{\bm q}| < q^\flat}$, if and only if
\begin{equation}
\label{eq:det-calN-0}
\det(\mathcal{N}(\varepsilon; {\bm q})) = 0.
\end{equation}
\end{proposition}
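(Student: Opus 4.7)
The plan is to mirror the Schur complement/Lyapunov-Schmidt reduction carried out in Proposition \ref{prop:red-M}, adapted to the general elliptic operator $L = -\nabla\cdot A\nabla + V$ and to the Dirac-type eigenspace structure \ref{itm:dir-pt-1}--\ref{itm:dir-pt-3}. First, conjugate $L$ acting on $L^2_{{\bm D}+{\bm q}}$ to the operator
\begin{equation}
L({\bm D}+{\bm q}) \;=\; -(\nabla+i({\bm D}+{\bm q}))\cdot A\,(\nabla+i({\bm D}+{\bm q})) + V
\end{equation}
acting on $L^2(\R^2/\Lambda)$, and expand in powers of ${\bm q}$ about ${\bm D}$, writing
\begin{equation}
L({\bm D}+{\bm q}) \;=\; L({\bm D}) + \mathcal{L}({\bm q}),\qquad \mathcal{L}({\bm q}) \;=\; -i\,{\bm q}\cdot(A\nabla_{\bm D}+\nabla_{\bm D}\!\cdot\! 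A) + {\bm q}\cdot A{\bm q},
\end{equation}
where $\nabla_{\bm D}\equiv\nabla+i{\bm D}$. The operator $\mathcal{L}({\bm q})$ is a bounded map $H^2\to L^2$ with norm $O(|{\bm q}|)$ as $|{\bm q}|\to 0$.

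Next, seek solutions of $L({\bm D}+{\bm q})\phi = (E_D+\varepsilon)\phi$ in the form
\begin{equation}
\phi \;=\; a_1\phi^{\bm D}_1 + a_2\phi^{\bm D}_2 + \phi^{(1)},\qquad \phi^{(1)}\in{\rm ker}(L({\bm D})-E_D)^\perp,
\end{equation}
where $\phi^{\bm D}_j \equiv e^{-i{\bm D}\cdot{\bm x}}\Phi^{\bm D}_j$. Introduce the orthogonal projections $\Pi^\parallel_{\bm D}$ onto ${\rm ker}(L({\bm D})-E_D)$ and $\Pi^\perp_{\bm D}=I-\Pi^\parallel_{\bm D}$, and the resolvent
\begin{equation}
\mathscr{R}_{\bm D}(E_D) \;\equiv\; \Pi^\perp_{\bm D}\,(L({\bm D})-E_D)^{-1}\,\Pi^\perp_{\bm D}\;:\;L^2(\R^2/\Lambda)\to H^2(\R^2/\Lambda),
\end{equation}
which is bounded by elliptic regularity for $L$. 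Applying $\Pi^\perp_{\bm D}$ to the eigenvalue equation yields
\begin{equation}
\bigl(I - \mathscr{R}_{\bm D}(E_D)(\varepsilon+\mathcal{L}({\bm q}))\bigr)\phi^{(1)} \;=\; \mathscr{R}_{\bm D}(E_D)\,\mathcal{L}({\bm q})\,(a_1\phi^{\bm D}_1+a_2\phi^{\bm D}_2).
\end{equation}
Since $\|\mathscr{R}_{\bm D}(E_D)(\varepsilon+\mathcal{L}({\bm q}))\|\leq C(|\varepsilon|+|{\bm q}|)$ on $H^2$, there exist $\varepsilon^\flat, q^\flat>0$ such that for $|\varepsilon|<\varepsilon^\flat$, $|{\bm q}|<q^\flat$ this equation is uniquely solvable for $\phi^{(1)}=\phi^{(1)}[a_1,a_2,\varepsilon,{\bm q}]$ via a Neumann series, defining an analytic, $\C$-linear map in $(a_1,a_2)$.

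Finally, apply $\Pi^\parallel_{\bm D}$ to the eigenvalue equation and substitute the solution $\phi^{(1)}[a_1,a_2,\varepsilon,{\bm q}]$, yielding a homogeneous linear system
\begin{equation}
\mathcal{N}(\varepsilon;{\bm q})
\begin{bmatrix} a_1\\ a_2\end{bmatrix}=0,
\qquad
\mathcal{N}_{jk}(\varepsilon;{\bm q}) \;=\; \bigl\langle \phi^{\bm D}_j,\,(\varepsilon+\mathcal{L}({\bm q}))(\phi^{\bm D}_k+\phi^{(1)}[\delta_{1k},\delta_{2k},\varepsilon,{\bm q}])\bigr\rangle.
\end{equation}
The entries $\mathcal{N}_{jk}$ are analytic in $(\varepsilon,{\bm q})$ because the Neumann series defining $\phi^{(1)}$ converges uniformly, and $\mathcal{N}(\varepsilon;{\bm q})^*=\mathcal{N}(\varepsilon;{\bm q})$ by self-adjointness of $L({\bm D})$. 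Existence of a nontrivial $(a_1,a_2)$---and hence of an $L^2_{{\bm D}+{\bm q}}$ eigenvalue $E_D+\varepsilon$ of $L$---is equivalent to $\det\mathcal{N}(\varepsilon;{\bm q})=0$, which establishes \eqref{eq:det-calN-0}. The only genuine point requiring care is the bound \eqref{eq:phi-1-op_bd} for the new perturbation $\mathcal{L}({\bm q})$, which uses the ellipticity and boundedness of $A$ to control $\mathcal{L}({\bm q})$ as an $H^2\to L^2$ operator of norm $O(|{\bm q}|)$; once this is in hand, the argument is the verbatim analog of the proof of Proposition \ref{prop:red-M} part \ref{itm:det-calM-0}.
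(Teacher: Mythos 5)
Your proposal is correct and follows the paper's own sketch (Appendix~\ref{apx:pf-dir-srf}) essentially verbatim: conjugate to the fiber operator $L({\bm D}+{\bm q})$, split the eigenfunction along ${\rm ker}(L({\bm D})-E_D)$ and its orthogonal complement, solve for the complement via a Neumann series for small $(\varepsilon,{\bm q})$, and substitute back to reduce to a $2\times2$ analytic self-adjoint linear system whose vanishing determinant characterizes nearby eigenvalues. One small sign slip: given your convention $L({\bm D}+{\bm q}) = L({\bm D}) + \mathcal{L}({\bm q})$, the eigenvalue equation rearranges to $(L({\bm D}) - E_D)\phi = (\varepsilon - \mathcal{L}({\bm q}))\phi$, so the $\phi^{(1)}$-equation and the formula for $\mathcal{N}_{jk}$ should carry $(\varepsilon - \mathcal{L}({\bm q}))$ rather than $(\varepsilon + \mathcal{L}({\bm q}))$ --- the paper avoids this by defining its $\mathcal{L}$ in \eqref{eq:def_calL} with the opposite sign --- but this is purely a bookkeeping issue and does not affect the argument.
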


\begin{proof}
The proof follows a Schur complement/Lyapunov-Schmidt reduction strategy analogous to that outlined in Section \ref{sec:set-up}. We provide the following sketch:

Let ${{\bm q} = {\bm k} - {\bm D}}$ denote the quasimomentum displacement from ${\bm D}$. We consider the family of eigenvalue problems, equivalent to \eqref{eq:fb-evp-L},
\begin{equation}
\label{eq:fb-evp-L_2}
L({\bm D} + {\bm q}) \, (a_1 \phi_1 + a_2 \phi_2 + \phi^{(1)}) = (E_D + \varepsilon) \, (a_1 \phi_1 + a_2 \phi_2 + \phi^{(1)}),
\end{equation} 
where $a_1$, ${a_2 \smallin \C}$, ${\phi^{(1)} \smallin {\rm ker}(L({\bm D}) - E_D)^\perp}$, and ${\varepsilon \smallin \R}$. Using the expression
\begin{equation}
\label{eq:L-phi-k}
L({\bm D} + {\bm q}) = L({\bm D}) - i {\bm q} \cdot (\nabla_{\bm D} \cdot A + A \nabla_{\bm D}) + {\bm q} \cdot A {\bm q},
\end{equation}
we obtain, in analogy to \eqref{eq:phi-1} and \eqref{eq:red-M},
\begin{align}
\label{eq:phi-1_4}
& (L({\bm D}) - E_D) \, \phi^{(1)} = \Pi^\perp_{\bm D} \, F(a_1, a_2, \phi^{(1)}, \varepsilon; {\bm q}), \\
\label{eq:red-D}
& \Pi^\parallel_{\bm D} \, F(a_1, a_2, \phi^{(1)}, \varepsilon; {\bm q}) = 0,
\end{align}
where 
\begin{equation}
\label{eq:def-F_2}
F(a_1, a_2, \phi^{(1)}, \varepsilon; {\bm q}) \equiv ( \varepsilon + i {\bm q} \cdot (\nabla_{\bm D} \cdot A + A \nabla_{\bm D}) - {\bm q} \cdot A {\bm q}) \, (a_1 \phi_1 + a_2 \phi_2 + \phi^{(1)}).
\end{equation}
Here, $\Pi^\parallel_{\bm D}$ and $\Pi^\perp_{\bm D}$ denote, respectively, orthogonal projections onto ${{\rm ker}(L({\bm D}) - E_D)}$ and ${{\rm ker}(L({\bm D}) - E_D)^\perp}$. 

Following the procedure of Section \ref{sec:phi-1}, there exist constants $\varepsilon^\flat$, $q^\flat > 0$ such that \eqref{eq:phi-1_4} has a solution, for ${|\varepsilon| < \varepsilon^\flat}$ and ${|{\bm q}| < q^\flat}$, given by a mapping ${(a_1, a_2, \varepsilon) \mapsto \phi^{(1)}[a_1, a_2, \varepsilon]}$ which satisfies the bounds
\begin{equation}
\label{eq:phi-1-bd_2}
\phi^{(1)}[a_1, a_2, \varepsilon] = O(|{\bm q}| + |\varepsilon| |{\bm q}|) \ \ \text{as} \ \ |\varepsilon|, \, |{\bm q}| \to 0. 
\end{equation}

Finally, following the procedure of Section \ref{sec:reduction}, we have that \eqref{eq:red-D} has a solution, for ${|\varepsilon| < \varepsilon^\flat}$ and ${|{\bm q}| < q^\flat}$, if and only if
\begin{equation}
\label{eq:reduced_2}
\langle \phi_\ell, \, F(a_1, a_2, \phi^{(1)}[a_1, a_2, \varepsilon], \varepsilon; {\bm q}) \rangle = 0, \quad \ell \smallin \{ 1, \, 2 \}.
\end{equation}
Note that \eqref{eq:reduced_2} has the expansion
\begin{equation}
\label{eq:reduced_3}
\sum_{k = 1}^2 \bigl( \varepsilon I_{\ell, k} + {\bm q} \cdot \langle \phi_\ell, \, i \nabla_{\bm D} \cdot (A \phi_k) + A \nabla_{\bm D} \phi_k \rangle + O(|{\bm q}|^2 + |\varepsilon| |{\bm q}|^2) \bigr) = 0 \ \ \text{as} \ \ |\varepsilon|, \, |{\bm q}| \to 0, \quad \ell \smallin \{ 1, \, 2 \};
\end{equation}
see Proposition \ref{prop:calN-pr}, part \ref{itm:calN-ex}. The system \eqref{eq:reduced_2} is of the form
\begin{equation}
\mathcal{N}(\varepsilon; {\bm q})
\begin{bmatrix}
a_1 \\
a_2
\end{bmatrix}
= 
\begin{bmatrix}
0 \\
0
\end{bmatrix} \! ,
\end{equation}
and is therefore solvable if and only if
\begin{equation}
{\rm det}(\mathcal{N}(\varepsilon; {\bm q})) = 0.
\end{equation}
This concludes the sketch of the proof of Proposition \ref{prop:red-D}; for further details, see \cite{chaban2025thesis}.
\end{proof}

In analogy with Proposition \ref{prop:calM-pr}, we record properties of $\mathcal{N}(\varepsilon; {\bm q})$.

\begin{proposition}
\label{prop:calN-pr}
The ${2 \times 2}$ matrix-valued, analytic function $\mathcal{N}(\varepsilon; {\bm q})$ satisfies the following:
\begin{enumerate}
\item \label{itm:calN-sa} $\mathcal{N}(\varepsilon; {\bm q})$ is self-adjoint.
\item \label{itm:calN-PC} By $\mathcal{PC}$ symmetry,
\begin{equation}
\label{eq:calN-PC}
\mathcal{N}_{1, 1}(\varepsilon; {\bm q}) = \mathcal{N}_{2, 2}(\varepsilon; {\bm q}).
\end{equation}
\item \label{itm:calN-ex} $\mathcal{N}(\varepsilon; {\bm q})$ has the expansion
\begin{equation}
\label{eq:calN-ex}
\mathcal{N}(\varepsilon; {\bm q}) = \varepsilon I - L^{\bm D}_{\rm eff}({\bm q}) + \tilde{\mathcal{N}}(\varepsilon; {\bm q}).
\end{equation}
Here, $L^{\bm D}_{\rm eff}({\bm q})$ is the Fourier symbol of the effective Hamiltonian $L^{\bm D}_{\rm eff}$, discussed in Remark \ref{rmk:dir-srf-eff}:
\begin{equation}
\label{eq:def-Leff_2}
L^{\bm D}_{\rm eff}({\bm q}) = ({\bm \gamma}_0 \cdot {\bm q}) I + ({\bm \gamma}_1 \cdot {\bm q}) \sigma_1 + ({\bm \gamma}_2 \cdot {\bm q}) \sigma_2.
\end{equation}
The entries of the ${2 \times 2}$ matrix $\tilde{\mathcal{N}}(\varepsilon; {\bm q})$ satisfy the bound, for $j$, $k \smallin \{ 1, \, 2 \}$,
\begin{equation}
\label{eq:til-calN-bd}
\tilde{\mathcal{N}}_{j, k}(\varepsilon; {\bm q}) = O(|{\bm q}|^2 + |\varepsilon| |{\bm q}|^2) \ \ \text{as} \ \ |\varepsilon|, \, |{\bm q}| \to 0.
\end{equation}
\end{enumerate}
\end{proposition}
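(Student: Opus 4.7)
\textbf{Proof proposal for Proposition~\ref{prop:calN-pr}.} The plan is to work directly with the Schur-complement representation of $\mathcal{N}$ arising from the proof of Proposition~\ref{prop:red-D}. Let $\phi^{(1)}_k[\varepsilon,{\bm q}]$ denote the solution of the analog of \eqref{eq:phi-1_3} when $(a_1,a_2)=(\delta_{1k},\delta_{2k})$. Using the orthogonality $\phi^{(1)}_k\perp{\rm ker}(H({\bm D})-E_D)$ to kill the $\varepsilon$-term in the correction, one has
\begin{equation*}
\mathcal{N}_{\ell,k}(\varepsilon;{\bm q}) = \varepsilon\,\delta_{\ell k} + \bigl\langle \phi_\ell,\,\mathcal{L}({\bm q})\phi_k\bigr\rangle + \bigl\langle \phi_\ell,\,\mathcal{L}({\bm q})\phi^{(1)}_k[\varepsilon,{\bm q}]\bigr\rangle,\quad \mathcal{L}({\bm q})\equiv i{\bm q}\cdot(\nabla_{\bm D}\cdot A + A\nabla_{\bm D}) - {\bm q}\cdot A{\bm q}.
\end{equation*}

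Part \ref{itm:calN-sa} is the identity $\overline{\mathcal{N}_{k,\ell}(\varepsilon;{\bm q})}=\mathcal{N}_{\ell,k}(\varepsilon;{\bm q})$, and follows because $\mathcal{L}({\bm q})$ is formally self-adjoint on $L^2(\R^2/\Lambda)$ (integration by parts absorbs the factor $i$, using that $A$ is real symmetric) and because $\mathscr{R}_{\bm D}(E_D)$ and $\Pi^\perp_{\bm D}$ are Hermitian; the Neumann-series representation $\phi^{(1)}_k=(I-\mathscr{R}_{\bm D}(E_D)(\varepsilon+\Pi^\perp_{\bm D}\mathcal{L}({\bm q})))^{-1}\mathscr{R}_{\bm D}(E_D)\mathcal{L}({\bm q})\phi_k$ renders the Hermitian symmetry a direct computation. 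For part \ref{itm:calN-PC}, I would first lift $\Phi^{\bm D}_2=\mathcal{PC}[\Phi^{\bm D}_1]$ to $\phi_2=\mathcal{PC}[\phi_1]$ via $\mathcal{PC}[e^{i{\bm D}\cdot{\bm x}}f]({\bm x})=e^{i{\bm D}\cdot{\bm x}}\mathcal{PC}[f]({\bm x})$. Then, using that $\mathcal{PC}$ is antiunitary with $\mathcal{PC}^2=I$, commutes with $H({\bm D})-E_D$ (hence with $\mathscr{R}_{\bm D}(E_D)$ and $\Pi^\perp_{\bm D}$), and satisfies $\mathcal{PC}\circ(i\nabla_{\bm D})=(i\nabla_{\bm D})\circ\mathcal{PC}$ (the antilinearity of $\mathcal{PC}$ compensates the sign flip $\mathcal{PC}\nabla_{\bm D}=-\nabla_{\bm D}\mathcal{PC}$), one deduces $\mathcal{PC}\,\mathcal{L}({\bm q})=\mathcal{L}({\bm q})\,\mathcal{PC}$ and therefore $\phi^{(1)}_2=\mathcal{PC}[\phi^{(1)}_1]$. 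The antiunitary identity $\langle\mathcal{PC}f,\mathcal{PC}g\rangle=\overline{\langle f,g\rangle}$ then gives $\mathcal{N}_{2,2}=\overline{\mathcal{N}_{1,1}}=\mathcal{N}_{1,1}$, the last equality by part \ref{itm:calN-sa}.

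For part \ref{itm:calN-ex}, the linear-in-${\bm q}$ contribution $\langle\phi_\ell,i{\bm q}\cdot(\nabla_{\bm D}\cdot A+A\nabla_{\bm D})\phi_k\rangle$, after conjugating out the phase via $\phi_j=e^{-i{\bm D}\cdot{\bm x}}\Phi^{\bm D}_j$ and using $\nabla_{\bm D}(e^{-i{\bm D}\cdot{\bm x}}\Phi_j)=e^{-i{\bm D}\cdot{\bm x}}\nabla\Phi_j$, equals ${\bm q}\cdot\langle\Phi^{\bm D}_\ell,i(\nabla\cdot A+A\nabla)\Phi^{\bm D}_k\rangle$. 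The Pauli-basis decomposition of this Hermitian $2\times2$ matrix via Lemma~\ref{lem:Ahermsym}, combined with the definitions \eqref{eq:def-gam-par}, produces exactly $-L^{\bm D}_{\rm eff}({\bm q})$ as in \eqref{eq:def-Leff_2}; the absence of a $\sigma_3$-coefficient is enforced by part \ref{itm:calN-PC} at linear order. The remaining contributions $-{\bm q}\cdot A{\bm q}\,\delta_{\ell k}$ and $\langle\phi_\ell,\mathcal{L}({\bm q})\phi^{(1)}_k\rangle$ are each $O(|{\bm q}|^2+|\varepsilon||{\bm q}|^2)$, using $\|\mathcal{L}({\bm q})\|=O(|{\bm q}|)$ and the bound $\|\phi^{(1)}_k\|=O(|{\bm q}|+|\varepsilon||{\bm q}|)$ (cf.\ \eqref{eq:phi-1-bd_2}); these terms collect into $\tilde{\mathcal{N}}(\varepsilon;{\bm q})$.

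The main obstacle I anticipate is the clean identification of the Hermitian matrix $\langle\Phi^{\bm D}_\ell,i(\nabla\cdot A+A\nabla)\Phi^{\bm D}_k\rangle$ with the real vectors ${\bm \gamma}_0,{\bm \gamma}_1,{\bm \gamma}_2\in\R^2$ as defined in \eqref{eq:def-gam-par}. This requires careful tracking of the change of basis from the $\mathcal{PC}$-eigenbasis $\{\tilde\Phi^{\bm D}_1,\tilde\Phi^{\bm D}_2\}$ used to formulate hypotheses \ref{itm:dir-pt-1}--\ref{itm:dir-pt-3}, to the basis $\{\Phi^{\bm D}_1,\Phi^{\bm D}_2\}$ with $\Phi^{\bm D}_2=\mathcal{PC}[\Phi^{\bm D}_1]$, in which the off-diagonal linear-in-${\bm q}$ coefficients split, via Lemma~\ref{lem:Ahermsym}, into the real (${\bm \gamma}_1$) and imaginary (${\bm \gamma}_2$) parts of $\langle\Phi^{\bm D}_1,i(\nabla\cdot A+A\nabla)\Phi^{\bm D}_2\rangle$.
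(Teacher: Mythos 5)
Your proposal is correct and follows essentially the same route as the paper, which proves Proposition \ref{prop:calN-pr} by direct analogy with the corresponding parts of Proposition \ref{prop:calM-pr} (self-adjointness of the Schur complement via the self-adjointness of $\mathcal{L}({\bm q})$ and the resolvent, the $\mathcal{PC}$-commutation/antiunitarity argument combined with part \ref{itm:calN-sa}, and extraction of the linear-in-${\bm q}$ term with the remainder bounded using $\|\mathcal{L}({\bm q})\| = O(|{\bm q}|)$ and $\phi^{(1)} = O(|{\bm q}| + |\varepsilon||{\bm q}|)$). The only cosmetic difference is that the paper establishes self-adjointness of $(1-\mathscr{R}F_2)^{-1}\mathscr{R}$ by a uniqueness argument for the underlying elliptic equation rather than term-by-term through the Neumann series, but both are valid.
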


\begin{proof}
Part \ref{itm:calN-sa} is proven analogously to Proposition \ref{prop:calM-pr} part \ref{itm:calM-sa}; see Appendix \ref{apx:calM-sa}. Part \ref{itm:calN-PC} is proven analogously to Proposition \ref{prop:calM-pr} part \ref{itm:calM-PC}; see Appendix \ref{apx:calM-PC}. Part \ref{itm:calN-ex} is proven analogously to Proposition \ref{prop:calM-pr} part \ref{itm:calM-ex-1}; see Appendix \ref{apx:calM-ex-1}.
\end{proof}

\noindent Theorem \ref{thm:dir-srf} is then obtained through a procedure analogous to that presented in Section \ref{sec:pf-M-srf}. In particular, the expression \eqref{eq:dir-srf} describes the locus of solutions to \eqref{eq:det-calN-0} for ${|{\bm q}| < q^\star}$, where ${0 < q^\star \leq q^\flat}$.

We conclude this appendix by stating a result concerning the relationship between reductions about a pair of Dirac points, which are related to one another by $\mathcal{P}$ symmetry.

\begin{proposition}
\label{prop:dir-pair}
Suppose ${(E_D, {\bm D})}$ and ${(E_D, {\bm D}^\prime)}$ are Dirac points of $L$, and let $\mathcal{N}^{\bm D}(\varepsilon; {\bm q})$ and $\mathcal{N}^{{\bm D}^\prime}(\varepsilon; {\bm q}')$, respectively, denote their reduction matrices; see Proposition \ref{prop:red-D}. Furthermore, assume that the corresponding normalized eigenstates satisfy $\Phi^{{\bm D}^\prime}_\ell = \mathcal{P}[\Phi^{\bm D}_\ell]$, $\ell \smallin \{ 1, \, 2 \}$. Then
\begin{equation}
\label{eq:dir-pair-red}
\mathcal{N}^{{\bm D}^\prime}(\varepsilon; {\bm q}') = \mathcal{N}^{\bm D}(\varepsilon; -{\bm q}').
\end{equation}
In particular, we have
\begin{equation}
\label{eq:dir-pair-eff}
L^{{\bm D}^\prime}_{\rm eff}({\bm q}') = L^{\bm D}_{\rm eff}(-{\bm q}') = - L^{\bm D}_{\rm eff}({\bm q}').
\end{equation}
\end{proposition}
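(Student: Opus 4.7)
\medskip

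\noindent\textbf{Proof proposal for Proposition \ref{prop:dir-pair}.} The plan is to exploit the fact that parity $\mathcal{P}$ intertwines $L({\bm k})$ with $L(-{\bm k})$ and to observe that the Schur complement/Lyapunov-Schmidt reduction underlying Proposition \ref{prop:red-D} is equivariant under this intertwining. First, since $\mathcal{P}$ is an antilinear-free unitary that maps $L^2_{\bm k}$ to $L^2_{-{\bm k}}$, the hypothesis $\Phi^{{\bm D}'}_\ell = \mathcal{P}[\Phi^{\bm D}_\ell]$ forces ${\bm D}' \equiv -{\bm D}$ modulo $(\Z^2)^*$; choose the representative ${\bm D}' = -{\bm D}$. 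Using the identity $\mathcal{P} \circ e^{i{\bm k}\cdot{\bm x}} = e^{-i{\bm k}\cdot{\bm x}} \circ \mathcal{P}$ and $[\mathcal{P}, L] = 0$, I will derive the operator identity
\begin{equation}
L({\bm D}' + {\bm q}') \;=\; L(-{\bm D} + {\bm q}') \;=\; \mathcal{P}\, L({\bm D} - {\bm q}')\, \mathcal{P}^{-1}.
\end{equation}
In particular, writing $\phi^{{\bm D}'}_\ell = e^{-i{\bm D}'\cdot{\bm x}}\Phi^{{\bm D}'}_\ell$, a short computation gives $\phi^{{\bm D}'}_\ell = \mathcal{P}[\phi^{\bm D}_\ell]$.

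Next, I will transport the reduction at ${\bm D}$ with parameter $-{\bm q}'$ to the reduction at ${\bm D}'$ with parameter ${\bm q}'$ via conjugation by $\mathcal{P}$. The projections $\Pi^\parallel_{{\bm D}'} = \mathcal{P}\,\Pi^\parallel_{\bm D}\,\mathcal{P}^{-1}$ and $\Pi^\perp_{{\bm D}'} = \mathcal{P}\,\Pi^\perp_{\bm D}\,\mathcal{P}^{-1}$ follow from $\phi^{{\bm D}'}_\ell = \mathcal{P}[\phi^{\bm D}_\ell]$, and hence the resolvent satisfies $\mathscr{R}_{{\bm D}'}(E_D) = \mathcal{P}\,\mathscr{R}_{\bm D}(E_D)\,\mathcal{P}^{-1}$. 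Applying this to the construction of $\phi^{(1)}$ in the proof sketch of Proposition \ref{prop:red-D} (equations \eqref{eq:phi-1_4}--\eqref{eq:def-F_2} with ${\bm D}$ replaced by ${\bm D}'$ and ${\bm q}$ replaced by ${\bm q}'$), the unique solution map is
\begin{equation}
\phi^{(1),{\bm D}'}[a_1,a_2,\varepsilon;{\bm q}'] \;=\; \mathcal{P}\bigl[\phi^{(1),{\bm D}}[a_1,a_2,\varepsilon;-{\bm q}']\bigr],
\end{equation}
since both sides satisfy the same equation after $\mathcal{P}$-conjugation and uniqueness was established in Section \ref{sec:phi-1}.

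Finally, computing the entries of $\mathcal{N}^{{\bm D}'}(\varepsilon;{\bm q}')$ via \eqref{eq:reduced_2}, the forcing $F^{{\bm D}'}(\,\cdot\,;{\bm q}')$ equals $\mathcal{P}$ applied to $F^{{\bm D}}(\,\cdot\,;-{\bm q}')$ (again by the operator identity above and the $\mathcal{P}$-image relation for $\phi^{(1)}$). Since $\mathcal{P}$ is unitary,
\begin{equation}
\mathcal{N}^{{\bm D}'}(\varepsilon;{\bm q}')_{j,k} \;=\; \langle \mathcal{P}[\phi^{\bm D}_j],\, \mathcal{P}[F^{{\bm D}}(\,\cdot\,;-{\bm q}')]\rangle \;=\; \langle \phi^{\bm D}_j,\, F^{{\bm D}}(\,\cdot\,;-{\bm q}')\rangle \;=\; \mathcal{N}^{\bm D}(\varepsilon;-{\bm q}')_{j,k},
\end{equation}
proving \eqref{eq:dir-pair-red}. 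The conclusion \eqref{eq:dir-pair-eff} then follows by reading off the coefficients of the ${\bm q}'$-linear part in the expansion \eqref{eq:calN-ex}, combined with the fact that the effective symbol $L^{\bm D}_{\rm eff}({\bm q})$ in \eqref{eq:def-Leff_2} is linear and therefore odd in its argument.

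The main obstacle is bookkeeping: one must verify that every ingredient appearing in the reduction (projectors, resolvent, forcing terms, inner products) transports correctly under $\mathcal{P}$-conjugation. All of these are formal consequences of the intertwining $\mathcal{P}L({\bm k}) = L(-{\bm k})\mathcal{P}$, the unitarity of $\mathcal{P}$, and the uniqueness of the Lyapunov-Schmidt solution; no new analytic estimates are needed beyond those already used in Section \ref{sec:set-up} and Proposition \ref{prop:red-D}.
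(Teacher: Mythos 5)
Your argument is correct, and it is exactly the approach the paper intends: the paper itself defers the details to \cite{chaban2025thesis}, but your use of the intertwining $\mathcal{P}\,L({\bm k})\,\mathcal{P}^{-1}=L(-{\bm k})$ together with uniqueness of the Lyapunov--Schmidt solution and unitarity of $\mathcal{P}$ is the direct analog of the paper's own $\mathcal{P}$-symmetry argument for $\mathcal{M}(\varepsilon;{\bm \kappa},\tau_0,{\bm \tau})=\mathcal{M}(\varepsilon;-{\bm \kappa},\tau_0,{\bm \tau})$ in Appendix \ref{apx:calM-ev}. The only point worth adding for completeness is the observation that the transported basis is compatible with the normalization used at a Dirac point, i.e.\ $\Phi^{{\bm D}'}_2=\mathcal{P}[\Phi^{\bm D}_2]=\mathcal{PC}[\Phi^{{\bm D}'}_1]$, which follows since $\mathcal{P}^2=1$ and $\mathcal{P}$, $\mathcal{C}$ commute.
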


\begin{proof}
A detailed proof can be found in \cite{chaban2025thesis}.
\end{proof}

\bigskip

\section{Change of variables; Proof of Proposition \ref{prop:fb-evp-HT-TH}}
\label{sec:ch-var}

\setcounter{equation}{0}
\setcounter{figure}{0}

{\it Proof of Proposition \ref{prop:fb-evp-HT-TH}.}
Suppose ${\tilde{\Phi}({\bm x}) \smallin L^2_{(T^{-1})^\mathsf{T}{\bm k}}(\R^2/T\Z^2)}$ satisfies the eigenvalue problem:
\begin{equation}
\label{eq:evp-HT}
H_{V \circ \, T^{-1}} \, \tilde{\Phi}({\bm x}) = ( -\Delta_{\bm x} + V(T^{-1}{\bm x}) ) \, \tilde{\Phi}({\bm x}) = E \, \tilde{\Phi}({\bm x}), \quad {\bm x} \smallin \R^2/T\Z^2.
\end{equation}
Consider the change of variables ${{\bm y} = T^{-1} {\bm x}}$. Then, for any function $f({\bm x})$,
\begin{equation}
f({\bm x}) = f(T{\bm y}) \bigr|_{{\bm y} \, = \, T^{-1}{\bm x}} = (f \circ T)({\bm y}) \bigr|_{{\bm y} \, = \, T^{-1}{\bm x}}.
\end{equation}
By the chain rule,
\begin{align}
\nabla_{\bm x} f({\bm x}) & = (T^{-1})^\mathsf{T} \nabla_{\bm y} (f \circ T)({\bm y}) \bigr|_{{\bm y} \, = \, T^{-1}{\bm x}}, \\
\Delta_{\bm x} f({\bm x}) & = \nabla_{\bm y} \cdot T^{-1}(T^{-1})^\mathsf{T} \nabla_{\bm y} (f \circ T)({\bm y}) \bigr|_{{\bm y} \, = \, T^{-1}{\bm x}}.
\end{align}
Substituting into \eqref{eq:evp-HT} and using that ${(T^{-1})^\mathsf{T} T^{-1} = (T^\mathsf{T} T)^{-1}}$ yields
\begin{equation}
\label{eq:evp-TH}
\bigl( - \nabla_{\bm y} \cdot (T^\mathsf{T} T)^{-1} \nabla_{\bm y} + V({\bm y}) \bigr) \, (\tilde{\Phi} \circ T)({\bm y})\bigr|_{{\bm y} \, = \, T^{-1}{\bm x}} = E \, (\tilde{\Phi} \circ T)({\bm y})\bigr|_{{\bm y} \, = \, T^{-1}{\bm x}}.
\end{equation}
Define ${\Phi({\bm y}) \equiv (\tilde{\Phi} \circ T)({\bm y})}$ (in particular, ${\tilde{\Phi} = \Phi \circ T^{-1}}$). We claim that ${\Phi({\bm y}) \smallin L^2_{\bm k}(\R^2/\Z^2)}$; indeed, for any ${{\bm v} \smallin \Z^2}$,
\begin{equation}
\Phi({\bm y} + {\bm v}) = (\tilde{\Phi} \circ T)({\bm y} + {\bm v}) = \tilde{\Phi}(T {\bm y} + T {\bm v}) = e^{i (T^{-1})^\mathsf{T} {\bm k} \, \cdot \, T {\bm v}} \tilde{\Phi}(T {\bm y}) = e^{i {\bm k} \cdot T^{-1} T {\bm v}} \Phi({\bm y}) = e^{i {\bm k} \cdot {\bm v}} \Phi({\bm y}),
\end{equation}
where we used that ${\tilde{\Phi} \smallin L^2_{(T^{-1})^\mathsf{T} {\bm k}}(\R^2/T\Z^2)}$. Now, since \eqref{eq:evp-TH} holds for all ${{\bm x} \smallin \R^2/T\Z^2}$, it follows that ${\Phi \smallin L^2_{\bm k}(\R^2/\Z^2)}$ satisfies the eigenvalue problem
\begin{equation}
T_* H_V \, \Phi({\bm y}) = \bigl( - \nabla_{\bm y} \cdot (T^\mathsf{T} T)^{-1} \nabla_{\bm y} + V({\bm y}) \bigr) \, \Phi({\bm y}) = E \, \Phi({\bm y}), \quad {\bm y} \smallin \R^2/\Z^2.
\end{equation}
The proof is now complete. \qed

\bigskip

\section{Computation of $\beta_0$, $\beta_1$, and $\beta_2$ for small amplitude potentials}
\label{apx:b-par-small}

\setcounter{equation}{0}
\setcounter{figure}{0}

Our main results, stated in Section \ref{sec:results}, require the additional nondegeneracy hypothesis \ref{itm:quad-dgn-6} on a quadratic band degeneracy of the undeformed Schr\"{o}dinger operator ${H = -\Delta + V}$ to ensure the effect of the deformation enters at leading order in perturbation theory. In this appendix, we compute the parameters $\beta_0$, $\beta_1$, and $\beta_2$ in \eqref{eq:def-b-par}, in the scenario of small amplitude potentials; this computation is analogous to that presented in \cite[Appendix C]{keller2018spectral} regarding the parameters $\alpha_0$, $\alpha_1$, and $\alpha_2$ in \eqref{eq:def-a-par}. We then verify that \ref{itm:quad-dgn-6}, concerning $\beta_1$ and $\beta_2$, holds in this setting.

The following result, which demonstrates the emergence of a quadratic band degeneracy point from the band structure of the Laplacian when perturbed by a small amplitude square lattice potential, is proven in \cite[Theorem 5.3]{keller2018spectral}.

\begin{theorem}
\label{thm:kmow-small}
{\rm (\cite[Theorem 5.3]{keller2018spectral}.)}
Consider the Schr\"{o}dinger operator ${H^\delta = -\Delta + \delta V}$, where $\delta$ is small and $V$ is a square lattice potential; see Definition \ref{def:sql-pot}. Assume
\begin{equation}
\label{eq:small-V-ndgn}
\quad V_{1, 0} \neq \pm V_{1, 1}.
\end{equation}
Then, the fourfold-degenerate $L^2_{\bm M}$ eigenvalue ${E_S^{(0)} = |{\bm M}|^2}$ of ${H^0 = -\Delta}$ splits, at order $\delta$, into a twofold-degenerate $L^2_{\bm M}$ eigenvalue $E_S^\delta$ of $H^\delta$, simple in each of the subspaces $L^2_{{\bm M}, +i}$ and $L^2_{{\bm M}, -i}$, and two simple $L^2_{\bm M}$ eigenvalues with corresponding Floquet-Bloch eigenstates in $L^2_{{\bm M}, +1}$ and $L^2_{{\bm M}, -1}$. 

The twofold-degenerate $L^2_{\bm M}$ eigenvalue $E_S^\delta$ has the expansion
\begin{equation}
\label{eq:small-eig}
E_S^\delta = E_S^{(0)} + \delta (V_{0, 0} - V_{1, 1}) + O(\delta^2) \ \ \text{as} \ \ \delta \to 0.
\end{equation}
The Floquet-Bloch eigenstates ${\Phi^\delta_{(+i)} \smallin L^2_{{\bm M}, +i}}$ and ${\Phi^\delta_{(-i)} \smallin L^2_{{\bm M}, -i}}$ associated with $E_S^\delta$ are given by
\begin{equation}
\label{eq:small-phi-1&2}
\begin{aligned}
\Phi^\delta_{(+i)}({\bm x}) & = \sum_{{\bm m} \smallin \mathcal{S}} c_{(+i)}^\delta({\bm m}) \bigl( e^{i ({\bm M} + {\bm m} {\bm k}) \cdot {\bm x}} - i e^{i ({\bm M} + \mathcal{R} {\bm m} {\bm k}) \cdot {\bm x}} - e^{i ({\bm M} + \mathcal{R}^2 {\bm m} {\bm k}) \cdot {\bm x}} + i e^{i ({\bm M} + \mathcal{R}^3 {\bm m} {\bm k}) \cdot {\bm x}} \bigr), \\
\Phi^\delta_{(-i)}({\bm x}) & = \sum_{{\bm m} \smallin \mathcal{S}} c_{(-i)}^\delta({\bm m}) \bigl( e^{i ({\bm M} + {\bm m} {\bm k}) \cdot {\bm x}} + i e^{i ({\bm M} + \mathcal{R} {\bm m} {\bm k}) \cdot {\bm x}} - e^{i ({\bm M} + \mathcal{R}^2 {\bm m} {\bm k}) \cdot {\bm x}} - i e^{i ({\bm M} + \mathcal{R}^3 {\bm m} {\bm k}) \cdot {\bm x}} \bigr) ,
\end{aligned}
\end{equation}
where we use the notation ${{\bm m}{\bm k} = m_1 {\bm k}_1 + m_2 {\bm k}_2}$ of Appendix \ref{apx:fourier}. Here, ${\mathcal{R}(m_1, m_2) = (m_2, -m_1 - 1)}$ and $\mathcal{S}$ is a set of representatives of equivalence classes consisting of orbits of $\mathcal{R}$. The coefficients satisfy
\begin{align}
\label{eq:small-phi-PC}
& c^\delta_{(-i)}({\bm m}) = c^\delta_{(+i)}({\bm m})^*, \\
\label{eq:small-phi-ref}
& c^\delta_{(\ell)}(\Sigma_1 {\bm m}) = c^\delta_{(\ell)}({\bm m})^* \quad \text{where} \quad \Sigma_1 (m_1, m_2) = (m_2, m_1), \\
\label{eq:small-phi-ex}
& c_{(\ell)}^\delta({\bm 0}) = 1 \quad \text{and} \quad c_{(\ell)}^\delta({\bm m}) = O(\delta) \ \ \text{as} \ \ \delta \to 0, \quad {\bm m} \smallin \mathcal{S}^\perp \equiv \mathcal{S} \setminus \{ {\bm 0} \}
\end{align}
for ${\ell \smallin \{ +i, -i \}}$.
\end{theorem}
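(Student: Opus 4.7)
\textbf{Plan for the proof of Theorem \ref{thm:kmow-small}.} The approach is degenerate perturbation theory applied at a fourfold eigenvalue of the Laplacian, with the rotational symmetry of $V$ used to diagonalize the leading-order perturbation matrix. First I would identify the unperturbed eigenspace. On $L^2_{\bm M}$, the spectrum of $H^0 = -\Delta$ consists of eigenvalues $|{\bm M}+{\bm m}{\bm k}|^2 = \pi^2((2m_1+1)^2 + (2m_2+1)^2)$, ${\bm m}\in\Z^2$, whose minimum $2\pi^2 = E_S^{(0)}$ is attained exactly on the four-element orbit ${\bm m}\in\{(0,0),(0,-1),(-1,-1),(-1,0)\}$ under the shifted rotation $\mathcal{R}:(m_1,m_2)\mapsto(m_2,-m_1-1)$. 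Since $\mathcal{R}^4 = 1$ permutes these four modes cyclically, the fourfold null space $X_0 := \ker(H^0({\bm M}) - E_S^{(0)})$ splits into the one-dimensional $\mathcal{R}$-eigenspaces $L^2_{{\bm M},\varsigma}\cap X_0$ for $\varsigma\in\{1,i,-1,-i\}$, with orthonormal basis
\[
\Phi^{(0)}_{(\varsigma)}({\bm x}) = \tfrac12 \sum_{j=0}^3 \varsigma^{-j}\, e^{i({\bm M}+\mathcal{R}^j{\bm 0}\cdot{\bm k})\cdot{\bm x}}.
\]

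Next I would apply the Lyapunov-Schmidt reduction scheme of Section \ref{sec:set-up}, adapted to the fourfold degeneracy: writing $\Phi = \Phi^\parallel + \Phi^\perp$ with $\Phi^\parallel\in X_0$ and $\Phi^\perp\in X_0^\perp$, and $E = E_S^{(0)} + \delta\lambda$, the Floquet-Bloch eigenvalue problem for $H^\delta$ on $L^2_{\bm M}$ reduces to a $4\times 4$ matrix eigenvalue problem $\mathcal{M}^\delta {\bm a} = \delta\lambda\,{\bm a}$ on $X_0$, with $\mathcal{M}^\delta$ analytic in $\delta$ and $\mathcal{M}^{(0)}_{jk} = \langle \Phi^{(0)}_{(\varsigma_j)}, V\Phi^{(0)}_{(\varsigma_k)}\rangle$. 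Because $V$ commutes with $\mathcal{R}$ and each basis vector $\Phi^{(0)}_{(\varsigma)}$ lies in a distinct $\mathcal{R}$-isotypic subspace, $\mathcal{M}^{(0)}$ is diagonal with entries $\mu_\varsigma := \langle\Phi^{(0)}_{(\varsigma)}, V\Phi^{(0)}_{(\varsigma)}\rangle$.

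The main computation is the explicit evaluation of the four numbers $\mu_\varsigma$ in terms of Fourier coefficients of $V$. Using $\langle e^{i({\bm M}+{\bm m}{\bm k})\cdot{\bm x}}, V e^{i({\bm M}+{\bm n}{\bm k})\cdot{\bm x}}\rangle = V_{{\bm n}-{\bm m}}$ together with the Fourier-coefficient symmetries of Proposition \ref{prop:fourier-sym} (in particular the identifications $V_{(\pm 1,0)} = V_{(0,\pm 1)} =: V_{1,0}$ and $V_{(\pm 1,\pm 1)} = V_{(\pm 1,\mp 1)} =: V_{1,1}$), bookkeeping over the twelve off-diagonal pairings among the four basis plane waves collapses to
\[
\mu_\varsigma = V_{0,0} + (\varsigma + \varsigma^{-1})\, V_{1,0} + \varsigma^2\, V_{1,1},
\]
so that $\mu_{\pm 1} = V_{0,0} + V_{1,1} \pm 2 V_{1,0}$ while $\mu_{+i} = \mu_{-i} = V_{0,0} - V_{1,1}$. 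The equality $\mu_{+i} = \mu_{-i}$ is automatic and reflects $\mathcal{PC}$ intertwining $L^2_{{\bm M},+i}$ and $L^2_{{\bm M},-i}$. The hypothesis $V_{1,0}\neq \pm V_{1,1}$ is precisely what guarantees that $\mu_{\pm i}$ is distinct from each of $\mu_{+1}$ and $\mu_{-1}$, yielding the claimed splitting: a twofold-degenerate eigenvalue $E_S^\delta = E_S^{(0)} + \delta(V_{0,0} - V_{1,1}) + O(\delta^2)$ carried by $L^2_{{\bm M},+i}\oplus L^2_{{\bm M},-i}$, plus two eigenvalues supported in $L^2_{{\bm M},\pm 1}$.

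Finally, since each isotypic subspace $L^2_{{\bm M},\varsigma}$ is invariant under $H^\delta$, and $E_S^\delta$ is a simple eigenvalue of $H^\delta$ restricted to $L^2_{{\bm M},+i}$ (resp.\ $L^2_{{\bm M},-i}$), Kato-Rellich analytic perturbation theory applied separately in each isotypic subspace produces analytic branches $\delta\mapsto\Phi^\delta_{(\pm i)}\in L^2_{{\bm M},\pm i}$ of normalized eigenstates with leading term $\Phi^{(0)}_{(\pm i)}$. Expanding $\Phi^\delta_{(+i)}$ in the plane-wave basis of $L^2_{{\bm M},+i}$ and solving the $O(\delta)$ correction equation by inverting $H^0({\bm M}) - E_S^{(0)}$ on $X_0^\perp$ yields the representation \eqref{eq:small-phi-1&2} with coefficient bounds \eqref{eq:small-phi-ex}. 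The symmetry \eqref{eq:small-phi-PC} follows from $\Phi^\delta_{(-i)} = \mathcal{PC}[\Phi^\delta_{(+i)}]$ after a phase normalization, while \eqref{eq:small-phi-ref} follows from $\Sigma_1$ acting on Fourier modes within each rotation-eigenspace as in part \ref{itm:fourier-sym-4} of Proposition \ref{prop:fourier-sym}. The main obstacle is the combinatorial bookkeeping in the closed-form computation of $\mu_\varsigma$ above: one must carefully enumerate which Fourier modes of $V$ couple pairs of basis plane waves in $X_0$ and then exploit the rotation- and reflection-symmetries of the Fourier coefficients to collapse the twelve off-diagonal contributions into a single tractable expression.
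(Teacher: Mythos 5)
The theorem you are proving is not proved in this paper: it is imported verbatim from \cite[Theorem~5.3]{keller2018spectral}, so there is no in-paper proof to compare against. Taken as a standalone reconstruction, your outline is correct and follows the approach one would expect from the cited reference: identify the fourfold multiplicity of $E_S^{(0)}=|{\bm M}|^2=2\pi^2$ on $L^2_{\bm M}$ for $-\Delta$, exploit the $\mathcal{R}$-commutation of $V$ to block-diagonalize the $4\times4$ leading-order perturbation matrix over the isotypic subspaces $L^2_{{\bm M},\varsigma}$, compute the diagonal entries in closed form, and then run analytic perturbation theory separately in $L^2_{{\bm M},\pm i}$.

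Your combinatorial formula $\mu_\varsigma = V_{0,0} + (\varsigma+\varsigma^{-1})V_{1,0} + \varsigma^2 V_{1,1}$ checks out: with ${\bm n}_j=\mathcal{R}^j(0,0)$ running over $\{(0,0),(0,-1),(-1,-1),(-1,0)\}$, the differences ${\bm n}_j-{\bm n}_k$ for $j-k\equiv\pm1\pmod 4$ are precisely the rotation orbit of $(1,0)$, and for $j-k\equiv 2$ precisely the orbit of $(1,1)$, so the rotational invariance of the Fourier coefficients collapses everything to three numbers $V_{0,0},V_{1,0},V_{1,1}$. This gives $\mu_{\pm i}=V_{0,0}-V_{1,1}$, $\mu_{\pm1}=V_{0,0}+V_{1,1}\pm2V_{1,0}$, and the nondegeneracy condition $V_{1,0}\neq\pm V_{1,1}$ is exactly what separates $\mu_{\pm i}$ from both of $\mu_{\pm1}$. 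The coincidence $\mu_{+i}=\mu_{-i}$ reflects $\mathcal{PC}$ intertwining the two subspaces, as you note.

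Two small points you gloss over but should be filled in. First, invariance of each isotypic subspace $L^2_{{\bm M},\varsigma}$ under $H^\delta$ is needed before you can apply Kato--Rellich within it; this is immediate since $H^\delta$ commutes with $\mathcal{R}$ and $\mathcal{R}$ is normal, but it should be stated. Second, the symmetry $c^\delta_{(\ell)}(\Sigma_1{\bm m})=c^\delta_{(\ell)}({\bm m})^*$ does not come from $\Sigma_1$ alone (which swaps $L^2_{{\bm M},+i}$ and $L^2_{{\bm M},-i}$, hence maps $\Phi^\delta_{(+i)}$ out of its isotypic subspace); rather it is the composed antiunitary $\Sigma_1\mathcal{PC}$ (equivalently $\Sigma_1\circ\mathcal{C}$ composed with $\mathcal{P}$) that fixes $L^2_{{\bm M},+i}$ and, acting on the plane-wave expansion, permutes the labels via $\tilde\Sigma_1(m_1,m_2)=(m_2,m_1)$ while conjugating the coefficients; simplicity of $E_S^\delta$ in $L^2_{{\bm M},+i}$ together with a real phase choice then forces the claimed relation. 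With those two steps spelled out the proposal is a complete sketch.
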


Further, in \cite[Corollary 5.4]{keller2018spectral}, the parameters $\alpha_0$, $\alpha_1$, and $\alpha_2$ in \eqref{eq:def-a-par} are calculated and the nondegeneracy condition \ref{itm:quad-dgn-5} is verified.

\begin{corollary}
\label{cor:a-par-small}
{\rm (\cite[Corollary 5.4]{keller2018spectral}.)}
Consider the setup of Theorem \ref{thm:kmow-small}. In addition, assume
\begin{equation}
\label{eq:small-V-ndgn_2}
V_{1, 0} \neq 0 \quad \text{and} \quad V_{1, 1} \neq 0.
\end{equation}
Then
\begin{equation}
\label{eq:a-par-small}
\begin{aligned}
\alpha_0^\delta & = 4 \langle \partial_{x_1} \Phi_{(+i)}^\delta, \, \mathscr{R}^\delta(E_S^\delta) \, \partial_{x_1} \Phi_{(+i)}^\delta \rangle = \frac{32 \pi^2}{\delta} \frac{V_{1, 1}}{V_{1, 1}^2 - V_{1, 0}^2} + O(1), \\
\alpha_1^\delta & = 4 \langle \partial_{x_1} \Phi_{(+i)}^\delta, \, \mathscr{R}^\delta(E_S^\delta) \, \partial_{x_2} \Phi_{(-i)}^\delta \rangle = \frac{32 \pi^2}{\delta} \frac{V_{1, 1}}{V_{1, 1}^2 - V_{1, 0}^2} + O(1), \\
\alpha_2^\delta & = 4i \langle \partial_{x_1} \Phi_{(+i)}^\delta, \, \mathscr{R}^\delta(E_S^\delta) \, \partial_{x_1} \Phi_{(-i)}^\delta \rangle = \frac{32 \pi^2}{\delta} \frac{V_{1, 0}}{V_{1, 1}^2 - V_{1, 0}^2} + O(1)
\end{aligned}
\end{equation}
as ${\delta \to 0}$. In particular, ${\alpha_1^\delta \neq 0}$ and ${\alpha_2^\delta \neq 0}$ so that the nondegeneracy hypothesis \ref{itm:quad-dgn-5} is satisfied.
\end{corollary}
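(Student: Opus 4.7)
The plan is to prove Corollary \ref{cor:a-par-small} by direct substitution of the explicit small-amplitude expansions of $E_S^\delta$ and $\Phi^\delta_{(\pm i)}$ from Theorem \ref{thm:kmow-small} into the defining expressions \eqref{eq:def-a-par} for $\alpha_j^\delta$, $j=0,1,2$. The factor of $4$ in \eqref{eq:a-par-small} relative to \eqref{eq:def-a-par} compensates for the normalization in Theorem \ref{thm:kmow-small}, where $c_{(\pm i)}^\delta({\bm 0})=1$ gives $\|\Phi_{(\pm i)}^\delta\|^2 = 4 + O(\delta^2)$. The heart of the matter is identifying where the $\delta^{-1}$ singularity in the leading order comes from.

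First, I would expand each $\Phi^\delta_{(\pm i)}$ as a sum of Fourier modes and apply $\partial_{x_j}$ term-by-term. At leading order in $\delta$, the dominant contributions to $\partial_{x_j}\Phi^\delta_{(\pm i)}$ come from the four modes $e^{i({\bm M}+{\bm m}{\bm k})\cdot{\bm x}}$ with ${\bm m}\in\{(0,0),(0,-1),(-1,-1),(-1,0)\}$, which at $\delta=0$ are all eigenstates of $-\Delta$ with the common eigenvalue $E_S^{(0)} = |{\bm M}|^2 = 2\pi^2$. When projected onto ${\rm ker}_{L^2_{\bm M}}(H^\delta-E_S^\delta)^\perp$ and acted on by the resolvent $\mathscr{R}^\delta(E_S^\delta)$, these components produce the $\delta^{-1}$ scaling: the twofold $L^2_{{\bm M},\pm i}$ eigenvalue $E_S^\delta$ and the two simple $L^2_{{\bm M},\pm 1}$ eigenvalues differ only by $O(\delta)$, so the resolvent acting on the $L^2_{{\bm M},\pm 1}$ components of $\partial_{x_j}\Phi^\delta_{(\pm i)}$ contributes denominators $\sim \delta\,(V_{1,1}\pm V_{1,0})$.

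The main calculation is therefore the explicit change of basis from $\{\Phi^\delta_{(+i)}, \Phi^\delta_{(-i)}\}$ (eigenstates of $\mathcal{R}$) to the $\mathcal{R}^2$-eigenbasis $\{\Phi^\delta_{(+1)}, \Phi^\delta_{(-1)}\}$, combined with the Rayleigh–Schr\"odinger formulas for the $\delta$-perturbed eigenvalues $E_S^\delta\pm E_{\pm 1}^\delta = \mp 2\delta(V_{1,1}\pm V_{1,0}) + O(\delta^2)$. The expected hard step is the combinatorial bookkeeping: one must carefully evaluate the four inner products $\langle \Phi^\delta_{(\pm 1)},\partial_{x_j}\Phi^\delta_{(\pm i)}\rangle$ to leading order, using relations \eqref{eq:small-phi-PC}--\eqref{eq:small-phi-ex} together with the explicit $\mathcal{R}$-symmetrized form of $\Phi^\delta_{(\pm i)}$ in \eqref{eq:small-phi-1&2}, and then assemble the results via a partial-fraction decomposition $1/(V_{1,1}+V_{1,0}) \pm 1/(V_{1,1}-V_{1,0}) = 2V_{1,1}/(V_{1,1}^2-V_{1,0}^2)$ or $2V_{1,0}/(V_{1,1}^2-V_{1,0}^2)$, matching the three cases in \eqref{eq:a-par-small}.

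Finally, to conclude the nondegeneracy claim, I would observe that \eqref{eq:a-par-small} implies
\[
\alpha_1^\delta = \frac{32\pi^2 V_{1,1}}{\delta(V_{1,1}^2-V_{1,0}^2)}(1+O(\delta)),\qquad
\alpha_2^\delta = \frac{32\pi^2 V_{1,0}}{\delta(V_{1,1}^2-V_{1,0}^2)}(1+O(\delta)),
\]
so the hypotheses $V_{1,0}\neq 0$, $V_{1,1}\neq 0$, and $V_{1,0}\neq \pm V_{1,1}$ ensure $\alpha_1^\delta,\alpha_2^\delta\neq 0$ for all sufficiently small $|\delta|>0$, verifying \ref{itm:quad-dgn-5}. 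Contributions from the Fourier modes at distance $>$ leading from ${\bm M}$ remain $O(1)$ in $\delta$ because the corresponding resolvent denominators are bounded below uniformly, and so do not affect the leading singular behavior.
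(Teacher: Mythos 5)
The paper does not actually prove this statement: it is quoted as \cite[Corollary 5.4]{keller2018spectral}, and the present paper only carries out the parallel computation for the $\beta$-parameters in Appendix \ref{apx:b-par-small}. So there is no in-paper proof to compare against; you are reconstructing the KMOW argument.

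Your plan is structurally the right one. You correctly identify the origin of the $\delta^{-1}$ leading order: the resolvent $\mathscr{R}^\delta(E_S^\delta)$ acts on $\partial_{x_j}\Phi^\delta_{(\pm i)}$, whose $O(1)$ part has nonzero projections onto $L^2_{{\bm M},+1}$ and $L^2_{{\bm M},-1}$. Those eigenspaces sit at eigenvalues only $O(\delta)$ away from $E_S^\delta$; the fourfold $\delta=0$ degeneracy at $|{\bm M}|^2=2\pi^2$ splits under $\delta V$ according to the circulant $4\times4$ matrix with first row $(V_{0,0},\,V_{1,0},\,V_{1,1},\,V_{1,0})$, giving
\begin{equation*}
E^\delta_{\pm 1}-E^\delta_S \;=\; 2\delta\,(V_{1,1}\pm V_{1,0})\;+\;O(\delta^2),
\end{equation*}
and after partial fractions this produces exactly the $V_{1,1}/(V_{1,1}^2-V_{1,0}^2)$ and $V_{1,0}/(V_{1,1}^2-V_{1,0}^2)$ structure in \eqref{eq:a-par-small}. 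The nondegeneracy conclusion then follows as you describe.

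Two things to fix before you trust the constants. First, your explanation of the factor of $4$ runs the wrong way: with $c^\delta_{(\pm i)}({\bm 0})=1$, equation \eqref{eq:small-phi-1&2} gives $\lVert\Phi^\delta_{(\pm i)}\rVert^2 = 4 + O(\delta^2)$, so the \emph{normalized} states $\Phi_1,\Phi_2$ appearing in \eqref{eq:def-a-par} are $\tfrac{1}{2}\Phi^\delta_{(\pm i)}$, which puts a factor of $\tfrac{1}{4}$ (not $4$) in front of $\langle\partial_{x_1}\Phi^\delta_{(+i)},\,\mathscr{R}^\delta(E^\delta_S)\,\partial_{x_1}\Phi^\delta_{(+i)}\rangle$. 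The factor of $4$ appearing in \eqref{eq:a-par-small} therefore does not come from your stated normalization accounting; you need to reconcile this against the conventions of \cite{keller2018spectral} before the prefactor $32\pi^2$ can be trusted. Second, your partial-fraction identity drops a sign: $\tfrac{1}{V_{1,1}+V_{1,0}} - \tfrac{1}{V_{1,1}-V_{1,0}} = -\tfrac{2V_{1,0}}{V_{1,1}^2-V_{1,0}^2}$. That sign is not cosmetic here: the relative sign of the two resolvent contributions is precisely what distinguishes the $V_{1,1}$ numerators of $\alpha_0^\delta,\alpha_1^\delta$ from the $V_{1,0}$ numerator of $\alpha_2^\delta$, so in the detailed calculation you must track the \emph{signs} (and not just magnitudes) of the overlaps $\langle\Phi^\delta_{(\pm 1)},\,\partial_{x_j}\Phi^\delta_{(\pm i)}\rangle$.
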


Now, let us extend Corollary \ref{cor:a-par-small} to the parameters $\beta_0$, $\beta_1$, and $\beta_2$, defined in \eqref{eq:def-b-par}, and verify the additional nondegeneracy condition \ref{itm:quad-dgn-6}.

\begin{proposition}
\label{prop:b-par-small}
Consider the setup of Theorem \ref{thm:kmow-small}. Then
\begin{equation}
\label{eq:b-par-small}
\begin{aligned}
\beta_0^\delta & = \langle \partial_{x_1} \Phi_{(+i)}^\delta, \, \partial_{x_1} \Phi_{(+i)}^\delta \rangle = 4 \pi^2 + O(\delta^2), \\
\beta_1^\delta & = \langle \partial_{x_1} \Phi_{(+i)}^\delta, \, \partial_{x_2} \Phi_{(-i)}^\delta \rangle = 4 \pi^2 + O(\delta^2), \\
\beta_2^\delta & = i \langle \partial_{x_1} \Phi_{(+i)}^\delta, \, \partial_{x_1} \Phi_{(-i)}^\delta \rangle = O(\delta^2)
\end{aligned}
\end{equation}
as ${\delta \to 0}$. In particular, ${\beta_1^\delta \neq 0}$ so that the additional nondegeneracy hypothesis \ref{itm:quad-dgn-6} is satisfied.
\end{proposition}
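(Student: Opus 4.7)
The plan closely parallels \cite[Corollary 5.4 and Appendix C]{keller2018spectral}, which computes $\alpha_0^\delta,\alpha_1^\delta,\alpha_2^\delta$, but is technically simpler because the definitions \eqref{eq:def-b-par} of $\beta_0,\beta_1,\beta_2$ contain no resolvent; hence no Neumann-series expansion of $\mathscr{R}^\delta(E_S^\delta)$ is needed. First, I would substitute the explicit Fourier representations \eqref{eq:small-phi-1&2} of $\Phi^\delta_{(\pm i)}$ into each inner product $\beta_\ell^\delta$. Writing $\vec{K}^{(j)}_{\bm m} \equiv {\bm M}+(\mathcal{R}^j{\bm m}){\bm k}$, the derivative $\partial_{x_\ell}$ brings down the scalar $i(\vec{K}^{(j)}_{\bm m})_\ell$, and the plane waves $\{e^{i\vec{K}\cdot{\bm x}}\}_{\vec{K}\in{\bm M}+(\Z^2)^*}$ are orthonormal in $L^2_{\bm M}(\R^2/\Z^2)$. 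This collapses each $\beta_\ell^\delta$ to a finite sum over index pairs $(j,j';{\bm m},{\bm m}')$ with $\vec{K}^{(j)}_{\bm m}=\vec{K}^{(j')}_{{\bm m}'}$. Since the $\mathcal{R}$-orbits through the representatives in $\mathcal{S}$ are by construction disjoint, any such matching forces ${\bm m}={\bm m}'$. Combined with $c^\delta_{(\pm i)}({\bm 0})=1$ and $c^\delta_{(\pm i)}({\bm m})=O(\delta)$ for ${\bm m}\in\mathcal{S}^\perp$, the cross terms with exactly one index equal to ${\bm 0}$ vanish identically, and self-terms with both ${\bm m},{\bm m}'\in\mathcal{S}^\perp$ are $O(\delta^2)$. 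Hence each $\beta_\ell^\delta$ equals, up to $O(\delta^2)$, the finite sum arising from ${\bm m}={\bm m}'={\bm 0}$.

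For this remaining contribution the four vectors are $\vec{K}^{(j)}_{\bm 0}\in\{(\pi,\pi),(\pi,-\pi),(-\pi,-\pi),(-\pi,\pi)\}$, the vertices of $\mathcal{B}$. Carrying out the phase bookkeeping: for $\beta_0^\delta$ the factors $\overline{(-i)^j}(-i)^{j'}$ combine with the Kronecker-delta matching to give $|(-i)^j|^2=1$, yielding
\[
\beta_0^\delta = \sum_{j=0}^{3}(\vec{K}^{(j)}_{\bm 0})_1^2 + O(\delta^2) = 4\pi^2 + O(\delta^2).
\]
For $\beta_1^\delta$ the product of phases is $\overline{(-i)^j}\cdot i^{j'} = i^{j+j'}$; after matching indices this becomes $i^{2j}=(-1)^j$, multiplied by $(\vec{K}^{(j)}_{\bm 0})_1(\vec{K}^{(j)}_{\bm 0})_2$. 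The latter cycles through $\pi^2,-\pi^2,\pi^2,-\pi^2$, so the product is $+\pi^2$ for every $j$, giving $4\pi^2$. For $\beta_2^\delta$ the same phase pattern multiplies the constant $(\vec{K}^{(j)}_{\bm 0})_1^2=\pi^2$, producing
\[
\beta_2^\delta = i\pi^2(1-1+1-1) + O(\delta^2) = O(\delta^2).
\]
The additional nondegeneracy hypothesis \ref{itm:quad-dgn-6} then follows immediately from $\beta_1^\delta = 4\pi^2 + O(\delta^2)\ne0$ for $|\delta|$ sufficiently small.

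The computation is essentially mechanical; the main subtleties are (i) careful tracking of the phase factors $(-i)^j$ and $i^{j'}$, which drive the qualitative distinction between $\beta_1^\delta$ (non-vanishing at leading order) and $\beta_2^\delta$ (vanishing at leading order), and (ii) justifying the $O(\delta^2)$ remainder, which rests on the orbit-disjointness observation in the first paragraph. No deeper ingredients are required, since $\partial_{x_1}$ and $\partial_{x_2}$ do not shift Fourier indices and no resolvent inversion enters. An analogous example exhibiting simultaneous non-vanishing of all six parameters $\alpha_\ell^\delta,\beta_\ell^\delta$ (Example \ref{ex:a-b-par-small}) can then be produced by choosing $V_{1,0},V_{1,1}$ satisfying \eqref{eq:small-V-ndgn} and \eqref{eq:small-V-ndgn_2}, together with a sufficiently small but nonzero $\delta$.
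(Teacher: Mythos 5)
Your proposal is correct and follows essentially the same route as the paper: substitute the explicit Fourier representations \eqref{eq:small-phi-1&2}, use orthogonality of the plane waves (with orbit-disjointness forcing ${\bm m}={\bm m}'$ and $j=j'$), isolate the ${\bm m}={\bm 0}$ contribution as the leading term, and bound the remaining self-terms by $O(\delta^2)$ via \eqref{eq:small-phi-ex}. The only difference is organizational — the paper first collapses each $\beta_\ell^\delta$ to a closed-form sum over $\mathcal{S}$ and additionally uses the reflection symmetry \eqref{eq:small-phi-ref} to exhibit the sums as manifestly real, whereas you track the phase factors $(-i)^j$, $i^{j'}$ orbit element by element; both yield the same leading values $4\pi^2$, $4\pi^2$, $0$.
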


\begin{proof}
\allowdisplaybreaks
We first derive the expansion of $\beta_0^\delta$. Using \eqref{eq:small-phi-1&2}, we have the exact expression
\begin{align}
\label{eq:b0-ex}
\beta_0^\delta & = \langle \partial_{x_1} \Phi_{(+i)}^\delta, \, \partial_{x_1} \Phi_{(+i)}^\delta \rangle = \int_\Omega \partial_{x_1} \Phi^\delta_{(+i)}({\bm x})^* \, \partial_{x_1} \Phi_{(+i)}^\delta({\bm x}) \, {\rm d}{\bm x} \\
& = \int_\Omega \Bigl( \partial_{x_1} \sum_{{\bm m} \smallin \mathcal{S}} c_{(+i)}^\delta({\bm m}) \bigl( e^{i ({\bm M} + {\bm m} {\bm k}) \cdot {\bm x}} - i e^{i ({\bm M} + \mathcal{R} {\bm m} {\bm k}) \cdot {\bm x}} - e^{i ({\bm M} + \mathcal{R}^2 {\bm m} {\bm k}) \cdot {\bm x}} + i e^{i ({\bm M} + \mathcal{R}^3 {\bm m} {\bm k}) \cdot {\bm x}} \bigr) \Bigr)^{\! *} \nonumber \\
& \qquad \qquad \cdot \Bigl( \partial_{x_1} \sum_{{\bm n} \smallin \mathcal{S}} c_{(+i)}^\delta({\bm n}) \bigl( e^{i ({\bm M} + {\bm n} {\bm k}) \cdot {\bm x}} - i e^{i ({\bm M} + \mathcal{R} {\bm n} {\bm k}) \cdot {\bm x}} - e^{i ({\bm M} + \mathcal{R}^2 {\bm n} {\bm k}) \cdot {\bm x}} + i e^{i ({\bm M} + \mathcal{R}^3 {\bm n} {\bm k}) \cdot {\bm x}} \bigr) \Bigr) \, {\rm d}{\bm x} \nonumber \\
& = \sum_{{\bm m} \smallin \mathcal{S}} c^\delta_{(+i)}({\bm m})^* \, c^\delta_{(+i)}({\bm m}) \bigl( ({\bm M} + {\bm m} {\bm k})_1^2 + ({\bm M} + \mathcal{R} {\bm m} {\bm k})_1^2 + ({\bm M} + \mathcal{R}^2 {\bm m} {\bm k})_1^2 + ({\bm M} + \mathcal{R}^3 {\bm m} {\bm k})_1^2 \bigr) \nonumber \\
& = \sum_{{\bm m} \smallin \mathcal{S}} 2 \pi^2 \bigl| c_{(+i)}^\delta({\bm m}) \bigr|^2 \bigl( (1 + 2 m_1)^2 + (1 + 2 m_2)^2 \bigr). \nonumber
\end{align}
Applying the expansions in \eqref{eq:small-phi-ex} then yields
\begin{equation}
\label{eq:b0-ex_2}
\beta_0^\delta = 4 \pi^2 + O(\delta^2) \ \ \text{as} \ \ \delta \to 0.
\end{equation}

For the expansion of $\beta_1^\delta$, similarly
\begin{align}
\label{eq:b1-ex}
\beta_1^\delta & = \langle \partial_{x_1} \Phi_{(+i)}^\delta, \, \partial_{x_2} \Phi_{(-i)}^\delta \rangle = \int_\Omega \partial_{x_1} \Phi^\delta_{(+i)}({\bm x})^* \, \partial_{x_2} \Phi_{(-i)}^\delta({\bm x}) \, {\rm d}{\bm x} \\
& = \int_\Omega \Bigl( \partial_{x_1} \sum_{{\bm m} \smallin \mathcal{S}} c_{(+i)}^\delta({\bm m}) \bigl( e^{i ({\bm M} + {\bm m} {\bm k}) \cdot {\bm x}} - i e^{i ({\bm M} + \mathcal{R} {\bm m} {\bm k}) \cdot {\bm x}} - e^{i ({\bm M} + \mathcal{R}^2 {\bm m} {\bm k}) \cdot {\bm x}} + i e^{i ({\bm M} + \mathcal{R}^3 {\bm m} {\bm k}) \cdot {\bm x}} \bigr) \Bigr)^{\! *} \nonumber \\
& \qquad \qquad \cdot \Bigl( \partial_{x_2} \sum_{{\bm n} \smallin \mathcal{S}} c_{(-i)}^\delta({\bm n}) \bigl( e^{i ({\bm M} + {\bm n} {\bm k}) \cdot {\bm x}} + i e^{i ({\bm M} + \mathcal{R} {\bm n} {\bm k}) \cdot {\bm x}} - e^{i ({\bm M} + \mathcal{R}^2 {\bm n} {\bm k}) \cdot {\bm x}} - i e^{i ({\bm M} + \mathcal{R}^3 {\bm n} {\bm k}) \cdot {\bm x}} \bigr) \Bigr) \, {\rm d}{\bm x} \nonumber \\
& = \sum_{{\bm m} \smallin \mathcal{S}} c^\delta_{(+i)}({\bm m})^* \, c^\delta_{(-i)}({\bm m}) \, \bigl( ({\bm M} + {\bm m} {\bm k})_1 ({\bm M} + {\bm m} {\bm k})_2 - ({\bm M} + \mathcal{R} {\bm m} {\bm k})_1 ({\bm M} + \mathcal{R} {\bm m} {\bm k})_2 \nonumber \\
& \qquad \qquad + ({\bm M} + \mathcal{R}^2 {\bm m} {\bm k})_1 ({\bm M} + \mathcal{R}^2 {\bm m} {\bm k})_2 - ({\bm M} + \mathcal{R}^3 {\bm m} {\bm k})_1 ({\bm M} + \mathcal{R}^3 {\bm m} {\bm k})_2 \bigr). \nonumber \\[1ex]
& = \sum_{{\bm m} \smallin \mathcal{S}} 4 \pi^2 c^\delta_{(-i)}({\bm m})^2 \, (1 + 2 m_1) (1 + 2 m_2). \nonumber
\end{align}
Let us partition $\mathcal{S}$ as ${\mathcal{S} = \mathcal{S}_1 \cup \mathcal{S}_2 \cup \Sigma_1 \mathcal{S}_2}$, where
\begin{align}
\label{eq:def-S1}
\mathcal{S}_1 & \equiv \{ (m_1, m_2) \smallin \mathcal{S} : m_1 = m_2 \}, \\
\label{eq:def-S2}
\mathcal{S}_2 & \equiv \{ (m_1, m_2) \smallin \mathcal{S} : m_1 \geq 1, \ 0 \leq m_2 < m_1 \}.
\end{align}
By \eqref{eq:small-phi-ref}, the Fourier coefficients with indices ${{\bm m} \smallin \mathcal{S}_1}$ satisfy
\begin{equation}
\label{eq:small-phi-S1}
c^{\delta}_{(-i)}({\bm m})^* = c^{\delta}_{(-i)}(\Sigma_1 {\bm m}) = c^{\delta}_{(-i)}({\bm m}), \quad \text{implying} \quad c^{\delta}_{(-i)}({\bm m}) \smallin \R.
\end{equation}
Hence, isolating the terms of \eqref{eq:b1-ex} with indices ${{\bm m} \smallin \mathcal{S}_1}$, we have
\begin{equation}
\label{eq:b1-S1}
\sum_{{\bm m} \smallin \mathcal{S}_1} 4 \pi^2 c^\delta_{(-i)}({\bm m})^2 \, (1 + 2 m_1) (1 + 2 m_2) = {\rm Re} \sum_{{\bm m} \smallin \mathcal{S}_1} 4 \pi^2 c^\delta_{(-i)}({\bm m})^2 \, (1 + 2 m_1)^2.
\end{equation}
We pair the remaining terms with indices ${\bm m}$ and $\Sigma_1 {\bm m}$, ${{\bm m} \smallin \mathcal{S}_2}$, then use \eqref{eq:small-phi-ref} to obtain
\begin{align}
\label{eq:b1-S2}
& \sum_{{\bm m} \smallin \mathcal{S}_2 \cup \Sigma_1 \mathcal{S}_2} 4 \pi^2 c^\delta_{(-i)}({\bm m})^2 (1 + 2 m_1) (1 + 2 m_2) \\
& \qquad = \sum_{{\bm m} \smallin \mathcal{S}_2} 4 \pi^2 \bigl( c^\delta_{(-i)}({\bm m})^2 \, (1 + 2 m_1) (1 + 2 m_2) + c^\delta_{(-i)}(\Sigma_1 {\bm m})^2 (1 + 2 m_2) (1 + 2 m_1) \bigr) \nonumber \\
& \qquad = \sum_{{\bm m} \smallin \mathcal{S}_2} 4 \pi^2 \bigl( c^\delta_{(-i)}({\bm m})^2 + c^\delta_{(-i)}({\bm m})^*{}^2 \bigr) (1 + 2 m_1) (1 + 2 m_2) \nonumber \\
& \qquad = 2 \, {\rm Re} \sum_{{\bm m} \smallin \mathcal{S}_2} 4 \pi^2 c^\delta_{(-i)}({\bm m})^2 (1 + 2 m_1) (1 + 2 m_2). \nonumber
\end{align}
Recombining the terms, \eqref{eq:b1-ex} reads
\begin{align}
\label{eq:b1-ex_2}
\beta_1^\delta & = \sum_{{\bm m} \smallin \mathcal{S}} 4 \pi^2 c^\delta_{(-i)}({\bm m})^2 \, (1 + 2 m_1) (1 + 2 m_2) \\
& = {\rm Re} \sum_{{\bm m} \smallin \mathcal{S}_1} 4 \pi^2 c^\delta_{(-i)}({\bm m})^2 \, (1 + 2 m_1)^2 + 2 \, {\rm Re} \sum_{{\bm m} \smallin \mathcal{S}_2} 4 \pi^2 c^\delta_{(-i)}({\bm m})^2 (1 + 2 m_1) (1 + 2 m_2). \nonumber
\end{align}
Using the expansions in \eqref{eq:small-phi-ex} then yields
\begin{equation}
\label{eq:b1-ex_3}
\beta_1^\delta = 4 \pi^2 + O(\delta^2) \ \ \text{as} \ \ \delta \to 0.
\end{equation}

Finally, for the expansion of $\beta_2^\delta$, we have
\begin{align}
\label{eq:b2-ex}
\beta_2^\delta & = i \langle \partial_{x_1} \Phi^\delta_{(+i)}, \, \partial_{x_1} \Phi^\delta_{(-i)} \rangle = i \int_\Omega \partial_{x_1} \Phi^\delta_{(+i)}({\bm x})^* \, \partial_{x_1} \Phi^\delta_{(-i)}({\bm x}) \, {\rm d}{\bm x} \\
& = i \int_\Omega \Bigl( \partial_{x_1} \sum_{{\bm m} \smallin \mathcal{S}} c_{(+i)}^\delta({\bm m}) \bigl( e^{i ({\bm M} + {\bm m} {\bm k}) \cdot {\bm x}} - i e^{i ({\bm M} + \mathcal{R} {\bm m} {\bm k}) \cdot {\bm x}} - e^{i ({\bm M} + \mathcal{R}^2 {\bm m} {\bm k}) \cdot {\bm x}} + i e^{i ({\bm M} + \mathcal{R}^3 {\bm m} {\bm k}) \cdot {\bm x}} \bigr) \Bigr)^{\! *} \nonumber \\
& \qquad \qquad \cdot \Bigl( \partial_{x_1} \sum_{{\bm n} \smallin \mathcal{S}} c_{(-i)}^\delta({\bm n}) \bigl( e^{i ({\bm M} + {\bm n} {\bm k}) \cdot {\bm x}} + i e^{i ({\bm M} + \mathcal{R} {\bm n} {\bm k}) \cdot {\bm x}} - e^{i ({\bm M} + \mathcal{R}^2 {\bm n} {\bm k}) \cdot {\bm x}} - i e^{i ({\bm M} + \mathcal{R}^3 {\bm n} {\bm k}) \cdot {\bm x}} \bigr) \Bigr) \, {\rm d}{\bm x} \nonumber \\
& = i \sum_{{\bm m} \smallin \mathcal{S}} 2 \pi^2 c_{(-i)}^\delta({\bm m})^2 \bigl( (1 + 2 m_1)^2 - (1 + 2 m_2)^2 \bigr). \nonumber
\end{align}
Note that the terms of \eqref{eq:b2-ex} with indices ${{\bm m} \smallin \mathcal{S}_1}$ vanish. We pair the remaining terms with indices ${\bm m}$ and $\Sigma_1 {\bm m}$, ${{\bm m} \smallin \mathcal{S}_2}$, then use \eqref{eq:small-phi-ref} to obtain
\begin{align}
\label{eq:b2-S2}
& i \sum_{{\bm m} \smallin \mathcal{S}_2 \cup \Sigma_1 \mathcal{S}_2} 2 \pi^2 c_{(-i)}^\delta({\bm m})^2 \bigl( (1 + 2 m_1)^2 - (1 + 2 m_2)^2 \bigr) \\
& \qquad = i \sum_{{\bm m} \smallin \mathcal{S}_2} 2 \pi^2 \bigl( c_{(-i)}^\delta({\bm m})^2 \bigl( (1 + 2 m_1)^2 - (1 + 2 m_2)^2 \bigr) + c_{(-i)}^\delta(\Sigma_1{\bm m})^2 \bigl( (1 + 2 m_2)^2 - (1 + 2 m_1)^2 \bigr) \bigr) \nonumber \\
& \qquad = i \sum_{{\bm m} \smallin \mathcal{S}_2} 2 \pi^2 \bigl( c_{(-i)}^\delta({\bm m})^2 - c_{(-i)}^\delta({\bm m})^*{}^2 \bigr) \bigl( (1 + 2 m_1)^2 - (1 + 2 m_2)^2 \bigr) \nonumber \\
& \qquad = -2 \, {\rm Im} \sum_{{\bm m} \smallin \mathcal{S}_2} 2 \pi^2 c_{(-i)}^\delta({\bm m})^2 \bigl( (1 + 2 m_1)^2 - (1 + 2 m_2)^2 \bigr). \nonumber
\end{align}
Hence
\begin{equation}
\label{eq:b2-ex_2}
\beta_2^\delta = -2 \, {\rm Im} \sum_{{\bm m} \smallin \mathcal{S}_2} 2 \pi^2 c_{(-i)}^\delta({\bm m})^2 \bigl( (1 + 2 m_1)^2 - (1 + 2 m_2)^2 \bigr).
\end{equation}
Applying the expansions in \eqref{eq:small-phi-ex} then yields
\begin{equation}
\label{eq:b2-ex_3}
\beta_2^\delta = O(\delta^2) \ \ \text{as} \ \ \delta \to 0.
\end{equation}
The proof is now complete.
\end{proof}

We conclude by providing an example of a square lattice potential for which ${\alpha_\ell \neq 0}$ and ${\beta_\ell \neq 0}$, ${0 \leq \ell \leq 2}$, in the small amplitude scaling.

\begin{example}
\label{ex:a-b-par-small}
Consider the square lattice potential
\begin{equation}
\label{eq:sql-pot-b-par}
V({\bm x}) = 2 \, V_{0, 1} \bigl( \cos({\bm k}_1 \cdot {\bm x}) + \cos({\bm k}_2 \cdot {\bm x}) \bigr) + 2 \, V_{1, 1} \bigl( \cos(({\bm k}_1 + {\bm k}_2) \cdot {\bm x}) + \cos(({\bm k}_1 - {\bm k}_2) \cdot {\bm x}) \bigr)
\end{equation}
where ${V_{0, 1} \neq 0}$, ${V_{1, 1} \neq 0}$, and ${V_{0, 1} \neq \pm V_{1, 1}}$. By Corollary \ref{cor:a-par-small}, ${\alpha_\ell^\delta \neq 0}$ for ${0 \leq \ell \leq 2}$. Furthermore, by Proposition \ref{prop:b-par-small}, ${\beta_0^\delta \neq 0}$ and ${\beta_1^\delta \neq 0}$. Finally, we claim that, for $\delta$ small,
\begin{align}
\beta_2^\delta & = - \frac{4}{\pi^2} V_{0, 1} V_{1, 1} \, \delta^2 + O(\delta^3)\ne0.
\end{align}
This can be seen as follows:
\begin{align}
c^\delta_{(-i)}({\bm m}) & = \delta F_{(-i)}({\bm m}, \mu_S^\delta) + O(\delta^2) \nonumber \\
& = - \delta \frac{\mathcal{K}_{(-i)}({\bm m}; {\bm 0})}{|{\bm M} + {\bm m}{\bm k}|^2 - \mu_S^\delta} + O(\delta^2) \nonumber \\
& = - \delta \frac{1}{|{\bm M} + {\bm m}{\bm k}|^2 - |{\bm M}|^2 + O(\delta)} (V_{\bm m} + i V_{{\bm m} - \mathcal{R}{\bm 0}} - V_{{\bm m} - \mathcal{R}^2{\bm 0}} - i V_{{\bm m} - \mathcal{R}^3{\bm 0}}) + O(\delta^2) \nonumber \\
& = - \delta \frac{1}{|{\bm M} + {\bm m}{\bm k}|^2 - |{\bm M}|^2} (V_{\bm m} + i V_{{\bm m} - \mathcal{R}{\bm 0}} - V_{{\bm m} - \mathcal{R}^2{\bm 0}} - i V_{{\bm m} - \mathcal{R}^3{\bm 0}}) + O(\delta^2) \nonumber \\
& = - \frac{\delta}{4 \pi^2} \frac{1}{(m_1 + m_2) + (m_1^2 + m_2^2)} (V_{m_1, m_2} + i V_{m_1, m_2 + 1} - V_{m_1 + 1, m_2 + 1} - i V_{m_1 + 1, m_2}) + O(\delta^2). \nonumber
\end{align}
For $V$ given by \eqref{eq:sql-pot-b-par}, only the following Fourier coefficients are nonzero at $O(\delta)$:
\begin{align}
c_{(-i)}^\delta(0, 0) & = 1 \\
c_{(-i)}^\delta(1, 0) & = - \frac{\delta}{8 \pi^2} (V_{1, 0} + i V_{1, 1}) + O(\delta^2) = - \frac{\delta}{8 \pi^2} (V_{0, 1} + i V_{1, 1}) + O(\delta^2) \\
c_{(-i)}^\delta(0, 1) & = - \frac{\delta}{8 \pi^2} (V_{0, 1} - i V_{1, 1}) + O(\delta^2) = c_{(-i)}^\delta(1, 0)^* \\
c_{(-i)}^\delta(1, 1) & = - \frac{\delta}{16 \pi^2} V_{1, 1} + O(\delta^2).
\end{align}
Of these, only ${(1, 0) \smallin \mathcal{S}_2}$. Thus,
\begin{align}
\beta_2^\delta & = -2 {\rm Im} \bigl( 16 \pi^2 c_{(-i)}^\delta(1, 0)^2 + O(\delta^3) \bigr) \\
& = -2 \, {\rm Im} \Bigl( 16 \pi^2 \frac{\delta^2}{(8 \pi^2)^2} (V_{0, 1}^2 - V_{1, 1}^2 + 2 i V_{0, 1} V_{1, 1}) + O(\delta^3) \Bigr) + O(\delta^3) \nonumber \\
& = - \frac{4}{\pi^2} V_{0, 1} V_{1, 1} \, \delta^2 + O(\delta^3)
\end{align}
as ${\delta \to 0}$. By assumption, $V_{0, 1} \neq 0$ and ${V_{1, 1} \neq 0}$, and therefore ${\beta_2^\delta \neq 0}$.
\end{example}

\bigskip

\section{The matrix $\mathcal{M}(\varepsilon; {\bm \kappa}, \tau_0, {\bm \tau})$}
\label{apx:calM-pr}

\setcounter{equation}{0}
\setcounter{figure}{0}

This appendix contains a technical study of the ${2 \times 2}$ matrix-valued, analytic function $\mathcal{M}(\varepsilon; {\bm \kappa}, \tau_0, {\bm \tau})$ arising in the Lyapunov-Schmidt reduction of Section \ref{sec:set-up}. First, in Appendix \ref{apx:calM-ent}, we display explicit expressions for its entries. Next, in Appendix \ref{apx:calM-sa}, we prove self-adjointness. In Section \ref{apx:calM-0-pr}, we then somewhat simplify the leading order terms of the entries. Finally, in Section \ref{apx:pf-calM-pr}, we proceed to prove Proposition \ref{prop:calM-pr}, which extends properties observable at leading order to the exact expression for the matrix.

\subsection{Entries of $\mathcal{M}(\varepsilon; {\bm \kappa}, \tau_0, {\bm \tau})$}
\label{apx:calM-ent}

The following decomposition is convenient:
\begin{equation}
\label{eq:calM-ex_1}
\mathcal{M}(\varepsilon; {\bm \kappa}, \tau_0, {\bm \tau}) = \mathcal{M}^{(0)}(\varepsilon; {\bm \kappa}, \tau_0, {\bm \tau}) + \mathcal{M}^{(1)}(\varepsilon; {\bm \kappa}, \tau_0, {\bm \tau}).
\end{equation}
Here, the entries of $\mathcal{M}^{(0)}(\varepsilon; {\bm \kappa}, \tau_0, {\bm \tau})$ are, for $j$, ${k \smallin \{ 1, 2 \}}$,
\begin{align}
\label{eq:def-calM-0}
\mathcal{M}^{(0)}_{j, k}(\varepsilon; {\bm \kappa}, \tau_0, {\bm \tau}) & = \langle \Phi_j, \, F_0(-i\nabla) \Phi_k \rangle, \quad \text{where} \\
F_0(-i\nabla) & \equiv \varepsilon - |{\bm \kappa}|^2 - \tau_0 (\Delta - |{\bm \kappa}|^2) - |{\bm \tau}| (\nabla \cdot \sigma_\varphi \nabla - {\bm \kappa} \cdot \sigma_\varphi {\bm \kappa})=F_0(-i\nabla)^*,
\end{align}
and the entries of $\mathcal{M}^{(1)}(\varepsilon; {\bm \kappa}, \tau_0, {\bm \tau})$ are
\begin{align}
\label{eq:def-calM-1}
\mathcal{M}^{(1)}_{j, k}(\varepsilon; {\bm \kappa}, \tau_0, {\bm \tau}) & = \langle \Phi_j, \, F_1(-i\nabla)^* (1 - \mathcal{A}(\varepsilon; {\bm \kappa}, \tau_0, {\bm \tau}))^{-1} \mathscr{R}(E_S) F_1(-i\nabla) \Phi_k \rangle, \quad \text{where} \\
F_1(-i\nabla) & \equiv 2i{\bm \kappa}\cdot\nabla - \tau_0 (\Delta + 2i{\bm \kappa} \cdot \nabla) - |{\bm \tau}| (\nabla \cdot \sigma_\varphi \nabla + 2i {\bm \kappa} \cdot \sigma_\varphi \nabla) = F_1(-i\nabla)^*
\label{eq:def-F} \\
\mathcal{A}(\varepsilon; {\bm \kappa}, \tau_0, {\bm \tau}) & \equiv \mathscr{R}(E_S) F_2(-i\nabla),
\label{eq:def-calA} \\
\mathscr{R}(E_S) & \equiv \Pi^\perp (H - E_S)^{-1} \Pi^\perp,
\label{eq:def-res} \\
F_2(-i\nabla) & \equiv \varepsilon + 2i {\bm \kappa} \cdot \nabla - |{\bm \kappa}|^2 - \tau_0 (\nabla + i{\bm \kappa})^2 - |{\bm \tau}| (\nabla + i{\bm \kappa}) \cdot \sigma_\varphi (\nabla + i{\bm \kappa}) = F_2(-i\nabla)^*,
\end{align}
and $\Pi^\parallel$ and $\Pi^\perp$ are orthogonal projections onto ${\rm ker}(H - E_S)$ and ${\rm ker}(H - E_S)^\perp$, respectively, given by:
\begin{equation}
\Pi^\parallel \equiv \Phi_1 \langle \Phi_1, \, \cdot \rangle + \Phi_2 \langle \Phi_2, \, \cdot \rangle \quad \text{and} \quad \Pi^\perp  \equiv 1 - \Pi^\parallel.
\end{equation}

\subsection{$\mathcal{M}(\varepsilon; {\bm \kappa}, \tau_0, {\bm \tau})$ is self-adjoint}
\label{apx:calM-sa}

Clearly, $\mathcal{M}^{(0)}(\varepsilon; {\bm \kappa}, \tau_0, {\bm \tau})$ is self-adjoint since $F_0(-i\nabla)$ is self-adjoint. So, by \eqref{eq:calM-ex_1}, it suffices to show that $\mathcal{M}^{(1)}(\varepsilon; {\bm \kappa}, \tau_0, {\bm \tau})$ is self-adjoint. Since $F_1(-i\nabla)$ is self-adjoint, this reduces to showing that $(1 - \mathcal{A}(\varepsilon; {\bm \kappa}, \tau_0, {\bm \tau}))^{-1} \mathscr{R}(E_S)$ is self-adjoint, or equivalently,
\begin{equation}
(1 - \mathscr{R}(E_S) F_2(-i\nabla))^{-1} \mathscr{R}(E_S) = ( (1 - \mathscr{R}(E_S) F_2(-i\nabla))^{-1} \mathscr{R}(E_S) )^*.
\end{equation}
For any ${g \smallin L^2_{\bm M}}$, define:
\begin{align}
\label{eq:def-u}
u & \equiv (1 - \mathscr{R}(E_S) F_2(-i\nabla))^{-1} \mathscr{R}(E_S) g, \\
\label{eq:def-v}
v & \equiv ( (1 - \mathscr{R}(E_S) F_2(-i\nabla))^{-1} \mathscr{R}(E_S) )^* g = \mathscr{R}(E_S) (1 -  F_2(-i\nabla)\mathscr{R}(E_S))^{-1} g.
\end{align}
We claim that ${u = v}$, which implies self-adjointness of $\mathcal{M}^{(1)}(\varepsilon; {\bm \kappa}, \tau_0, {\bm \tau})$. First, ${u \in L^2_{\bm M}}$ uniquely solves
\begin{equation}
\label{eq:u-sol}
(1 - \mathscr{R}(E_S) F_2(-i\nabla)) u = \mathscr{R}(E_S) g.
\end{equation}
Applying ${\Pi^\perp (H - E_S) \Pi^\perp}$ gives 
\begin{equation}
\label{eq:u-sol-2}
\bigl( \Pi^\perp (H - E_S) \Pi^\perp - F_2(-i\nabla) \bigr) u = g.
\end{equation}
Next, ${v \in L^2_{\bm M}}$ uniquely solves 
\begin{equation}
\label{eq:v-sol}
\Pi^\perp (H - E_S) \Pi^\perp v = (1 -  F_2(-i\nabla) \mathscr{R}(E_S))^{-1} g.
\end{equation}
Applying ${(1 - F_2(-i\nabla) \mathscr{R}(E_S))}$, we obtain
\begin{equation}
\label{eq:v-sol-2}
(1 -  F_2(-i\nabla) \mathscr{R}(E_S)) \Pi^\perp (H - E_S) \Pi^\perp v = g.
\end{equation}
Finally, observe that \eqref{eq:v-sol-2} is equivalent to \eqref{eq:u-sol-2}. It follows by uniqueness that ${u = v}$.

\subsection{Simplifying $\mathcal{M}^{(0)}(\varepsilon; {\bm \kappa}, \tau_0, {\bm \tau})$}
\label{apx:calM-0-pr}

Here, we use symmetry arguments to simplify $\mathcal{M}^{(0)}(\varepsilon; {\bm \kappa}, \tau_0, {\bm \tau})$, defined in \eqref{eq:def-calM-0}.

\begin{proposition}
\label{prop:calM-0-ex}
The ${2 \times 2}$ self-adjoint matrix $\mathcal{M}^{(0)}(\varepsilon; {\bm \kappa}, \tau_0, {\bm \tau})$ is given by
\begin{equation}
\label{eq:calM-0-ex}
\mathcal{M}^{(0)}(\varepsilon; {\bm \kappa}, \tau_0, {\bm \tau}) = (\varepsilon - \beta_0 \tau_0 - |{\bm \kappa}|^2 + \tau_0 |{\bm \kappa}|^2 + |{\bm \tau}| {\bm \kappa} \cdot \sigma_\varphi {\bm \kappa}) I - |{\bm \tau}| (\beta_1 \cos(\varphi) \sigma_1 + \beta_2 \sin(\varphi) \sigma_2).
\end{equation}
The parameters $\beta_0$, $\beta_1$, ${\beta_2 \smallin \R}$ are defined in \eqref{eq:def-b-par}. Under \ref{itm:quad-dgn-6}, we assume $\beta_1$, ${\beta_2 \neq 0}$.
\end{proposition}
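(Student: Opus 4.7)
The proof reduces to a direct computation of $\mathcal{M}^{(0)}_{j,k}(\varepsilon; {\bm \kappa}, \tau_0, {\bm \tau}) = \langle \Phi_j, F_0(-i\nabla) \Phi_k \rangle$ for $j,k \in \{1,2\}$, exploiting the symmetries of $\Phi_1, \Phi_2$ (i.e. $\mathcal{R}[\Phi_1] = i\Phi_1$, $\Phi_2 = \mathcal{PC}[\Phi_1]$, and $\Sigma_1$-invariance). I would split $F_0(-i\nabla)$ into its multiplicative part $\varepsilon - |{\bm \kappa}|^2 + \tau_0 |{\bm \kappa}|^2 + |{\bm \tau}| {\bm \kappa} \cdot \sigma_\varphi {\bm \kappa}$ and its differential part $-\tau_0 \Delta - |{\bm \tau}| \nabla \cdot \sigma_\varphi \nabla$. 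Orthonormality of $\{\Phi_1, \Phi_2\}$ immediately converts the multiplicative part into the claimed scalar multiple of $I$, so the remaining task is to analyze the two differential contributions, which by integration by parts reduce to evaluating the bilinear forms $\langle \partial_{x_\ell} \Phi_j, \partial_{x_m} \Phi_k \rangle$ for $\ell, m \in \{1,2\}$.

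The key ingredient is the action of $\mathcal{R}$ on these gradients: from $\mathcal{R}[\Phi_1] = i\Phi_1$ and Lemma \ref{lem:chain}, one obtains $\mathcal{R}[\partial_{x_1} \Phi_1] = -i \partial_{x_2} \Phi_1$, $\mathcal{R}[\partial_{x_2} \Phi_1] = i \partial_{x_1} \Phi_1$, and the analogous identities for $\Phi_2$ with $i$ replaced by $-i$. Combining these with unitarity of $\mathcal{R}$ yields three structural facts: (i) $\|\partial_{x_1} \Phi_j\|^2 = \|\partial_{x_2} \Phi_j\|^2$, so $\tau_0 \|\nabla \Phi_j\|^2$ is a multiple of the identity matching the $-\beta_0 \tau_0 I$ term of \eqref{eq:calM-0-ex}; (ii) $\langle \partial_{x_1} \Phi_1, \partial_{x_1} \Phi_2 \rangle + \langle \partial_{x_2} \Phi_1, \partial_{x_2} \Phi_2 \rangle = 0$, so the $-\tau_0 \Delta$ contribution has no off-diagonal entries; and (iii) on the diagonal, $\langle \partial_{x_1} \Phi_j, \partial_{x_2} \Phi_j \rangle$ is purely imaginary and $\|\partial_{x_1} \Phi_j\|^2 - \|\partial_{x_2} \Phi_j\|^2 = 0$, so the $\nabla \cdot \sigma_\varphi \nabla$ piece vanishes on the diagonal.

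For the off-diagonal contribution of $\nabla \cdot \sigma_\varphi \nabla$, substituting $\sigma_\varphi = \cos(\varphi) \sigma_1 + \sin(\varphi) \sigma_3$ and integrating by parts gives
\[
-\langle \Phi_j, \nabla \cdot \sigma_1 \nabla \Phi_k \rangle = \langle \partial_{x_1} \Phi_j, \partial_{x_2} \Phi_k \rangle + \langle \partial_{x_2} \Phi_j, \partial_{x_1} \Phi_k \rangle
\]
and the analogous expression for $\sigma_3$. The $\mathcal{R}$-identities collapse each such off-diagonal form into a single scalar multiple of $\langle \partial_{x_1} \Phi_1, \partial_{x_2} \Phi_2 \rangle$ (for $\sigma_1$) or $\langle \partial_{x_1} \Phi_1, \partial_{x_1} \Phi_2 \rangle$ (for $\sigma_3$), which by the definitions \eqref{eq:def-b-par} are $-\beta_1$ and $i \beta_2$, respectively. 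Thus the $(1,2)$ entry of the $|{\bm \tau}|$-differential part acquires the form $-|{\bm \tau}|(\beta_1 \cos(\varphi) - i \beta_2 \sin(\varphi))$, which is precisely the $(1,2)$ entry of $-|{\bm \tau}|(\beta_1 \cos(\varphi) \sigma_1 + \beta_2 \sin(\varphi) \sigma_2)$. Reality of $\beta_1, \beta_2$ follows from $\Sigma_1$-invariance, and self-adjointness of $\mathcal{M}^{(0)}$ (Appendix~\ref{apx:calM-sa}) fixes the $(2,1)$ entry, completing the identification with \eqref{eq:calM-0-ex}.

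The main obstacle will be correctly bookkeeping the two sources of the factor $i$ that enter each calculation: the $\pm i$ eigenvalues of $\mathcal{R}$ acting on $\Phi_j$ and the $\pm i$ from the rotation of gradients. The combinations $(i)(i) = -1$ (both factors acting on $\Phi_1$, say in $\langle \partial_{x_1} \Phi_1, \partial_{x_2} \Phi_1 \rangle$) and $(i)(-i) = 1$ (one on each of $\Phi_1, \Phi_2$) are exactly what cause the same-index "diagonal" products to be purely imaginary (hence absent from the Hermitian $\sigma_1$-pairing) while the cross-index products survive. This phase bookkeeping is also what promotes the spatial Pauli matrix $\sigma_3$ in $\sigma_\varphi$ into the basis Pauli matrix $\sigma_2$ in the final expression, via the identity $\beta_2 = -i \langle \partial_{x_1} \Phi_1, \partial_{x_1} \Phi_2 \rangle$.
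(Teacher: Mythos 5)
Your proposal is correct and follows essentially the same route as the paper's proof in Appendix F.3: orthonormality handles the multiplicative part of $F_0$, and the $\Delta$, $\nabla\cdot\sigma_1\nabla$, and $\nabla\cdot\sigma_3\nabla$ contributions are identified entry-by-entry from the $\mathcal{R}$-eigenvalues $\pm i$ of $\Phi_1,\Phi_2$ together with $\Sigma_1$-invariance (the paper conjugates the second-order operators by $R$ via $R\sigma_1 R^{\mathsf{T}}=-\sigma_1$, $R\sigma_3 R^{\mathsf{T}}=-\sigma_3$ rather than rotating the individual gradients, but this is the same computation). The one point left implicit in your sketch is the factor of $2$ relating $\langle\partial_{x_1}\Phi_1,\partial_{x_2}\Phi_2\rangle$ and its companions to the $\beta$-parameters of \eqref{eq:def-b-par}, a normalization the paper's own write-up also treats loosely.
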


\begin{proof}
We begin with the expression \eqref{eq:def-calM-0}. Note, from properties \ref{itm:quad-dgn-2}, \ref{itm:quad-dgn-3}, and from \eqref{eq:L2M-dsum}, that $\{ \Phi_1, \Phi_2 \}$ is an orthonormal basis of ${\rm ker}(H - E_S)$. Hence, for $j$, ${k \smallin \{ 1, 2 \}}$,
\begin{equation}
\label{eq:calM-0-ex_2}
\mathcal{M}^{(0)}(\varepsilon; {\bm \kappa}, \tau_0, {\bm \tau})_{j, k} = (\varepsilon - |{\bm \kappa}|^2 + \tau_0 |{\bm \kappa}|^2 + |{\bm \tau}| {\bm \kappa} \cdot \sigma_\varphi {\bm \kappa}) I_{j, k} - \tau_0 \langle \Phi_j, \, \Delta \Phi_k \rangle - |{\bm \tau}| \langle \Phi_j, \, \nabla \cdot \sigma_\varphi \nabla \Phi_k \rangle.
\end{equation}
It therefore suffices to show
\begin{equation}
\label{eq:calM-0-ex_3}
\langle \Phi_j, \, \Delta \Phi_k \rangle = \beta_0 I_{j, k} \quad \text{and} \quad \langle \Phi_j, \, \nabla \cdot \sigma_\varphi \nabla \Phi_k \rangle = \beta_1 \cos(\varphi) (\sigma_1)_{j, k} + \beta_2 \sin(\varphi) (\sigma_2)_{j, k}.
\end{equation}
For convenience, we define ${2 \times 2}$ self-adjoint matrices $A$, $B$, and $C$ with entries
\begin{equation}
\label{eq:def-A-B-C}
A_{j, k} \equiv \langle \Phi_j, \, \Delta \Phi_k \rangle, \quad B_{j, k} \equiv \langle \Phi_j, \, \nabla \cdot \sigma_1 \nabla \Phi_k \rangle, \quad \text{and} \quad C_{j, k} \equiv \langle \Phi_j, \, \nabla \cdot \sigma_3 \nabla \Phi_k \rangle.
\end{equation}
To derive the equalities in \eqref{eq:calM-0-ex_3}, it therefore suffices to show that
\begin{equation}
\label{eq:A-B-C-ex}
A = \beta_0 I, \quad B = \beta_1 \sigma_1, \quad \text{and} \quad C = \beta_2 \sigma_2.
\end{equation}

\begin{enumerate}
\item First, we prove the equality in \eqref{eq:A-B-C-ex} regarding $A$:
\begin{enumerate}
\item Since $\mathcal{R}$ is unitary and $\mathcal{R}[\Phi_\ell] = i^{2\ell - 1} \Phi_\ell$ for $\ell \smallin \{ 1, \, 2\}$,
\begin{align}
A_{j, k} & = \langle \Phi_j, \, \Delta \Phi_k \rangle = \langle \mathcal{R}[\Phi_j], \, \mathcal{R}[\Delta \Phi_k] \rangle = \langle \mathcal{R}[\Phi_j], \, \Delta \mathcal{R}[\Phi_k] \rangle \\
& = \langle (i^{2j - 1} \Phi_j), \, \Delta (i^{2k - 1} \Phi_k) \rangle = i^{2(k - j)} \langle \Phi_j, \, \Delta \Phi_k \rangle = (-1)^{k - j} A_{j, k}. \nonumber
\end{align}
Thus ${A_{1, 2} = -A_{1, 2}}$, which implies ${A_{1, 2} = 0}$.
\item Since $\Sigma_1$ is unitary, ${\Sigma_1[\Phi_1] = \Phi_2}$, and ${\sigma_1^2 = I}$,
\begin{align}
A_{1, 1} & = \langle \Phi_1, \, \Delta \Phi_1 \rangle = \langle \Sigma_1[\Phi_1], \, \Sigma_1[\Delta \Phi_1] \rangle \\
& = \langle \Sigma_1[\Phi_1], \, \nabla \cdot \sigma_1^2 \nabla \Sigma_1[\Phi_1] \rangle = \langle \Phi_2, \, \Delta \Phi_2 \rangle = A_{2, 2}. \nonumber
\end{align}
(Alternately, $\mathcal{PC}$ can be used.)
\item It follows that ${A = A_{1, 1} I}$, where
\begin{align}
A_{1, 1} & = \langle \Phi_1, \, \Delta \Phi_1 \rangle = \langle \Phi_1, \, (\partial_{x_1}^2 + \partial_{x_2}^2) \Phi_1 \rangle = \langle \Phi_1, \, \partial_{x_1}^2 \Phi_1 \rangle + \langle \Phi_1, \, \partial_{x_2}^2 \Phi_1 \rangle \\
& = - \langle \partial_{x_1} \Phi_1, \, \partial_{x_1} \Phi_1 \rangle - \langle \partial_{x_2} \Phi_1, \, \partial_{x_2} \Phi_1 \rangle. \nonumber
\end{align}
Since $\mathcal{R}$ is unitary, ${\mathcal{R} \circ \partial_{x_2} = \partial_{x_1} \circ \mathcal{R}}$, and ${\mathcal{R}[\Phi_1] = i \Phi_1}$,
\begin{align}
\langle \partial_{x_2} \Phi_1, \, \partial_{x_2} \Phi_1 \rangle & = \langle \mathcal{R}[\partial_{x_2} \Phi_1], \, \mathcal{R}[\partial_{x_2} \Phi_1] \rangle = \langle \partial_{x_1} \mathcal{R}[\Phi_1], \, \partial_{x_1} \mathcal{R}[\Phi_1] \rangle \\
& = \langle \partial_{x_1} (i \Phi_1), \, \partial_{x_1} (i \Phi_1) \rangle = \langle \partial_{x_1} \Phi_1, \, \partial_{x_1} \Phi_1 \rangle, \nonumber
\end{align}
which implies ${A_{1, 1} = -2 \langle \partial_{x_1} \Phi_1, \, \partial_{x_1} \Phi_1 \rangle = -2 \lVert \partial_{x_1} \Phi_1 \rVert = \beta_0}$. Thus ${A = \beta_0 I}$.
\end{enumerate}

\item Next, we prove the equality in \eqref{eq:A-B-C-ex} regarding $B$:
\begin{enumerate}
\item Since $\mathcal{R}$ is unitary, ${\mathcal{R}[\Phi_\ell] = i^{2\ell - 1} \Phi_\ell}$ for ${\ell \smallin \{ 1, \, 2\}}$, and ${R \sigma_1 R^\mathsf{T} = - \sigma_1}$,
\begin{align}
B_{\ell, \ell} & = \langle \Phi_\ell, \, \nabla \cdot \sigma_1 \nabla \Phi_\ell \rangle = \langle \mathcal{R}[\Phi_\ell], \, \mathcal{R}[\nabla \cdot \sigma_1 \nabla \Phi_\ell] \rangle = \langle \mathcal{R}[\Phi_\ell], \, \nabla \cdot R \sigma_1 R^\mathsf{T} \nabla \mathcal{R}[\Phi_\ell] \rangle \\
& = \langle (i^{2\ell - 1} \Phi_\ell), \, - \nabla \cdot \sigma_1 \nabla (i^{2\ell - 1} \Phi_\ell) \rangle = - \langle \Phi_\ell, \, \nabla \cdot \sigma_1 \nabla \Phi_\ell \rangle = - B_{\ell, \ell}. \nonumber
\end{align}
Therefore $B_{1, 1}$, ${B_{2, 2} = 0}$.
\item Since $\Sigma_1$ is unitary, ${\Sigma_1[\Phi_1] = \Phi_2}$, ${\Sigma_1[\Phi_2] = \Phi_1}$, and ${\sigma_1^3 = \sigma_1}$,
\begin{align}
B_{1, 2}^* & = \langle \Phi_1, \, \nabla \cdot \sigma_1 \nabla \Phi_2 \rangle^* = \langle \Sigma_1[\Phi_1], \, \Sigma_1[\nabla \cdot \sigma_1 \nabla \Phi_2] \rangle^* \\
& = \langle \Sigma_1[\Phi_1], \, \nabla \cdot \sigma_1^3 \nabla \Sigma_1[\Phi_2] \rangle^* = \langle \Phi_2, \, \nabla \cdot \sigma_1 \nabla \Phi_1 \rangle^* = B_{2, 1}^* = B_{1, 2}. \nonumber
\end{align}
Therefore ${B_{1, 2} \smallin \R}$. 
\item It follows that ${B = B_{1, 2} \sigma_1}$, where
\begin{equation}
B_{1, 2} = \langle \Phi_1, \, \nabla \cdot \sigma_1 \nabla \Phi_2 \rangle_{L^2_{\bm M}} = \langle \Phi_1, \, 2 \partial_{x_1} \partial_{x_2} \Phi_2 \rangle_{L^2_{\bm M}} = - 2 \langle \partial_{x_1} \Phi_1, \, \partial_{x_2} \Phi_2 \rangle_{L^2_{\bm M}} = \beta_1.
\end{equation}
Thus ${B = \beta_1 \sigma_1}$.
\end{enumerate}

\item Finally, we prove the equality in \eqref{eq:A-B-C-ex} regarding $C$:
\begin{enumerate}
\item Since $\mathcal{R}$ is unitary, $\mathcal{R}[\Phi_\ell] = i^{2\ell - 1} \Phi_\ell$ for $\ell \smallin \{ 1, \, 2\}$, and $R \sigma_3 R^\mathsf{T} = - \sigma_3$
\begin{align}
C_{\ell, \ell} & = \langle \Phi_\ell, \, \nabla \cdot \sigma_3 \nabla \Phi_\ell \rangle = \langle \mathcal{R}[\Phi_\ell], \, \mathcal{R}[\nabla \cdot \sigma_3 \nabla \Phi_\ell] \rangle = \langle \mathcal{R}[\Phi_\ell], \, \nabla \cdot R \sigma_3 R^\mathsf{T} \nabla \mathcal{R}[\Phi_\ell] \rangle \\
& = \langle (i^{2\ell - 1} \Phi_\ell), \, - \nabla \cdot \sigma_3 \nabla (i^{2\ell - 1} \Phi_\ell) \rangle_{L^2_{\bm M}} = - \langle \Phi_\ell, \, \nabla \cdot \sigma_3 \nabla \Phi_\ell \rangle = - C_{\ell, \ell}. \nonumber
\end{align}
Therefore $C_{1, 1}$, ${C_{2, 2} = 0}$.
\item Since $\Sigma_1$ is unitary, $\Sigma_1[\Phi_1] = \Phi_2$, $\Sigma_1[\Phi_2] = \Phi_1$, and $\sigma_1 \sigma_3 \sigma_1 = -\sigma_3$
\begin{align}
C_{1, 2}^* & = \langle \Phi_1, \, \nabla \cdot \sigma_3 \nabla \Phi_2 \rangle^* = \langle \Sigma_1[\Phi_1], \, \Sigma_1[\nabla \cdot \sigma_3 \nabla \Phi_2] \rangle^* \\
& = \langle \Sigma_1[\Phi_1], \, \nabla \cdot \sigma_1 \sigma_3 \sigma_1 \nabla \Sigma_1[\Phi_2] \rangle^* = \langle \Phi_2, \, -\nabla \cdot \sigma_3 \nabla \Phi_1 \rangle^* = - C_{2, 1}^* = - C_{1, 2}. \nonumber
\end{align}
Therefore $C_{1, 2} = i \tilde{C}_{1, 2}$ where $\tilde{C}_{1, 2} \smallin \R$.
\item Hence $C = -\tilde{C}_{1, 2} \sigma_2$, where
\begin{align}
i \tilde{C}_{1, 2} & = \langle \Phi_1, \, \nabla \cdot \sigma_3 \nabla \Phi_2 \rangle = \langle \Phi_1, \, ( \partial_{x_1}^2 - \partial_{x_2}^2) \Phi_2 \rangle = \langle \Phi_1, \, \partial_{x_1}^2 \Phi_2 \rangle - \langle \Phi_1, \, \partial_{x_2}^2 \Phi_2 \rangle \\
& = - \langle \partial_{x_1} \Phi_1, \, \partial_{x_1} \Phi_2 \rangle + \langle \partial_{x_2} \Phi_1, \, \partial_{x_2} \Phi_2 \rangle. \nonumber
\end{align}
Since $\mathcal{R}$ is unitary, $\mathcal{R} \circ \partial_{x_2} = \partial_{x_1} \circ \mathcal{R}$, $\mathcal{R}[\Phi_1] = i \Phi_1$, and $\mathcal{R}[\Phi_2] = -i \Phi_2$,
\begin{align}
\langle \partial_{x_2} \Phi_1, \, \partial_{x_2} \Phi_2 \rangle_{L^2_{\bm M}} & = \langle \mathcal{R}[\partial_{x_2} \Phi_1], \, \mathcal{R}[\partial_{x_2} \Phi_2] \rangle = \langle \partial_{x_1} \mathcal{R}[\Phi_1], \, \partial_{x_1} \mathcal{R}[\Phi_2] \rangle \\
& = \langle \partial_{x_1} (i \Phi_1), \, \partial_{x_1} (- i \Phi_2) \rangle = - \langle \partial_{x_1} \Phi_1, \, \partial_{x_1} \Phi_2 \rangle, \nonumber
\end{align}
which implies ${i \tilde{C}_{1, 2} = -2 \langle \partial_{x_1} \Phi_1, \, \partial_{x_1} \Phi_2 \rangle_{L^2_{\bm M}} = -i \beta_2}$.
\end{enumerate}
Therefore $C = \beta_2 \sigma_2$.
\end{enumerate}
This concludes the proof of Proposition \ref{prop:calM-0-ex}.
\end{proof}

\begin{proposition}
\label{prop:calM-0-pr}
The ${2 \times 2}$ self-adjoint matrix $\mathcal{M}^{(0)}(\varepsilon; {\bm \kappa}, \tau_0, {\bm \tau})$ satisfies the following:
\begin{enumerate}
\item The entries of $\mathcal{M}^{(0)}(\varepsilon; {\bm \kappa}, \tau_0, {\bm \tau})$ are analytic (i.e., polynomials) in $(\varepsilon; {\bm \kappa}, \tau_0, {\bm \tau})$.
\item \label{itm:calM-0-ev} ${\mathcal{M}^{(0)}(\varepsilon; {\bm \kappa}, \tau_0, {\bm \tau}) = \mathcal{M}^{(0)}(\varepsilon; -{\bm \kappa}, \tau_0, {\bm \tau})}$.
\item \label{itm:calM-0-PC} ${\mathcal{M}^{(0)}(\varepsilon; {\bm \kappa}, \tau_0, {\bm \tau})_{1, 1} = \mathcal{M}^{(0)}(\varepsilon; {\bm \kappa}, \tau_0, {\bm \tau})_{2, 2}}$.
\end{enumerate}
\end{proposition}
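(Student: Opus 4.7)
The plan is to obtain all three claims as immediate consequences of the explicit formula for $\mathcal{M}^{(0)}(\varepsilon; {\bm \kappa}, \tau_0, {\bm \tau})$ derived in Proposition \ref{prop:calM-0-ex}, namely
\begin{equation*}
\mathcal{M}^{(0)}(\varepsilon; {\bm \kappa}, \tau_0, {\bm \tau}) = (\varepsilon - \beta_0 \tau_0 - |{\bm \kappa}|^2 + \tau_0 |{\bm \kappa}|^2 + |{\bm \tau}| {\bm \kappa} \cdot \sigma_\varphi {\bm \kappa}) I - |{\bm \tau}| (\beta_1 \cos(\varphi) \sigma_1 + \beta_2 \sin(\varphi) \sigma_2).
\end{equation*}
With this identity in hand, each conclusion reduces to a one-line inspection, so the substance of the proof has already been carried out in Proposition \ref{prop:calM-0-ex}; Proposition \ref{prop:calM-0-pr} is essentially a corollary.

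First, for analyticity, I would recall the polar parameterization ${\bm \tau} = (\tau_1, \tau_3)$ with $\tau_1 = |{\bm \tau}| \cos(\varphi)$ and $\tau_3 = |{\bm \tau}| \sin(\varphi)$, from \eqref{eq:def-tau-polar}. Every occurrence of $|{\bm \tau}|$ in the explicit formula above appears paired with either $\cos(\varphi)$ or $\sin(\varphi)$: the term $|{\bm \tau}| {\bm \kappa} \cdot \sigma_\varphi {\bm \kappa}$ expands as $\tau_1 \, {\bm \kappa} \cdot \sigma_1 {\bm \kappa} + \tau_3 \, {\bm \kappa} \cdot \sigma_3 {\bm \kappa}$, and the off-diagonal piece is $\beta_1 \tau_1 \sigma_1 + \beta_2 \tau_3 \sigma_2$. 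Consequently each entry is a polynomial in $(\varepsilon, \kappa_1, \kappa_2, \tau_0, \tau_1, \tau_3)$, and in particular analytic.

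Second, for the $\mathcal{P}$- (or $\mathcal{C}$-)symmetry \ref{itm:calM-0-ev}, note that the formula depends on ${\bm \kappa}$ only through the quantities $|{\bm \kappa}|^2$ and ${\bm \kappa}\cdot \sigma_\varphi {\bm \kappa}$, both of which are quadratic forms and hence invariant under ${\bm \kappa} \mapsto -{\bm \kappa}$. Replacing ${\bm \kappa}$ by $-{\bm \kappa}$ thus leaves every entry of $\mathcal{M}^{(0)}$ fixed.

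Third, for the $\mathcal{PC}$-symmetry \ref{itm:calM-0-PC}, observe that the Pauli matrices $\sigma_1$ and $\sigma_2$ have vanishing diagonal entries, while $I$ has equal diagonal entries. Hence, reading off the $(1,1)$ and $(2,2)$ entries of the explicit formula, the contributions of the $\sigma_1$ and $\sigma_2$ terms drop out and both diagonal entries equal $\varepsilon - \beta_0 \tau_0 - |{\bm \kappa}|^2 + \tau_0 |{\bm \kappa}|^2 + |{\bm \tau}| {\bm \kappa} \cdot \sigma_\varphi {\bm \kappa}$. There is no real obstacle here; the only work, already done in Proposition \ref{prop:calM-0-ex}, is the symmetry-based simplification that eliminates $\sigma_3$ and off-diagonal $I$ contributions and identifies the coefficients with $\beta_0, \beta_1, \beta_2$.
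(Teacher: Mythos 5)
Your proposal is correct and takes essentially the same route as the paper, which also derives all three claims directly from the explicit formula \eqref{eq:calM-0-ex} of Proposition \ref{prop:calM-0-ex}. The extra detail you supply (polynomiality in $(\tau_1,\tau_3)$ via the polar parameterization, evenness of the quadratic forms in ${\bm \kappa}$, and the vanishing diagonals of $\sigma_1,\sigma_2$) is exactly the inspection the paper leaves implicit.
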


\begin{proof}
These follow immediately from the expression \eqref{eq:calM-0-ex} of Proposition \ref{prop:calM-0-ex}.
\end{proof}

\subsection{Proof of Proposition \ref{prop:calM-pr}}
\label{apx:pf-calM-pr}

We now proceed to prove the conclusions of Proposition \ref{prop:calM-pr}, which extend the properties of $\mathcal{M}^{(0)}(\varepsilon; {\bm \kappa}, \tau_0, {\bm \tau})$, listed in Proposition \ref{prop:calM-0-pr}, to $\mathcal{M}(\varepsilon; {\bm \kappa}, \tau_0, {\bm \tau})$. 

\subsubsection{Proof of Proposition \ref{prop:calM-pr}, part \ref{itm:calM-an}}
\label{apx:calM-an}

{\it Proof of Proposition \ref{prop:calM-pr}, part \ref{itm:calM-an}.}
The entries of $\mathcal{M}^{(0)}(\varepsilon; {\bm \kappa}, \tau_0, {\bm \tau})$ are analytic (i.e., polynomials) in $(\varepsilon; {\bm \kappa}, \tau_0, {\bm \tau})$. Therefore, it suffices to show that the entries of $\mathcal{M}^{(1)}(\varepsilon; {\bm \kappa}, \tau_0, {\bm \tau})$ are analytic in $(\varepsilon; {\bm \kappa}, \tau_0, {\bm \tau})$.

This is due to the Neumann series expansion of inverted operator in \eqref{eq:phi-1_2}.

\subsubsection{Proof of Proposition \ref{prop:calM-pr}, part \ref{itm:calM-ev}}
\label{apx:calM-ev}

{\it Proof of Proposition \ref{prop:calM-pr}, part \ref{itm:calM-ev}.}
By the expansion \eqref{eq:calM-ex_1} and part \ref{itm:calM-0-ev} of Proposition \ref{prop:calM-0-pr}, it suffices to show that
\begin{equation}
\mathcal{M}^{(1)}(\varepsilon, -{\bm \kappa}; \tau_0, {\bm \tau}) = \mathcal{M}^{(1)}(\varepsilon, {\bm \kappa}; \tau_0, {\bm \tau}).
\end{equation}
We use the $\mathcal{P}$ operator. First,
\begin{align}
& \mathcal{P} \circ \mathcal{A}(\varepsilon, {\bm \kappa}; \tau_0, {\bm \tau}) \\
& \qquad = \mathcal{P} \circ \mathscr{R}(E_S) (\varepsilon + 2i{\bm \kappa} \cdot \nabla - |{\bm \kappa}|^2 + \tau_0(\Delta + 2i {\bm \kappa} \cdot \nabla - |{\bm \kappa}|^2) + |{\bm \tau}|(\nabla\cdot\sigma_\varphi\nabla + 2i{\bm \kappa}\cdot\sigma_\varphi\nabla - {\bm \kappa}\cdot\sigma_\varphi{\bm \kappa})) \nonumber \\
& \qquad = \mathscr{R}(E_S) (\varepsilon - 2i{\bm \kappa} \cdot \nabla - |{\bm \kappa}|^2 + \tau_0(\Delta - 2i {\bm \kappa} \cdot \nabla - |{\bm \kappa}|^2) + |{\bm \tau}|(\nabla\cdot\sigma_\varphi\nabla - 2i{\bm \kappa}\cdot\sigma_\varphi\nabla - {\bm \kappa}\cdot\sigma_\varphi{\bm \kappa})) \circ \mathcal{P} \nonumber \\
& \qquad = \mathcal{A}(\varepsilon, -{\bm \kappa}; \tau_0, {\bm \tau}) \circ \mathcal{P}, \nonumber
\end{align}
which implies
\begin{equation}
\mathcal{P} \circ (1 - \mathcal{A}(\varepsilon, {\bm \kappa}; \tau_0, {\bm \tau}))^{-1} = (1 - \mathcal{A}(\varepsilon, -{\bm \kappa}; \tau_0, {\bm \tau}))^{-1} \circ \mathcal{P}.
\end{equation}
Then, since $\mathcal{P}$ is unitary, we have, for $j$, $k \smallin \{ 1, 2 \}$,
\begin{align}
& \mathcal{M}^{(1)}_{j, k}(\varepsilon, {\bm \kappa}; \tau_0, {\bm \tau}) \\
& \qquad = \langle (2i{\bm \kappa}\cdot\nabla + \tau_0 (\Delta + 2i{\bm \kappa} \cdot \nabla) + |{\bm \tau}| ( \nabla\cdot\sigma_\varphi\nabla + 2i{\bm \kappa}\cdot\sigma_\varphi\nabla)) \Phi_j, \nonumber \\ 
& \qquad \qquad \quad (1 - \mathcal{A}(\varepsilon, {\bm \kappa}; \tau_0, {\bm \tau}))^{-1} \mathscr{R}(E_S) (2i{\bm \kappa} \cdot \nabla + \tau_0(\Delta + 2i {\bm \kappa} \cdot \nabla) + |{\bm \tau}|(\nabla\cdot\sigma_\varphi\nabla + 2i{\bm \kappa}\cdot\sigma_\varphi\nabla)) \Phi_k \rangle \nonumber \\
& \qquad = \langle \mathcal{P} [(2i{\bm \kappa}\cdot\nabla + \tau_0 (\Delta + 2i{\bm \kappa} \cdot \nabla) + |{\bm \tau}| ( \nabla\cdot\sigma_\varphi\nabla + 2i{\bm \kappa}\cdot\sigma_\varphi\nabla)) \Phi_j], \nonumber \\
& \qquad \qquad \quad \mathcal{P}[(1 - \mathcal{A}(\varepsilon, {\bm \kappa}; \tau_0, {\bm \tau}))^{-1} \mathscr{R}(E_S) (2i{\bm \kappa} \cdot \nabla + \tau_0(\Delta + 2i {\bm \kappa} \cdot \nabla) + |{\bm \tau}|(\nabla\cdot\sigma_\varphi\nabla + 2i{\bm \kappa}\cdot\sigma_\varphi\nabla)) \Phi_k] \rangle \nonumber \\
& \qquad = \langle (-2i{\bm \kappa}\cdot\nabla + \tau_0 (\Delta - 2i{\bm \kappa} \cdot \nabla) + |{\bm \tau}| ( \nabla\cdot\sigma_\varphi\nabla - 2i{\bm \kappa}\cdot\sigma_\varphi\nabla)) \mathcal{P}[\Phi_j], \nonumber \\
& \qquad \qquad \quad (1 - \mathcal{A}(\varepsilon, -{\bm \kappa}; \tau_0, {\bm \tau}))^{-1} \mathscr{R}(E_S) (-2i{\bm \kappa} \cdot \nabla + \tau_0(\Delta - 2i {\bm \kappa} \cdot \nabla) + |{\bm \tau}|(\nabla\cdot\sigma_\varphi\nabla - 2i{\bm \kappa}\cdot\sigma_\varphi\nabla)) \mathcal{P}[\Phi_k] \rangle \nonumber \\
& \qquad = \langle (-2i{\bm \kappa}\cdot\nabla + \tau_0 (\Delta - 2i{\bm \kappa} \cdot \nabla) + |{\bm \tau}| ( \nabla\cdot\sigma_\varphi\nabla - 2i{\bm \kappa}\cdot\sigma_\varphi\nabla)) (-\Phi_j), \nonumber \\
& \qquad \qquad \quad (1 - \mathcal{A}(\varepsilon, -{\bm \kappa}; \tau_0, {\bm \tau}))^{-1} \mathscr{R}(E_S) (-2i{\bm \kappa} \cdot \nabla + \tau_0(\Delta - 2i {\bm \kappa} \cdot \nabla) + |{\bm \tau}|(\nabla\cdot\sigma_\varphi\nabla - 2i{\bm \kappa}\cdot\sigma_\varphi\nabla)) (-\Phi_k) \rangle \nonumber \\
& \qquad = \mathcal{M}^{(1)}_{j, k}(\varepsilon, -{\bm \kappa}; \tau_0, {\bm \tau}), \nonumber
\end{align}
where we used that ${\mathcal{P}[\Phi_\ell] = - \Phi_\ell}$, ${\ell \smallin \{ 1, 2 \}}$.

An argument using the $\mathcal{C}$ operator is also possible. \qed

\subsubsection{Proof of Proposition \ref{prop:calM-pr}, part \ref{itm:calM-PC}}
\label{apx:calM-PC}

{\it Proof of Proposition \ref{prop:calM-pr}, part \ref{itm:calM-PC}.}
By the expansion \eqref{eq:calM-ex_1} and part \ref{itm:calM-0-PC} of Proposition \ref{prop:calM-0-pr}, it suffices to show that 
\begin{equation}
\mathcal{M}^{(1)}_{1, 1}(\varepsilon, {\bm \kappa}; \tau_0, {\bm \tau}) = \mathcal{M}^{(1)}_{2, 2}(\varepsilon, {\bm \kappa}; \tau_0, {\bm \tau}).
\end{equation}
We use the $\mathcal{PC}$ operator. First,
\begin{align}
& \mathcal{PC} \circ \mathcal{A}(\varepsilon, {\bm \kappa}; \tau_0, {\bm \tau}) \\
& \qquad = \mathcal{PC} \circ \mathscr{R}(E_S) (\varepsilon + 2i{\bm \kappa} \cdot \nabla - |{\bm \kappa}|^2 + \tau_0 (\Delta + 2i {\bm \kappa} \cdot \nabla - |{\bm \kappa}|^2) + |{\bm \tau}|(\nabla \cdot \sigma_\varphi \nabla + 2i{\bm \kappa} \cdot \sigma_\varphi \nabla - {\bm \kappa} \cdot \sigma_\varphi {\bm \kappa})) \nonumber \\
& \qquad = \mathscr{R}(E_S) (\varepsilon + 2i{\bm \kappa} \cdot \nabla - |{\bm \kappa}|^2 + \tau_0 (\Delta + 2i {\bm \kappa} \cdot \nabla - |{\bm \kappa}|^2) + |{\bm \tau}|(\nabla \cdot \sigma_\varphi \nabla + 2i{\bm \kappa} \cdot \sigma_\varphi \nabla - {\bm \kappa} \cdot \sigma_\varphi {\bm \kappa})) \circ \mathcal{PC} \nonumber \\
& \qquad = \mathcal{A}(\varepsilon, {\bm \kappa}; \tau_0, {\bm \tau}) \circ \mathcal{PC}, \nonumber
\end{align}
which implies
\begin{equation}
\label{eq:calA-PC}
\mathcal{PC} \circ (1 - \mathcal{A}(\varepsilon, {\bm \kappa}; \tau_0, {\bm \tau}))^{-1} = (1 - \mathcal{A}(\varepsilon, {\bm \kappa}; \tau_0, {\bm \tau}))^{-1} \circ \mathcal{PC}.
\end{equation} 
Then, since $\mathcal{P}$ is unitary (while $\mathcal{C}$ is not), we have, for $j$, ${k \smallin \{ 1, 2 \}}$,
\begin{align}
& \mathcal{M}^{(1)}_{1, 1}(\varepsilon, {\bm \kappa}; \tau_0, {\bm \tau}) \\
& \qquad = \langle (2i{\bm \kappa}\cdot\nabla + \tau_0 (\Delta + 2i{\bm \kappa} \cdot \nabla) + |{\bm \tau}| ( \nabla\cdot\sigma_\varphi\nabla + 2i{\bm \kappa}\cdot\sigma_\varphi\nabla)) \Phi_1, \nonumber \\
& \qquad \qquad \quad (1 - \mathcal{A}(\varepsilon, {\bm \kappa}; \tau_0, {\bm \tau}))^{-1} \mathscr{R}(E_S) (2i{\bm \kappa} \cdot \nabla + \tau_0(\Delta + 2i {\bm \kappa} \cdot \nabla) + |{\bm \tau}|(\nabla\cdot\sigma_\varphi\nabla + 2i{\bm \kappa}\cdot\sigma_\varphi\nabla)) \Phi_1 \rangle \nonumber \\
& \qquad = \langle (\mathcal{PC}[(2i{\bm \kappa}\cdot\nabla + \tau_0 (\Delta + 2i{\bm \kappa} \cdot \nabla) + |{\bm \tau}| ( \nabla\cdot\sigma_\varphi\nabla + 2i{\bm \kappa}\cdot\sigma_\varphi\nabla)) \Phi_1], \nonumber \\
& \qquad \qquad \quad \mathcal{PC}[(1 - \mathcal{A}(\varepsilon, {\bm \kappa}; \tau_0, {\bm \tau}))^{-1} \mathscr{R}(E_S) (2i{\bm \kappa} \cdot \nabla + \tau_0(\Delta + 2i {\bm \kappa} \cdot \nabla) + |{\bm \tau}|(\nabla\cdot\sigma_\varphi\nabla + 2i{\bm \kappa}\cdot\sigma_\varphi\nabla)) \Phi_1] \rangle^* \nonumber \\
& \qquad = \langle (2i{\bm \kappa}\cdot\nabla + \tau_0 (\Delta + 2i{\bm \kappa} \cdot \nabla) + |{\bm \tau}| ( \nabla\cdot\sigma_\varphi\nabla + 2i{\bm \kappa}\cdot\sigma_\varphi\nabla)) \mathcal{PC}[\Phi_1], \nonumber \\
& \qquad \qquad \quad (1 - \mathcal{A}(\varepsilon, {\bm \kappa}; \tau_0, {\bm \tau}))^{-1} \mathscr{R}(E_S) (2i{\bm \kappa} \cdot \nabla + \tau_0(\Delta + 2i {\bm \kappa} \cdot \nabla) + |{\bm \tau}|(\nabla\cdot\sigma_\varphi\nabla + 2i{\bm \kappa}\cdot\sigma_\varphi\nabla)) \mathcal{PC}[\Phi_1] \rangle^* \nonumber \\
& \qquad = \langle (2i{\bm \kappa}\cdot\nabla + \tau_0 (\Delta + 2i{\bm \kappa} \cdot \nabla) + |{\bm \tau}| ( \nabla\cdot\sigma_\varphi\nabla + 2i{\bm \kappa}\cdot\sigma_\varphi\nabla)) (\Phi_2), \nonumber \\
& \qquad \qquad \quad (1 - \mathcal{A}(\varepsilon, {\bm \kappa}; \tau_0, {\bm \tau}))^{-1} \mathscr{R}(E_S) (2i{\bm \kappa} \cdot \nabla + \tau_0(\Delta + 2i {\bm \kappa} \cdot \nabla) + |{\bm \tau}|(\nabla\cdot\sigma_\varphi\nabla + 2i{\bm \kappa}\cdot\sigma_\varphi\nabla)) (\Phi_2) \rangle^* \nonumber \\
& \qquad = \mathcal{M}^{(1)}_{2, 2}(\varepsilon, {\bm \kappa}; \tau_0, {\bm \tau})^* = \mathcal{M}^{(1)}_{2, 2}(\varepsilon, {\bm \kappa}; \tau_0, {\bm \tau}), \nonumber
\end{align}
where, in the last equality, we used that $\mathcal{M}^{(1)}(\varepsilon, {\bm \kappa}; \tau_0, {\bm \tau})$ is self-adjoint; see Appendix \ref{apx:calM-sa}. \qed

\subsubsection{Proof of Proposition \ref{prop:calM-pr}, part \ref{itm:calM-ex-1}}
\label{apx:calM-ex-1}

{\it Proof of Proposition \ref{prop:calM-pr}, part \ref{itm:calM-ex-1}.} 
From \eqref{eq:def-calM-1}, we have
\begin{align}
\label{eq:calM-1-ex}
& \mathcal{M}^{(1)}_{j, k}(\varepsilon; {\bm \kappa}, \tau_0, {\bm \tau}) \\
& \qquad = \langle (2i{\bm \kappa}\cdot\nabla + \tau_0 (\Delta + 2i{\bm \kappa} \cdot \nabla) + |{\bm \tau}| ( \nabla\cdot\sigma_\varphi\nabla + 2i{\bm \kappa}\cdot\sigma_\varphi\nabla)) \Phi_j, \nonumber \\ 
& \qquad \qquad \quad (1 - \mathcal{A}(\varepsilon, {\bm \kappa}; \tau_0, {\bm \tau}))^{-1} \mathscr{R}(E_S) (2i{\bm \kappa} \cdot \nabla + \tau_0(\Delta + 2i {\bm \kappa} \cdot \nabla) + |{\bm \tau}|(\nabla\cdot\sigma_\varphi\nabla + 2i{\bm \kappa}\cdot\sigma_\varphi\nabla)) \Phi_k \rangle \nonumber \\
& \qquad = \langle 2i {\bm \kappa} \cdot \nabla \Phi_j, \,
\mathscr{R}(E_S) (2i {\bm \kappa} \cdot \nabla) \Phi_k \rangle + \langle (2i{\bm \kappa}\cdot\nabla \Phi_j, \,((1 - \mathcal{A}(\varepsilon, {\bm \kappa}; \tau_0, {\bm \tau}))^{-1} - 1) \mathscr{R}(E_S) (2i{\bm \kappa} \cdot \nabla) \Phi_k \rangle \nonumber \\
& \qquad \qquad \quad + \langle (2i{\bm \kappa}\cdot\nabla \Phi_j, \, (1 - \mathcal{A}(\varepsilon, {\bm \kappa}; \tau_0, {\bm \tau}))^{-1} \mathscr{R}(E_S) (\tau_0(\Delta + 2i {\bm \kappa} \cdot \nabla) + |{\bm \tau}|(\nabla\cdot\sigma_\varphi\nabla + 2i{\bm \kappa}\cdot\sigma_\varphi\nabla)) \Phi_k \rangle \nonumber \\
& \qquad \qquad \quad + \langle (\tau_0 (\Delta + 2i{\bm \kappa} \cdot \nabla) + |{\bm \tau}| ( \nabla\cdot\sigma_\varphi\nabla + 2i{\bm \kappa}\cdot\sigma_\varphi\nabla)) \Phi_j, \nonumber \\ 
& \qquad \qquad \qquad \qquad (1 - \mathcal{A}(\varepsilon, {\bm \kappa}; \tau_0, {\bm \tau}))^{-1} \mathscr{R}(E_S) (2i{\bm \kappa} \cdot \nabla + \tau_0(\Delta + 2i {\bm \kappa} \cdot \nabla) + |{\bm \tau}|(\nabla\cdot\sigma_\varphi\nabla + 2i{\bm \kappa}\cdot\sigma_\varphi\nabla)) \Phi_k \rangle. \nonumber
\end{align}
The results of Section 4.1.3 of \cite{keller2018spectral} imply that the first term of \eqref{eq:calM-1-ex} is
\begin{equation}
\langle 2i {\bm \kappa} \cdot \nabla \Phi_j, \, \mathscr{R}(E_S) (2i {\bm \kappa} \cdot \nabla) \Phi_k \rangle = \alpha_0 |{\bm \kappa}|^2 I_{j, k} + (\alpha_1 {\bm \kappa} \cdot \sigma_1 {\bm \kappa}) (\sigma_1)_{j, k} + (\alpha_2 {\bm \kappa} \cdot \sigma_3 {\bm \kappa}) (\sigma_2)_{j, k},
\end{equation}
where the parameters $\alpha_0$, $\alpha_1$, and $\alpha_2$ are defined in \eqref{eq:def-a-par}. Using the expansion \eqref{eq:calM-ex_1}, then substituting \eqref{eq:calM-0-ex} from Proposition \ref{prop:calM-0-ex} and \eqref{eq:calM-1-ex} above, we obtain
\begin{align}
& \mathcal{M}(\varepsilon; {\bm \kappa}, \tau_0, {\bm \tau}) \\
& \qquad = (\varepsilon - \beta_0 \tau_0 - (1 - \alpha_0) |{\bm \kappa}|^2) I - (\beta_1 \tau_1 - \alpha_1 {\bm \kappa} \cdot \sigma_1 {\bm \kappa}) \sigma_1 - (\beta_2 \tau_3 - \alpha_2 {\bm \kappa} \cdot \sigma_3 {\bm \kappa}) \sigma_2 + \tilde{\mathcal{M}}(\varepsilon; {\bm \kappa}, \tau_0, {\bm \tau}) \nonumber \\
& \qquad = \varepsilon I - H^{\bm M}_{\rm eff}({\bm \kappa}; \tau_0, {\bm \tau}) + \tilde{\mathcal{M}}(\varepsilon; {\bm \kappa}, \tau_0, {\bm \tau}), \nonumber
\end{align}
where $H^{\bm M}_{\rm eff}({\bm \kappa}; \tau_0, {\bm \tau})$ is defined in \eqref{eq:M-eff} of Remark \ref{rmk:M-srf-eff}. The ${2 \times 2}$ matrix $\tilde{\mathcal{M}}(\varepsilon; {\bm \kappa}, \tau_0, {\bm \tau})$ has entries
\begin{align}
\label{eq:def-t-calM}
& \tilde{\mathcal{M}}_{j, k}(\varepsilon; {\bm \kappa}, \tau_0, {\bm \tau}) \\
& \qquad \equiv (-\tau_0 |{\bm \kappa}|^2 - |{\bm \tau}| {\bm \kappa} \cdot \sigma_\varphi {\bm \kappa}) I_{j, k} + \langle (2i{\bm \kappa}\cdot\nabla \Phi_j, \,((1 - \mathcal{A}(\varepsilon, {\bm \kappa}; \tau_0, {\bm \tau}))^{-1} - 1) \mathscr{R}(E_S) (2i{\bm \kappa} \cdot \nabla) \Phi_k \rangle \nonumber \\
& \qquad \qquad \quad + \langle (2i{\bm \kappa}\cdot\nabla \Phi_j, \, (1 - \mathcal{A}(\varepsilon, {\bm \kappa}; \tau_0, {\bm \tau}))^{-1} \mathscr{R}(E_S) (\tau_0(\Delta + 2i {\bm \kappa} \cdot \nabla) + |{\bm \tau}|(\nabla\cdot\sigma_\varphi\nabla + 2i{\bm \kappa}\cdot\sigma_\varphi\nabla)) \Phi_k \rangle \nonumber \\
& \qquad \qquad \quad + \, \langle (\tau_0 (\Delta + 2i{\bm \kappa} \cdot \nabla) + |{\bm \tau}| ( \nabla\cdot\sigma_\varphi\nabla + 2i{\bm \kappa}\cdot\sigma_\varphi\nabla)) \Phi_j, \nonumber \\ 
& \qquad \qquad \qquad \qquad (1 - \mathcal{A}(\varepsilon, {\bm \kappa}; \tau_0, {\bm \tau}))^{-1} \mathscr{R}(E_S) (2i{\bm \kappa} \cdot \nabla + \tau_0(\Delta + 2i {\bm \kappa} \cdot \nabla) + |{\bm \tau}|(\nabla\cdot\sigma_\varphi\nabla + 2i{\bm \kappa}\cdot\sigma_\varphi\nabla)) \Phi_k \rangle. \nonumber
\end{align}
Standard approximations then yield
\begin{equation}
\label{eq:t-calM-bd_3}
\tilde{\mathcal{M}}_{j, k}(\varepsilon; {\bm \kappa}, \tau_0, {\bm \tau}) = O( |\varepsilon| |{\bm \kappa}|^2 + |{\bm \kappa}|^4 + |{\bm \kappa}|^2 |\tau_0| + |{\bm \kappa}|^2 |{\bm \tau}| + |\tau_0|^2 + |{\bm \tau}|^2) \ \ \text{as} \ \ |\varepsilon|, |{\bm \kappa}|, |\tau_0|, |{\bm \tau}| \to 0.
\end{equation}
The proof is now complete. \qed

\bigskip

\section{Proof of Proposition \ref{prop:calD-pr}, part \ref{itm:calD-ex}}
\label{apx:pf-calD-ex}

\setcounter{equation}{0}
\setcounter{figure}{0}

{\it Proof of Proposition \ref{prop:calD-pr}, part \ref{itm:calD-ex}.} First, using ${\nu = \varepsilon - \beta_0 \tau_0 - (1 - \alpha_0)|{\bm \kappa}|^2}$ and the expansion from part \ref{itm:calM-ex-1} of Proposition \ref{prop:calM-pr}, we have
\begin{align}
\mathcal{D}(\nu; {\bm \kappa}, \tau_0, {\bm \tau}) & = (\nu + \tilde{\mathcal{M}}_{1, 1}(\nu + \beta_0 \tau_0 + (1 - \alpha_0) |{\bm \kappa}|^2; {\bm \kappa}, \tau_0, {\bm \tau}))^2 \\
& \qquad - |\omega({\bm \kappa}; \tau_0, {\bm \tau}) + \tilde{\mathcal{M}}_{1, 2}(\nu + \beta_0 \tau_0 + (1 - \alpha_0) |{\bm \kappa}|^2; {\bm \kappa}, \tau_0, {\bm \tau})|^2 \nonumber \\
& = \nu^2 + 2 \nu \tilde{\mathcal{M}}_{1, 1}(\nu + \beta_0 \tau_0 + (1 - \alpha_0) |{\bm \kappa}|^2; {\bm \kappa}, \tau_0, {\bm \tau}) + \tilde{\mathcal{M}}_{1, 1}(\nu + \beta_0 \tau_0 + (1 - \alpha_0) |{\bm \kappa}|^2; {\bm \kappa}, \tau_0, {\bm \tau})^2 \nonumber \\
& \qquad - |\omega({\bm \kappa}; \tau_0, {\bm \tau})|^2 - 2 \, {\rm Re}(\omega({\bm \kappa}; \tau_0, {\bm \tau}) \tilde{\mathcal{M}}_{1, 2}(\nu + \beta_0 \tau_0 + (1 - \alpha_0) |{\bm \kappa}|^2; {\bm \kappa}, \tau_0, {\bm \tau})^*) \nonumber \\
& \qquad - |\tilde{\mathcal{M}}_{1, 2}(\nu + \beta_0 \tau_0 + (1 - \alpha_0) |{\bm \kappa}|^2; {\bm \kappa}, \tau_0, {\bm \tau})|^2 \nonumber \\
& = \mathcal{D}^{(0)}(\nu; {\bm \kappa}, \tau_0, {\bm \tau}) + \mathcal{D}^{(1)}(\nu; {\bm \kappa}, \tau_0, {\bm \tau}),
\end{align}
where $\mathcal{D}^{(0)}(\nu; {\bm \kappa}, \tau_0, {\bm \tau})$ is defined in \eqref{eq:def-calD-0}. The analytic function $\mathcal{D}^{(1)}(\nu; {\bm \kappa}, \tau_0, {\bm \tau})$ is given by
\begin{align}
& \mathcal{D}^{(1)}(\nu; {\bm \kappa}, \tau_0, {\bm \tau}) \\
& \qquad = 2 \nu \tilde{\mathcal{M}}_{1, 1}(\nu + \beta_0 \tau_0 + (1 - \alpha_0) |{\bm \kappa}|^2; {\bm \kappa}, \tau_0, {\bm \tau}) + \tilde{\mathcal{M}}_{1, 1}(\nu + \beta_0 \tau_0 + (1 - \alpha_0) |{\bm \kappa}|^2; {\bm \kappa}, \tau_0, {\bm \tau})^2 \nonumber \\
& \qquad \qquad - 2 \, {\rm Re}(\omega({\bm \kappa}; \tau_0, {\bm \tau}) \tilde{\mathcal{M}}_{1, 2}(\nu + \beta_0 \tau_0 + (1 - \alpha_0) |{\bm \kappa}|^2; {\bm \kappa}, \tau_0, {\bm \tau})^*) \nonumber \\
& \qquad \qquad - |\tilde{\mathcal{M}}_{1, 2}(\nu + \beta_0 \tau_0 + (1 - \alpha_0) |{\bm \kappa}|^2; {\bm \kappa}, \tau_0, {\bm \tau})|^2. \nonumber
\end{align}
The bounds \eqref{eq:til-calM-bd} then yield
\begin{align}
\mathcal{D}^{(1)}(\nu; {\bm \kappa}, \tau_0, {\bm \tau}) & = O(|\nu|^2 |{\bm \kappa}|^2 + |\nu| |{\bm \kappa}|^4 + |\nu| |{\bm \kappa}|^2 |\tau_0| + |\nu| |{\bm \kappa}|^2 |{\bm \tau}| + |\nu| |\tau_0|^2 + |\nu| |{\bm \tau}|^2 \\
& \qquad \quad + |{\bm \kappa}|^6 + |{\bm \kappa}|^4 |\tau_0| + |{\bm \kappa}|^4 |{\bm \tau}| + |{\bm \kappa}|^2 |\tau_0|^2 + |{\bm \kappa}|^2 |{\bm \tau}|^2 + |\tau_0|^3 + |{\bm \tau}|^3) \nonumber
\end{align}
as $|\nu|$, $|{\bm \kappa}|$, $|\tau_0|$, ${|{\bm \tau}| \to 0}$. 

Next, differentiating the power series expansion of $\mathcal{D}^{(1)}(\nu; {\bm \kappa}, \tau_0, {\bm \tau})$ term-by-term with respect to $\nu$ yields
\begin{equation}
\partial_\nu \mathcal{D}^{(1)}(\nu; {\bm \kappa}, \tau_0, {\bm \tau}) = O(|\nu| |{\bm \kappa}|^2 + |{\bm \kappa}|^4 + |{\bm \kappa}|^2 |\tau_0| + |{\bm \kappa}|^2 |{\bm \tau}| + |\tau_0|^2 + |{\bm \tau}|^2)
\end{equation}
as $|\nu|$, $|{\bm \kappa}|$, $|\tau_0|$, ${|{\bm \tau}| \to 0}$. \qed

\bigskip

\section{Proof of Proposition \ref{prop:gs-pr}}
\label{apx:gs-pr}

\setcounter{equation}{0}
\setcounter{figure}{0}

\subsection{Proof of Proposition \ref{prop:gs-pr}, part \ref{itm:g-0}}

{\it Proof of Proposition \ref{prop:gs-pr}, part \ref{itm:g-0}.} 
First, \eqref{eq:g-0-bd} follows immediately from \eqref{eq:t-calM-bd_2}, \eqref{eq:def-t-kap}, and \eqref{eq:def-gs}.

Next, to show \eqref{eq:f-bd}, we recall from \eqref{eq:def-gs} that
\begin{equation}
g_0(\varepsilon, {\bm \kappa}_1; \tau_0, 0) = \tilde{\mathcal{M}}_{1, 1}(\varepsilon; {\bm 0}, \tau_0, {\bm 0}).
\end{equation}
Using \eqref{eq:def-t-calM}, which displays the entries of $\tilde{\mathcal{M}}(\varepsilon; {\bm \kappa}, \tau_0, {\bm \tau})$, we obtain
\begin{equation}
\tilde{\mathcal{M}}_{1, 1}(\varepsilon; {\bm 0}, \tau_0, {\bm 0}) = \tau_0^2 \langle \Delta \Phi_1, \, (1 - \mathcal{A}(\varepsilon, {\bm 0}; \tau_0, {\bm 0}))^{-1} \mathscr{R}(E_S) \Delta \Phi_1 \rangle.
\end{equation}
Note that ${\bm \kappa}_1$ does not enter due to the ansatz \eqref{eq:def-t-kap}. From the definition \eqref{eq:def-calA} of $\mathcal{A}(\varepsilon; {\bm \kappa}, \tau_0, {\bm \tau})$, we have
\begin{equation}
\mathcal{A}(\varepsilon; {\bm 0}, 0, {\bm 0}) = \mathscr{R}(E_S)(\varepsilon + \tau_0 \Delta), \quad \text{where} \quad \lVert \mathscr{R}(E_S)(\varepsilon + \tau_0 \Delta) \rVert = O(|\varepsilon| + |\tau_0|) \ \ \text{as} \ \ |\varepsilon|, \, |\tau_0| \to 0.
\end{equation}
Using the Neumann series representation of $(1 - \mathcal{A}(\varepsilon; {\bm 0}, 0, {\bm 0}))^{-1}$ (and preserving leading order dependence on $\varepsilon$), we have
\begin{equation}
\lVert (1 - \mathcal{A}(\varepsilon; {\bm 0}, 0, {\bm 0}))^{-1} \rVert = O(1 + |\varepsilon| + |\tau_0|).
\end{equation}
It follows that
\begin{equation}
f(\varepsilon; \tau_0) \equiv g_0(\varepsilon, {\bm \kappa}_1; \tau_0, 0) = O(|\tau_0|^2 + |\varepsilon| |\tau_0|^2 + |\tau_0|^3) \ \ \text{as} \ \ |\varepsilon|, \, |\tau_0| \to 0.
\end{equation}
This establishes \eqref{eq:f-bd}. \qed

\subsection{Proof of Proposition \ref{prop:gs-pr}, part \ref{itm:g-1&2}}

{\it Proof of Proposition \ref{prop:gs-pr}, part \ref{itm:g-1&2}.}
We claim, for ${\ell \smallin \{ 1, \, 2 \}}$, that ${g_\ell(\varepsilon, {\bm \kappa}_1; \tau_0, 0) = 0}$. By analyticity, it then follows that ${s \mapsto g_\ell(\varepsilon, {\bm \kappa}_1; \tau_0, s)}$ has a zero of some finite order, greater than or equal to one, at ${s = 0}$. In particular, there exists an analytic mapping ${s \mapsto g_{\ell, 1}(\varepsilon, {\bm \kappa}_1; \tau_0, s)}$ such that ${g_\ell(\varepsilon, {\bm \kappa}_1; \tau_0, s) = s \, g_{\ell, 1}(\varepsilon, {\bm \kappa}_1; \tau_0, s)}$, which is \eqref{eq:g-1&2-ex}. 

That ${g_\ell(\varepsilon, {\bm \kappa}_1; \tau_0, 0) = 0}$ follows from a symmetry argument: From \eqref{eq:def-gs}, observe that
\begin{equation}
g_1(\varepsilon, {\bm \kappa}_1; \tau_0, 0) = {\rm Re}(\tilde{\mathcal{M}}_{1, 2}(\varepsilon, {\bm 0}; \tau_0, {\bm 0})) \quad \text{and} \quad g_2(\varepsilon, {\bm \kappa}_1; \tau_0, 0) = -{\rm Im}(\tilde{\mathcal{M}}_{1, 2}(\varepsilon, {\bm 0}; \tau_0, {\bm 0})).
\end{equation}
It therefore suffices to evaluate $\tilde{\mathcal{M}}_{1, 2}(\varepsilon, {\bm 0}; \tau_0, {\bm 0})$. We use the commutation
\begin{equation}\label{eq:commR}
\mathcal{R} \circ \mathcal{A}(\varepsilon, {\bm 0}; \tau_0, {\bm 0}) = \mathcal{R} \circ \mathscr{R}(E_S) (\varepsilon + \tau_0 \Delta) = \mathscr{R}(E_S) (\varepsilon + \tau_0 \Delta) \circ \mathcal{R} = \mathcal{A}(\varepsilon, {\bm 0}; \tau_0, {\bm 0}) \circ \mathcal{R},
\end{equation}
where $\mathcal{R}$ is introduced in Definition \ref{def:syms} and $\mathcal{A}(\varepsilon; {\bm \kappa}, \tau_0, {\bm \tau})$ is defined in \eqref{eq:def-calA}. Hence, by application of \eqref{eq:commR} to the terms of the Neumann series,
\begin{equation}
\mathcal{R} \circ (1 - \mathcal{A}(\varepsilon, {\bm 0}; \tau_0, {\bm 0}))^{-1} = (1 - \mathcal{A}(\varepsilon, {\bm 0}; \tau_0, {\bm 0}))^{-1} \circ \mathcal{R}.
\end{equation}
We now use \eqref{eq:def-t-calM}, which displays an explicit expression for the entries of $\tilde{\mathcal{M}}(\varepsilon; {\bm \kappa}, \tau_0, {\bm \tau})$. Since $\mathcal{R}$ is unitary and, for ${k \smallin \{ 1, \, 2\}}$, $\mathcal{R}[\Phi_k] = i^{2 k - 1} \Phi_k$, it follows that
\begin{align}
\tilde{\mathcal{M}}_{1, 2}(\varepsilon, {\bm 0}; \tau_0, {\bm 0}) & = \langle \tau_0 \Delta \Phi_1, \, \bigl(1 - \mathcal{A}(\varepsilon, {\bm 0}; \tau_0, {\bm 0})\bigr)^{-1} \mathscr{R}(E_S) (\tau_0 \Delta) \Phi_2 \rangle \\
& = \langle \mathcal{R}[\tau_0 \Delta \Phi_1], \, \mathcal{R}\bigl[\bigl(1 - \mathcal{A}(\varepsilon, {\bm 0}; \tau_0, {\bm 0})\bigr)^{-1} \mathscr{R}(E_S) (\tau_0 \Delta) \Phi_2\bigr] \rangle \nonumber \\
& = \langle \tau_0 \Delta (i \Phi_1), \, \bigl(1 - \mathcal{A}(\varepsilon, {\bm 0}; \tau_0, {\bm 0})\bigr)^{-1} \mathscr{R}(E_S) (\tau_0 \Delta) (-i \Phi_2) \rangle \nonumber \\
& = (-i)^2 \langle \tau_0 \Delta \Phi_1, \, \bigl(1 - \mathcal{A}(\varepsilon_0, {\bm 0}; \tau_0, {\bm 0})\bigr)^{-1} \mathscr{R}(E_S) (\tau_0 \Delta) \Phi_2 \rangle = - \tilde{\mathcal{M}}(\varepsilon, {\bm 0}; \tau_0, {\bm 0})_{1, 2}, \nonumber
\end{align}
and therefore $\tilde{\mathcal{M}}(\varepsilon; {\bm 0}, \tau_0, {\bm 0})_{1, 2} = 0$. This establishes \eqref{eq:g-1&2-ex}.

To prove \eqref{eq:g-1&2-1-ex}, we expand $g_{\ell, 1}(\varepsilon, {\bm \kappa}_1; \tau_0, s)$ in power series about ${(\tau_0, s) = (0, 0)}$ to obtain:
\begin{equation}
g_{\ell, 1}(\varepsilon, {\bm \kappa}_1; \tau_0, s) = g_{\ell, 1}^{(0)}(\varepsilon, {\bm \kappa}_1) + g_{\ell}^{(1)}(\varepsilon, {\bm \kappa}_1; \tau_0, s),
\end{equation}
where
\begin{equation}
g_{\ell, 1}^{(0)}(\varepsilon, {\bm \kappa}_1) \equiv g_{\ell, 1}(\varepsilon, {\bm \kappa}_1; 0, 0) \quad \text{and} \quad g_{\ell, 1}^{(1)}(\varepsilon, {\bm \kappa}_1; \tau_0, s)  = O(|\tau_0| + |s|) \ \ \ \text{as} \ \ |\tau_0|, \, |s| \to 0.
\end{equation}
It remains to bound $\smash{g_{\ell, 1}^{(0)}(\varepsilon, {\bm \kappa}_1)}$. We differentiate \eqref{eq:g-1&2-ex} with respect to $s$ and evaluate at ${s = 0}$ to obtain 
\begin{equation}
\label{eq:g-l-1-0-tay}
g_{\ell, 1}^{(0)}(\varepsilon, {\bm \kappa}_1) = \partial_s g_\ell(\varepsilon, {\bm \kappa}_1; 0, 0),\ \ell=1,2.
\end{equation}
To compute the right-hand side of \eqref{eq:g-l-1-0-tay}, observe from \eqref{eq:def-gs} that 
\begin{equation}
\label{eq:g-l-tau0-0}
g_1(\varepsilon, {\bm \kappa}_1; 0, s) = {\rm Re}(\tilde{\mathcal{M}}(\varepsilon, \sqrt{s} {\bm \kappa}_1; 0, s \hat{\bm \tau}(\varphi))_{1, 2})  \quad \text{and} \quad g_2(\varepsilon, {\bm \kappa}_1; 0, s) = -{\rm Im}(\tilde{\mathcal{M}}(\varepsilon, \sqrt{s} {\bm \kappa}_1; 0, s \hat{\bm \tau}(\varphi))_{1, 2}).
\end{equation}
It therefore suffices to consider $\tilde{\mathcal{M}}(\varepsilon, \sqrt{s} {\bm \kappa}_1; 0, s \hat{\bm \tau}(\varphi))_{1, 2}$ and then take real and imaginary parts. From the expression \eqref{eq:def-t-calM} for the entries of $\tilde{\mathcal{M}}(\varepsilon; {\bm \kappa}, \tau_0, {\bm \tau})$, we expand ${s \mapsto \tilde{\mathcal{M}}(\varepsilon; \sqrt{s} {\bm \kappa}_1, 0, s \hat{\bm \tau}(\varphi))}$ about ${s = 0}$:
\begin{equation}
\label{eq:tM12exp}
\tilde{\mathcal{M}}(\varepsilon; \sqrt{s} {\bm \kappa}_1, 0, s \hat{\bm \tau}(\varphi)) = s\ \langle 2 i {\bm \kappa}_1 \cdot \nabla \Phi_1, \, ((1 - \mathcal{A}(\varepsilon; {\bm 0}, 0, {\bm 0}))^{-1} - 1) \mathscr{R}(E_S)(2i {\bm \kappa}_1 \cdot \nabla) \Phi_2 \rangle + O(|s|^2).
\end{equation}
Therefore,
\begin{equation}
\label{eq:DstM12exp}
\partial_s\tilde{\mathcal{M}}(\varepsilon; \sqrt{s} {\bm \kappa}_1, 0, s \hat{\bm \tau}(\varphi)) =  \langle 2 i {\bm \kappa}_1 \cdot \nabla \Phi_1, \, ((1 - \mathcal{A}(\varepsilon; {\bm 0}, 0, {\bm 0}))^{-1} - 1) \mathscr{R}(E_S)(2i {\bm \kappa}_1 \cdot \nabla) \Phi_2 \rangle + O(|s|).
\end{equation}
From the definition \eqref{eq:def-calA} of $\mathcal{A}(\varepsilon; {\bm \kappa}, \tau_0, {\bm \tau})$, we have
$\mathcal{A}(\varepsilon; {\bm 0}, 0, 0) = \mathscr{R}(E_S) \varepsilon$, where $\lVert \mathscr{R}(E_S) \varepsilon \rVert = O(|\varepsilon|)$  as  $|\varepsilon| \to 0$.
Expanding $(1 - \mathcal{A}(\varepsilon; {\bm 0}, 0, {\bm 0}))^{-1}$ in a Neumann series  we have
$\lVert (1 - \mathcal{A}(\varepsilon; {\bm 0}, 0, {\bm 0}))^{-1} - 1\rVert = O(|\varepsilon|)$.
It follows that
\begin{align}
\label{eq:dstM0}
\partial_s\tilde{\mathcal{M}}(\varepsilon; \sqrt{s} {\bm \kappa}_1, 0, s \hat{\bm \tau}(\varphi))\Big|_{s=0} = \langle 2 i {\bm \kappa}_1 \cdot \nabla \Phi_1, \, ((1 - \mathcal{A}(\varepsilon; {\bm 0}, 0, {\bm 0}))^{-1} - 1) \mathscr{R}(E_S)(2i {\bm \kappa}_1 \cdot \nabla) \Phi_2 \rangle = O(|\varepsilon|) \ \ \text{as} \ \ |\varepsilon| \to 0.
\end{align}
Finally equations \eqref{eq:g-l-1-0-tay}, \eqref{eq:g-l-tau0-0} and \eqref{eq:dstM0} imply
\begin{equation}
g_{\ell, 1}^{(0)}(\varepsilon, {\bm \kappa}_1) = O(|\varepsilon|) \ \ \text{as} \ \ |\varepsilon| \to 0.
\end{equation}
The proof is now complete. \qed

\bigskip

\section{Proofs of Propositions \ref{prop:M-dgn-quad} and \ref{prop:M-dgn-dir}}

\setcounter{equation}{0}
\setcounter{figure}{0}

In this appendix, we prove Propositions \ref{prop:M-dgn-quad} and \ref{prop:M-dgn-dir}, which together classify the band structure degeneracies of the deformed Schr\"{o}dinger operator $H^{\tau_0, {\bm \tau}}$ for ${(\tau_0, {\bm \tau})}$ in a sufficiently small neighborhood of ${(0, {\bm 0})}$. First, in Appendix \ref{apx:M-dgn-quad}, we prove Proposition \ref{prop:M-dgn-quad}, which classifies ${(E_S(\tau_0), {\bm M})}$ (arising in the case ${{\bm \tau} = {\bm 0}}$) as a quadratic band degeneracy point. Then, in Appendix \ref{apx:M-dgn-dir}, we prove Proposition \ref{prop:M-dgn-dir}, which classifies ${(E_D(\tau_0, |{\bm \tau}|; \varphi), {\bm D}^\pm(\tau_0, |{\bm \tau}|; \varphi))}$ (arising in the case ${{\bm \tau} \neq {\bm 0}}$) as Dirac points. \\

\noindent {\bf N.B.} Here, we shall again omit the dependence of certain expressions on the fixed parameter $\varphi$.

\subsection{Proof of Proposition \ref{prop:M-dgn-quad}}
\label{apx:M-dgn-quad}

We here prove Proposition \ref{prop:M-dgn-quad}, which asserts that ${(E_S(\tau_0), {\bm M})}$ is a quadratic band degeneracy point of $H^{\tau_0, {\bm 0}}$. \\

\noindent {\it Proof of Proposition \ref{prop:M-dgn-quad}.}
It suffices to verify that ${(E_S(\tau_0), {\bm M})}$ satisfies properties \ref{itm:quad-dgn-1} -- \ref{itm:quad-dgn-5}:

First, by construction, ${(E_S(\tau_0), {\bm M})}$ is twofold degenerate and therefore satisfies \ref{itm:quad-dgn-1}; see Proposition \ref{prop:hs-0_fin} and the subsequent discussion in Section \ref{sec:F-0}.

Next, recall that the degenerate eigenspace corresponding to ${(E_S(\tau_0), {\bm M})}$ is, from \eqref{eq:evp-asz_1} and \eqref{eq:phi-1_3},
\begin{equation}
a_1 \Phi_1 + a_2 \Phi_2 + (1 - \mathcal{A}(\varepsilon_S(\tau_0); {\bm 0}, \tau_0, {\bm 0}))^{-1} \mathscr{R}(E_S) (- \tau_0 \Delta) (a_1 \Phi_1 + a_2 \Phi_2), \quad a_1, \, a_2 \smallin \C.
\end{equation}
Here, the operator $\mathcal{A}(\varepsilon_S(\tau_0); {\bm 0}, \tau_0, {\bm 0})$, initially defined in \eqref{eq:def-calA}, simplifies to
\begin{equation}
\mathcal{A}(\varepsilon_S(\tau_0); {\bm 0}, \tau_0, {\bm 0}) = \mathscr{R}(E_S) (\varepsilon_S(\tau_0) - \tau_0 \Delta).
\end{equation}
We claim that ${(a_1, a_2) = (1, 0)}$ yields an $L^2_{{\bm M}, +i}$ eigenstate and ${(a_1, a_2) = (0, 1)}$ yields an $L^2_{{\bm M}, -i}$ eigenstate. Let us define
\begin{equation}
\begin{aligned}
\Phi^{\bm M}_1(\tau_0) & \equiv C^{\bm M}_1(\tau_0) \bigl( \Phi_1 + (1 - \mathcal{A}(\varepsilon_S(\tau_0); {\bm 0}, \tau_0, {\bm 0}))^{-1} \mathscr{R}(E_S) (- \tau_0 \Delta) \Phi_1 \bigr), \\
\Phi^{\bm M}_2(\tau_0) & \equiv C^{\bm M}_2(\tau_0) \bigl( \Phi_2 + (1 - \mathcal{A}(\varepsilon_S(\tau_0); {\bm 0}, \tau_0, {\bm 0}))^{-1} \mathscr{R}(E_S) (- \tau_0 \Delta) \Phi_2 \bigr),
\end{aligned}
\end{equation}
where $C^{\bm M}_1(\tau_0)$, ${C^{\bm M}_2(\tau_0) > 0}$ are normalization constants. Note that
\begin{align}
\mathcal{R} \circ \mathcal{A}(\varepsilon_S(\tau_0); {\bm 0}, \tau_0, {\bm 0}) & = \mathcal{R} \circ \mathscr{R}(E_S) (\varepsilon_S(\tau_0) - \tau_0 \Delta) \\
& = \mathscr{R}(E_S) (\varepsilon_S(\tau_0) - \tau_0 \Delta) \circ \mathcal{R} = \mathcal{A}(\varepsilon_S(\tau_0); {\bm 0}, \tau_0, {\bm 0}) \circ \mathcal{R}, \nonumber
\end{align}
which implies
\begin{equation}
\mathcal{R} \circ (1 - \mathcal{A}(\varepsilon_S(\tau_0); {\bm 0}, \tau_0, {\bm 0}))^{-1} = (1 - \mathcal{A}(\varepsilon_S(\tau_0); {\bm 0}, \tau_0, {\bm 0}))^{-1} \circ \mathcal{R}.
\end{equation}
It follows that ${\Phi^{\bm M}_1(\tau_0) \smallin L^2_{{\bm M}, +i}}$ and ${\Phi^{\bm M}_2(\tau_0) \smallin L^2_{{\bm M}, -i}}$, since
\begin{align}
\mathcal{R}[\Phi^{\bm M}_1(\tau_0)] & = \mathcal{R}\bigl[ C^{\bm M}_1(\tau_0) \bigl( \Phi_1 + (1 - \mathcal{A}(\varepsilon_S(\tau_0); {\bm 0}, \tau_0, {\bm 0}))^{-1} \mathscr{R}(E_S) (- \tau_0 \Delta) \Phi_1 \bigr) \bigr] \\
& = C^{\bm M}_1(\tau_0) \bigl( i \Phi_1 + (1 - \mathcal{A}(\varepsilon_S(\tau_0); {\bm 0}, \tau_0, {\bm 0}))^{-1} \mathscr{R}(E_S) (- \tau_0 \Delta) (i \Phi_1) \bigr) = i \Phi^{\bm M}_1(\tau_0), \nonumber
\end{align}
and
\begin{align}
\mathcal{R}[\Phi^{\bm M}_2(\tau_0)] & = \mathcal{R}\bigl[ C^{\bm M}_2(\tau_0) \bigl( \Phi_2 + (1 - \mathcal{A}(\varepsilon_S(\tau_0); {\bm 0}, \tau_0, {\bm 0}))^{-1} \mathscr{R}(E_S) (- \tau_0 \Delta) \Phi_2 \bigr) \bigr] \\
& = C^{\bm M}_2(\tau_0) \bigl( -i \Phi_2 + (1 - \mathcal{A}(\varepsilon_S(\tau_0); {\bm 0}, \tau_0, {\bm 0}))^{-1} \mathscr{R}(E_S) (- \tau_0 \Delta) (-i \Phi_2) \bigr) = -i \Phi^{\bm M}_1(\tau_0). \nonumber
\end{align}
Moreover, ${\Phi^{\bm M}_2(\tau_0) = \mathcal{PC}[\Phi^{\bm M}_1(\tau_0)]}$. With these eigenstates, ${(E_S(\tau_0), {\bm M})}$ satisfies \ref{itm:quad-dgn-2} -- \ref{itm:quad-dgn-4}. 

We note that the normalization constants satisfy
\begin{equation}
C^{\bm M}_\ell(\tau_0) = 1 + O(|\tau_0|^2) \ \ \text{as} \ \ |\tau_0| \to 0, \quad \ell \smallin \{ 1, \, 2 \},
\end{equation}
and we therefore have the expansion
\begin{equation}
\label{eq:phi-quad-ex}
\Phi^{\bm M}_\ell(\tau_0) = \Phi_\ell + O(|\tau_0|) \ \ \text{as} \ \ |\tau_0| \to 0, \quad \ell \smallin \{ 1, \, 2 \}.
\end{equation}

We are now ready to address the nondegeneracy conditon \ref{itm:quad-dgn-5}. From \eqref{eq:def-a-par}, let us accordingly define
\begin{equation}
\label{eq:def-a-quad-par}
\begin{aligned}
\alpha_0(\tau_0) & \equiv \langle \partial_1 \Phi^{\bm M}_1, \, \mathscr{R}^{\tau_0, {\bm 0}}(E_S(\tau_0)) \partial_1 \Phi^{\bm M}_1 \rangle, \\
\alpha_1(\tau_0) & \equiv \langle \partial_1 \Phi^{\bm M}_1(\tau_0), \, \mathscr{R}^{\tau_0, {\bm 0}}(E_S(\tau_0)) \partial_2 \Phi^{\bm M}_2(\tau_0) \rangle, \\
\alpha_2(\tau_0) & \equiv i \langle \partial_1 \Phi^{\bm M}_1(\tau_0), \, \mathscr{R}^{\tau_0, {\bm 0}}(E_S(\tau_0)) \partial_1 \Phi^{\bm M}_2(\tau_0) \rangle,
\end{aligned}
\end{equation}
where $\mathscr{R}^{\tau_0, {\bm 0}}(E_S(\tau_0))$ denotes the resolvent ${(H^{\tau_0, {\bm 0}} - E_S(\tau_0))^{-1}}$ acting in ${{\rm ker}(H^{\tau_0, {\bm 0}} - E_S(\tau_0))^\perp}$. We prove:

\begin{proposition}
\label{prop:a-quad-ex}
The parameters $\alpha_0(\tau_0)$, $\alpha_1(\tau_0)$, and ${\alpha_2(\tau_0) \smallin \R}$, defined in \eqref{eq:def-a-quad-par}, have expansions
\begin{equation}
\label{eq:a-quad-ex}
\begin{aligned}
\alpha_0(\tau_0) & = \alpha_0 + O(|\tau_0|), \\
\alpha_1(\tau_0) & = \alpha_1 + O(|\tau_0|), \\
\alpha_2(\tau_0) & = \alpha_2 + O(|\tau_0|),
\end{aligned}
\end{equation}
as ${|\tau_0| \to 0}$, where the parameters $\alpha_0$, $\alpha_1$, and ${\alpha_2 \smallin \R}$ are defined in \eqref{eq:def-a-par}.
\end{proposition}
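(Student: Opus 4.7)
The plan is to expand each of the three ingredients in the definition \eqref{eq:def-a-quad-par}---the perturbed eigenstates $\Phi^{\bm M}_\ell(\tau_0)$, the perturbed eigenvalue $E_S(\tau_0)$, and the reduced resolvent $\mathscr{R}^{\tau_0, {\bm 0}}(E_S(\tau_0))$---about $\tau_0 = 0$, then combine them. The expansions of the first two are essentially already available from the construction carried out in the proof of Proposition \ref{prop:M-dgn-quad}; the novel work is controlling the perturbed resolvent.

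First, I would strengthen the expansion \eqref{eq:phi-quad-ex} to hold in $H^2(\R^2/\Z^2)$, which suffices for expanding $\partial_{x_j} \Phi^{\bm M}_\ell(\tau_0) = \partial_{x_j} \Phi_\ell + O(|\tau_0|)$ in $L^2(\R^2/\Z^2)$. This follows from the explicit representation of $\Phi^{\bm M}_\ell(\tau_0)$: the correction $(1 - \mathcal{A}(\varepsilon_S(\tau_0); {\bm 0}, \tau_0, {\bm 0}))^{-1} \mathscr{R}(E_S)(-\tau_0 \Delta) \Phi_\ell$ is, by the elliptic regularity properties of $\mathscr{R}(E_S)$ encoded in \eqref{eq:res-M} and the Neumann series for $(1 - \mathcal{A})^{-1}$, an $H^2$ function of norm $O(|\tau_0|)$ depending analytically on $\tau_0$. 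The normalization constants $C^{\bm M}_\ell(\tau_0) = 1 + O(|\tau_0|^2)$ then give the required $H^2$ expansion.

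Second, I would control the reduced resolvent by invoking Kato analytic perturbation theory. Since $E_S$ is an isolated $L^2_{\bm M}$ eigenvalue of $H$ of finite multiplicity, and $\tau_0 \Delta$ is relatively bounded with respect to $H$ with relative bound tending to zero as $|\tau_0| \to 0$, the spectral projection onto ${\rm ker}(H^{\tau_0, {\bm 0}} - E_S(\tau_0))$ can be written as a Riesz integral
\begin{equation}
\Pi^{\tau_0, {\bm 0}, \parallel} = \frac{1}{2\pi i} \oint_\gamma (z I - H^{\tau_0, {\bm 0}})^{-1} \, {\rm d}z,
\end{equation}
where $\gamma$ is a small circle in $\C$ enclosing $E_S$ and no other eigenvalue of $H$. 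For $|\tau_0|$ small enough, the resolvent $(z I - H^{\tau_0, {\bm 0}})^{-1}$ exists on $\gamma$ and is analytic in $\tau_0$ as a bounded operator $L^2 \to H^2$. It follows that both $\Pi^{\tau_0, {\bm 0}, \perp}$ and the reduced resolvent $\mathscr{R}^{\tau_0, {\bm 0}}(E_S(\tau_0))$ are analytic in $\tau_0$, and in particular
\begin{equation}
\mathscr{R}^{\tau_0, {\bm 0}}(E_S(\tau_0)) = \mathscr{R}(E_S) + O(|\tau_0|) \ \ \text{as} \ \ |\tau_0| \to 0
\end{equation}
in the operator norm $L^2(\R^2/\Z^2) \to H^2(\R^2/\Z^2)$.

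Substituting these two expansions into the definitions \eqref{eq:def-a-quad-par} and using continuity of the sesquilinear forms yields $\alpha_j(\tau_0) = \alpha_j + O(|\tau_0|)$ for $j \in \{0, 1, 2\}$; realness of $\alpha_j(\tau_0)$ follows by the same symmetry argument used in \cite{keller2018spectral} for $\alpha_j$, since $H^{\tau_0, {\bm 0}}$ retains $\mathcal{P}$, $\mathcal{C}$, $\mathcal{R}$, and $\Sigma_1$ invariance. The main obstacle is the Kato-theoretic justification of the resolvent expansion: the subspace ${\rm ker}(H^{\tau_0, {\bm 0}} - E_S(\tau_0))^\perp$ itself depends on $\tau_0$, so one cannot simply differentiate naively. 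The Riesz projection formula above is the standard device that circumvents this difficulty and makes all objects genuinely analytic in $\tau_0$ on a fixed ambient space.
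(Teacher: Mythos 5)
Your argument is correct, but it reaches the resolvent expansion by a different route than the paper. Where you invoke Kato analytic perturbation theory --- writing the spectral projection as a Riesz contour integral $\frac{1}{2\pi i}\oint_\gamma (zI - H^{\tau_0,{\bm 0}})^{-1}\,{\rm d}z$ and deducing analyticity of $\Pi^{\tau_0,{\bm 0},\perp}$ and of the reduced resolvent on a fixed ambient space --- the paper sidesteps the moving-subspace difficulty entirely by a symmetry observation: the vectors $f = \partial_{x_j}\Phi^{\bm M}_k(\tau_0)$ to which the reduced resolvent is applied lie in $L^2_{{\bm M},+1}\oplus L^2_{{\bm M},-1}$, a subspace that is invariant under $H^{\tau_0,{\bm 0}}$, independent of $\tau_0$, and orthogonal to the degenerate eigenspace (which sits in $L^2_{{\bm M},+i}\oplus L^2_{{\bm M},-i}$). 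On that fixed subspace the equation $(H-E_S)u - (\varepsilon_S(\tau_0)-\tau_0\Delta)u = f$ is solved directly by the Neumann series $u = (1-\mathscr{R}(E_S)(\varepsilon_S(\tau_0)-\tau_0\Delta))^{-1}\mathscr{R}(E_S)f$, yielding the $O(|\tau_0|)$ expansion with no spectral-projection machinery. Your approach is more general --- it would work without the $\pi/2$-rotational structure --- but it leans on two facts you should state explicitly: (i) that the Riesz projection over $\gamma$ has constant rank two and that $E_S(\tau_0)$ exhausts the eigengroup inside $\gamma$ (so the Riesz projection really is the eigenprojection; this is supplied by conclusion 1 of Theorem \ref{thm:M-dgn}), and (ii) that analyticity of the \emph{reduced} resolvent requires its own contour representation (e.g.\ $\frac{1}{2\pi i}\oint_\gamma (E_S(\tau_0)-z)^{-1}(zI-H^{\tau_0,{\bm 0}})^{-1}\,{\rm d}z$), not merely analyticity of the projection. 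With those points made precise, both proofs deliver the same conclusion; the paper's is the more elementary given the symmetry framework already in place, yours is the one that generalizes.
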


\begin{proof}
We first derive the expansion
\begin{equation}
\label{eq:res-quad-exp}
\mathscr{R}^{\tau_0, {\bm 0}}(E_S(\tau_0)) \partial_j \Phi^{\bm M}_k = \mathscr{R}(E_S) \partial_j \Phi_k + O(|\tau_0|) \ \ \text{as} \ \ |\tau_0| \to 0, \quad j, \, k \smallin \{ 1, \, 2 \}.
\end{equation}
Fix $j$, ${k \smallin \{ 1, \, 2 \}}$ and define
\begin{align}
f & \equiv \partial_j \Phi^{\bm M}_k, \\
u & \equiv \mathscr{R}^{\tau_0, {\bm 0}}(E_S(\tau_0)) f.
\end{align}
Hence, $u$ is the unique solution to
\begin{equation}
(H^{\tau_0, {\bm 0}} - E_S(\tau_0)) \, u = f.
\end{equation}
or, equivalently,
\begin{equation}
(H - E_S) u - (\varepsilon_S(\tau_0) - \tau_0 \Delta) u = f.
\end{equation}
Note that $f$ is orthogonal to both $L^2_{{\bm M}, +i}$ and $L^2_{{\bm M}, -i}$, which implies ${f \smallin L^2_{{\bm M}, +1} \oplus L^2_{{\bm M}, -1}}$. Further, note that $H^{\tau_0, {\bm 0}}$, and thus $\mathscr{R}^{\tau_0, {\bm 0}}(E_S(\tau_0))$, maps ${L^2_{{\bm M}, +1} \oplus L^2_{{\bm M}, -1}}$ to itself. Thus, ${u \smallin L^2_{{\bm M}, +1} \oplus L^2_{{\bm M}, -1}}$ and
\begin{equation}
u = (1 - \mathscr{R}(E_S)(\varepsilon_S(\tau_0) - \tau_0 \Delta))^{-1} \mathscr{R}(E_S) \, f .
\end{equation}
Using the Neumann series representation of the inverted operator above and applying the expansion \eqref{eq:phi-quad-ex} yields
\begin{equation}
u = \mathscr{R}(E_S) \partial_j \Phi_k + O(|\tau_0|) \ \ \text{as} \ \ |\tau_0| \to 0,
\end{equation}
as claimed by \eqref{eq:res-quad-exp}.

Finally, applying the expansions \eqref{eq:phi-quad-ex} and \eqref{eq:res-quad-exp}, we obtain
\begin{align}
\alpha_0(\tau_0) & = \langle \partial_1 \Phi^{\bm M}_1(\tau_0), \, \mathscr{R}^{\tau_0, {\bm 0}}(E_S(\tau_0)) \partial_1 \Phi^{\bm M}_1(\tau_0) \rangle \\
& = \langle \partial_1 \Phi_1, \, \mathscr{R}(E_S) \partial_1 \Phi_1 \rangle + O(|\tau_0|) = \alpha_0 + O(|\tau_0|), \nonumber \\
\alpha_1(\tau_0) & = \langle \partial_1 \Phi^{\bm M}_1(\tau_0), \, \mathscr{R}^{\tau_0, {\bm 0}}(E_S(\tau_0)) \partial_2 \Phi^{\bm M}_2(\tau_0) \rangle \\
& = \langle \partial_1 \Phi_1, \, \mathscr{R}(E_S) \partial_2 \Phi_2 \rangle + O(|\tau_0|) = \alpha_1 + O(|\tau_0|), \quad \text{and} \nonumber \\ 
-i \alpha_2(\tau_0) & = \langle \partial_1 \Phi^{\bm M}_1(\tau_0), \, \mathscr{R}^{\tau_0, {\bm 0}}(E_S(\tau_0)) \partial_1 \Phi^{\bm M}_2(\tau_0) \rangle \\
& = \langle \partial_1 \Phi_1, \, \mathscr{R}(E_S) \partial_1 \Phi_2 \rangle + O(|\tau_0|) = -i \alpha_2 + O(|\tau_0|) \nonumber
\end{align}
as ${|\tau_0| \to 0}$.
\end{proof}

\noindent In particular, since ${\alpha_1 \neq 0}$ and ${\alpha_2 \neq 0}$ by the assumption \ref{itm:quad-dgn-5}, it follows from Proposition \ref{prop:a-quad-ex} that $\alpha_1(\tau_0)$, ${\alpha_2(\tau_0) \neq 0}$. Therefore ${(E_S(\tau_0), {\bm M})}$ satisfies \ref{itm:quad-dgn-5}. 

The proof of Proposition \ref{prop:M-dgn-quad} is now complete. \qed

\subsection{Proof of Proposition \ref{prop:M-dgn-dir}}
\label{apx:M-dgn-dir}

We now prove Proposition \ref{prop:M-dgn-dir}, which asserts that ${(E_D(\tau_0, |{\bm \tau}|), {\bm D}^\pm(\tau_0, |{\bm \tau}|))}$ are Dirac points of $H^{\tau_0, {\bm \tau}}$. For brevity, we only prove that ${(E_D(\tau_0, |{\bm \tau}|), {\bm D}^+(\tau_0, |{\bm \tau}|))}$ is a Dirac point; the proof for ${(E_D(\tau_0, |{\bm \tau}|), {\bm D}^-(\tau_0, |{\bm \tau}|))}$ follows analogously, as discussed in Remark \ref{rmk:proof-dir-m} ahead. \\

\noindent {\it Proof of Proposition \ref{prop:M-dgn-dir}.} 
It suffices to verify that ${(E_D(\tau_0, |{\bm \tau}|), {\bm D}^+(\tau_0, |{\bm \tau}|))}$ satisfies properties \ref{itm:dir-pt-1} -- \ref{itm:dir-pt-4}:

First, by construction, ${(E_D(\tau_0, |{\bm \tau}|), {\bm D}(\tau_0, |{\bm \tau}|))}$ is twofold degenerate and therefore satisfies \ref{itm:dir-pt-1}; see Proposition \ref{prop:hs-0_fin} and the subsequent discussion in Section \ref{sec:G-0}.

Next, the degenerate eigenspace corresponding to ${(E_D(\tau_0, |{\bm \tau}|), {\bm D}^+(\tau_0, |{\bm \tau}|))}$ is, from \eqref{eq:evp-asz_1} and \eqref{eq:phi-1_3},
\begin{align}
& a_1 \Phi_1 + a_2 \Phi_2 + (1 - \mathcal{A}(\varepsilon_D(\tau_0, |{\bm \tau}|); {\bm \kappa}_D(\tau_0, |{\bm \tau}|), \tau_0, {\bm \tau}))^{-1} \mathscr{R}(E_S) (2i {\bm \kappa}_D(\tau_0, |{\bm \tau}|) \cdot \nabla \\
& \qquad - \tau_0 (\Delta + 2 i {\bm \kappa}_D(\tau_0, |{\bm \tau}|) \cdot \nabla) - |{\bm \tau}| (\nabla \cdot \sigma_\varphi \nabla + 2 i {\bm \kappa}_D(\tau_0, |{\bm \tau}|) \cdot \sigma_\varphi \nabla)) (a_1 \Phi_1 + a_2 \Phi_2), \quad a_1, a_2 \smallin \C. \nonumber
\end{align}
We claim that ${(\alpha_1, \alpha_2) = (1, 1)}$ yields a $Y_{{\bm D}(\tau_0, |{\bm \tau}|), +1}$ eigenstate, and ${(\alpha_1, \alpha_2) = (1, -1)}$ yields a $Y_{{\bm D}(\tau_0, |{\bm \tau}|), -1}$ eigenstate. Let us define
\begin{equation}
\begin{aligned}
\tilde{\Phi}^{\bm D}_1(\tau_0, |{\bm \tau}|) & \equiv \tilde{C}^{\bm D}_1(\tau_0, |{\bm \tau}|) \bigl( \Phi_1 + \Phi_2 + (1 - \mathcal{A}(\varepsilon_D(\tau_0, |{\bm \tau}|); {\bm \kappa}_D(\tau_0, |{\bm \tau}|), \tau_0, {\bm \tau}))^{-1} \mathscr{R}(E_S) (2i {\bm \kappa}_D(\tau_0, |{\bm \tau}|) \cdot \nabla \\
& \qquad - \tau_0 (\Delta + 2 i {\bm \kappa}_D(\tau_0, |{\bm \tau}|) \cdot \nabla) - |{\bm \tau}| (\nabla \cdot \sigma_\varphi \nabla + 2 i {\bm \kappa}_D(\tau_0, |{\bm \tau}|) \cdot \sigma_\varphi \nabla)) (\Phi_1 + \Phi_2) \bigr), \\
\tilde{\Phi}^{\bm D}_2(\tau_0, |{\bm \tau}|) & \equiv \tilde{C}^{\bm D}_2(\tau_0, |{\bm \tau}|) \bigl( \Phi_1 - \Phi_2 + (1 - \mathcal{A}(\varepsilon_D(\tau_0, |{\bm \tau}|); {\bm \kappa}_D(\tau_0, |{\bm \tau}|), \tau_0, {\bm \tau}))^{-1} \mathscr{R}(E_S) (2i {\bm \kappa}_D(\tau_0, |{\bm \tau}|) \cdot \nabla \\
& \qquad - \tau_0 (\Delta + 2 i {\bm \kappa}_D(\tau_0, |{\bm \tau}|) \cdot \nabla) - |{\bm \tau}| (\nabla \cdot \sigma_\varphi \nabla + 2 i {\bm \kappa}_D(\tau_0, |{\bm \tau}|) \cdot \sigma_\varphi \nabla)) (\Phi_1 - \Phi_2) \bigr),
\end{aligned}
\end{equation}
where $\tilde{C}^{\bm D}_1(\tau_0, |{\bm \tau}|)$, ${\tilde{C}^{\bm D}_2(\tau_0, |{\bm \tau}|) > 0}$ are normalization constants. Note that \eqref{eq:calA-PC} implies
\begin{equation}
\mathcal{PC} \circ (1 - \mathcal{A}(\varepsilon_D(\tau_0, |{\bm \tau}|); {\bm \kappa}_D(\tau_0, |{\bm \tau}|), \tau_0, {\bm \tau}))^{-1} = (1 - \mathcal{A}(\varepsilon_D(\tau_0, |{\bm \tau}|); {\bm \kappa}_D(\tau_0, |{\bm \tau}|), \tau_0, {\bm \tau}))^{-1} \circ \mathcal{PC}.
\end{equation}
It follows that ${\tilde{\Phi}^{\bm D}_1(\tau_0, |{\bm \tau}|) \smallin Y_{{\bm D}(\tau_0, |{\bm \tau}|), +1}}$ and ${\tilde{\Phi}^{\bm D}_2(\tau_0, |{\bm \tau}|) \smallin Y_{{\bm D}(\tau_0, |{\bm \tau}|), -1}}$, since
\begin{align}
& \mathcal{PC}[\tilde{\Phi}^{\bm D}_1(\tau_0, |{\bm \tau}|)] \\
& \qquad = \mathcal{PC}[\tilde{C}^{\bm D}_1 \bigl( \Phi_1 + \Phi_2 + (1 - \mathcal{A}(\varepsilon_D(\tau_0, |{\bm \tau}|); {\bm \kappa}_D(\tau_0, |{\bm \tau}|), \tau_0, {\bm \tau}))^{-1} \mathscr{R}(E_S) \nonumber \\
& \qquad \qquad \cdot (2i {\bm \kappa}_D(\tau_0, |{\bm \tau}|) \cdot \nabla - \tau_0 (\Delta + 2 i {\bm \kappa}_D(\tau_0, |{\bm \tau}|) \cdot \nabla) - |{\bm \tau}| (\nabla \cdot \sigma_\varphi \nabla + 2 i {\bm \kappa}_D(\tau_0, |{\bm \tau}|) \cdot \sigma_\varphi \nabla)) (\Phi_1 + \Phi_2) \bigr)] \nonumber \\
& \qquad = \tilde{C}^{\bm D}_1 \bigl( \Phi_2 + \Phi_1 + (1 - \mathcal{A}(\varepsilon_D(\tau_0, |{\bm \tau}|); {\bm \kappa}_D(\tau_0, |{\bm \tau}|), \tau_0, {\bm \tau}))^{-1} \mathscr{R}(E_S) \nonumber \\
& \qquad \qquad \cdot (2i {\bm \kappa}_D(\tau_0, |{\bm \tau}|) \cdot \nabla - \tau_0 (\Delta + 2 i {\bm \kappa}_D(\tau_0, |{\bm \tau}|) \cdot \nabla) - |{\bm \tau}| (\nabla \cdot \sigma_\varphi \nabla + 2 i {\bm \kappa}_D(\tau_0, |{\bm \tau}|) \cdot \sigma_\varphi \nabla)) (\Phi_2 + \Phi_1) \bigr) \nonumber \\
& \qquad = \tilde{\Phi}^{\bm D}_1(\tau_0, |{\bm \tau}|), \nonumber
\end{align}
and
\begin{align}
& \mathcal{PC}[\tilde{\Phi}^{\bm D}_2(\tau_0, |{\bm \tau}|)] \\
& \qquad = \mathcal{PC}[\tilde{C}^{\bm D}_2 \bigl( \Phi_1 + \Phi_2 + (1 - \mathcal{A}(\varepsilon_D(\tau_0, |{\bm \tau}|); {\bm \kappa}_D(\tau_0, |{\bm \tau}|), \tau_0, {\bm \tau}))^{-1} \mathscr{R}(E_S) \nonumber \\
& \qquad \qquad \cdot (2i {\bm \kappa}_D(\tau_0, |{\bm \tau}|) \cdot \nabla - \tau_0 (\Delta + 2 i {\bm \kappa}_D(\tau_0, |{\bm \tau}|) \cdot \nabla) - |{\bm \tau}| (\nabla \cdot \sigma_\varphi \nabla + 2 i {\bm \kappa}_D(\tau_0, |{\bm \tau}|) \cdot \sigma_\varphi \nabla)) (\Phi_1 - \Phi_2) \bigr)] \nonumber \\
& \qquad = \tilde{C}^{\bm D}_2 \bigl( \Phi_2 + \Phi_1 + (1 - \mathcal{A}(\varepsilon_D(\tau_0, |{\bm \tau}|); {\bm \kappa}_D(\tau_0, |{\bm \tau}|), \tau_0, {\bm \tau}))^{-1} \mathscr{R}(E_S) \nonumber \\
& \qquad \qquad \cdot (2i {\bm \kappa}_D(\tau_0, |{\bm \tau}|) \cdot \nabla - \tau_0 (\Delta + 2 i {\bm \kappa}_D(\tau_0, |{\bm \tau}|) \cdot \nabla) - |{\bm \tau}| (\nabla \cdot \sigma_\varphi \nabla + 2 i {\bm \kappa}_D(\tau_0, |{\bm \tau}|) \cdot \sigma_\varphi \nabla)) (\Phi_2 - \Phi_1) \bigr) \nonumber \\
& \qquad = - \tilde{\Phi}^{\bm D}_2(\tau_0, |{\bm \tau}|). \nonumber
\end{align}
With these eigenstates, ${(E_D(\tau_0, |{\bm \tau}|), {\bm D}(\tau_0, |{\bm \tau}|))}$ satisfies \ref{itm:dir-pt-2} and \ref{itm:dir-pt-3}. 

Rotating, we define
\begin{equation}
\begin{aligned}
\Phi^{\bm D}_1(\tau_0, |{\bm \tau}|) & \equiv C^{\bm D}_1(\tau_0, |{\bm \tau}|) \bigl( \Phi_1 + (1 - \mathcal{A}(\varepsilon_D(\tau_0, |{\bm \tau}|); {\bm \kappa}_D(\tau_0, |{\bm \tau}|), \tau_0, {\bm \tau}))^{-1} \mathscr{R}(E_S) (2i {\bm \kappa}_D(\tau_0, |{\bm \tau}|) \cdot \nabla \\
& \qquad - \tau_0 (\Delta + 2 i {\bm \kappa}_D(\tau_0, |{\bm \tau}|) \cdot \nabla) - |{\bm \tau}| (\nabla \cdot \sigma_\varphi \nabla + 2 i {\bm \kappa}_D(\tau_0, |{\bm \tau}|) \cdot \sigma_\varphi \nabla)) \Phi_1 \bigr), \\
\Phi^{\bm D}_2(\tau_0, |{\bm \tau}|) & \equiv C^{\bm D}_2(\tau_0, |{\bm \tau}|) \bigl( \Phi_2 + (1 - \mathcal{A}(\varepsilon_D(\tau_0, |{\bm \tau}|); {\bm \kappa}_D(\tau_0, |{\bm \tau}|), \tau_0, {\bm \tau}))^{-1} \mathscr{R}(E_S) (2i {\bm \kappa}_D(\tau_0, |{\bm \tau}|) \cdot \nabla \\
& \qquad - \tau_0 (\Delta + 2 i {\bm \kappa}_D(\tau_0, |{\bm \tau}|) \cdot \nabla) - |{\bm \tau}| (\nabla \cdot \sigma_\varphi \nabla + 2 i {\bm \kappa}_D(\tau_0, |{\bm \tau}|) \cdot \sigma_\varphi \nabla)) \Phi_2 \bigr),
\end{aligned}
\end{equation}
such that $\Phi^{\bm D}_2(\tau_0, |{\bm \tau}|) = \mathcal{PC}[\Phi^{\bm D}_1(\tau_0, |{\bm \tau}|)]$. Here, $C^{\bm D}_1(\tau_0, |{\bm \tau}|)$, ${C^{\bm D}_2(\tau_0, |{\bm \tau}|) > 0}$ are new normalization constants. We note that these satisfy
\begin{equation}
C^{\bm D}_\ell = 1 + O(|\tau_0| + |{\bm \tau}|) \ \ \text{as} \ \ |\tau_0|, |{\bm \tau}| \to 0, \quad \ell \smallin \{ 1, \, 2 \},
\end{equation}
and we therefore have the expansion
\begin{equation}
\label{eq:phi-dir-p-ex}
\Phi^{{\bm D}^+}_\ell(\tau_0, |{\bm \tau}|) = \Phi_\ell + \sqrt{|{\bm \tau}|} \, \mathscr{R}(E_S) (2 i {\bm \kappa}_D^{(0)}(\varphi) \cdot \nabla) \Phi_\ell + O(|\tau_0| + |{\bm \tau}|) \ \ \text{as} \ \ |\tau_0|, \, |{\bm \tau}| \to 0, \quad \ell \smallin \{ 1, \, 2 \}.
\end{equation}

We are now ready to address the nondegeneracy condition \ref{itm:dir-pt-4}. From \eqref{eq:def-gam-par}, let us accordingly define
\begin{equation}
\label{eq:def-gam-dir-p}
\begin{aligned}
{\bm \gamma}^{{\bm D}^+}_0(\tau_0, |{\bm \tau}|) & \equiv \langle \Phi^{\bm D}_1, \, 2i ((1 - \tau_0) I - |{\bm \tau}| \sigma_\varphi) \nabla \Phi^{\bm D}_1 \rangle, \\
{\bm \gamma}^{{\bm D}^+}_1(\tau_0, |{\bm \tau}|) & \equiv {\rm Re} \, \langle \Phi^{\bm D}_1, \, 2i ((1 - \tau_0) I - |{\bm \tau}| \sigma_\varphi) \nabla \Phi^{\bm D}_2 \rangle, \\
{\bm \gamma}^{{\bm D}^+}_2(\tau_0, |{\bm \tau}|) & \equiv - {\rm Im} \, \langle \Phi^{\bm D}_1, \, 2i ((1 - \tau_0) I - |{\bm \tau}| \sigma_\varphi) \nabla \Phi^{\bm D}_2 \rangle.
\end{aligned}
\end{equation}
We prove:

\begin{proposition}
\label{prop:gam-dir-p-ex}
The parameters ${\bm \gamma}^{{\bm D}^+}_0(\tau_0, |{\bm \tau}|)$, ${\bm \gamma}^{{\bm D}^+}_1(\tau_0, |{\bm \tau}|)$, and ${{\bm \gamma}^{{\bm D}^+}_2(\tau_0, |{\bm \tau}|) \smallin \R^2}$, defined in \eqref{eq:def-gam-dir-p}, have expansions
\begin{equation}
\label{eq:gam-dir-p-ex}
\begin{aligned}
{\bm \gamma}^{{\bm D}^+}_0(\tau_0, |{\bm \tau}|) & = 2 \sqrt{|{\bm \tau}|} (\alpha_0 I + 4 a^{1, 1}_{1, 2} R) {\bm \kappa}_D^{(0)}(\varphi) + O(|\tau_0| + |{\bm \tau}|), \\
{\bm \gamma}^{{\bm D}^+}_1(\tau_0, |{\bm \tau}|) & = 2 \sqrt{|{\bm \tau}|} \, \alpha_1 \sigma_1 {\bm \kappa}^{(0)}_D(\varphi) + O(|\tau_0| + |{\bm \tau}|), \\
{\bm \gamma}^{{\bm D}^+}_2(\tau_0, |{\bm \tau}|) & = 2 \sqrt{|{\bm \tau}|} \, \alpha_2 \sigma_3 {\bm \kappa}^{(0)}_D(\varphi) + O(|\tau_0| + |{\bm \tau}|)
\end{aligned}
\end{equation}
as $|\tau_0|$, ${|{\bm \tau}| \to 0}$, where $\alpha_0$, $\alpha_1$, and ${\alpha_2 \smallin \R}$ are defined in \eqref{eq:def-a-par} and ${{\bm \kappa}^{(0)}_D(\varphi) \smallin \R^2}$ is defined in \eqref{eq:def-kap-D-0-1}. The parameter ${a^{1, 1}_{1, 2} \smallin \R}$ is defined in \cite[Section 4.1.3]{keller2018spectral}, but does not enter in final expressions.
\end{proposition}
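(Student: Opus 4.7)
The strategy is to substitute the asymptotic expansion \eqref{eq:phi-dir-p-ex} of the eigenstates $\Phi^{\bm D}_1(\tau_0,|{\bm\tau}|)$, $\Phi^{\bm D}_2(\tau_0,|{\bm\tau}|)$ into the defining integrals \eqref{eq:def-gam-dir-p} and then to extract, order by order in $\sqrt{|{\bm\tau}|}$, the $L^2_{\bm M}$-inner products that express the result in terms of the parameters $\alpha_0,\alpha_1,\alpha_2$ already computed for the undeformed Hamiltonian. Writing the perturbation operator as $2i((1-\tau_0)I-|{\bm\tau}|\sigma_\varphi)\nabla = 2i\nabla + O(|\tau_0|+|{\bm\tau}|)$, and using the shorthand $\Phi^{\bm D}_\ell = \Phi_\ell + \sqrt{|{\bm\tau}|}\,\Psi_\ell + O(|\tau_0|+|{\bm\tau}|)$ with $\Psi_\ell \equiv \mathscr R(E_S)(2i\,{\bm\kappa}_D^{(0)}\cdot\nabla)\Phi_\ell$, each of the three parameters ${\bm\gamma}^{{\bm D}^+}_j(\tau_0,|{\bm\tau}|)$ becomes the sum of a zeroth-order term, an $O(\sqrt{|{\bm\tau}|})$ cross term, and an $O(|\tau_0|+|{\bm\tau}|)$ remainder.

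The zeroth-order terms are $\langle\Phi_j,2i\nabla\Phi_k\rangle$ for $j,k\in\{1,2\}$; these vanish identically by the $\mathcal{P}$-symmetry argument used in \cite[Proposition 4.10]{keller2018spectral} (and essentially recorded in the square-lattice analysis of Section~\ref{sec:kmow}). The $O(\sqrt{|{\bm\tau}|})$ contribution splits, after using the self-adjointness of $\mathscr R(E_S)$ and that of $2i\partial_{x_j}$, into two copies of the same bilinear form
\[
\bigl\langle 2i\nabla\Phi_j,\;\mathscr R(E_S)\,(2i\,{\bm\kappa}_D^{(0)}\cdot\nabla)\Phi_k\bigr\rangle,
\]
which by the definitions \eqref{eq:def-a-par} evaluates, after a short calculation, to a ${\bm\kappa}_D^{(0)}$-linear combination of $\alpha_0,\alpha_1,\alpha_2$ (with off-diagonal $(j,k)$ contributions captured by the parameter $a^{1,1}_{1,2}$ of \cite[Section~4.1.3]{keller2018spectral}). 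For the diagonal entry I would collect the symmetric part of the $2\times 2$ tensor $M_{jk}=\langle 2i\partial_{x_j}\Phi_1,\mathscr R(E_S)\,2i\partial_{x_k}\Phi_1\rangle$, which by the $\mathcal R$-covariance used throughout \cite{keller2018spectral} takes the form $\alpha_0 I + 4a^{1,1}_{1,2}R$; applying it to ${\bm\kappa}_D^{(0)}$ and multiplying by $2\sqrt{|{\bm\tau}|}$ gives the asserted expansion of ${\bm\gamma}^{{\bm D}^+}_0$. For ${\bm\gamma}^{{\bm D}^+}_1,{\bm\gamma}^{{\bm D}^+}_2$, the analogous tensor is off-diagonal in $(\Phi_1,\Phi_2)$; its real and imaginary parts are, by Proposition~\ref{prop:calM-0-ex} and the $\mathcal R$-, $\Sigma_1$-covariances already exploited there, proportional respectively to $\alpha_1\sigma_1$ and $\alpha_2\sigma_3$, which when applied to ${\bm\kappa}_D^{(0)}$ yield the stated leading terms.

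To justify the $O(|\tau_0|+|{\bm\tau}|)$ remainder estimate, I would argue as in Proposition~\ref{prop:a-quad-ex}: the normalization constants $C^{\bm D}_\ell(\tau_0,|{\bm\tau}|)$ and the Neumann series for $(1-\mathcal A(\varepsilon_D;{\bm\kappa}_D,\tau_0,{\bm\tau}))^{-1}$ are both analytic in $(\sqrt{|{\bm\tau}|},\tau_0)$ with value $1$ at the origin, so all higher-order contributions from the expansion of $\Phi^{\bm D}_\ell$ and from the $(\tau_0,|{\bm\tau}|\sigma_\varphi)$-corrections in the perturbation operator feed uniformly-bounded analytic operators back into the inner products, producing a remainder of the claimed size. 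Positivity of $\alpha_1,\alpha_2\ne 0$ and of $|{\bm\kappa}_D^{(0)}(\varphi)|>0$ (guaranteed by \ref{itm:quad-dgn-5}--\ref{itm:quad-dgn-6}) will finally ensure that, for $|\tau_0|$ and $|{\bm\tau}|$ sufficiently small (uniformly in $\varphi$), ${\bm\gamma}^{{\bm D}^+}_1,{\bm\gamma}^{{\bm D}^+}_2\ne{\bm 0}$, completing the verification of \ref{itm:dir-pt-4} and hence of Proposition~\ref{prop:M-dgn-dir}.

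The main obstacle I anticipate is the tensorial bookkeeping in the leading coefficient of ${\bm\gamma}^{{\bm D}^+}_0$: isolating the skew part $4a^{1,1}_{1,2}R$ requires identifying precisely which matrix elements of $M_{jk}$ survive once the $\mathcal R$-, $\Sigma_1$-, and $\mathcal{PC}$-invariances of $\mathscr R(E_S)$ and of the basis $\{\Phi_1,\Phi_2\}$ are accounted for. This is where I would lean heavily on the symmetry decompositions already assembled in Appendix~\ref{apx:calM-pr} and in \cite[Section~4.1.3]{keller2018spectral}, rather than attempt a direct computation of the individual $\langle\partial_{x_j}\Phi_1,\mathscr R(E_S)\partial_{x_k}\Phi_1\rangle$. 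A complete proof can then be read off from these symmetry reductions together with the chain of expansions sketched above.

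\begin{remark}\label{rmk:proof-dir-m}
The proof for ${(E_D(\tau_0,|{\bm\tau}|;\varphi),{\bm D}^-(\tau_0,|{\bm\tau}|;\varphi))}$ proceeds identically after replacing ${\bm\kappa}_D^{(0)}$ by $-{\bm\kappa}_D^{(0)}$; equivalently, one may invoke the $\mathcal P$-symmetry relation ${\bm D}^-=2{\bm M}-{\bm D}^+$ together with Proposition~\ref{prop:dir-pair} to transport the conclusion directly from the ${\bm D}^+$ case.
\end{remark}
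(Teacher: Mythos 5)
Your plan follows the same route as the paper's proof: substitute the expansion $\Phi^{{\bm D}^+}_\ell = \Phi_\ell + \sqrt{|{\bm \tau}|}\,\mathscr{R}(E_S)(2i{\bm \kappa}_D^{(0)}\cdot\nabla)\Phi_\ell + O(|\tau_0|+|{\bm \tau}|)$ into \eqref{eq:def-gam-dir-p}, kill the zeroth-order terms $\langle\Phi_j,2i\nabla\Phi_k\rangle$ by parity, combine the two $O(\sqrt{|{\bm \tau}|})$ cross terms via self-adjointness of $\mathscr{R}(E_S)$ into $2\sqrt{|{\bm \tau}|}\langle 2i\nabla\Phi_1,\mathscr{R}(E_S)(2i{\bm \kappa}_D^{(0)}\cdot\nabla)\Phi_k\rangle$, and evaluate these via the symmetry reductions of \cite[Section~4.1.3]{keller2018spectral} as $(\alpha_0 I + 4a^{1,1}_{1,2}R){\bm \kappa}_D^{(0)}$ and $(\alpha_1\sigma_1 - i\alpha_2\sigma_3){\bm \kappa}_D^{(0)}$, with the remainder controlled exactly as you describe. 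This is the paper's argument, so the proposal is correct and essentially identical in approach.
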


\begin{proof}
Applying the expansion \eqref{eq:phi-dir-p-ex}, we obtain
\begin{align}
& {\bm \gamma}_0^{{\bm D}^+}(\tau_0, |{\bm \tau}|) \\
& \qquad = \langle \Phi^{{\bm D}^+}_1, \, 2i ((1 - \tau_0) I - |{\bm \tau}| \sigma_\varphi) \nabla \Phi^{{\bm D}^+}_1 \rangle \nonumber \\
& \qquad = \langle \Phi^{{\bm D}^+}_1, \, 2i \nabla \Phi^{{\bm D}^+}_1 \rangle + O(|\tau_0| + |{\bm \tau}|) \nonumber \\
& \qquad = \langle \Phi_1 + \sqrt{|{\bm \tau}|} \, \mathscr{R}(E_S) (2 i {\bm \kappa}_D^{(0)}(\varphi) \cdot \nabla) \Phi_1, \, 2i \nabla (\Phi_1 + \sqrt{|{\bm \tau}|} \, \mathscr{R}(E_S) (2 i {\bm \kappa}_D^{(0)}(\varphi) \cdot \nabla) \Phi_1) \rangle + O(|\tau_0| + |{\bm \tau}|) \nonumber \\
& \qquad = \langle \Phi_1, \, 2i \nabla \Phi_1 \rangle + \sqrt{|{\bm \tau}|} \bigl( \langle \Phi_1, \, (2i \nabla) \mathscr{R}(E_S) (2 i {\bm \kappa}_D^{(0)}(\varphi) \cdot \nabla) \Phi_1 \rangle + \langle \mathscr{R}(E_S) (2 i {\bm \kappa}_D^{(0)}(\varphi) \cdot \nabla) \Phi_1, \, 2i \nabla \Phi_1 \rangle \bigr) \nonumber \\
& \qquad \qquad \quad + O(|\tau_0| + |{\bm \tau}|) \nonumber \\
& \qquad = \sqrt{|{\bm \tau}|} \bigl( \langle 2i \nabla \Phi_1, \, \mathscr{R}(E_S) (2 i {\bm \kappa}_D^{(0)}(\varphi) \cdot \nabla) \Phi_1 \rangle + \langle 2i \nabla \Phi_1, \, \mathscr{R}(E_S) (2 i {\bm \kappa}_D^{(0)}(\varphi) \cdot \nabla) \Phi_1 \rangle^* \bigr) + O(|\tau_0| + |{\bm \tau}|) \nonumber \\
& \qquad = 2 \sqrt{|{\bm \tau}|} \langle 2i \nabla \Phi_1, \, \mathscr{R}(E_S) (2 i {\bm \kappa}_D^{(0)}(\varphi) \cdot \nabla) \Phi_1 \rangle + O(|\tau_0| + |{\bm \tau}|) \nonumber \\
& \qquad = 2 \sqrt{|{\bm \tau}|} (\alpha_0 I + 4 a^{1, 1}_{1, 2} R) {\bm \kappa}_D^{(0)}(\varphi) + O(|\tau_0| + |{\bm \tau}|) \nonumber
\end{align}
as $|\tau_0|$, ${|{\bm \tau}| \to 0}$. Next, we similarly obtain
\begin{align}
& \langle \Phi^{\bm D}_1, \, 2i ((1 - \tau_0) I - |{\bm \tau}| \sigma_\varphi) \nabla \Phi^{\bm D}_2 \rangle \\ 
& \qquad = \langle \Phi^{{\bm D}^+}_1, \, 2i \nabla \Phi^{{\bm D}^+}_2 \rangle + O(|\tau_0| + |{\bm \tau}|) \nonumber \\
& \qquad = \langle \Phi_1 + \sqrt{|{\bm \tau}|} \, \mathscr{R}(E_S) (2 i {\bm \kappa}_D^{(0)}(\varphi) \cdot \nabla) \Phi_1, \, 2 i \nabla (\Phi_2 + \sqrt{|{\bm \tau}|} \, \mathscr{R}(E_S) (2 i {\bm \kappa}_D^{(0)}(\varphi) \cdot \nabla) \Phi_2) \rangle + O(|\tau_0| + |{\bm \tau}|) \nonumber \\
& \qquad = \langle \Phi_1, \, 2i \nabla \Phi_2 \rangle + \sqrt{|{\bm \tau}|} \bigl( \langle \Phi_1, \, (2i \nabla) \mathscr{R}(E_S) (2 i {\bm \kappa}_D^{(0)}(\varphi) \cdot \nabla) \Phi_2 \rangle + \langle \mathscr{R}(E_S) (2i {\bm \kappa}^{(0)}_D(\varphi) \cdot \nabla) \Phi_1, \, 2i \nabla \Phi_2 \rangle \bigr) \nonumber \\
& \qquad \qquad + O(|\tau_0| + |{\bm \tau}|) \nonumber \\
& \qquad = \sqrt{|{\bm \tau}|} \bigl( \langle 2i \nabla \Phi_1, \, \mathscr{R}(E_S) (2 i {\bm \kappa}_D^{(0)}(\varphi) \cdot \nabla) \Phi_2 \rangle + \langle 2i \nabla \Phi_2, \, \mathscr{R}(E_S) (2i {\bm \kappa}^{(0)}_D(\varphi) \cdot \nabla) \Phi_ 1 \rangle^* \bigr) + O(|\tau_0| + |{\bm \tau}|) \nonumber \\
& \qquad = 2 \sqrt{|{\bm \tau}|} \langle 2i \nabla \Phi_1, \, \mathscr{R}(E_S) (2 i {\bm \kappa}_D^{(0)}(\varphi) \cdot \nabla) \Phi_2 \rangle + O(|\tau_0| + |{\bm \tau}|) \nonumber \\
& \qquad = 2 \sqrt{|{\bm \tau}|} (\alpha_1 \sigma_1 - i\alpha_2 \sigma_3) {\bm \kappa}_D^{(0)}(\varphi) + O(|\tau_0| + |{\bm \tau}|) \nonumber
\end{align}
as $|\tau_0|$, ${|{\bm \tau}| \to 0}$. Taking real and imaginary parts yields
\begin{align}
{\bm \gamma}^{\bm D}_1(\tau_0, |{\bm \tau}|) & = 2 \sqrt{|{\bm \tau}|} \, \alpha_1 \sigma_1 {\bm \kappa}^{(0)}_D(\varphi) + O(|\tau_0| + |{\bm \tau}|), \\
{\bm \gamma}^{\bm D}_2(\tau_0, |{\bm \tau}|) & = 2 \sqrt{|{\bm \tau}|} \, \alpha_2 \sigma_3 {\bm \kappa}^{(0)}_D(\varphi) + O(|\tau_0| + |{\bm \tau}|)
\end{align}
as $|\tau_0|$, ${|{\bm \tau}| \to 0}$.
\end{proof}

\noindent In particular, since ${\alpha_1 \neq 0}$ and ${\alpha_2 \neq 0}$ by \ref{itm:quad-dgn-5}, and ${{\bm \kappa}_D^{(0)}(\varphi) \neq {\bm 0}}$ by \ref{itm:quad-dgn-5} and \ref{itm:quad-dgn-6}, we have that ${{\bm \gamma}^{{\bm D}^+}_1(\tau_0, |{\bm \tau}|) \neq {\bm 0}}$ and ${{\bm \gamma}^{{\bm D}^+}_2(\tau_0, |{\bm \tau}|) \neq {\bm 0}}$. Therefore ${(E_D(\tau_0, |{\bm \tau}|), {\bm D}^+(\tau_0, |{\bm \tau}|))}$ satisfies \ref{itm:dir-pt-4}.

The proof of Proposition \ref{prop:M-dgn-dir} is now complete. \qed

\begin{remark}
\label{rmk:proof-dir-m}
The proof for ${(E_D(\tau_0, |{\bm \tau}|), {\bm D}^-(\tau_0, |{\bm \tau}|))}$ proceeds analogously: We take as eigenstates
\begin{equation}
\Phi^{{\bm D}^-}_1 = \mathcal{P}[\Phi^{{\bm D}^+}_1] \quad \text{and} \quad \Phi^{{\bm D}^-}_2 = \mathcal{P}[\Phi^{{\bm D}^+}_2],
\end{equation}
and note that ${\Phi^{{\bm D}^-}_2 = \mathcal{PC}[\Phi^{{\bm D}^-}_1]}$. Since $\mathcal{P}$ is unitary,
\begin{align}
\langle \Phi^{{\bm D}^-}_j, \, 2i ((1 - \tau_0) I - |{\bm \tau}| \sigma_\varphi) \nabla \Phi^{{\bm D}^-}_k \rangle & = \langle \mathcal{P}[\Phi^{{\bm D}^+}_1], \, 2i ((1 - \tau_0) I - |{\bm \tau}| \sigma_\varphi) \nabla \mathcal{P}[\Phi^{{\bm D}^+}_2] \rangle \\
& = \langle \mathcal{P}[\Phi^{{\bm D}^+}_j], \, \mathcal{P}[2i ((1 - \tau_0) I - |{\bm \tau}| \sigma_\varphi) (-\nabla \Phi^{{\bm D}^+}_k)] \rangle \nonumber \\
& = - \langle \Phi^{{\bm D}^+}_j, \, 2i ((1 - \tau_0) I - |{\bm \tau}| \sigma_\varphi) \nabla \Phi^{{\bm D}^+}_k \rangle. \nonumber
\end{align}
for $j$, ${k \smallin \{ 1, \, 2 \}}$. Thus
\begin{equation}
\label{eq:gam-dir-m-ex}
\begin{aligned}
{\bm \gamma}^{{\bm D}^-}_0(\tau_0, |{\bm \tau}|) & = - {\bm \gamma}^{{\bm D}^+}_0(\tau_0, |{\bm \tau}|) = -2 \sqrt{|{\bm \tau}|} (\alpha_0 I + 4 a^{1, 1}_{1, 2} R) {\bm \kappa}_D^{(0)}(\varphi) + O(|\tau_0| + |{\bm \tau}|), \\
{\bm \gamma}^{{\bm D}^-}_1(\tau_0, |{\bm \tau}|) & = - {\bm \gamma}^{{\bm D}^+}_1(\tau_0, |{\bm \tau}|) = -2 \sqrt{|{\bm \tau}|} \, \alpha_1 \sigma_1 {\bm \kappa}^{(0)}_D(\varphi) + O(|\tau_0| + |{\bm \tau}|), \\
{\bm \gamma}^{{\bm D}^-}_2(\tau_0, |{\bm \tau}|) & = - {\bm \gamma}^{{\bm D}^+}_2(\tau_0, |{\bm \tau}|) = -2 \sqrt{|{\bm \tau}|} \, \alpha_2 \sigma_3 {\bm \kappa}^{(0)}_D(\varphi) + O(|\tau_0| + |{\bm \tau}|),
\end{aligned}
\end{equation}
as $|\tau_0|$, ${|{\bm \tau}| \to 0}$, and so ${(E_D(\tau_0, |{\bm \tau}|), {\bm D}^-(\tau_0, |{\bm \tau}|))}$ also satisfies \ref{itm:dir-pt-1} -- \ref{itm:dir-pt-4}.
\end{remark}

\begin{remark}
The expressions \eqref{eq:gam-dir-p-ex} and \eqref{eq:gam-dir-m-ex} can also be obtained by expanding the effective Hamiltonian \eqref{eq:M-eff} about its degeneracies and extracting the leading order terms.
\end{remark}

\bigskip

\section{Proof of Proposition \ref{prop:M-dgn-ref}}
\label{apx:M-dgn-ref}

\setcounter{equation}{0}
\setcounter{figure}{0}

In this appendix, we prove Proposition \ref{prop:M-dgn-ref}. For brevity, we prove only part \ref{itm:M-dgn-ref-1} here; the proof of part \ref{itm:M-dgn-ref-2} follows analogously (as noted in Remark \ref{rmk:M-dgn-ref-2} ahead). \\

\noindent {\it Proof of Proposition \ref{prop:M-dgn-ref}, part \ref{itm:M-dgn-ref-1}.} Assume ${{\bm \tau} = (\tau_1, 0)}$ with ${\tau_1 \neq 0}$. First, suppose that $\sigma_1 {\bm \kappa} = {\bm \kappa}$. Using the definition \eqref{eq:def-hs} and the expansion \eqref{eq:calM-ex_1}, we have
\begin{equation}
h_2(\varepsilon, {\bm \kappa}; \tau_0, {\bm \tau}) = -{\rm Im}(\mathcal{M}_{1, 2}(\varepsilon; {\bm \kappa}, \tau_0, {\bm \tau}))
= -{\rm Im}(\mathcal{M}^{(0)}_{1, 2}(\varepsilon; {\bm \kappa}, \tau_0, {\bm \tau})) - {\rm Im}(\mathcal{M}^{(1)}_{1, 2}(\varepsilon; {\bm \kappa}, \tau_0, {\bm \tau})).
\end{equation}
Using the expression \eqref{eq:calM-0-ex} and substituting ${\tau_3 = 0}$, the first term on the right-hand side simplifies to
\begin{equation}
\mathcal{M}^{(0)}_{1, 2}(\varepsilon; {\bm \kappa}, \tau_0, {\bm \tau}) = \beta_1 \tau_1.
\end{equation}
In particular, ${{\rm Im}(\mathcal{M}^{(0)}_{1, 2}(\varepsilon; {\bm \kappa}, \tau_0, {\bm \tau})) = 0}$ for all ${\bm \kappa}$. To prove ${{\rm Im}(\mathcal{M}^{(1)}_{1, 2}(\varepsilon; {\bm \kappa}, \tau_0, {\bm \tau})) = 0}$, it suffices to show that
\begin{equation}
\mathcal{M}^{(1)}_{1, 2}(\varepsilon; {\bm \kappa}, \tau_0, {\bm \tau}) = \mathcal{M}^{(1)}_{1, 2}(\varepsilon; {\bm \kappa}, \tau_0, {\bm \tau})^*.
\end{equation} 
We use the $\Sigma_1$ operator. First, note that the operator $\mathcal{A}(\varepsilon; {\bm \kappa}, \tau_0, {\bm \tau})$, defined in \eqref{eq:def-calA}, simplifies to
\begin{equation}
\mathcal{A}(\varepsilon; {\bm \kappa}, \tau_0, {\bm \tau}) = \mathscr{R}(E_S) ( \varepsilon + 2i {\bm \kappa} \cdot \nabla - |{\bm \kappa}|^2 - \tau_0 (\nabla + i{\bm \kappa})^2 - \tau_1 (\nabla + i{\bm \kappa}) \cdot \sigma_1 (\nabla + i{\bm \kappa}) ).
\end{equation}
Thus,
\begin{align}
& \Sigma_1 \circ \mathcal{A}(\varepsilon; {\bm \kappa}, \tau_0, {\bm \tau}) \\
& \qquad = \Sigma_1 \circ \mathscr{R}(E_S) ( \varepsilon + 2i {\bm \kappa} \cdot \nabla - |{\bm \kappa}|^2 - \tau_0 (\nabla + i{\bm \kappa})^2 - \tau_1 (\nabla + i{\bm \kappa}) \cdot \sigma_1 (\nabla + i{\bm \kappa}) ) \nonumber \\
& \qquad = \mathscr{R}(E_S) ( \varepsilon + 2i \sigma_1 {\bm \kappa} \cdot \nabla - |\sigma_1 {\bm \kappa}|^2 - \tau_0 (\nabla + i \sigma_1 {\bm \kappa})^2 - \tau_1 (\nabla + i \sigma_1 {\bm \kappa}) \cdot \sigma_1 (\nabla + i \sigma_1 {\bm \kappa}) ) \circ \Sigma_1 \nonumber \\
& \qquad = \mathcal{A}(\varepsilon; \sigma_1 {\bm \kappa}, \tau_0, {\bm \tau}) \circ \Sigma_1, \nonumber
\end{align}
which implies
\begin{equation}
\Sigma_1 \circ (1 - \mathcal{A}(\varepsilon; {\bm \kappa}, \tau_0, {\bm \tau}))^{-1} = (1 - \mathcal{A}(\varepsilon; \sigma_1 {\bm \kappa}, \tau_0, {\bm \tau}))^{-1} \circ \Sigma_1.
\end{equation}
Then, since $\Sigma_1$ is unitary, we have
\begin{align}
\label{eq:calM-1-ref-1}
& \mathcal{M}^{(1)}_{1, 2}(\varepsilon; {\bm \kappa}, \tau_0, {\bm \tau}) \\
& \qquad = \langle (2i{\bm \kappa}\cdot\nabla - \tau_0 (\Delta + 2i{\bm \kappa} \cdot \nabla) - \tau_1 (\nabla \cdot \sigma_1 \nabla + 2i{\bm \kappa} \cdot \sigma_1 \nabla)) \Phi_1, \nonumber \\
& \qquad \qquad \quad (1 - \mathcal{A}(\varepsilon; {\bm \kappa}, \tau_0, {\bm \tau}))^{-1} \mathscr{R}(E_S) (2i{\bm \kappa} \cdot \nabla - \tau_0(\Delta + 2i {\bm \kappa} \cdot \nabla) - \tau_1 (\nabla \cdot \sigma_1 \nabla + 2i{\bm \kappa} \cdot \sigma_1 \nabla)) \Phi_2 \rangle \nonumber \\
& \qquad = \langle \Sigma_1[ (2i{\bm \kappa}\cdot\nabla - \tau_0 (\Delta + 2i{\bm \kappa} \cdot \nabla) - \tau_1 ( \nabla \cdot \sigma_1 \nabla + 2i{\bm \kappa} \cdot \sigma_1 \nabla)) \Phi_1 ], \nonumber \\
& \qquad \qquad \quad \Sigma_1[ (1 - \mathcal{A}(\varepsilon; {\bm \kappa}, \tau_0, {\bm \tau}))^{-1} \mathscr{R}(E_S) (2i{\bm \kappa} \cdot \nabla - \tau_0(\Delta + 2i {\bm \kappa} \cdot \nabla) - \tau_1 (\nabla \cdot \sigma_1 \nabla + 2i{\bm \kappa} \cdot \sigma_1 \nabla)) \Phi_2 ] \rangle \nonumber \\
& \qquad = \langle (2i \sigma_1 {\bm \kappa} \cdot \nabla - \tau_0 (\Delta + 2i \sigma_1 {\bm \kappa} \cdot \nabla) - \tau_1 (\nabla \cdot \sigma_1 \nabla + 2i \sigma_1 {\bm \kappa} \cdot \nabla)) (\Phi_2), \nonumber \\
& \qquad \qquad \quad (1 - \mathcal{A}(\varepsilon; \sigma_1 {\bm \kappa}, \tau_0, {\bm \tau}))^{-1} \mathscr{R}(E_S) (2i \sigma_1 {\bm \kappa} \cdot \nabla - \tau_0(\Delta + 2i \sigma_1 {\bm \kappa} \cdot \nabla) - \tau_1 (\nabla \cdot \sigma_1 \nabla + 2i \sigma_1 {\bm \kappa} \cdot \sigma_1 \nabla)) (\Phi_1) \rangle \nonumber \\
& \qquad = \mathcal{M}^{(1)}_{2, 1}(\varepsilon; \sigma_1 {\bm \kappa}, \tau_0, {\bm \tau}) = \mathcal{M}^{(1)}_{1, 2}(\varepsilon; \sigma_1 {\bm \kappa}, \tau_0, {\bm \tau})^*. \nonumber
\end{align}
Thus, if ${\sigma_1 {\bm \kappa} = {\bm \kappa}}$, we have ${{\rm Im}(\mathcal{M}^{(1)}_{1, 2}(\varepsilon; {\bm \kappa}, \tau_0, {\bm \tau})) = 0}$.

Next, we consider the case ${\sigma_1 {\bm \kappa} = - {\bm \kappa}}$. Using \eqref{eq:calM-1-ref-1} and applying part \ref{itm:calM-ev} of Proposition \ref{prop:calM-pr}, it follows that
\begin{equation}
\mathcal{M}^{(1)}_{1, 2}(\varepsilon; \sigma_1 {\bm \kappa}, \tau_0, {\bm \tau}) = \mathcal{M}^{(1)}_{1, 2}(\varepsilon; {\bm \kappa}, \tau_0, {\bm \tau})^* = \mathcal{M}^{(1)}_{1, 2}(\varepsilon; -{\bm \kappa}, \tau_0, {\bm \tau})^*.
\end{equation}
Thus, if ${\sigma_1 {\bm \kappa} = -{\bm \kappa}}$, we have ${{\rm Im}(\mathcal{M}^{(1)}_{1, 2}(\varepsilon; {\bm \kappa}, \tau_0, {\bm \tau})) = 0}$. \qed

\begin{remark}
\label{rmk:M-dgn-ref-2}
The proof of part \ref{itm:M-dgn-ref-2} follows analogously, instead using the $\Sigma_3$ operator when ${\sigma_3 {\bm \kappa} = {\bm \kappa}}$.
\end{remark}

\bigskip

\nocite{*}
\bibliographystyle{plain}
\bibliography{bibliography.bib}

\begin{thebibliography}{10}

\bibitem{barsukova-etal-2024}
M.~Barsukova, F.~Grisé, Z.~Zhang, S.~Vaidya, J.~Guglielmon, M.~I. Weinstein,
  L.~He, B.~Zhen, R.~McEntaffer, and M.~C. Rechtsman.
\newblock Direct observation of {Landau} levels in silicon photonic crystals.
\newblock {\em To appear in Nature Photonics}.

\bibitem{BZ2023}
S.~Becker and M.~Zworski.
\newblock {Dirac} points for twisted bilayer graphene with in-plane magnetic
  field.
\newblock {\em https://arxiv.org/abs/2303.00743}, 2008.

\bibitem{berkolaiko2014symmetry}
G.~Berkolaiko and A.~Comech.
\newblock Symmetry and {Dirac} points in graphene spectrum.
\newblock {\em J. Spectral Theory}, 8:1099--1141, 2018.

\bibitem{carles-sparber2012}
R.~Carles and C.~Sparber.
\newblock Semiclassical wavepacket dynamics in {Schr\"odinger} equations with
  periodic potentials.
\newblock {\em Discr. Cont. Dyn. Syst. Ser. B}, 17:759--774, 2012.

\bibitem{CW21}
M.~Cassier and M.~I. Weinstein.
\newblock High contrast elliptic operators in honeycomb structures.
\newblock {\em Multiscale Model. Simul.}, 19(4):1784--1856, 2021.

\bibitem{chaban2025thesis}
J.~Chaban.
\newblock {\em Deformed square lattice structures and {Dirac} points}.
\newblock PhD thesis, Columbia University, 2025.
\newblock In preparation.

\bibitem{CMW-inprogress}
J.~Chaban, J.~Marzuola, and M.~I. Weinstein.
\newblock Edge states in systems with quadratic band degeneracies and their
  deformations.
\newblock {\em In preparation}.

\bibitem{niu1996}
M.-C. Chang and Q.~Niu.
\newblock Berry phase, hyperorbits, and the {Hofstadter} spectrum:
  Semi-classical dynamics in magnetic {Bloch} bands.
\newblock {\em Phys. Rev. B}, 53:7010--7023, 1996.

\bibitem{chong2008effective}
Y.~D. Chong, X.-G. Wen, and M.~Solja{\v{c}}i{\'c}.
\newblock Effective theory of quadratic degeneracies.
\newblock {\em Phys. Rev. B}, 77(23):235125, 2008.

\bibitem{drouot2019a}
A.~Drouot.
\newblock The bulk edge correspondence for continuous honeycomb lattices.
\newblock {\em Commun. PDE}, 44(12):1406--1430, 2019.

\bibitem{drouot2019}
A.~Drouot.
\newblock Characterization of edge states in perturbed honeycomb structures.
\newblock {\em Pure and Applied Analysis}, 1(3):385--445, 2019.

\bibitem{drouot2021}
A.~Drouot.
\newblock Microlocal analysis of the bulk-edge correspondence.
\newblock {\em Commun. Math. Phys.}, (383):2069--2112, 2021.

\bibitem{drouot2021ubiq}
A.~Drouot.
\newblock Ubiquity of conical points in topological insulators.
\newblock {\em Journal de l’École polytechnique -- Mathématiques},
  8(23):507--532, 2021.

\bibitem{drouotweinstein2020}
A.~Drouot and M.~I. Weinstein.
\newblock Edge states and the valley {Hall} effect.
\newblock {\em Advances in Mathematics}, 368:107142, 2020.

\bibitem{FLW16a}
C.~L. Fefferman, J.~P. Lee-Thorp, and M.~I. Weinstein.
\newblock Bifurcations of edge states –- topologically protected and
  non-protected –- in continuous 2d honeycomb structures.
\newblock {\em 2D Mater.}, 3:014008, 2016.

\bibitem{FLW16}
C.~L. Fefferman, J.~P. Lee-Thorp, and M.~I. Weinstein.
\newblock Edge states in honeycomb structures.
\newblock {\em Ann. PDE}, 2(12), 2016.

\bibitem{FLW17}
C.~L. Fefferman, J.~P. Lee-Thorp, and M.~I. Weinstein.
\newblock Honeycomb {Schr\"{o}dinger} operators in the strong binding regime.
\newblock {\em Commun. Pure Appl. Math.}, 25(4):1--93, 2017.

\bibitem{FLW17-mem}
C.~L. Fefferman, J.~P. Lee-Thorp, and M.~I. Weinstein.
\newblock Topologically protected states in one-dimensional systems.
\newblock {\em Mem. Am. Math. Soc.}, 247, 2017.

\bibitem{fefferman2012honeycomb}
C.~L. Fefferman and M.~I. Weinstein.
\newblock Honeycomb lattice potentials and {Dirac} points.
\newblock {\em Journal of the American Mathematical Society}, 25(4):1169--1220,
  2012.

\bibitem{FLW14}
C.~L. Fefferman and M.~I. Weinstein.
\newblock Topologically protected states in one-dimensional continuous systems
  and {Dirac} points.
\newblock {\em Proc. Natl. Acad. Sci.}, page 07391, 2014.

\bibitem{FW14}
C.~L. Fefferman and M.~I. Weinstein.
\newblock Wave packets in honeycomb structures and two-dimensional {Dirac}
  equations.
\newblock {\em Commun. Math. Phys.}, 326:251--286, 2014.

\bibitem{grushin2009}
V.~V. Grushin.
\newblock Application of the multiparameter theory of perturbations of
  {Fredholm} operators to {Bloch} functions.
\newblock {\em Transl. Math. Notes}, 86(5--6):767–--774, 2009.

\bibitem{guglielmon2021landau}
J.~Guglielmon, M.~C. Rechtsman, and M.~I. Weinstein.
\newblock Landau levels in strained two-dimensional photonic crystals.
\newblock {\em Physical Review A}, 103(1):013505, 2021.

\bibitem{HR08}
F.~D.~M. Haldane and S.~Raghu.
\newblock Possible realization of directional optical waveguides in photonic
  crystals with broken time-reversal symmetry.
\newblock {\em Phys. Rev. Lett.}, 100:013904, 2008.

\bibitem{hasegawa2012merging}
Y.~Hasegawa and K.~Kishigi.
\newblock Merging dirac points and topological phase transitions in the
  tight-binding model on the generalized honeycomb lattice.
\newblock {\em Physical Review B —- Condensed Matter and Materials Physics},
  86(16):165430, 2012.

\bibitem{H93}
Y.~Hatsugai.
\newblock The {Chern} number and the number of edge states in the integer
  quantum {Hall} effect.
\newblock {\em Phys. Rev. Lett.}, 71:3697--3700, 1993.

\bibitem{keller2018spectral}
R.~T. Keller, J.~L. Marzuola, B.~Osting, and M.~I. Weinstein.
\newblock Spectral band degeneracies of $\pi/2$-rotationally invariant periodic
  {Schr\"{o}dinger} operators.
\newblock {\em Multiscale Modeling \& Simulation}, 16(4):1684--1731, 2018.

\bibitem{keller2020erratum}
R.~T. Keller, J.~L. Marzuola, B.~Osting, and M.~I. Weinstein.
\newblock Erratum: Spectral band degeneracies of $\pi/2$-rotationally invariant
  periodic {Schr\"{o}dinger} operators.
\newblock {\em Multiscale Modeling \& Simulation}, 18(3):1371--1373, 2020.

\bibitem{PhotonicTI13}
A.~B. Khanikaev, S.~H. Mousavi, W.-K. Tse, M.~Kargarian, A.~H. MacDonald, and
  G.~Shvets.
\newblock Photonic topological insulators.
\newblock {\em Nat. Mater.}, 12(3):233--239, 2013.

\bibitem{kuchment16}
P.~Kuchment.
\newblock An overview of periodic elliptic operators.
\newblock {\em Bull. Amer. Math. Soc. (N.S.)}, 53(3):805--826, 2016.

\bibitem{LWZ19}
J.~P. Lee-Thorp, M.~I. Weinstein, and Y.~Zhu.
\newblock Elliptic operators with honeycomb symmetry: {Dirac} points, edge
  states and applications to photonic graphene.
\newblock {\em Archive Rat. Mech. Anal.}, 232:1--63, 2019.

\bibitem{montambaux2009merging}
G.~Montambaux, F.~Pi{\'e}chon, J.-N. Fuchs, and M.~O. Goerbig.
\newblock Merging of dirac points in a two-dimensional crystal.
\newblock {\em Physical Review B —- Condensed Matter and Materials Physics},
  80(15):153412, 2009.

\bibitem{top-photonics19}
T.~Ozawa, H.~M. Price, A.~Amo, N.~Goldman, M.~Hafezi, L.~Lu, M.~C. Rechtsman,
  D.~Schuster, J.~Simon, O.~Zilberberg, and I.~Carusotto.
\newblock Topological photonics.
\newblock {\em Rev. Mod. Phys.}, 91:015006, 2019.

\bibitem{RS4-1978}
M.~Reed and B.~Simon.
\newblock {\em Methods of Modern Mathematical Physics}.
\newblock 1978.

\bibitem{TKNN1982}
D.~J. Thouless, M.~Kohmoto, M.~P. Nightgale, and M.~P.~Den Nijs.
\newblock Quantized {Hall} conductance in a two-dimensional periodic potential.
\newblock {\em Phys. Rev. Lett.}, 49:405, 1982.

\bibitem{W1947}
P.~R. Wallace.
\newblock The band theory of graphite.
\newblock {\em Phys. Rev.}, 9(4):622, 1947.

\bibitem{wang2008-prl}
Z.~Wang, Y.~D. Chong, J.~D. Joannopoulos, and M.~Solja{\v{c}}i{\'c}.
\newblock Reflection-free one-way edge modes in a gyromagnetic photonic
  crystal.
\newblock {\em Phys. Rev. Lett.}, 100:013905, 2008.

\bibitem{WKML2023}
A.~B. Watson, T.~Kong, A.~H. MacDonald, and M.~Luskin.
\newblock {Bistritzer}-{MacDonald} dynamics in twisted bilayer graphene.
\newblock {\em J. Math. Phys.}, 64(1):031502, 2023.

\bibitem{WLW2017}
A.~B. Watson, J.~Lu, and M.~I. Weinstein.
\newblock Wavepackets in inhomogeneous periodic media: effective particle-field
  dynamics and {Berry} curvature.
\newblock {\em J. Math. Phys.}, 58(1):021503, 2017.

\end{thebibliography}

\end{document}